%
%
%
\documentclass[pldi-cameraready,nocopyrightspace]{sigplanconf-pldi16}

%
%
\usepackage[scaled]{helvet} 
\usepackage{url}                  
\usepackage{listings}          
\usepackage{enumitem}      
\usepackage[colorlinks=true,allcolors=blue,breaklinks,draft=false]{hyperref}   


\usepackage{inconsolata}
\usepackage{calrsfs}

\usepackage[T1]{fontenc}

\DeclareMathAlphabet{\origcal}{OMS}{zplm}{m}{n}

\usepackage{letltxmacro}
\usepackage{mathtools}
\usepackage{mdframed}
\usepackage{amsfonts}
\usepackage{amsthm}
\usepackage{amssymb}
\usepackage{amsmath}
\usepackage{stmaryrd}
\usepackage{comment}
\usepackage{xcolor}
\usepackage{color}
\usepackage{xspace}
\usepackage{mathpartir}
\usepackage{array}
\usepackage{multirow}
\usepackage{subcaption}
\usepackage{enumitem}
\usepackage[font=small,skip=0pt]{caption}
\usepackage{url}
\def\withcolor{}

\ifdefined\withcolor
	\definecolor{haskellblue}{rgb}{0.0, 0.0, 1.0}
	\definecolor{haskellblue}{rgb}{1.0, 0.0, 0.0}
	\definecolor{gray_ulisses}{gray}{0.55}
	\definecolor{castanho_ulisses}{rgb}{0.71,0.33,0.14}
	\definecolor{preto_ulisses}{rgb}{0.41,0.20,0.04}
	\definecolor{green_ulisses}{rgb}{0.0,0.4,0.0}
        \definecolor{mygreen}{rgb}{0,0.6,0}
        \definecolor{mygray}{rgb}{0.5,0.5,0.5}
        \definecolor{mymauve}{rgb}{0.58,0,0.82}
        \definecolor{darkgray}{gray}{0.30}
        \definecolor{darkblue}{rgb}{0,0,0.75}
        \definecolor{darkgreen}{rgb}{0,0.4,0}
\else
	\definecolor{haskellblue}{gray}{0.1}
	\definecolor{haskellred}{gray}{0.1}
	\definecolor{gray_ulisses}{gray}{0.1}
	\definecolor{castanho_ulisses}{gray}{0.1}
	\definecolor{preto_ulisses}{gray}{0.1}
	\definecolor{green_ulisses}{gray}{0.1}
        \definecolor{mygreen}{gray}{0.6}
        \definecolor{mygray}{gray}{0.7}
        \definecolor{mymauve}{gray}{0.8}
        \definecolor{darkgray}{gray}{0.9}
        \definecolor{darkblue}{gray}{0.8}
        \definecolor{darkgreen}{gray}{0.8}
\fi

\def\codesize{\footnotesize}

\lstdefinelanguage{RefScript}{
  basicstyle=\ttfamily\codesize,
  sensitive=true,
  keywords={typeof, instanceof, new, catch, function, return,
            null, catch, switch, immutable, var, if, in, for, while, do,
            else, case, break, forall, type, enum,
            constructor, super, public, private, class,
            interface, extends, impl, implements, module,
            declare, this,let, then, undefined, fn,
            },
  literate={<=}{{$\leq$}}1
           {>>=}{{>>=}}3
           {>=}{{$\geq$}}1
           {/\\}{{$\wedge$}}1
           {\\phi}{{$\phi$}}1
           {!=}{{$\neq$}}1
           {forall}{{$\forall$}}1
           {->}{{$\rightarrow$}}2
           {<=>}{{$\Leftrightarrow$}}3
           {=>}{{$\Rightarrow$}}2
           {||-}{{$\vdash$}}1
           {|->}{{$\mapsto$}}2
           {<:}{{$\preceq$}}1
           {<IM>}{{\color{darkgreen}{<IM>}\color{black}}}4
           {<UQ>}{{\color{darkgreen}{<UQ>}\color{black}}}4
           {<MU>}{{\color{darkgreen}{<MU>}\color{black}}}4,
  keywordstyle=\bfseries,
  morecomment=[s]{/*}{*/},
  morecomment=[l]//,
  morestring=[b]",
  morestring=[b]',
  commentstyle=\color{mygreen},
  stringstyle=\color{darkgreen},
  showstringspaces=false,
  numberstyle=\scriptsize,
  numberblanklines=true,
  showspaces=false,
  breaklines=true,
  showtabs=false,
  emph={[1] getDensity,setDensity,reset,reduce, 
            distinct, setF,
            $reduce, $reduce1, $reduce2, step, minIndex,minIndexFO,
            minIndexHO,createType,
            minIndex, neg, getPropertiesOfType,
            idxOf, concat, pop, push, set, get,
						length, assert,
						resolveObjTypeMem,
            head, head0,
            slice
       },
  emphstyle={[1]\color{darkblue}},
  emph={[2] boolean,bool,tt,ff,any,void,
						pos,okW,okH,grid,number,string,nat,index,
						NEArray, ArrayN, natN,natLE,idx,Type,Symbol,
						TypeFlags,ObjType,ObjectType,InterfaceType,Array
            ReadOnly
  },
  emphstyle={[2]\color{darkgreen}},
  emph={[3] @Mutable, @Immutable, @ReadOnly
  },
  emphstyle={[3]\color{purple}},
 escapeinside={`}{`}
}

\lstnewenvironment{codewithnumbers}
{
  \lstset{
    language=RefScript,
    numbers=left,
    xleftmargin=5.0ex
  }
}
{}

\lstnewenvironment{code}{\lstset{language=RefScript}}{}

\lstMakeShortInline[language=RefScript, basicstyle=\normalsize\ttfamily]@

\usepackage{commands}
\usepackage{tabularx}
\newtheorem{assumption}{Assumption}
\newtheorem{theorem}{Theorem}
\newtheorem*{theorem*}{Theorem}
\newtheorem{definition}{Definition}
\newtheorem{corollary}[theorem]{Corollary}
\newtheorem{lemma}{Lemma}
\usepackage{chngcntr}
\usepackage{hyphenat}
\hyphenation{for-med-ness}
\hyphenation{re-assign-able}
\usepackage{afterpage}

\begin{document}

%
%

\title{\mytitle}

\authorinfo{Panagiotis Vekris} {University of California, San Diego}{\email{pvekris}{cs.ucsd.edu}}
\authorinfo{Benjamin Cosman}  {University of California, San Diego}{\email{blcosman}{cs.ucsd.edu}}
\authorinfo{Ranjit Jhala} {University of California, San Diego}{\email{jhala}{cs.ucsd.edu}}

\maketitle

\begin{abstract}
  We present \lang (\rsc), a lightweight refinement type system for
\tsc, that enables static verification of higher-order,
imperative programs.
We develop a formal core of \rsc that delineates the interaction
between refinement types and mutability.
Next, we extend the core to account for the imperative and dynamic
features of \tsc.
Finally, we evaluate \rsc on a set of real world
benchmarks, including parts of the Octane benchmarks,
D3, Transducers, and the \tsc compiler.
\end{abstract}

\section{Introduction}\label{sec:intro}

Modern \emph{scripting} languages -- like JavaScript,
Python, and Ruby --  have  popularized the use of
higher-order constructs that were once solely in the
\emph{functional} realm.
%
%
This trend towards abstraction and reuse
poses two related problems for static analysis:
\emph{modularity} and \emph{extensibility}.
First, how should analysis precisely track the
flow of values across {higher-order functions}
and {containers} or \emph{modularly}
account for {external} code like closures
or library calls?
Second, how can analyses be easily \emph{extended}
to new, domain specific properties, ideally by
developers, while they are designing and implementing
the code? (As opposed to by experts who can
at best develop custom analyses run
\emph{ex post facto} and are of little use
\emph{during} development.)

Refinement types hold the promise of a
precise, modular and extensible analysis for programs with
higher-order functions and containers.
Here, \emph{basic} types are decorated with \emph{refinement}
predicates that constrain the values inhabiting the type
\cite{Rushby98,XiPfenning99}.
%
%
%
The extensibility and modularity offered by refinement
types have enabled their use in a variety of applications
in \emph{typed}, \emph{functional} languages, like
ML~\cite{XiPfenning99,LiquidPLDI08},
Haskell~\cite{Vazou2014},
and $F^\sharp$~\cite{Swamy2011}.
Unfortunately, attempts to apply refinement typing to
scripts have proven to be impractical due to the interaction
of the machinery that accounts for imperative updates
and higher-order functions~\cite{Chugh2012} (\S\ref{sec:related}).

In this paper, we introduce \lang\ (\rsc): a novel, \emph{lightweight}
refinement type system for \tsc, a typed superset of \jsc.
Our design of \rsc\ addresses three intertwined problems by
carefully integrating and extending existing ideas from the
literature.
First, \rsc\ accounts for \emph{mutation} by using ideas from
IGJ~\cite{Zibin2007} to track which fields may be mutated, and
to allow refinements to depend on immutable fields, and by using
SSA-form to recover path and flow-sensitivity that is essential
for analyzing real world applications.
Second, \rsc\ accounts for \emph{dynamic typing} by using a
recently proposed technique called two-phase typing
\cite{rsc-ecoop15}, where dynamic behaviors are specified via
union and intersection types, and verified by reduction to
refinement typing.
Third, the above are carefully designed to permit refinement
\emph{inference} via the Liquid Types~\cite{LiquidPLDI08}
framework to render refinement typing practical
on real world programs.
Concretely, we make the following contributions:
\begin{itemize}
\item We develop a core calculus that formalizes the
      interaction of mutability and refinements via declarative
      refinement type checking that we prove
      sound~(\S\ref{sec:language}).

\item We extend the core language to \tsc\ by describing how
      we account for its various \emph{dynamic} and
      \emph{imperative} features; in particular we
      show how \rsc\ accounts for type reflection
      via intersection types, encodes interface
      hierarchies via refinements, and crucially
      permits locally flow-sensitive reasoning
      via SSA translation~(\S\ref{sec:typescript}).

\item We implement \toolname, a refinement type-checker
      for \tsc, and evaluate it on  a suite of real world
      programs from the @Octane@ benchmarks, Transducers,
      @D3@ and the \tsc\ compiler.
      We show that \rsc's refinement typing is \emph{modular}
      enough to analyze higher-order functions, collections and
      external code, and \emph{extensible} enough to verify
      a variety of properties from classic array-bounds checking
      to program specific invariants needed to ensure safe
      reflection: critical invariants that are well beyond
      the scope of existing techniques for imperative scripting
      languages~(\S\ref{sec:evaluation}).
\end{itemize}

\begin{figure}[t!]
\begin{code}
    function reduce(a, f, x) {
      var res = x, i;
      for (var i = 0; i < a.length; i++)
        res = f(res, a[i], i);
      return res;
    }
    function minIndex(a) {
      if (a.length <= 0) return -1;
      function step(min, cur, i) {
        return cur < a[min] ? i : min;
      }
      return reduce(a, step, 0);
    }
\end{code}
\caption{Computing the Min-Valued Index with \texttt{reduce}}
\label{fig:reduce}
\label{fig:minindex}
\end{figure}

\newcommand{\tyreduce}{T_{\primty{reduce}}}

\section{Overview}\label{sec:overview}

We begin with a high-level overview
of refinement types in \rsc, 
their applications (\S\ref{sec:overview-applications}),
and how \rsc\ handles imperative, higher-order
constructs~(\S\ref{sec:overview-analysis}).



\mypara{Types and Refinements} A basic refinement type
is a basic type, \eg @number@, refined with a logical
formula from an SMT decidable logic~\cite{Nelson81}.
For example, the types:
\begin{code}
  type nat     = {v:number | 0 <= v}
  type pos     = {v:number | 0 < v}
  type natN<n> = {v:nat    | v = n}
  type idx<a>  = {v:nat    | v < len(a)}
\end{code}
describe (the set of values corresponding to)
\emph{non-negative} numbers,
\emph{positive} numbers,
numbers \emph{equal to} some value @n@, and
\emph{valid indexes} for an array @a@, respectively.
Here, @len@ is an \emph{uninterpreted function}
that describes the size of the array @a@.
%
We write @t@ to abbreviate trivially refined types,
\ie @{v:t | true}@; \eg @number@ abbreviates @{v:number | true}@.

\mypara{Summaries} Function Types $\funtypen{x}{\type}{\type}$,
where arguments are named $x_i$ and have types $\type_i$
and the output is a $\type$, are used to specify the
behavior of functions.
In essence, the \emph{input} types $\type_i$ specify the
function's preconditions, and the \emph{output} type $\type$
describes the postcondition. Each input type and the
output type can \emph{refer to} the arguments $x_i$, yielding
precise function contracts. For example,
%
%
$\funtype{\bind{x}{\tnat}}{\reftp{\tnat}{x < \vv}}$
is a function type that describes functions that \emph{require}
a non-negative input, and \emph{ensure} that the output exceeds
the input.

\mypara{Higher-Order Summaries}
This approach generalizes directly to precise descriptions
for \emph{higher-order} functions.
For example, @reduce@ from Figure~\ref{fig:reduce} can be specified as $\tyreduce$:
\begin{equation}
  \text{\lstinline{<A,B>(a:A[], f:(B, A, idx<a>) => B, x:B) => B}}  \label{fig:head:sig}
\end{equation}
This type is a precise \emph{summary} for the higher-order behavior
of @reduce@: it describes the relationship between the input
array @a@, the step (``callback'') function @f@, and the initial
value of the accumulator, and stipulates that the output satisfies
the same \emph{properties} @B@ as the input @x@.
Furthermore, it critically specifies that the callback @f@ is only
invoked on valid indices for the array @a@ being reduced.

\subsection{Applications} \label{sec:overview-applications}

Next, we show how refinement types let programmers \emph{specify}
and statically \emph{verify} a variety of properties --- array safety,
reflection (value-based overloading), and down-casts --- potential sources
of runtime problems that cannot be prevented via existing techniques.

\subsubsection{Array Bounds}


\mypara{Specification} We specify safety by defining
suitable refinement types for array creation and
access. For example, we view read @a[i]@,
write @a[i] = e@ and length access @a.length@ as
calls @get(a,i)@, @set(a,i,e)@ and @length(a)@ where:
\begin{code}
  get    : (a:T[],i:idx<a>) => T
  set    : (a:T[],i:idx<a>,e:T) => void
  length : (a:T[]) => natN<len(a)>
\end{code}

\mypara{Verification} Refinement typing ensures
that the \emph{actual} parameters supplied at each
\emph{call} to @get@ and @set@ are subtypes of the
\emph{expected} values specified in the signatures,
and thus verifies that all accesses are safe.
As an example, consider the function that returns
the ``head'' element of an array:
\begin{code}
  function head(arr:NEArray<T>){ return arr[0]; }
\end{code}
The input type requires that @arr@ be \emph{non-empty}:
\begin{code}
  type NEArray<T> = {v:T[] | 0 < len(v)}
\end{code}
We convert @arr[0]@ to @get(arr,0)@ which is checked under environment
$\Gamma_\thead$ defined as $\bind{\primty{arr}}{\reftp{\tarray{\primty{T}}}{0 < \tlen{\vv}}}$
yielding the subtyping obligation:
\begin{alignat*}{2}
\Gamma_\thead & \vdash\ \refp{\vv = 0} && \ \lqsubt\ \tidx{\primty{arr}} \\ 
\intertext{which reduces to the logical \emph{verification condition} (VC):}
0 < \tlen{\primty{arr}} & \Rightarrow\  (\vv = 0 && \ \Rightarrow\ 0 \leq
\vv < \tlen{\primty{arr}}) 
\end{alignat*}
The VC is proved \emph{valid} by an SMT solver~\cite{Nelson81},
verifying subtyping, and hence, the array access' safety.

\mypara{Path Sensitivity} is obtained by adding branch conditions
into the typing environment. Consider:
\begin{code}
  function head0(a:number[]): number {
    if (0 < a.length) return head(a);
    return 0;
  }
\end{code}
Recall that @head@ should only be invoked with \emph{non-empty}
arrays. The call to @head@ above occurs under $\Gamma_\theadz$ defined
as:
$\bind{\primty{a}}{\tarray{\tnumber}},\ \eguard{0 < \tlen{\primty{a}}}$
\ie\ which has the binder for the formal @a@, and the guard predicate
established by the branch condition. Thus, the call to @head@ yields the
obligation:
$$\Gamma_\theadz \ \vdash\ \refp{\vv = \primty{a}} \ \lqsubt\ \talias{NEArray}{\tnumber}$$
yielding the valid VC:
$$
0 < \tlen{\primty{a}}  \Rightarrow\  (\vv = \primty{a}  \ \Rightarrow\ 0 < \tlen{\vv})
$$


\mypara{Polymorphic, Higher Order Functions}
Next, let us \emph{assume} that @reduce@ has the type $\tyreduce$
described in (\ref{fig:head:sig}),
and see how to verify the array safety of @minIndex@
(Figure~\ref{fig:minindex}). The challenge here is to
precisely track which values can flow into @min@
(used to index into @a@), which is tricky since
those values are actually produced inside @reduce@.

Types make it easy to track such flows: we need only
determine the \emph{instantiation} of the polymorphic
type variables of @reduce@ at this call site inside @minIndex@.
The type of the @f@ parameter in the instantiated type
corresponds to a signature for the closure @step@
which will let us verify the closure's implementation.
Here, \toolname automatically instantiates (by building complex logical
predicates from simple terms that have been predefined in 
a prelude):
\begin{equation}
\primty{A} \mapsto \tnumber \qquad \qquad \primty{B} \mapsto \tidx{\primty{a}} \label{eq:minindex:inst}
\end{equation}

Let us reassure ourselves that this instantiation is valid, by
checking that @step@ and @0@ satisfy the instantiated type.
If we substitute (\ref{eq:minindex:inst}) into $\tyreduce$ we obtain the
following types for @step@ and @0@, \ie @reduce@'s second and third arguments:
\begin{code}
  step:(idx<a>,number,idx<a>)=>idx<a>   0:idx<a>
\end{code}
The initial value @0@ is indeed a valid @idx<a>@ thanks to the
\linebreak @a.length@ check at the start of the function.
To check @step@, assume that its inputs have the above types:
\begin{code}
  min:idx<a>, curr:number, i:idx<a>
\end{code}
The body is safe as the index @i@ is trivially a subtype of the
required @idx<a>@, and the output is one of @min@ or @i@ and
hence, of type @idx<a>@ as required.

\subsubsection{Overloading}\label{sec:overview:overload}


Dynamic languages extensively use \emph{value-based overloading} to simplify
library interfaces. For example, a library may export:
\begin{code}
function $reduce(a, f, x) {
  if (arguments.length===3) return reduce(a,f,x);
  return reduce(a.slice(1),f,a[0]);
}
\end{code}
%
%
The function @$reduce@ has \emph{two} distinct types
depending on its parameters' \emph{values}, rendering
it impossible to statically type without path-sensitivity.
Such overloading is ubiquitous: in more than 25\% of libraries,
more than 25\% of the functions are value-overloaded~\cite{rsc-ecoop15}.


\mypara{Intersection Types}
Refinements let us statically \emph{verify} value-based
overloading via an approach called 
\emph{Two-Phased Typing}~\cite{rsc-ecoop15}.
First, we specify overloading as an intersection type.
For example, @$reduce@ gets the following signature,
which is just the conjunction of the two overloaded
behaviors:
%
\begin{code}
 <A,B>(a:A[]`$^+$`, f:(A, A, idx<a>)=>A)=>A      // 1
 <A,B>(a:A[]`\phantom{$^+$}`, f:(B, A, idx<a>)=>B, x:B)=>B // 2
\end{code}
The type @A[]@$^+$ in the first conjunct indicates that 
the first argument needs to be a non-empty array, so that 
the call to @slice@ and the access of @a[0]@ both succeed.

\mypara{Dead Code Assertions}
Second, we check each conjunct separately, replacing
ill-typed terms in each context with @assert(false)@.
This requires the refinement type checker to prove
that the corresponding expressions are \emph{dead code},
as @assert@ requires its argument to always be @true@:
\begin{code}
  assert : (b:{v:bool | v = true}) => A
\end{code}
To check @$reduce@, we specialize it per overload context:
\begin{code}
function $reduce1 (a,f) {
  if (arguments.length===3) return assert(false);
  return reduce(a.slice(1), f, a[0]);
}
function $reduce2 (a,f,x) {
  if (arguments.length===3) return reduce(a,f,x);
  return assert(false);
}
\end{code}
In each case, the ``ill-typed'' term (for the corresponding
input context) is replaced with @assert(false)@.
Refinement typing easily verifies the @assert@s,
as they respectively occur under the \emph{inconsistent}
environments:
\begin{align*}
\Gamma_1 & \ \defeq\ \bind{\targs}{\refp{\tlen{\vv} = 2}},\ \tlen{\targs} = 3 \\
\Gamma_2 & \ \defeq\ \bind{\targs}{\refp{\tlen{\vv} = 3}},\ \tlen{\targs} \not = 3
\end{align*}
which bind \targs  to an array-like object corresponding to the arguments
passed to that function, and include the branch condition under which the
call to @assert@ occurs.
%


\subsection{Analysis}\label{sec:overview-analysis}

Next, we outline how \toolname\ uses refinement types to analyze programs
with closures, polymorphism,  assignments, classes and mutation.

\subsubsection{Polymorphic Instantiation}

\toolname\ uses the framework of Liquid Typing~\cite{LiquidPLDI08}
to \emph{automatically synthesize} the instantiations of (\ref{eq:minindex:inst}).
In a nutshell, \toolname
(a)~creates \emph{templates} for unknown refinement type instantiations,
(b)~performs type-checking over the templates to generate
    \emph{subtyping constraints} over the templates that
    capture value-flow in the program,
(c)~solves the constraints via a \emph{fixpoint} computation
    (abstract interpretation).

\mypara{Step 1: Templates} Recall that @reduce@ has the
polymorphic type $\tyreduce$. At the call-site in @minIndex@,
the type variables @A@, @B@ are instantiated with the \emph{known}
base-type @number@. Thus, \toolname creates fresh templates
for the (instantiated) @A@, @B@:
\begin{align*}
\primty{A} \ \mapsto \ \reftp{\tnumber}{\kvar_A}  \qquad 
\primty{B} \ \mapsto \ \reftp{\tnumber}{\kvar_B}
\end{align*}
where the \emph{refinement variables} $\kvar_A$ and $\kvar_B$
represent the \emph{unknown refinements}. We substitute the
above in the signature for @reduce@ to obtain a
\emph{context-sensitive} template:
\begin{equation} \label{sig:reduceinst}
\funtype{\bind{\primty{a}}{\tarray{\vkvar{A}}},\ {\funtype{\vkvar{B}, \vkvar{A}, \tidx{\primty{a}}}{\vkvar{B}}},\ {\vkvar{B}}}{\vkvar{B}}
\end{equation}

\mypara{Step 2: Constraints}
Next, \toolname generates \emph{subtyping} constraints
over the templates. Intuitively, the templates describe
the \emph{sets} of values that each static entity
(\eg variable) can evaluate to at runtime.
The subtyping constraints capture the \emph{value-flow}
relationships \eg at assignments, calls and returns,
to ensure that the template solutions -- and hence
inferred refinements -- soundly over-approximate
the set of runtime values of each corresponding
static entity.

We generate constraints by performing type checking
over the templates. As @a@, @0@, and @step@ are passed
in as arguments, we check that they respectively have
the types $\tarray{\vkvar{A}}$, $\vkvar{B}$ and
%
${\funtype{\vkvar{B}, \vkvar{A}, \tidx{\primty{a}}}{\vkvar{B}}}$.
%
Checking @a@ and @0@ yields the subtyping constraints:
$$
\Gamma \ \vdash\ \tarray{\tnumber} \ \lqsubt\ \tarray{\vkvar{A}} \qquad
\Gamma \ \vdash\ \refp{\vv = 0}    \ \lqsubt\ \vkvar{B}
$$%
%
where
${\Gamma \ \doteq \ \bind{\primty{a}}{\tarray{\tnumber}},\ 0 < \tlen{\primty{a}}}$
from the \emph{else-guard} that holds at the call to \treduce.
We check @step@ by checking its body under the environment $\Gamma_\tstep$
that binds the input parameters to their respective types: 
$$\Gamma_\tstep \ \doteq \ \bind{\primty{min}}{\vkvar{B}},\ \bind{\primty{cur}}{\vkvar{a}},\ \bind{i}{\tidx{\primty{a}}}$$
As $\primty{min}$ is used to index into the array $\primty{a}$ we get:
$$\Gamma_{\tstep} \ \vdash\ \vkvar{B} \ \lqsubt\ \tidx{a}$$
As $\primty{i}$ and $\primty{min}$ flow to the output type $\vkvar{B}$, we get:
$$
\Gamma_\tstep \ \vdash\ \tidx{\primty{a}} \ \lqsubt\ \vkvar{B} \qquad
\Gamma_\tstep \ \vdash\ \vkvar{B} \ \lqsubt\ \vkvar{B}
$$


\mypara{Step 3: Fixpoint}
The above subtyping constraints over the $\kvar$ variables
are reduced via the standard rules for co- and contra-variant
subtyping, into \emph{Horn implications} over the $\kvar$s.
%
%
\toolname\ solves the Horn implications via (predicate) abstract
interpretation~\cite{LiquidPLDI08} to obtain the solution
${\vkvar{A} \mapsto\ \vtrue}$ and  ${\vkvar{B} \mapsto\ 0 \leq \vv < \tlen{\primty{a}}}$
which is exactly the instantiation in (\ref{eq:minindex:inst})
that satisfies the subtyping constraints, and proves @minIndex@
is array-safe.

\subsubsection{Assignments}\label{sec:overview:assignments}

Next, let us see how the signature for @reduce@ in Figure~\ref{fig:reduce}
is verified by \toolname. Unlike in the functional setting, where
refinements have previously been studied, here, we must deal with
imperative features like assignments and @for@-loops.


\mypara{SSA Transformation}
We solve this problem in three steps. First, we convert
the code into SSA form, to introduce new binders at each
assignment. Second, we generate fresh templates that represent
the unknown types (\ie set of values) for each $\phi$ variable.
Third, we generate and solve the subtyping constraints to infer
the types for the $\phi$-variables, and hence, the ``loop-invariants''
needed for verification.

Let us see how this process lets us verify @reduce@ from Figure~\ref{fig:reduce}.
First, we convert the body to SSA form (\S\ref{sec:formal-language}) %
%
%
\begin{code}
  function reduce(a, f, x) {
    var r0 = x, i0 = 0;
    while [i2,r2 = \phi((i0, r0), (i1, r1))]
          (i2 < a.length) {
      r1 = f(r2, a[i2], i2); i1 = i2 + 1;
    }
    return r2;
  }
\end{code}
where @i2@ and @r2@ are the $\phi$ variables for @i@ and @r@ respectively.
Second, we generate templates for the $\phi$ variables: 
\begin{equation}
\bind{\ttt{i2}}{\reftp{\tnumber}{\vkvar{i2}}} \qquad \bind{\ttt{r2}}{\reftp{\primty{B}}{\vkvar{r2}}}
\label{eq:templates}
\end{equation}
We need not generate templates for the SSA
variables @i0@, @r0@, @i1@ and @r1@ as their types are
those of the expressions they are assigned.
Third, we generate subtyping constraints as before; the
$\phi$ assignment generates \emph{additional} constraints:
$$\begin{array}{lcl}
\Gamma_0 \ \vdash\ \refp{\vv = \primty{i0}} \ \lqsubt \vkvar{i2} & &
\Gamma_1 \ \vdash\ \refp{\vv = \primty{i1}} \ \lqsubt \vkvar{i2} \\
\Gamma_0 \ \vdash\ \refp{\vv = \primty{r0}} \ \lqsubt \vkvar{r2} & &
\Gamma_1 \ \vdash\ \refp{\vv = \primty{r1}} \ \lqsubt \vkvar{r2}
\end{array}$$
%
where $\Gamma_0$ is the environment at the ``exit'' of
the basic blocks where \primty{i0,r0} are defined: 
%
%
\begin{align*}
\Gamma_0 \doteq\ & \bind{\ttt{a}}{\kw{number}\kw{[]}},\;\bind{\ttt{x}}{\kw{B}}
                 ,\ \bind{\ttt{i0}}{\tnatN{0}}
                 ,\ \bind{\ttt{r0}}{\reftp{\kw{B}}{\vv=\ttt{x}}} \\
\intertext{Similarly, the environment $\Gamma_1$ includes bindings for variables
\ttt{i1} and \ttt{r1}. In addition, code executing the loop body has
passed the conditional check, so our \emph{path-sensitive} environment
is strengthened by the corresponding guard:}
\Gamma_1 \doteq\ & \Gamma_0
                 ,\ \bind{\ttt{i1}}{\tnatN{\ttt{i2} + 1}}
                 ,\ \bind{\ttt{r1}}{\kw{B}}
                 ,\ \ttt{i2} < \tlen{\ttt{a}}
\end{align*}
Finally, the above constraints are solved to: $$\vkvar{i2} \mapsto\ 0
\leq \vv < \tlen{\primty{a}} \qquad \vkvar{r2} \mapsto\ \vtrue$$ which
verifies that the ``callback'' @f@ is indeed called with values of
type $\tidx{\primty{a}}$, as it is only called with
$\bind{\primty{i2}}{\tidx{\primty{a}}}$, obtained by plugging the
solution into the template in (\ref{eq:templates}).

\subsubsection{Mutation}

In the imperative, object-oriented setting (common to
dynamic scripting languages), we must account for \emph{class}
and \emph{object} invariants and their preservation in the
presence of field mutation. For example, consider the code
in Figure~\ref{fig:array2d}, modified from the Octane Navier-Stokes benchmark.

\begin{figure}[t]
\begin{code}
  type ArrayN<T,n> = {v:T[] | len(v) = n}
  type grid<w,h>   = ArrayN<number,(w+2)*(h+2)>
  type okW         = natLE<this.w>
  type okH         = natLE<this.h>

  class Field {
    immutable w : pos;
    immutable h : pos;
    dens        : grid<this.w, this.h>;

    constructor(w:pos,h:pos,d:grid<w,h>){
      this.h = h; this.w = w; this.dens = d;
    }
    setDensity(x:okW, y:okH, d:number) {
      var rowS = this.w + 2;
      var i    = x+1 + (y+1) * rowS;
      this.dens[i] = d;
    }
    getDensity(x:okW, y:okH) : number {
      var rowS = this.w + 2;
      var i    = x+1 + (y+1) * rowS;
      return this.dens[i];
    }
    reset(d:grid<this.w,this.h>){
      this.dens = d;
    }
  }
\end{code}
\caption{Two-Dimensional Arrays}
\label{fig:array2d}
\end{figure}

\mypara{Class Invariants} Class @Field@ implements a
\emph{2-dimensional} vector, ``unrolled'' into a single
array @dens@, whose size is the product of the @w@idth
and @h@eight fields. We specify this invariant by
requiring that width and height be strictly
positive (\ie @pos@) and that @dens@ be a @grid@
with dimensions specified by @this.w@ and @this.h@.
An advantage of SMT-based refinement typing is that
modern SMT solvers support non-linear reasoning, which
lets \toolname specify and verify program specific
invariants outside the scope of generic bounds
checkers.

\mypara{Mutable and Immutable Fields} The above invariants
are only meaningful and sound if fields @w@ and @h@
cannot be modified after object creation.
We specify this via the @immutable@ qualifier, which is
used by \toolname to then
(1)~\emph{prevent} updates to the field outside the @constructor@, and
(2)~\emph{allow} refinements of fields (\eg @dens@) to soundly
   refer to the values of those immutable fields.

\mypara{Constructors} We can create \emph{instances} of
@Field@, by using @new Field(...)@ which invokes the
@constructor@ with the supplied parameters. \toolname
ensures that at the \emph{end} of the constructor,
the created object actually satisfies all specified
class invariants \ie field refinements. Of course, this
only holds if the parameters passed to the constructor
satisfy certain preconditions, specified via
the input types. Consequently, \toolname accepts the first
call, but rejects the second:
\begin{code}
  var z = new Field(3,7,new Array(45)); // OK
  var q = new Field(3,7,new Array(44)); // BAD
\end{code}

\mypara{Methods} \toolname uses class invariants to
verify @setDensity@ and @getDensity@, that are
checked \emph{assuming} that the fields of @this@ enjoy
the class invariants, and
method inputs satisfy their given types. The resulting VCs
are valid and hence, check that the methods are array-safe.
Of course, clients must supply appropriate arguments to the
methods. Thus, \toolname accepts the first call, but rejects
the second as the @x@ co-ordinate @5@ exceeds the actual width
(\ie @z.w@), namely @3@:
\begin{code}
  z.setDensity(2, 5, -5)   // OK
  z.getDensity(5, 2);      // BAD
\end{code}

\mypara{Mutation} The @dens@ field is \emph{not} @immutable@
and hence, may be updated outside of the constructor. However,
\toolname requires that the class invariants still hold, and
this is achieved by ensuring that the \emph{new} value assigned
to the field also satisfies the given refinement. Thus, the
@reset@ method requires inputs of a specific size, and
updates @dens@ accordingly. Hence:
\begin{code}
  var z = new Field(3,7,new Array(45));
  z.reset(new Array(45));  // OK
  z.reset(new Array(5));   // BAD
\end{code}


\section{Formal System}\label{sec:language}

Next, we formalize the ideas outlined in \S\ref{sec:overview}.
We introduce our formal core \srcLang:
an imperative, mutable, object-oriented subset of \lang,
that closely follows the design of \cfj~\cite{Nystrom2008},
(the language used to formalize X10), which in turn is based on
Featherweight Java~\cite{Igarashi2001}.
To ease refinement reasoning,
we translate \srcLang\ to a functional, yet still mutable,
intermediate language \ssaLang.
We then formalize our static semantics in terms of \ssaLang.

\subsection{Formal Language}\label{sec:formal-language}

\subsubsection{Source Language (\srcLang)}
%
%
The syntax of this language is given below.
Meta-variable $\srcExpr$ ranges over expressions, which can be
variables           $\srcEvar$,
constants           $\srcVconst$,
property accesses   $\srcDotref{\srcExpr}{\srcFieldname}$,
method calls        $\srcMethcall{\srcExpr}{\srcMethodname}{\many{\srcExpr}}$,
object construction $\srcNewexpr{\cname}{\many{\srcExpr}}$, and
cast operations     $\srcCast{\rtype}{\srcExpr}$.
Statements $\srcStmt$ include
variable declarations,
field updates,
assignments,
conditionals,
concatenations and
empty statements.
Method declarations include a type signature, specifying input and
output types, and a body, \ie a statement immediately followed by a
returned expression.
%
Class definitions distinguish between immutable and
mutable members, using $\srcImmfieldbinding{\srcFieldname}{\rtype}$
and $\srcMutfieldbinding{\srcFieldname}{\rtype}$, respectively.
As in CFJ, each class and method definition is associated with an
invariant $\pred$.
%
%
\[
  \begin{array}{rcl}
%
%
  \srcExpr
    & \prod  &  \srcEvar
      \spmid    \srcVconst
      \spmid    \srcThis
      \spmid    \srcDotref{\srcExpr}{\srcFieldname}
      \spmid    \srcMethcall{\srcExpr}{\srcMethodname}{\many{\srcExpr}}
      \spmid    \srcNewexpr{\cname}{\many{\srcExpr}}
      \spmid    \srcCast{\rtype}{\srcExpr}  \\
%
%
  \srcStmt
    & \prod   & \srcVarDecl{\srcEvar}{\srcExpr}
      \spmid    \srcDotassign{\srcExpr}{\srcFieldname}{\srcExpr}
      \spmid    \srcAssign{\srcEvar}{\srcExpr}
      \spmid    \srcIte{\srcExpr}{\srcStmt}{\srcStmt} \spmid \\
    &         & \srcSeq{\srcStmt}{\srcStmt}
      \spmid    \srcSkip \\
\srcBody
    & \prod   & \srcSeq{\srcStmt}{\srcReturn{\srcExpr}}   \\
  \srcMethodDefsSym
    & \prod   & \srcMethdef{\srcMethodname}{\srcEvar}{\rtype}{\pred}{\rtype}
                                            {\srcBody}    \\
%
  \srcFieldsSym
    & \prod   & \cdot                                         \spmid
                \srcImmfieldbinding{\fieldname}{\rtype}       \spmid
                \srcMutfieldbinding{\fieldname}{\rtype}       \spmid
                \concatenate{\srcFieldsSym_1}{\srcFieldsSym_2}  \\
  \srcCldeclname
    & \prod   & \srcClassdecl{\cname}{\pred}{\rtypec}
                             {\srcFieldsSym}{\srcMethodDefsSym} \\
%
%
\end{array}
\]

The core system does not formalize:
(a)~method overloading,
which is orthogonal to the current contribution and has been
investigated in previous work~\cite{rsc-ecoop15}, or
(b)~method overriding, which means that method names
are distinct from the ones defined in parent classes.
%

\subsubsection{Intermediate Language (\ssaLang)}\label{subsec:ssalang}
\srcLang, while syntactically similar to \ts,
is not entirely suitable for refinement
type checking in its current form, due
to features like assignment.
To overcome this challenge we translate \srcLang to a
functional language \ssaLang through a Static Single
Assignment (SSA) transformation, which produces programs
that are equivalent (in a sense that we will make precise
in the sequel).
%
%
%
In \ssaLang, statements are replaced by let-bindings and
new variables are introduced for each variable being reassigned in
the respective \srcLang code.
Thus, \ssaLang has the following syntax:
\[
  \begin{array}{rcl}
%
%
  \expr
    & \prod   &   \evar                                       \spmid
                  \vconst                                     \spmid
                  \ethis                                      \spmid
                  \dotref{\expr}{\fieldname}                  \spmid
                  \methcall{\expr}{\methodname}{\many{\expr}} \spmid
                  \newexpr{\cname}{\many{\expr}}              \spmid
\\
    &         &   \cast{\rtype}{\expr}                        \spmid
                  \dotassign{\expr}{\fieldname}{\expr}        \spmid
                  \ssactxidx{\ssactx}{\expr}                  \spmid
\\
  \ssactx     & \prod   &   \hole \spmid
                            \letin{\evar}{\expr}{\hole} \spmid
\\
  & &                       \letifshort
                              {\many{\evar}, \many{\evar}_1, \many{\evar}_2}
                              {\expr}
                              {\ssactx_1}{\ssactx_2}{\hole}
\\
  \fieldsSym  & \prod   &   \cdot                                         \spmid
                            \immfieldbinding{\fieldname}{\rtype}       \spmid
                            \mutfieldbinding{\fieldname}{\rtype}       \spmid
                            \concatenate{\fieldsSym_1}{\fieldsSym_2}
\\
  \methodDefsSym  & \prod &   \cdot
                            \spmid
                            \methdef{\methodname}{\evar}{\rtype}
                                    {\pred}{\rtype}{\expr}
                            \spmid
                            \concatenate{\methodDefsSym_1}{\methodDefsSym_2}
\\
  \ssaCldeclSym & \prod   &   \classdecl{\cname}{\pred}{\rtypec}{\fieldsSym}{\methodDefsSym}
%
\end{array}
\]

\begin{figure*}[t]
\judgementHeadFour{SSA Transformation}
  {\tossaexpr{\ssaenv}{\srcExpr}{\ssaexpr}}
  {\tossastmt{\ssaenv}{\srcStmt}{\ssactx}{\ssaenv'}}
  {\tossabody{\ssaenv}{\srcBody}{\expr}}
  {\tossameth{\srcMethodDefsSym}{\methodDefsSym}}
\begin{mathpar}
\inferrule[\ssavarref]
{}{\tossaexpr{\ssaenv}{\srcEvar}{\idxMapping{\ssaenv}{\srcEvar}}}
\and
\inferrule[\ssathis]
{}{\tossaexpr{\ssaenv}{\srcThis}{\ethis}}
\and
\inferrule*[left=\ssavardecl]
{
  \tossaexpr{\ssaenv}{\srcExpr}{\ssaexpr}
  \\
  \ssaenv' = \updMapping{\ssaenv}{\srcEvar}{\evar}
  \\
  \fresh{\evar}
}
{
  \tossastmt{\ssaenv}{\srcVarDecl{\srcEvar}{\srcExpr}}
          {\letin{\evar}{\ssaexpr}{\hole}}{\ssaenv'}
}
\and
%
\inferrule[\ssaite]
{
  \tossaexpr{\ssaenv}{\srcExpr}{\ssaexpr}
  \\
  \tossastmt{\ssaenv}{\srcStmt_1}{\ssactx_1}{\ssaenv_1}
  \\
  \tossastmt{\ssaenv}{\srcStmt_2}{\ssactx_2}{\ssaenv_2}
  \\\\
  \ptriplet{\many{\srcEvar}}{\many{\evar}_1}{\many{\evar}_2} 
    = \envDiff{\ssaenv_1}{\ssaenv_2}  
  \\
  \ssaenv' = \updMapping{\ssaenv}{\many{\srcEvar}}{\many{\evar}'}
  \\
  \fresh{\many{\evar}'}
}
{
  \tossastmt{\ssaenv}
          {\srcIte{\srcExpr}{\srcStmt_1}{\srcStmt_2}}
          {
          \letif{\many{\evar}'}{\many{\evar}_1}{\many{\evar}_2}
                  {\ssaexpr}{\ssactx_1}{\ssactx_2}{\hole}
          }
          {\ssaenv'}
}
\and
\inferrule[\ssaasgn]
{
  \tossaexpr{\ssaenv}{\srcExpr}{\ssaexpr}
  \\
  \evar = \idxMapping{\ssaenv}{\srcEvar}
  \\\\
  \ssaenv' = \updMapping{\ssaenv}{\srcEvar}{\evar'}
  \\
  \fresh{\evar'}
}
{
  \tossastmt{\ssaenv}
          {\srcAssign{\srcEvar}{\srcExpr}}
          {\letin{\evar'}{\ssaexpr}{\hole}}
          {\ssaenv'}
}
\and
\inferrule[\ssadasgn]{
  \tossaexpr{\ssaenv}{\srcExpr}{\ssaexpr}
  \\
  \tossaexpr{\ssaenv}{\srcExpr'}{\ssaexpr'}
}
{
  \tossastmt{\ssaenv}{\srcDotassign{\srcExpr}{\srcFieldname}{\srcExpr'}}
          {\letin{\_}{\dotassign{\ssaexpr}{\fieldname}{\ssaexpr'}}{\hole}}{\ssaenv}
}
\and
\inferrule[\ssaseq]
{
  \tossastmt{\ssaenv}{\srcStmt_1}{\ssactx_1}{\ssaenv_1}
  \\
  \tossastmt{\ssaenv_1}{\srcStmt_2}{\ssactx_2}{\ssaenv_2}
}
{
  \tossastmt{\ssaenv}{\srcSeq{\srcStmt_1}{\srcStmt_2}}
          {\ssactxidx{\ssactx_1}{\ssactx_2}}
          {\ssaenv_2}
}
\and
\inferrule[\ssaskip]
{}
{
  \tossastmt{\ssaenv}{\srcSkip}{\hole}{\ssaenv}
}
\and
\inferrule[\ssabody]
{
  \tossastmt{\ssaenv}{\srcStmt}{\ssactx}{\ssaenv'}
  \\
  \tossaexpr{\ssaenv'}{\srcExpr}{\ssaexpr}
}
{
  \tossaexpr{\ssaenv}{\srcSeq{\srcStmt}{\srcReturn{\srcExpr}}}
           {\ssactxidx{\ssactx}{\expr}}
}
\and
\inferrule[\ssamethdecl]
{
  \toString{\srcMethodname} = \toString{\methodname}
  \\
  \ssaenv = \mapping{\many{\srcEvar}}{\many{\evar}},
            \mapping{\srcThis}{\ethis}
  \\\\
  \tossabody{\ssaenv}{\srcBody}{\expr}
  \\
  \fresh{\methodname, \many{\evar}}
}
{
  \tossameth{
    \srcMethdef{\srcMethodname}{\srcEvar}{\rtype}{\pred}{\rtype}
               {\srcBody} 
  }
  {
    \methdef{\methodname}{\evar}{\rtype}{\pred}{\rtype}{\expr}
  }
}
\end{mathpar}
\nocaptionrule
\caption{Selected SSA Transformation Rules}
\label{fig:ssa}
\end{figure*}

The majority of the expression forms $\expr$ are unsurprising.
An exception is the form of the \emph{SSA context} $\ssactx$,
which corresponds to the translation of a statement $\srcStmt$ and
contains a \emph{hole} $\hole$ that will hold the translation
of the continuation of $\srcStmt$.

\paragraph{SSA Transformation}
Figure~\ref{fig:ssa} describes the SSA transformation, that
uses a \emph{translation environment} \ssaenv,
to map \srcLang variables \srcEvar to \ssaLang variables \evar.
The translation of expressions \srcExpr to \expr is
routine: as expected, \ssavarref maps the source level \evar to the
current binding of \evar in \ssaenv.
The translating judgment of
statements \srcStmt has the form:
$\tossastmt{\ssaenv}{\srcStmt}{\ssactx}{\ssaenv'}$.
The output environment $\ssaenv'$ is used for the
translation of the expression that will fill the hole in $\ssactx$.




The most interesting case is that of the conditional statement (rule \ssaite).
The conditional expression and each branch are translated separately.
To compute variables that get updated in either branch ($\Phi$-variables),
we combine the produced translation states $\ssaenv_1$ and $\ssaenv_2$ as
$\envDiff{\ssaenv_1}{\ssaenv_2}$ defined as:
$$
  \{ \ptriplet{\srcEvar}{\evar_1}{\evar_2} \;|\;
         \mapping{\srcEvar}{\evar_1} \in \ssaenv_1
         ,\,  \mapping{\srcEvar}{\evar_2} \in \ssaenv_2
         ,\,  \evar_1 \neq \evar_2
  \}
$$
Fresh $\Phi$-variables $\many{\evar}'$
populate the output SSA environment $\ssaenv'$.
Along with the versions of the $\Phi$-variables for each
branch ($\many{\evar}_1$ and $\many{\evar}_2$),
they are used to annotate the produced structure.


Assignment statements introduce a new SSA variable
and bind it to the updated source-level variable (rule \ssaasgn).
Statement sequencing is emulated with nesting SSA contexts (rule
\ssaseq); empty statements introduce a hole (rule \ssaskip); and,
finally, method declarations fill in the hole introduced by the method
body with the translation of the return expression (rule \ssamethdecl).

\subsubsection{Consistency}\label{subsec:consistency}
To validate our transformation, we provide a consistency result that
guarantees that stepping in the target language preserves the
transformation relation, after the program in the source language has
made an appropriate number of steps.
We define a \emph{runtime configuration} $\srcRtConf$ for \srcLang
(resp. $\ssaRtConf$ for \ssaLang)
for a \emph{program} $\srcProg$ (resp. $\ssaProg$) as:
\[
  \begin{array}{ll}
  \srcProg    \defeq \pair{\srcSignatures}{\srcBody} \qquad\qquad & \ssaProg \defeq \pair{\ssaSignatures}{\expr}          \\
  \srcRtConf  \defeq \pair{\srcRtState}{\srcBody}    \qquad\qquad       & \ssaRtConf \defeq \pair{\ssaRtState}{\expr}           \\
  \srcRtState \defeq\quadruple{\srcSignatures}{\srcStore}{\srcStack}{\srcHeap} \qquad \qquad & \ssaRtState \defeq \pair{\ssaSignatures}{\ssaHeap}
\end{array}
\]
Runtime state $\srcRtState$ consists of
the call stack $\srcStack$,
the local store of the current stack frame $\srcStore$
and the heap $\srcHeap$.
The runtime state for \ssaLang,  $\ssaRtConf$
only consists of the signatures $\ssaSignatures$ and a heap $\ssaHeap$.

\begin{figure*}[!t]
  \judgementHeadTwo
    {Operational Semantics for \srcLang}
    {\stepsFour{\srcRtState}{\srcExpr}{\srcRtState'}{\srcExpr'}}
    {\stepsFour{\srcRtState}{\srcStmt}{\srcRtState'}{\srcStmt'}}
\begin{mathpar}
%
\inferrule[\srcRcVal]
{}
{\stepsFour{\srcRtState}{\srcVal}{\srcRtState}{\srcSkip}}
\and
\inferrule[\srcRcEvalCtx]
{
  \stepsFour{\quadruple{\srcSignatures}{\srcStore }{\cdot}{\srcHeap }}{\srcExpr}
            {\quadruple{\srcSignatures}{\srcStore'}{\cdot}{\srcHeap'}}{\srcExpr'}
}
{
  \stepsFour{\quadruple{\srcSignatures}{\srcStore }{\srcStack}{\srcHeap }}{\idxEctx{\srcEvalCtx}{\srcExpr}}
            {\quadruple{\srcSignatures}{\srcStore'}{\srcStack}{\srcHeap'}}{\idxEctx{\srcEvalCtx}{\srcExpr'}}
}
\and
\inferrule[\srcRcVar]
{}
{
  \stepsFour{\srcRtState}{\srcEvar}
            {\srcRtState}{\idxMapping{\accessState{\srcRtState}{\srcStore}}{\srcEvar}}
}
\and
\inferrule[\srcRcDotRef]
{
  \idxMapping{\accessState{\srcRtState}{\srcHeap}}{\srcLoc} = 
    \heapObject{\srcLoc'}{\srcFieldDefsSym}
  \\\\
  \srcFieldDef{\srcFieldname}{\srcVal} \in \srcFieldDefsSym 
}
{
  \stepsFour{\srcRtState}{\srcDotref{\srcLoc}{\srcFieldname}}
            {\srcRtState}{\srcVal}
}
\and
%
\inferrule*[left=\srcRcNew]
{
  \idxMapping{\srcHeap}{\srcLoc_0} = \heapClassObject{\cname}{\srcLoc_0'}{\srcMethodDefsSym}
  \\
  \srcRtFields{\srcSignatures}{\cname} = \srcFieldbindings{\srcFieldname}{\rtype}
  \\\\
  \srcObject = \heapObject{\srcLoc_0}{\srcFieldDefs{\srcFieldname}{\srcVal}}
  \\
  \srcHeap' = \updMapping{\srcHeap}{\srcLoc}{\srcObject}
  \\
  \fresh{\srcLoc}
}
{
  \stepsFour{\quadruple{\srcSignatures}{\srcStore}{\srcStack}{\srcHeap}}
            {\srcNewexpr{\cname}{\many{\srcVal}}}
            {\quadruple{\srcSignatures}{\srcStore}{\srcStack}{\srcHeap'}}
            {\srcLoc}
}
\and
\inferrule[\srcRcCast]
{}
{
  \stepsFour{\srcRtState}{\srcCast{\rtype}{\srcExpr}}
            {\srcRtState}{\srcCheck{\rtype}{\srcExpr}}
}
\and
%
\inferrule[\srcRcVarDecl]
{
  \srcStore' = \updMapping{\accessState{\srcRtState}{\srcStore}}{\srcEvar}{\srcVal}
}
{
  \stepsFour{\srcRtState}{\srcVarDecl{\srcEvar}{\srcVal}}
            {\updState{\srcRtState}{\srcStore'}}{\srcSkip}
}
\and
\inferrule[\srcRcCall]
{
  \srcResolveMeth{\srcHeap}{\srcLoc}{\srcMethodname}{
    \srcMethdefUnty{\srcMethodname}{\srcEvar}{\srcSeq{\srcStmt}{\srcReturn{\srcExpr}}}}
  \\\\
  \srcStore' = \concatenate{\mappings{\srcEvar}{\srcVal}}{\mapping{\srcThis}{\srcLoc}}
  \\
  \srcStack' = \concatenate{\srcStack}{\srcStore,\srcEvalCtx}
}
{
  \stepsFour{\quadruple{\srcSignatures}{\srcStore}{\srcStack}{\srcHeap}}
            {\idxEctx{\srcEvalCtx}{\srcMethcall{\srcLoc}{\srcMethodname}{\many{\srcVal}}}}
            {\quadruple{\srcSignatures}{\srcStore'}{\srcStack'}{\srcHeap}}
            {\srcSeq{\srcStmt}{\srcReturn{\srcExpr}}}
}
\and
\inferrule[\srcRcDotAsgn]
{
  \srcHeap' = \updMapping{\accessState{\srcRtState}{\srcHeap}}{\srcLoc}
                         {\updMapping{\idxMapping{\accessState{\srcRtState}{\srcHeap}}{\srcLoc}}
                                     {\srcFieldname}{\srcVal}}
}
{
  \stepsFour{\srcRtState}{\srcDotassign{\srcLoc}{\srcFieldname}{\srcVal}}
            {\updState{\srcRtState}{\srcHeap'}}
            {\srcVal}
}
\and
\inferrule[\srcRcAssign]
{
  \srcStore' = \updMapping{\accessState{\srcRtState}{\srcStore}}{\srcEvar}{\srcVal}
}
{
  \stepsFour{\srcRtState}{\srcAssign{\srcEvar}{\srcVal}}
            {\updState{\srcRtState}{\srcStore'}}{\srcVal}
}
\and
\inferrule[\srcRcIte]
{
  \srcVconst = \vtrue  \imp \index = 1  \\\\
  \srcVconst = \vfalse \imp \index = 2
}
{
  \stepsFour{\srcRtState}{\srcIte{\srcVconst}{\srcStmt_1}{\srcStmt_2}}
            {\srcRtState}{\srcStmt_{\index}}
}
\and
\inferrule[\srcRcRet]
{
  \accessState{\srcRtState}{\srcStack} = \concatenate{\srcStack'}{\srcStore,\srcEvalCtx}
}
{\stepsFour{\srcRtState}
           {\srcReturn{\srcVal}}
           {\updState{\srcRtState}{\srcStack', \srcStore}}
           {\idxEctx{\srcEvalCtx}{\srcVal}}
}
\and
\inferrule[\srcRcSkip]
{}
{\stepsFour{\srcRtState}{\srcSeq{\srcSkip}{\srcStmt}}{\srcRtState}{\srcStmt}}
\end{mathpar}

\judgementHead{Operational Semantics for \ssaLang}
              {\stepsFour{\ssaRtState}{\expr}{\ssaRtState'}{\expr'}}
\begin{mathpar}
\inferrule[\rcevalctx]
{
  \stepsFour{\ssaRtState}{\expr}{\ssaRtState'}{\expr'}
}
{
  \stepsFour{\ssaRtState}{\idxEctx{\ssaEvalCtx}{\expr}}{\ssaRtState'}{\idxEctx{\ssaEvalCtx}{\expr'}}
}
\and
\inferrule[\rfield]
{
  \idxMapping{\accessState{\ssaRtState}{\ssaHeap}}{\loc} = 
    \heapObject{\loc'}{\fieldDefsSym}
  \\\\
  \fieldDef{\fieldname}{\val} \in \fieldDefsSym 
}
{
  \stepsFour{\ssaRtState}
        {\dotref{\loc}{\fieldname}}
        {\ssaRtState}
        {\val}
}
\and
\inferrule[\rinvoke]
{
  \resolveMeth{\ssaHeap}{\loc}{\methodname}
              {\parens{\methdef{\methodname}{\evar}{\rtypeb}{\pred}{\rtype}{\expr}}}
  \\\\
  \eval{\appsubst{\tsubsttwo{\many{\val}}{\many{\evar}}{\loc}{\ethis}}{\pred}} = \ssaTrue
%
}
{
  \stepsFour{\ssaRtState}
        {\methcall{\loc}{\methodname}{\many{\val}}}
        {\ssaRtState}
        {\appsubst{\tsubsttwo{\many{\val}}{\many{\evar}}{\loc}{\ethis}}{\expr}}
}
\and
\inferrule[\rnew]
{
  \idxMapping{\srcHeap}{\loc_0} = \heapClassObject{\cname}{\loc_0'}{\methodDefsSym}
  \\
  \rtFields{\ssaSignatures}{\cname} = \fieldbindings{\fieldname}{\rtype}
  \\\\
  \ssaObject = \heapObject{\loc_0}{\fieldDefs{\fieldname}{\val}}
  \\
  \ssaHeap' = \updMapping{\ssaHeap}{\loc}{\ssaObject}
  \\
  \fresh{\loc}
}
{
  \stepsFour{\pair{\ssaSignatures}{\ssaHeap}}
            {\newexpr{\cname}{\many{\val}}}
            {\pair{\ssaSignatures}{\ssaHeap'}}
            {\loc}
}
\and
\inferrule[\rletin]
{}
{
  \stepsFour{\ssaRtState}{\letin{\evar}{\val}{\expr}}
            {\ssaRtState}
            {\appsubst{\esubst{\val}{\evar}}{\expr}}
}
\and
%
\inferrule[\rdotasgn]
{
  \ssaHeap' = \updMapping{\accessState{\ssaRtState}{\ssaHeap}}{\loc}
                         {\updMapping{\idxMapping{\accessState{\ssaRtState}{\ssaHeap}}{\loc}}
                                     {\fieldname}{\val}}
}
{
  \stepsFour{\ssaRtState}{\dotassign{\loc}{\fieldname}{\val}}
            {\updState{\ssaRtState}{\ssaHeap'}}{\val}
}
\and
\inferrule[\rcast]
{
  \wfconstr{\env}{\varbinding{\idxMapping{\ssaRtState}{\loc}}{\rtypeb}; \rtypeb\leq\rtype}
}
{
  \stepsFour{\ssaRtState}{\cast{\rtype}{\loc}}{\ssaRtState}{\loc}
}
\and
%
\inferrule[\rletif] {
  \vconst = \ssaTrue  \imp \index = 1 
  \\
  \vconst = \ssaFalse \imp \index = 2
}
{
  \stepsFour{\ssaRtState}{\letif
      {\many{\evar}}{\many{\evar}_1}{\many{\evar}_2}
      {\vconst}{\ssactx_{1}}{\ssactx_{2}}{\expr}
    }{\ssaRtState}{\ssactxidx{\ssactx_{\index}}
     {\appsubst{\esubst{\many{\evar_{\index}}}{\many{\evar}}}{\expr}}}
}
\end{mathpar}
\nocaptionrule\caption{Reduction Rules for \srcLang (adapted from Safe
TypeScript~\cite{Rastogi2015}) and \ssaLang}
\label{fig:opsem}
\end{figure*}

We establish the \emph{consistency} of the SSA transformation by means of
a weak forward simulation theorem that connects the dynamic semantics of the
two languages.
To that end, we define small-step operational semantics for
both languages, of the form $\steps{\srcRtConf}{\srcRtConf'}$ and
$\steps{\ssaRtConf}{\ssaRtConf'}$.
Figure~\ref{fig:opsem} presents the dynamic behavior of
the two languages. Rules for \srcLang have been adapted
from Rastogi \etal~\cite{Rastogi2015}.
Note how in rule \srcRcCast the cast operation
reduces to a call to the built-in $\texttt{check}$ function, where
$\llbracket\rtype\rrbracket$ encodes type $\rtype$.
Rules for \ssaLang are mostly routine, with the exception of
rule \rletif: expression $\expr$ has been produced assuming
$\Phi$-variables $\many{\evar}$. After the branch has been
determined we pick the actual $\Phi$-variables ($\many{\evar}_1$ or
$\many{\evar}_2$)
and replace them in $\expr$. This formulation allows
us to perform all the SSA-related book-keeping 
in a single reduction step,
which is key to preserving our consistency \emph{invariant} that
\ssaLang steps faster than \srcLang.

We also extend our SSA
transformation judgment to runtime configurations,
leveraging the SSA environments
that have been statically computed for each program entity, which
now form a
\emph{global SSA environment} $\ssaenvs$,
mapping each AST node ($\srcExpr$, $\srcStmt$, \etc)
to an SSA environment $\ssaenv$:
$$
\ssaenvs \prod \cdot
\spmid \mapping{\srcExpr}{\ssaenv}
\spmid \mapping{\srcStmt}{\ssaenv}
\spmid \dots
\spmid \concatenate{\ssaenvs_1}{\ssaenvs_2}
$$
We assume that the compile-time SSA translation
yields this environment as a side-effect
(\eg
$\tossaexpr{\ssaenv}{\srcExpr}{\expr}$ produces
$\mapping{\srcExpr}{\ssaenv}$
) and the top-level program transformation judgment
returns the net effect:
$\tossaprog{\srcProg}{\ssaProg}{\ssaenvs}$.
Hence, the SSA transformation judgment for configurations becomes:
$\tossartconf{\ssaenvs}{\srcRtState}{\srcBody}{\ssaRtState}{\expr}$.
We can now state our consistency theorem as:
\begin{theorem}[SSA Consistency]\label{theorem:consistency}
For configurations
$\srcRtConf$ and $\ssaRtConf$
and global store typing $\ssaenvs$,
if
$\tossartconfshort{\ssaenvs}{\srcRtConf}{\ssaRtConf}$,
then either
both $\srcRtConf$ and $\ssaRtConf$ are terminal,
or
if for some $\ssaRtConf'$,
  $\steps{\ssaRtConf}{\ssaRtConf'}$,
then there exists $\srcRtConf'$ \st
  $\stepsmanyone{\srcRtConf}{\srcRtConf'}$
  and
  $\tossartconfshort{\ssaenvs}{\srcRtConf'}{\ssaRtConf'}$.
\end{theorem}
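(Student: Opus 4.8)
The plan is to prove the theorem by induction on the derivation of the single \ssaLang\ reduction $\steps{\ssaRtConf}{\ssaRtConf'}$, performing a case analysis on the last rule applied and using inversion on the consistency relation $\tossartconfshort{\ssaenvs}{\srcRtConf}{\ssaRtConf}$ to determine the shape of the source configuration in each case. The terminal disjunct is immediate: if $\ssaRtConf$ is a value with no remaining continuation, inversion on the transformation relation forces $\srcRtConf$ to be terminal as well. Otherwise $\ssaRtConf$ admits a step, and for each \ssaLang\ redex I must exhibit a matching \srcLang\ redex and drive the source forward by one or more steps (the transitive closure $\stepsmanyone{\cdot}{\cdot}$) until consistency with $\ssaRtConf'$ is re-established.

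Before the case analysis I would isolate the two invariants that the relation $\tossartconfshort{\ssaenvs}{\cdot}{\cdot}$ maintains and that every case must re-prove: (i)~a \emph{heap-agreement} invariant, stating that the \srcLang\ heap $\srcHeap$ and the \ssaLang\ heap $\ssaHeap$ hold the same objects (up to translation of method bodies); and (ii)~a \emph{store--substitution} invariant, which is the heart of SSA correctness, stating that for every source variable in scope the value recorded for it in the current store $\srcStore$ equals the value already substituted for its current \ssaLang\ name $\idxMapping{\ssaenv}{\srcEvar}$ in the residual \ssaLang\ expression. I expect to package (ii) as a substitution lemma that commutes source store lookup and update with \ssaLang\ value substitution, since this is exactly what makes the local rules line up.

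With these invariants in hand the routine cases follow a common pattern. The purely functional redexes---field read (\rfield\ against \srcRcDotRef), object creation (\rnew\ against \srcRcNew), and field write (\rdotasgn\ against \srcRcDotAsgn)---match one \ssaLang\ step to one \srcLang\ step, appealing to invariant~(i) to transport heap lookups between the two semantics. The let-binding rule \rletin, which substitutes a value for an \ssaLang\ variable, is matched against \srcRcVarDecl\ (and analogously \srcRcAssign), which instead records the value in $\srcStore$; here invariant~(ii) keeps the two descriptions of the same binding in agreement, while the freshness side condition of rule \ssavardecl\ rules out clashes. The evaluation-context rule \rcevalctx\ is discharged by the induction hypothesis after a lemma relating \ssaLang\ contexts $\ssaEvalCtx$ to source contexts $\srcEvalCtx$. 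The cast rule \rcast\ is a minor wrinkle: the \ssaLang\ step discharges the well-formedness check directly, whereas \srcRcCast\ unfolds to an explicit $\texttt{check}$ call, so consistency is restored only once that source check reduces.

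I expect the two genuinely hard cases to be method invocation and, above all, the conditional. For invocation, \rinvoke\ substitutes the arguments and receiver directly into the translated body, whereas \srcRcCall\ pushes a frame $(\srcStore,\srcEvalCtx)$ onto $\srcStack$ and transfers control, with the later \srcRcRet\ popping it; the consistency relation must therefore relate each saved source frame to the surrounding \ssaLang\ evaluation context, which I would capture with an auxiliary invariant on $\srcStack$ that is preserved across both the call and the matching return. The decisive case is \rletif: the \ssaLang\ semantics collapses the \emph{entire} conditional---selecting a branch \emph{and} reconciling every $\Phi$-variable through the simultaneous substitution $\appsubst{\esubst{\many{\evar_{\index}}}{\many{\evar}}}{\expr}$---into a single step, which is precisely the slack behind ``\ssaLang\ steps faster'' and the reason the source side needs $\stepsmanyone{\cdot}{\cdot}$ rather than one step. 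Here I would advance the source through reducing the scrutinee to a boolean $\vconst$, applying \srcRcIte\ to select the branch, and then executing that branch's statements, whose net effect on $\srcStore$ must be shown to coincide with the $\Phi$-substitution. The crux---and where I anticipate spending most of the effort---is proving that the variables reconciled by the translation, namely those singled out by $\envDiff{\ssaenv_1}{\ssaenv_2}$ in rule \ssaite, are exactly the source variables whose store entries can differ between the two branches, so that restricting attention to the $\Phi$-variables loses nothing and the relation $\tossartconfshort{\ssaenvs}{\cdot}{\cdot}$ genuinely holds again at the merge point.
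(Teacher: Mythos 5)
Your overall architecture matches the paper's: a forward simulation proven by induction on the target reduction, a stack invariant relating each saved frame $\toStack{\srcStore}{\srcEvalCtx}$ to a surrounding \ssaLang\ evaluation context (the paper's stack-indexed transformation judgment and its composition lemmas), and a store--substitution correspondence (the paper's Translation-under-Store lemma, via $\toSubstname$). But two of your cases contain genuine errors. First, the terminal disjunct is \emph{not} immediate, and your claim that inversion forces $\srcRtConf$ to be terminal is false. Because \ssaLang\ substitutes values eagerly while \srcLang\ defers lookups to the store, a terminal target value corresponds to many non-terminal source forms: a bare variable $\srcEvar$ awaiting \srcRcVar, a body $\srcSeq{\srcSkip}{\srcReturn{\srcVal}}$ awaiting \srcRcSkip\ and \srcRcRet, and so on. The paper needs a dedicated Value Monotonicity lemma showing the source can take finitely many catch-up steps to a value (or $\srcReturn{\srcVal}$) form while preserving the relation; without it your proof of part one of the theorem does not go through.

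Second, your \rletif\ case is miscalibrated in a way that would break the simulation. After the single \rletif\ step the target is $\ssactxidx{\ssactx_{\index}}{\appsubst{\esubst{\many{\evar}_{\index}}{\many{\evar}}}{\expr}}$: the selected branch context $\ssactx_{\index}$ is still entirely \emph{pending} on the target side, and the $\Phi$-substitution touches only the continuation $\expr$. If, as you propose, you run the whole source branch and match its ``net effect on $\srcStore$'' against the substitution, the source ends up strictly ahead of the target and the relation cannot be re-established---the invariant is that \ssaLang\ steps \emph{faster}, never slower. The correct move (and the paper's) is purely syntactic: note first that by canonical forms the scrutinee is already a constant (the target scrutinee is $\vconst$, so no scrutinee evaluation occurs), advance the source by exactly \srcRcIte\ to select $\srcStmt_{\index}$, and then show that the translation of $\srcSeq{\srcStmt_{\index}}{\srcReturn{\srcExpr}}$ under the branch-specific SSA environment \emph{is} the target expression; this is the paper's Global Environment Substitution lemma, which identifies re-translating the continuation under the updated environment with applying $\esubst{\many{\evar}_{\index}}{\many{\evar}}$, using the definition of $\envDiff{\ssaenv_1}{\ssaenv_2}$. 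Relatedly, the ``crux'' you anticipate---that $\envDiff{\ssaenv_1}{\ssaenv_2}$ captures exactly the variables whose store entries may diverge---is a semantic statement the paper never needs; the consistency argument stays at the level of translation derivations, not store contents.
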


%

\subsection{Static Semantics}

Having drawn a connection between source and target language we
can now describe refinement checking procedure in terms of \ssaLang.

\begin{figure*}[htpb]

\judgementHeadTwo{Typing Rules}
{\typecheck{\env}{\expr}{\rtype}}
{\typecheckctx{\env}{\ssactx}{\env'}}
\begin{mathpar}
  \inferrule[\lqchkvar]
{
  \idxMapping{\env}{\evar} = \rtype
}
{
  \typecheck{\env}{\evar}{\singleton{\rtype}{\evar}}
}
\and
\inferrule[\lqchkconst]{}
{\typecheck{\env}{\vconst}{\tconst{\vconst}}}
\and
\inferrule[\lqchkfieldimm]
{
  \typecheck{\env}{\expr}{\rtype}
  \\
  \wfconstr{\envextex{\env}{\evarb}{\rtype}}
           {\varbinding
              {\hasimm{\evarb}{\fieldname_{\index}}}
            {\rtype_{\index}}}
  \\\\
  \fresh{\evarb}
}
{
  \typecheck{\env}
            {\dotref{\expr}{\fieldname_{\index}}}
            {\texist{\evarb}{\rtype}{\singleton{\rtype_{\index}}{\tdotref{\evarb}{\fieldname_{\index}}}}}
}
\and
\inferrule[\lqchkssactx]
{
  \typecheckctx{\env}{\ssactx}{\envbinding{\many{\evar}}{\many{\rtypeb}}}
  \\\\
  \typecheck{\envextex{\env}{\many{\evar}}{\many{\rtypeb}}}{\expr}{\rtype}
}
{
  \typecheck{\env}{\ssactxidx{\ssactx}{\expr}}{\texist{\many{\evar}}{\many{\rtypeb}}{\rtype}}
}
\and
\inferrule[\lqchkfieldmut]
{
  \typecheck{\env}{\expr}{\rtype}
  \\\\
  \typecheck{\envextex{\env}{\evarb}{\rtype}}
            {\hasmut{\evarb}{\fieldnameb_{\index}}}
            {\rtype_{\index}}
  \\\\
  \fresh{\evarb}
}
{
  \typecheck{\env}{\dotref{\expr}{\fieldnameb_{\index}}}{\texist{\evarb}{\rtype}{\rtype_{\index}}}
}
\and
\inferrule[\lqchkinv]
{
  \typechecktwo{\env}{\expr}{\rtype}{\many{\expr}}{\many{\rtype}}
  \\\\
  \wfconstr{\envextex{\env}{\evarb}{\rtype}}
           {
             \has{\evarb}{\parens{\methdef{\methodname}{\evarb}{\rtypec}{\pred}{\rtypeb}{\expr'}}}
           }
       \\\\
  \wfconstr{\envextex{\envextex{\env}{\evarb}{\rtype}}{\many{\evarb}}{\many{\rtype}}}
           {\mconcatenate{\many{\rtype} \subt \many{\rtypec}}{\pred}}
  \\
  \fresh{\mconcatenate{\evarb}{\many{\evarb}}}
}
{
  \typecheck{\env}{\methcall{\expr}{\methodname}{\many{\expr}}}
            {\texist{\evarb}{\rtype}{\texist{\many{\evarb}}{\many{\rtype}}{\rtypeb}}}
}
\and
\inferrule[\lqchkasgn]
{
  \typechecktwo{\env}{\expr_1}{\rtype_1}{\expr_2}{\rtype_2}
  \\\\
  \wfconstr{\envext{\env}{\evarb_1}{\basetype{\rtype_1}}}
           {\mconcatenate
              {\varbinding{\hasmut{\evarb_1}{\fieldname}}{\rtypeb}}
              {\rtype_2 \subt \rtypeb}
           }
  \\\\
  \fresh{\evarb_1}
}
{
  \typecheck{\env}{\dotassign{\expr_1}{\fieldname}{\expr_2}}{\rtype_2}
}
\and
\inferrule[\lqchknew]
{
  \typecheck{\env}{\many{\expr}}{\ppair{\many{\rtype}_{\immindex}}{\many{\rtype}_{\mutindex}}}
  \\
  \wfconstrnopremise{\isclass{\cname}}
  \\
  \wfconstr{\envext{\env}{\evarb}{\cname}}
           {\fieldsOfVar{\evarb} = 
              \mconcatenate{\immfieldbindings{\fieldname}{\rtypec}}
                           {\mutfieldbindings{\fieldnameb}{\rtyped}}
           }
  \\\\
  \wfconstr{
      \envextex
        {\envext{\env}{\evarb}{\cname}}
        {\many{\evarb}_{\immindex}}
      {\singleton{\many{\rtype}_{\immindex}}{\tdotref{\evarb}{\many{\fieldname}}}}
    }
    {
      \mconcatenate{
        \mconcatenate{\many{\rtype}_{\immindex} \subt \many{\rtypec}}
        {\many{\rtype}_{\mutindex} \subt \many{\rtyped}}
      }
      {\classinv{\cname}{\evarb}}
    }
  \\
  \fresh{\evarb, \many{\evarb}}
}
{
  \typecheck{\env}
            {\newexpr{\cname}{\many{\expr}}}
            {
              \texist{\many{\evarb}_{\immindex}}{\many{\rtype}_{\immindex}}
              {\reftp
                {\cname}
                {
                  \concatpreds
                    {\tdotref{\vv}{\many{\fieldname}} = \many{\evarb}_{\immindex}}
                    {\classinv{\cname}{\vv}}
                }
              }
            }
}
%
\and
\inferrule[\lqchkcast]
{
  \typecheck{\env}{\expr}{\rtypeb}
  \\
  \wftype{\env}{\rtype}
  \\\\
  \semsubtype{\env}{\rtypeb}{\rtype}
}
{
  \typecheck{\env}{\cast{\rtype}{\expr}}{\rtype}
}
\and
%
%
%
\inferrule[\lqchkctxemp]{}{\typecheckctx{\env}{\hole}{\cdot}}
\and
\inferrule[\lqchkctxletin]
{
  \typecheck{\env}{\expr}{\rtype}
}
{
  \typecheckctx{\env}{\letin{\evar}{\expr}{\hole}}{\envbinding{\evar}{\rtype}}
}
\and
\inferrule[\lqchkctxletif]
{
  \typechecksubt{\env}{\expr}{\rtypeb}{\rtypeb}{\tbool}
  \\
  \typecheckctx{\envgrdext{\envext{\env}{\evarb}{\rtypeb}}{\evarb}}    {\ssactx_1}{\env_1}
  \\
  \typecheckctx{\envgrdext{\envext{\env}{\evarb}{\rtypeb}}{\neg\evarb}}{\ssactx_2}{\env_2}
  \\\\
  \issubtype{\mconcatenate{\env}{\env_1}}{\idxMapping{\env_1}{\many{\evar}_1}}{\many{\rtype}}
  \\
  \issubtype{\mconcatenate{\env}{\env_2}}{\idxMapping{\env_2}{\many{\evar}_2}}{\many{\rtype}}
  \\
  \wftype{\env}{\many{\rtype}}
  \\
  \fresh{\many{\rtype}}
}
{
  \typecheckctx{\env}{
    \letif{\many{\evar}}{\many{\evar}_1}{\many{\evar}_2}{\expr}{\ssactx_1}{\ssactx_2}{\hole}
  }{\envbinding{\many{\evar}}{\many{\rtype}}}
}
\end{mathpar}

\nocaptionrule
\caption{Static Typing Rules for \ssaLang}
\label{fig:typing}
\end{figure*}

%
\mypara{Types}
Type annotations on the source language are propagated unaltered
through the translation phase.
Our type language (shown below) resembles that of existing refinement type
systems~\cite{Knowles10, LiquidPLDI08, Nystrom2008}.
A \emph{refinement type} $\rtype$ may be an existential type or have the
form $\reftp{\btype}{\pred}$, where $\btype$ is a class name \cname
or a primitive type $\tprim$, and $\pred$ is a logical predicate (over
some decidable logic) which describes the properties that values of
the type must satisfy.
Type specifications (\eg method types) are existential-free,
while inferred types
may be existentially quantified~\cite{Knowles2009}.

\mypara{Logical Predicates}
Predicates \pred are logical formulas over terms $\cterm$.
These terms can be
variables $\evar$,
primitive constants $\vconst$,
the reserved value variable $\vv$,
the reserved variable $\this$ to denote the containing object,
field accesses $\tdotref{\cterm}{\fieldname}$,
uninterpreted function applications $\funcall{\fname}{\many{\cterm}}$
and
applications of terms on built-in operators
$\biname$, such as @==@, @<@, @+@, etc.
%
\[
\begin{array}{rcl}
  \rtype, \rtypeb, \rtypec  & \prod & \texist{\evar}{\rtype_1}{\rtype_2} \spmid  \reftp{\btype}{\pred}            \\
  \btype                    & \prod & \cname \spmid \tprim                                                    \\
  \pred                     & \prod & \pand{\pred_1}{\pred_2} \spmid  \pnot{\pred} \spmid \cterm                  \\
  \cterm                    & \prod & \evar                           \spmid
                                      \vconst                         \spmid
                                      \vv                             \spmid
                                      \this                           \spmid
                                      \tdotref{\cterm}{\fieldname}    \spmid
                                      \funcall{\fname}{\many{\cterm}} \spmid
                                      \funcall{\biname}{\many{\cterm}}
  \end{array}
\]

\mypara{Structural Constraints}
Following CFJ, we reuse the notion of an Object Constraint System, to
encode constraints related to the object-oriented nature of the
program.
Most of the rules carry over to our system; we defer them to
the supplemental material.
The key extension in our setting is we partition
\has{\cname}{\element} (that encodes inclusion of
an element $\element$ in a class $\cname$) into
two cases: \hasmut{\cname}{\element} and
\hasimm{\cname}{\element}, to account for
elements that may be mutated.
These elements can only be fields
(\ie\ there is no mutation on methods).

\mypara{Environments And Well-formedness}
A type environment $\env$ contains \emph{type bindings}
$\envbinding{\evar}{\rtype}$ and \emph{guard predicates} $\pred$ that encode
path sensitivity.
%
%
$\env$ is \emph{well-formed} if all of its bindings are
well-formed. A refinement type is well-formed in an environment $\env$
if all symbols (simple or qualified) in its logical predicate are (i)~bound in
$\env$, and (ii)~correspond to \emph{immutable} fields of objects.
We omit the rest of the well-formedness rules as they are standard in
refinement type systems (details can be found in the supplemental material).


Besides well-formedness, our system's main judgment forms are those
for subtyping and
refinement typing~\cite{Knowles10}.

\mypara{Subtyping} is defined by the judgment
$\issubtype{\env}{\rtypeb}{\rtype}$.
The rules are standard among refinement type systems with existential types.
For example, the rule for subtyping between two refinement types
$\issubtype{\env}{\reftp{\btype}{\pred}}{\reftp{\btype}{\pred'}}$
reduces to a \emph{verification condition}:
$\validimp{\env}{\pred}{\pred'}$,
where $\embed{\env}$ is the embedding of environment $\env$ into our logic
that accounts for both guard predicates and variable bindings:
$$
\embed{\env} \defeq
\bigwedge\braces{\pred\;|\;\pred\in\env}
\wedge\bigwedge
\braces{\appsubst{\tsubst{\evar}{\vv}}{\pred}\text{,}
\;|\;\envbinding{\evar}{\reftp{\btype}{\pred}} \in \env}
$$
Here, we assume existential types have been simplified to non-existential
bindings when they entered the environment.
The full set of rules is included in the supplemental material.



\mypara{Refinement Typing Rules}
Figure~\ref{fig:typing} contains most rules of the two forms of
our typing judgements:
$\typecheck{\env}{\expr}{\rtype}$ and
$\typecheckctx{\env}{\ssactx}{\env'}$.
The first form assigns a type $\rtype$ to an
expression $\expr$ under a typing environment $\env$.
The second form checks the body of an SSA context $\ssactx$
under $\env$ and returns an environment $\env'$ of the
variables introduced in $\ssactx$ that are going to be
available in its hole.
Below, we discuss the novel rules:

\begin{description}[leftmargin=0pt, font=\normalfont]

  \item [\brackets{\lqchkfieldimm}]
    Immutable object parts can be assigned a more precise type,
    by leveraging the preservation of their \emph{identity}.
    This notion, known as \emph{self-strengthening}~\cite{Knowles2009,
    Nystrom2008}, is defined with the aid of the
    \emph{strengthening} operator $\strengthenname$:
    \begin{equation*}
      \begin{aligned}[c]
        \strengthen{\reftp{\nvtype}{\pred}}{\pred'} & \defeq \reftp{\nvtype}{\pred \wedge \pred'}  \\
        \strengthen{\parens{\texist{\evar}{\rtypeb}{\rtype}}}{\pred} & \defeq \texist{\evar}{\rtypeb}{\parens{\strengthen{\rtype}{\pred}}}  \\
        \singleton{\rtype}{\cterm} & \defeq \strengthen{\rtype}{\parens{\vv = \cterm}}
      \end{aligned}
    \end{equation*}

  \item [\brackets{\lqchkfieldmut}] Here we avoid such strengthening,
    as the value of
    field $\fieldnameb_{\index}$ is mutable, so cannot appear in
    refinements.

  \item [\brackets{\lqchknew}] Similarly, only
    immutable fields are referenced in the refinement of the
    inferred type at object construction.

  \item [\brackets{\lqchkinv}] Extracting the method signature using
    the \textsf{has} operator has already performed the necessary
    substitutions to account for the specific receiver object.

  \item [\brackets{\lqchkcast}]
    Cast operations are checked \emph{statically}
    obviating the need for a dynamic check. This rule uses the notion
    of \emph{compatibility subtyping}, which is defined as:

    \begin{definition}[Compatibility Subtype]
      {\label{def:main:compatibility:subtype}}
      A type $\rtypeb$ is a compatibility subtype of a type $\rtype$
      under an environment $\env$ (we write
      $\semsubtype{\env}{\rtypeb}{\rtype}$), iff
      $\anycast{\env}{\rtypeb}{\basetype{\rtype}} = \rtypec \neq
      \fail$ with $\issubtype{\env}{\rtypec}{\rtype}$.
    \end{definition}

    Here, $\basetype{\rtype}$ extracts the base type of $\rtype$, and
    $\anycast{\env}{\rtype}{\cnameb}$ succeeds when under environment
    $\env$ we can statically prove $\cnameb$'s invariants.
    We use the predicate $\classinv{\cnameb}{\vv}$ (as in CFJ), to
    denote the conjunction of the class invariants of $\cname$ and
    its supertypes (with the necessary substitutions of $\this$ by $\vv$).
    We assume that part of these invariants is a predicate that states
    inclusion in the specific class ($\instanceof{\vv}{\cnameb}$).
    Therefore, we can prove that $\rtype$ can safely be cast
    to $\cnameb$. Formally:
    \begin{equation*}
      \begin{aligned}[c]
        \anycast{\env}{\reftp{\cname}{\pred}}{\cnameb} & \defeq
                \begin{cases}
                  \strengthen{\cnameb}{\pred}
                          & \!\!\text{if }    
                                           \embed{\env} \imp \embed{\pred} \imp \classinv{\cnameb}{\vv}  \\
                \fail     & \!\!\text{otherwise}
              \end{cases}
        \\
        \anycast{\env}{\texist{\evar}{\rtypeb}{\rtype}}{\cnameb} & \defeq
          \texist{\evar}{\rtypeb}{\anycast{\envextex{\env}{\evar}{\rtypeb}}{\rtype}{\cnameb}}
      \end{aligned}
    \end{equation*}

  \item [\brackets{\lqchkasgn}]
    Only \emph{mutable} fields may be reassigned.

  \item [\brackets{\lqchkctxletif}]
    To type conditional structures, we first infer a type for
    the condition and then check each of the branches
    $\ssactx_1$ and $\ssactx_2$,
    assuming that the condition is true or false, respectively,
    to achieve path sensitivity.
    Each branch assigns types to the $\Phi$-variables which compose
    $\env_1$ and $\env_2$, and the propagated types for these
    variables are fresh types operating as upper bounds to
    their respective bindings in $\env_1$ and $\env_2$.

\end{description}


\subsection{Type Soundness}


We reuse the operational semantics for \ssaLang defined earlier,
and extend our type checking judgment to runtime locations
$\loc$ with the use of a \emph{heap typing} $\storety$,
mapping locations to types:
$$
\inferrule[]
{\idxMapping{\storety}{\loc} = \rtype}
{\dtypecheck{\env}{\storety}{\loc}{\rtype}}
$$
We establish type soundness results for \ssaLang in the form of a
subject reduction (preservation) and a progress theorem that connect
the static and dynamic semantics of \ssaLang.



\begin{theorem}[Subject Reduction]\label{theorem:main:subj:reduc}
If
(a)~$\dtypecheck{\env}{\storety}{\expr}{\rtype}$,
(b)~$\wfstore{\env}{\ssaRtState}{\storety}{\ssaHeap}$, and
(c)~$\stepsFour{\ssaHeap}{\expr}{\ssaHeap'}{\expr'}$,
then for some
$\rtype'$ and $\storety' \supseteq \storety$:
(i)~$\dtypecheck{\env}{\storety'}{\expr'}{\rtype'}$,
(ii)~$\semsubtype{\env}{\rtype'}{\rtype}$, and
(iii)~$\wfstore{\env}{\ssaRtState}{\storety'}{\ssaHeap'}$.
\end{theorem}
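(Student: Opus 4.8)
The plan is to prove subject reduction by induction on the derivation of the operational step (c), performing a case analysis on the last \ssaLang reduction rule applied (the rules grouped under ``Operational Semantics for \ssaLang'' in Figure~\ref{fig:opsem}); the only rule carrying a sub-derivation is the congruence rule \rcevalctx, so all other cases are base cases. Before the case analysis I would establish the standard auxiliary results that carry the argument: (1) a \emph{substitution lemma} stating that substituting a value or variable of type $\rtypeb$ for a variable of a supertype preserves typing up to compatibility subtyping; (2) a \emph{heap-typing monotonicity} lemma showing that any judgment $\dtypecheck{\env}{\storety}{\expr}{\rtype}$ or $\wfstore{\env}{\ssaRtState}{\storety}{\ssaHeap}$ remains derivable under any extension $\storety' \supseteq \storety$; and (3) the usual \emph{inversion lemmas} recovering, from a derivation for a compound expression, derivations for its immediate subterms together with the side conditions of the matching rule in Figure~\ref{fig:typing}. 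For \rcevalctx I additionally need a \emph{context-decomposition lemma} splitting a derivation for $\idxEctx{\ssaEvalCtx}{\expr}$ into a derivation for the redex $\expr$ and a hole context whose hole may be replaced by any compatibility-subtype of $\rtype$; this is what lets the induction hypothesis on the redex propagate outward (threading $\storety' \supseteq \storety$).

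The purely functional and structural cases are then routine. \rcevalctx follows from the induction hypothesis plus context decomposition. \rletin and \rletif are handled by the substitution lemma: for \rletin I substitute the bound value, while for \rletif I invert \lqchkctxletif to recover that both branches were checked against the fresh upper-bound types $\many{\rtype}$, then substitute the chosen branch's $\Phi$-variables $\many{\evar}_{\index}$ for $\many{\evar}$; in both cases conclusion (ii) is exactly the subtyping already produced by the rule. \rfield uses heap well-formedness to obtain the declared field type and, for an immutable field, the self-strengthening of \lqchkfieldimm to recover the singleton $\singleton{\rtype_{\index}}{\tdotref{\evarb}{\fieldname_{\index}}}$ (resp.\ \lqchkfieldmut for a mutable field, which avoids strengthening). \rcast is immediate from Definition~\ref{def:main:compatibility:subtype}: the premise of \lqchkcast is $\semsubtype{\env}{\rtypeb}{\rtype}$, and the dynamic side-condition of \rcast checks the same class invariants that $\anycast{\env}{\rtypeb}{\basetype{\rtype}}$ validates statically, so no new heap obligation arises.

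The two cases doing real work are \rinvoke and \rnew. For \rinvoke I invert \lqchkinv to obtain the typed method body and the argument-subtyping obligations, observe that the dynamic precondition $\eval{\appsubst{\ldots}{\pred}} = \ssaTrue$ in the premise of \rinvoke discharges precisely the invariant $\pred$ assumed by the static signature, and then apply the substitution lemma simultaneously for the arguments $\many{\val}$ and for $\ethis \mapsto \loc$; the result is a compatibility-subtype of the declared return type. \rnew is the one place where the heap typing genuinely grows: I set $\storety' = \storety, \mapping{\loc}{\cname}$ for the fresh location, invert \lqchknew to see that the field values inhabit the declared immutable and mutable field types and that $\classinv{\cname}{\evarb}$ holds, and then re-establish $\wfstore{\env}{\ssaRtState}{\storety'}{\ssaHeap'}$, invoking heap-typing monotonicity so that the already-typed portion of the heap survives the extension.

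I expect the main obstacle to be the field-update case \rdotasgn together with the re-establishment of heap well-formedness (iii), since this is exactly where the interaction of mutation and refinements is discharged. The delicate point is that mutating a field must not falsify any refinement recorded in $\storety$ or $\env$. The argument rests on the \emph{well-formedness restriction} that refinement predicates may mention only \emph{immutable} fields: because a well-typed assignment $\dotassign{\loc}{\fieldname}{\val}$ is, by \lqchkasgn, necessarily to a \emph{mutable} field, no refinement anywhere in the heap or environment can depend on the mutated field, so every pre-existing binding remains valid; meanwhile the premise $\rtype_2 \subt \rtypeb$ of \lqchkasgn guarantees the new value inhabits the field's declared type, re-establishing well-formedness of the updated object with $\storety' = \storety$. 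Making this ``no refinement observes a mutable field'' argument fully precise --- especially in the presence of the existential bindings introduced by \lqchkssactx and of qualified accesses $\tdotref{\cterm}{\fieldname}$ occurring inside predicates --- is where I would concentrate the technical care.
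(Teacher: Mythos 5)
Your overall strategy coincides with the paper's proof: induction on the reduction step with a case split on the last rule of Figure~\ref{fig:opsem}, supported by the same auxiliary results --- the Substitution Lemma (Lemma~\ref{lemma:substitution}), weakening/monotonicity of typing and subtyping under environment and heap-typing extension, and an evaluation-context decomposition pair (Lemmas~\ref{lemma:eval:ctx:typing} and~\ref{lemma:eval:ctx:step:typing}) for \rcevalctx. Your treatment of \rfield (heap well-formedness plus selfification), \rinvoke (inversion of \lqchkinv, the dynamically checked invariant, simultaneous substitution for the parameters and $\ethis$), \rletif (inversion of \lqchkctxletif and substitution of the chosen branch's $\Phi$-variables, discharging the guard with a boolean-facts argument), and \rdotasgn (assignment only to mutable fields plus the well-formedness restriction of refinements to immutable fields, with $\storety' = \storety$) is the paper's argument in each of those cases.

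The one step that would fail as written is your heap-typing extension in the \rnew case: you set $\storety' = \storety, \mapping{\loc}{\cname}$, recording only the bare class name. The paper instead sets $\storety' = \updMapping{\storety}{\loc}{\rtypec_0}$ with the full refined type $\rtypec_0 \equiv \texist{\many{\evarb}_{\immindex}}{\many{\rtype}_{\immindex}}{\reftp{\cname}{\concatpreds{\tdotref{\vv}{\many{\fieldname}} = \many{\evarb}_{\immindex}}{\classinv{\cname}{\vv}}}}$, i.e., it records the singleton bindings of the immutable fields and the class invariant. This matters twice. First, conclusion~(ii) for this case requires the post-step type of $\loc$ --- which is exactly $\idxMapping{\storety'}{\loc}$ by the runtime typing rule --- to be a compatibility subtype of the static type $\rtypec_0$ assigned by \lqchknew; with bare $\cname$ this subtyping is underivable, since $\cname$ entails neither the immutable-field equalities nor $\classinv{\cname}{\vv}$, whereas with the paper's choice it holds by reflexivity. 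Second, the \rfield case leans on this precision: the selfified type of an immutable field read is recovered by relating $\idxMapping{\storety}{\loc}$ to the precise object type via Lemma~\ref{lemma:store:type}, and a class-name-only heap typing would forfeit exactly the refinement information that makes that comparison go through. The fix is local --- store the refined existential type at the fresh location --- after which the rest of your \rnew argument (re-establishing well-formedness via \hbndinst, with monotonicity covering the old portion of the heap) matches the paper.
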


\begin{theorem}[Progress]\label{theorem:main:progress}
If
$\dtypecheck{\env}{\storety}{\expr}{\rtype}$
and
$\wfstore{\env}{\ssaRtState}{\storety}{\ssaHeap}$,
then either
$\expr$ is a value, or
there exist $\expr'$, $\ssaHeap'$ and $\storety' \supseteq \storety$
\st
$\wfstore{\env}{\ssaRtState}{\storety'}{\ssaHeap'}$ and
$\stepsFour{\ssaHeap}{\expr}{\ssaHeap'}{\expr'}$.
\end{theorem}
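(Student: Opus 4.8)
The plan is to proceed by induction on the derivation of $\dtypecheck{\env}{\storety}{\expr}{\rtype}$ (equivalently, on the structure of $\expr$), showing in each case that $\expr$ is already a value or that some reduction rule of Figure~\ref{fig:opsem} applies. The base cases are immediate: a constant typed by \lqchkconst and a location are values, discharging the first disjunct. (Free program variables do not arise at runtime: \rletin substitutes values for \evar, so a runtime $\expr$ contains only locations and constants in value position.) For every compound form---field access, method call, object construction, field assignment, cast, and the context formers---I would first apply the induction hypothesis to the subexpression occupying the active position. If that subexpression steps, I lift the step to the whole term through the evaluation-context rule \rcevalctx with the matching context $\ssaEvalCtx$. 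The remaining work is the \emph{head} cases, where every active subexpression is already a value and a primitive reduction must fire.

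The head cases rely on \emph{canonical forms} lemmas that read structure off the well-formedness hypothesis $\wfstore{\env}{\ssaRtState}{\storety}{\ssaHeap}$. Concretely: (i)~a value of a refined class type $\reftp{\cname}{\pred}$ is a location $\loc$ whose heap entry is an object of class $\cname$ carrying the declared fields and whose methods are resolvable, which licenses \rfield (the accessed field is present), \rdotasgn, and method resolution in \rinvoke; (ii)~a value of boolean type is either $\ssaTrue$ or $\ssaFalse$, so \rletif selects a branch; (iii)~\rnew needs only that $\cname$ is a declared class, which \lqchknew records. These lemmas are proved by inverting the typing derivation of the value together with store well-formedness, and are otherwise routine.

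The genuinely hard cases are \rinvoke and \rcast, where the \emph{reduction rule itself carries a semantic side condition} that the typing rule only guarantees \emph{statically}. For \rinvoke I must show the runtime precondition $\eval{\appsubst{\tsubsttwo{\many{\val}}{\many{\evar}}{\loc}{\ethis}}{\pred}} = \ssaTrue$ actually holds; the premises of \lqchkinv establish only the logical obligation $\mconcatenate{\many{\rtype} \subt \many{\rtypec}}{\pred}$ under $\embed{\env}$. Bridging these requires an \emph{adequacy} lemma stating that, in a well-formed store, the embedding $\embed{\env}$ faithfully describes the concrete heap values, so that a valid verification condition forces the concrete predicate to evaluate to $\ssaTrue$. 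Symmetrically, the \rcast premise is supplied by the compatibility subtyping $\semsubtype{\env}{\rtypeb}{\rtype}$ of \lqchkcast: since $\anycast{\env}{\rtypeb}{\basetype{\rtype}} \neq \fail$, the static check has already discharged $\classinv{\cnameb}{\vv}$, and the same adequacy argument turns this into the runtime guarantee that the cast does not get stuck. I expect this static-to-dynamic bridge---establishing adequacy of the logical embedding against well-formed stores---to be the main obstacle, since it is where the decidable-logic reasoning must be reconciled with concrete predicate evaluation.

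Finally, the conclusion also asks for a heap typing $\storety' \supseteq \storety$ with $\wfstore{\env}{\ssaRtState}{\storety'}{\ssaHeap'}$. For every rule that leaves the heap unchanged I take $\storety' = \storety$ and reuse the hypothesis. Only \rnew and \rdotasgn alter $\ssaHeap$: for \rnew I extend $\storety$ at the fresh location with the type assigned by \lqchknew and re-establish well-formedness for the new object, whose fields satisfy their declared refinements by the premises of \lqchknew; for \rdotasgn I reuse the subtyping premise of \lqchkasgn to show the updated field still meets its declared type. These store-preservation arguments mirror clause~(iii) of Theorem~\ref{theorem:main:subj:reduc} and can be shared with its proof.
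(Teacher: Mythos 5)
Your proposal is correct and follows essentially the same route as the paper's proof: induction on the typing derivation, lifting subexpression steps through \rcevalctx, canonical-forms reasoning extracted from store well-formedness (the paper's Store Type lemma), and discharging the \rcast side condition via compatibility subtyping, which is precisely the paper's Cast lemma (Lemma~\ref{lemma:cast}). The only difference is one of explicitness: where you identify the static-to-dynamic adequacy bridge for the \rinvoke precondition as the main obstacle and propose a dedicated lemma, the paper discharges that case (and \lqchknew) by appeal to the corresponding cases in CFJ~\cite{Nystrom2008}, so your treatment is, if anything, more self-contained.
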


We defer the proofs to the supplementary material. As a corollary of
the progress theorem we get that cast operators are guaranteed to
succeed, hence they can safely be removed.

%

\begin{corollary}[Safe Casts]\label{theorem:cast:redundancy}
  Cast operations can safely be erased when compiling to executable
  code.
\end{corollary}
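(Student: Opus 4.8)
The plan is to derive the corollary directly from the Progress theorem (Theorem~\ref{theorem:main:progress}), exploiting the fact that rule \rcast\ reduces a cast to its (unchanged) subject. First I would fix the notion of erasure: let $\mathsf{er}(\expr)$ be the homomorphic translation on \ssaLang\ expressions that replaces every cast $\cast{\rtype}{\expr}$ by $\mathsf{er}(\expr)$ and leaves all other forms unchanged, extended pointwise to heaps, method bodies, and whole configurations. The goal is to show that for a well-typed program, $\expr$ and $\mathsf{er}(\expr)$ exhibit the same observable behaviour, so that the dynamic check performed by \rcast\ is never actually needed and the cast can be deleted at compile time.

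The key step is to show that the runtime check in \rcast\ never fails on a well-typed term. Suppose evaluation of a well-typed expression reaches a redex of the form $\cast{\rtype}{\loc}$, where $\loc$ is already a value. By Subject Reduction (Theorem~\ref{theorem:main:subj:reduc}), typing and store well-formedness are preserved along every reduction step, so that $\dtypecheck{\env}{\storety}{\cast{\rtype}{\loc}}{\rtype}$ and $\wfstore{\env}{\ssaRtState}{\storety}{\ssaHeap}$ hold at this configuration. Progress then guarantees that the configuration can take a step. Inspecting the reduction rules, the only rule whose left-hand side matches a redex $\cast{\rtype}{\loc}$ with $\loc$ a value is \rcast\ (rule \rcevalctx\ does not apply, as there is no proper sub-redex inside a value). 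Hence the premise of \rcast\ must be satisfied, i.e.\ the runtime subtyping check of $\idxMapping{\ssaRtState}{\loc}$ against $\rtype$ succeeds. The static rule \lqchkcast, via compatibility subtyping, is precisely what Subject Reduction turns into this always-satisfiable runtime obligation, so no separate reconciliation of the static cast obligation with the dynamic check is required.

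Because \rcast\ steps $\cast{\rtype}{\loc}$ to $\loc$ without modifying the state, a cast is operationally the identity. I would then establish a weak simulation between the original and erased executions: every \rcast\ step in the original is matched by zero steps of the erased program (the cast having been removed by $\mathsf{er}$), while every other reduction is matched one-for-one, using that $\mathsf{er}$ commutes with the evaluation contexts and substitutions appearing in the remaining rules (e.g.\ \rletin, \rinvoke, \rletif). A routine induction on the length of the reduction sequence then shows that $\expr$ terminates with value $\loc$ iff $\mathsf{er}(\expr)$ terminates with $\mathsf{er}(\loc)$, and that the two diverge together. Combined with the previous step, which ensures that no erased cast was guarding against a failure, this yields that erasing casts preserves observable behaviour.

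The conceptual weight is light: the result is essentially immediate from Progress, since the impossibility of a stuck configuration rules out a failing cast check. The care lies in the bookkeeping of the simulation --- precisely defining $\mathsf{er}$ on configurations (including heaps and method bodies) and verifying that it commutes with reduction for the non-cast rules, so that the step-for-step correspondence goes through. This is the main, if routine, obstacle; the matching of the static cast obligation with the dynamic check needs no extra work, as it is already subsumed by Subject Reduction together with Progress.
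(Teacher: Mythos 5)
Your proposal is correct and matches the paper's own argument: the paper derives this corollary directly from Progress (Theorem~\ref{theorem:main:progress}), observing that a well-typed configuration can always step, so the premise of \rcast\ must hold whenever a cast redex is reached, i.e.\ the dynamic check never fails and the cast acts as the identity. Your additional scaffolding --- the explicit erasure function and the weak simulation with the erased program --- merely makes precise the ``hence they can safely be removed'' step that the paper leaves implicit, so it is elaboration rather than a different route.
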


With the use of our Consistency Theorem (Theorem~\ref{theorem:consistency})
and extending our checking judgment for terms in \ssaLang
to runtime configurations ($\vdash \ssaRtConf$), we
can state a soundness result for \srcLang:

\begin{theorem}(\srcLang Type Safety)
If
$\tossartconfshort{\ssaenvs}{\srcRtConf}{\ssaRtConf}$ and
$\vdash \ssaRtConf$
then either $\srcRtConf$ is a terminal form, or there exists
$\srcRtConf$ \st $\steps{\srcRtConf}{\srcRtConf}$.
\end{theorem}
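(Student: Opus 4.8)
This is exactly a \emph{progress} statement for \srcLang, obtained by transferring progress from \ssaLang\ across the SSA translation. The plan is to compose the target-language Progress theorem (Theorem~\ref{theorem:main:progress}) with the weak forward simulation of Theorem~\ref{theorem:consistency}: the former supplies a reduction of $\ssaRtConf$, and the latter turns one target step into one or more source steps. First I would unfold the configuration-level judgement $\vdash \ssaRtConf$ into the per-expression typing and heap-well-formedness premises that the Progress theorem expects (this is the only place the lifting of the term judgement to runtime configurations is used), and apply its configuration-lifted form. This yields a dichotomy: either $\ssaRtConf$ is already a value, or $\steps{\ssaRtConf}{\ssaRtConf'}$ for some $\ssaRtConf'$.

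In the stepping case I invoke Theorem~\ref{theorem:consistency} on $\tossartconfshort{\ssaenvs}{\srcRtConf}{\ssaRtConf}$. Because the target can step, the ``both terminal'' alternative of that theorem is excluded and its simulation alternative applies, producing $\srcRtConf'$ with $\stepsmanyone{\srcRtConf}{\srcRtConf'}$. Since $\stepsmanyone{\srcRtConf}{\srcRtConf'}$ denotes at least one step, $\srcRtConf$ itself reduces, giving the right disjunct of the goal. In the remaining case $\ssaRtConf$ is a value; here I would either read off the ``both terminal'' alternative of Theorem~\ref{theorem:consistency} --- which makes $\srcRtConf$ terminal and discharges the left disjunct --- or argue directly from the definition of the configuration-translation relation that a value target can only be related to a source body of the shape $\srcSeq{\srcSkip}{\srcReturn{\srcVal}}$ (or a bare returned value), which is terminal or reduces in one or two administrative steps to a terminal form.

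The main obstacle is precisely this last case: reconciling the two languages' notions of \emph{terminal}. Because the simulation is \emph{weak} and \ssaLang\ is engineered to ``step faster'' than \srcLang, the target may reach its final value while the source still has administrative reductions outstanding, and the simulation alternative of Theorem~\ref{theorem:consistency} gives no leverage once the target is stuck at a value. I therefore expect to need a small auxiliary lemma, extracted from the definition of $\tossartconfshort{\ssaenvs}{\srcRtConf}{\ssaRtConf}$, asserting that a value-shaped target configuration relates only to source configurations that are themselves terminal or able to step. Granting that lemma, the two theorems compose cleanly. Note that Subject Reduction (Theorem~\ref{theorem:main:subj:reduc}) is not required for this single-step statement; it is instead what licenses iterating the argument along an entire reduction sequence to obtain end-to-end safety for \srcLang.
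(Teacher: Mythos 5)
Your proposal is correct and takes essentially the same route as the paper, which obtains \srcLang\ safety exactly by composing Progress for \ssaLang\ with the SSA consistency/forward-simulation result. The auxiliary lemma you anticipate for the terminal-target case is precisely the paper's Value Monotonicity lemma (the basis of part~(a) of the appendix's Forward Simulation theorem), which lets the source discharge its outstanding administrative steps when the target has already reached a value; and your observation that Subject Reduction is not needed for the single-step statement also agrees with the paper.
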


\section{Scaling to TypeScript}\label{sec:typescript}

TypeScript (\ts) extends JavaScript (\js) with modules,
classes and a lightweight type system that enables IDE
support for auto-completion and refactoring.
\ts\ deliberately eschews soundness~\cite{Bier14} for
backwards compatibility with existing \js\ code.
In this section, we show how to use refinement types to
\emph{regain safety}, by presenting the highlights of
\lang (and our tool \toolname), that scales
the core calculus from \S\ref{sec:language} up to
\ts\ by extending the support for
\emph{types}~(\S\ref{sec:scale-types}),
\emph{reflection}~(\S\ref{sec:scale-reflection}),
\emph{interface hierarchies}~(\S\ref{sec:scale-interfaces}),
and \emph{imperative} programming~(\S\ref{sec:scale-imp}).

\subsection{Types}\label{sec:scale-types}

First, we discuss how \toolname\ handles core \ts\ features
like object literals, interfaces and primitive types.

\mypara{Object literal types}
\ts\ supports object literals, \ie\ anonymous objects with field and method
bindings.
\toolname\ types object members in the same way as class members:
method signatures need to be explicitly provided, while field
types and mutability modifiers are inferred based on use, \eg\ in:
\begin{code}
  var point = { x: 1, y: 2 }; point.x = 2;
\end{code}
the field @x@ is updated and hence, \toolname\ infers that @x@ is mutable.

\mypara{Interfaces}
\ts\ supports named object types in the form of
interfaces, and treats them in the same way as
their \emph{structurally} equivalent class types.
For example, the interface:
%
\begin{code}
  interface PointI { number x, y; }
\end{code}
is equivalent to a class @PointC@ defined as:
%
\begin{code}
  class PointC { number x, y; }
\end{code}
In \toolname\ these two types are \emph{not}
equivalent, as objects of type @PointI@ do not
necessarily have @PointC@ as their constructor:
\begin{code}
  var pI = { x: 1, y: 2 }, pC = new PointC(1,2);
  pI instanceof PointC;   // returns false
  pC instanceof PointC;   // returns true
\end{code}
However,
$\empty \vdash  \kw{PointC}  \subt  \kw{PointI}$
\ie\ instances of the \emph{class} may be used
to implement the \emph{interface}.

\mypara{Primitive types}
We extend \toolname's support for primitive types 
to model the corresponding types in \ts.
%
\ts\ has \greenprimty{undefined} and \greenprimty{null} types to represent the
eponymous values, and treats these types as the ``bottom''
of the type hierarchy, effectively allowing those values
to inhabit \emph{every} type via subtyping.
\toolname\ also includes these two types, but \emph{does not}
treat them ``bottom'' types.
Instead \toolname handles them as distinct primitive types
inhabited solely by \vundef\ and \vnull, respectively,
that can take part in unions.
Consequently, the following code is accepted by \ts\ but
\emph{rejected} by \toolname:
\begin{code}
  var x = undefined; var y = x + 1;
\end{code}

\mypara{Unsound Features}
\ts has several unsound features deliberately chosen for backwards compatibility.
These include
(1)~treating \vundef and \vnull as inhabitants of all types,
(2)~co-variant input subtyping,
(3)~allowing unchecked overloads, and
(4)~allowing a special ``dynamic'' \tany\ type to be ascribed to any term.
\toolname\ ensures soundness by
(1)~performing checks when non-null (non-undefined) types are required
(\eg during field accesses),
(2)~using the correct variance for functions and constructors,
(3)~checking overloads via two-phase typing (\S\ref{sec:overview:overload}), and,
(4)~\emph{eliminating} the \tany\ type.

Many uses of \tany\ (indeed, \emph{all} uses, in our
benchmarks~\S\ref{sec:evaluation}) can be replaced with a
combination of union or intersection types or downcasting,
all of which are soundly checked via path-sensitive
refinements.
In future work, we wish to support the full language,
namely allow \emph{dynamically} checked uses of \tany\
by incorporating orthogonal dynamic techniques from the
contracts literature.
We envisage a \emph{dynamic cast} operation
$\texttt{cast}_{\rtype} :: \parens{\varbinding{\evar}{\tany}}
\Rightarrow \reftp{\rtype}{\vv = \evar}$.
It is straightforward to implement $\texttt{cast}_{\rtype}$
for first-order types $\rtype$ as a dynamic check
that traverses the value, testing that its components
satisfy the refinements~\cite{Seidel14}.
Wrapper-based techniques from the contracts/gradual
typing literature should then let us support higher-order types.

%
%
%

\subsection{Reflection}\label{sec:scale-reflection}

\js\ programs make extensive use of reflection via
``dynamic'' type tests.
\toolname\ statically accounts for these by encoding
type-tags in refinements.
The following tests if @x@ is a @number@
before performing an arithmetic operation on it:
\begin{code}
  var r = 1; if (typeof x === "number") r += x;
\end{code}
We account for this idiomatic use of @typeof@ by
\emph{statically} tracking the ``type'' tag of
values inside refinements using uninterpreted
functions (akin to the size of an array).
Thus, values @v@ of type
@boolean@, @number@, @string@, \etc 
are refined with the predicate
%
@ttag(v) = "boolean"@,
@ttag(v) = "number"@,
@ttag(v) = "string"@, \etc, respectively.
%
Furthermore, @typeof@ has type @(z:A) => {v:string | v = ttag(z)}@
so the output type of @typeof x@ and the path-sensitive guard
under which the assignment @r = x + 1@ occurs, ensures that at
the assignment @x@ can be statically proven to be a @number@.
The above technique coupled with two-phase typing~(\S\ref{sec:overview:overload})
allows \toolname\ to statically verify reflective, value-overloaded
functions that are ubiquitous in \ts.



\subsection{Interface Hierarchies}\label{sec:scale-interfaces}

\js\ programs frequently build up object hierarchies that
represent \emph{unions} of different kinds of values, and
then use value tests to determine which kind of value is
being operated on.
%
In \ts this is encoded by building up a hierarchy
of interfaces, and then performing \emph{downcasts} based
on \emph{value} tests\footnote{\toolname handles other type tests, \eg \kw{instanceof},
via an extension of the technique used for \kw{typeof} tests;
we omit a discussion for space.}.

\mypara{Implementing Hierarchies with bit-vectors}
The following describes a slice of the hierarchy of types used by
the \tsc compiler (\tscc) v1.0.1.0:
\begin{code}
interface Type { immutable flags: TypeFlags;
                 id             : number;
                 symbol?        : Symbol; ... }

interface ObjectType extends Type { ...  }

interface InterfaceType extends ObjectType
  { baseTypes          : ObjectType[];
    declaredProperties : Symbol[]; ... }

enum TypeFlags
  { Any      = 0x00000001, String   = 0x00000002
  , Number   = 0x00000004, Class    = 0x00000400
  , Interface= 0x00000800, Reference= 0x00001000
  , Object   = Class | Interface | Reference .. }
\end{code}


%
%
\tscc uses bit-vector valued flags to encode
membership within a particular interface type, \ie  discriminate
between the different entities. (\emph{Older} versions of \tscc
used a class-based approach, where inclusion could be
tested via \kw{instanceof} tests.)
For example, the enumeration @TypeFlags@ above maps
semantic entities to bit-vector values used as masks
that determine inclusion in a sub-interface of @Type@.
Suppose @t@ of type @Type@.
The invariant here is that if @t.flags@ masked with
@0x00000800@ is non-zero, then @t@ can be safely
treated as an @InterfaceType@ value,
or an @ObjectType@ value,
since the relevant flag emerges from the bit-wise
disjunction of the @Interface@ flag with some
other flags.

\mypara{Specifying Hierarchies with Refinements}
\toolname allows developers to \emph{create} and
\emph{use} @Type@ objects with the above
invariant by specifying a predicate @typeInv@
\footnote{Modern SMT solvers easily handle formulas
over bit-vectors, including operations that shift,
mask bit-vectors, and compare them for equality.}:
\begin{code}
 isMask<v,m,t> = mask(v,m) => impl(this,t)
 typeInv<v> = isMask<v, 0x00000001, Any>
            /\ isMask<v, 0x00000002, String>
            /\ isMask<v, 0x00003C00, ObjectType>
\end{code}
and then refining @TypeFlags@ with the predicate
\begin{code}
  type TypeFlags = {v:TypeFlags | typeInv<v>}
\end{code}
Intuitively, the refined type says that when @v@ (that is the @flags@
field)  is a bit-vector with the first position set to
@1@ the corresponding object satisfies the @Any@ interface, etc.

\mypara{Verifying Downcasts}
\toolname \emph{verifies} the code that uses ad-hoc
hierarchies such as the above by proving the \ts \emph{downcast}
operations (that allow objects to be used at particular instances) safe.
%
For example, consider the following code that \emph{tests}
if @t@ implements the @ObjectType@ interface before performing
a downcast from type @Type@ to @ObjectType@ that permits
the access of the latter's fields:
\begin{code}
function getPropertiesOfType(t: Type): Symbol[] {
  if (t.flags & TypeFlags.Object) {
    var o  = <ObjectType> t; ... } }
\end{code}


\tscc erases casts, thereby missing possible runtime errors.
The same code \emph{without} the if-test, or with a \emph{wrong}
test would pass the \tsc type checker.
%
%
\toolname, on the other hand, checks casts \emph{statically}.
In particular, @<ObjectType>t@ is treated as a call to
a function with signature:
\begin{code}
  (x:{A|impl(x,ObjectType)})=>{v:ObjectType|v=x}
\end{code}
The if-test ensures that the \emph{immutable} field @t.flags@
masked with @0x00003C00@ is non-zero,
satisfying the third line 
in the type definition of @typeInv@, which, in turn
implies that @t@ in fact implements the @ObjectType@
interface.



\subsection{Imperative Features}\label{sec:scale-imp}

\mypara{Immutability Guarantees}
Our system uses ideas from Immutability Generic Java~\cite{Zibin2007} (IGJ)
to provide statically checked immutability guarantees.
In IGJ a type reference is of the
form $\texttt{C<M,}\many{\texttt{T}}\texttt{>}$, where
\emph{immutability} argument \kw{M}
works as proxy for the immutability modifiers of the
contained fields (unless overridden).
It can be one of:
\kw{Immutable} (or \kw{IM}),
when neither this reference nor any other
reference can mutate the referenced object;
\kw{Mutable} (or \kw{MU}),
when this and potentially other references can mutate the object; and
\kw{ReadOnly} (or \kw{RO}),
when this reference cannot
mutate the object, but some other reference may.
Similar reasoning holds for method annotations.
IGJ provides \emph{deep immutability},
since a class's immutability parameter is (by default)
reused for its fields; however, this is not a firm
restriction imposed by refinement type checking.

\mypara{Arrays}
\ts's definitions file provides a detailed
specification for the \ttt{Array} interface.
We extend this definition to account for the mutating
nature of certain array operations:

%
\begin{code}
  interface Array<K extends ReadOnly,T> {
    @Mutable   pop(): T;
    @Mutable   push(x:T): number;
    @Immutable get length(): {nat|v=len(this)}
    @ReadOnly  get length(): nat;
    [...]
  }
\end{code}
%
Mutating operations (@push@, @pop@, field updates)
are only allowed on mutable arrays,
and the type of @a.length@ encodes the exact length of an
immutable array @a@, and just a natural number otherwise.
For example, assume the following code:
\begin{code}
  for(var i = 0; i < a.length; i++) {
    var x = a[i];
    [...]
  }
\end{code}
To prove the access @a[i]@ safe we need to establish
@0 <=  i@ and @i < a.length@. To guarantee that
the length of @a@ is constant, @a@ needs to be immutable,
so \tsc will flag an error unless @a: Array<IM,T>@.

\mypara{Object initialization}
%
Our formal core (\S\ref{sec:language}) treats
constructor bodies in a very limiting way:
object construction is merely an assignment
of the constructor arguments to the
fields of the newly created object.
In \toolname we relax this restriction in two ways:
(a)~We allow class and field invariants to be violated
\emph{within} the body of the constructor, but checked for at
the exit.
(b)~We permit the common idiom of certain fields being
initialized \emph{outside} the constructor, via an additional
mutability variant that encodes reference \emph{uniqueness}.
In both cases, we still restrict constructor code so that it
does not \emph{leak} references of the constructed object
(\kw{this}) or \emph{read} any of its fields, as they might still
be in an uninitialized state.

\mypara{(a) Internal Initialization: Constructors}
Type invariants do not hold while the object is being ``cooked''
within the constructor.
To safely account for this idiom, \toolname defers the checking of
class invariants (\ie the types of fields) by replacing:
(a)~occurrences of
$\dotassign{\kw{this}}{\fieldname_{\index}}{\srcExpr_{\index}}$, with
$\kw{\_}\fieldname_{\index} = \srcExpr_{\index}$, where
$\kw{\_}\fieldname_{\index}$ are \emph{local} variables, and
(b)~all return points with a call
$\funcall{\kw{ctor\_init}}{\many{\kw{\_}\fieldname_{\index}}}$, where
the signature for \kw{ctor\_init} is:
$\tparens{\varbinding{\many{\fieldname}}{\many{\rtype}}} \Rightarrow
\tvoid$.
Thus, \toolname treats field initialization in a field- and path-sensitive
way (through the usual SSA conversion), and establishes the class
invariants via a single atomic step at the constructor's exit (return).

\mypara{(b) External Initialization: Unique References}
Sometimes we want to allow immutable fields to
be initialized outside the constructor. Consider
the code (adapted from \tscc):
%
\begin{code}
  function createType(flags:TypeFlags):Type<IM>{
    var r: Type<UQ> = new Type(checker, flags);
    r.id = typeCount++;
    return r;
  }
\end{code}
Field \texttt{id} is expected to be \emph{immutable}.
However, its initialization
happens after \texttt{Type}'s constructor has returned.
Fixing the type of @r@ to @Type<IM>@ right after
construction would disallow the assignment of the @id@ field
on the following line.
So, instead, we introduce @Unique@ (or @UQ@), a new mutability type
that denotes that the current reference is the \emph{only}
reference to a specific object, and hence, allows
mutations to its fields.
When @createType@ returns, we can finally fix
the mutability parameter of @r@ to @IM@.
We could also return @Type<UQ>@,
extending the \emph{cooking} phase of the
current object and allowing further initialization by
the caller.
@UQ@ references obey stricter rules to
avoid leaking of unique references:
\begin{itemize}
  \item they cannot be re-assigned,

  \item they cannot be generally
    referenced, unless this occurs at a context
    that guarantees that no aliases will be produced, \eg
    the context of @e1@ in @e1.f = e2@, or the
    context of a returned expression, and

  \item they cannot be cast to types of a different mutability
    (\eg @<C<IM>>x@), as this would allow the same reference to
    be subsequently aliased.
\end{itemize}

More expressive initialization approaches are
discussed in \S\ref{sec:related}.

%

\section{Evaluation}\label{sec:evaluation}

To evaluate \toolname, we have used it to analyze a suite
of \js and \ts programs, to answer two questions:
(1)~What kinds of properties can be statically
    verified for real-world code?
(2)~What kinds of annotations or overhead does
    verification impose?
Next, we describe the properties, benchmarks and
discuss the results.


\mypara{Safety Properties} We verify with \toolname\ the
following:


\begin{itemize}

\item \emphbf{Property Accesses} \toolname\ verifies each field (@x.f@)
  or method lookup (@x.m(...)@) succeeds. Recall that
  @undefined@ and @null@ are not considered to inhabit
  the types to which the field or methods belong,

\item \emphbf{Array Bounds} \toolname verifies that each array
  read (@x[i]@) or write (@x[i] = e@) occurs within
  the bounds of @x@,

\item \emphbf{Overloads}  \toolname verifies that functions with
  overloaded (\ie intersection) types correctly
  implement the intersections in a path-sensitive manner
  as described in~(\S\ref{sec:overview:overload}).

\item \emphbf{Downcasts} \toolname verifies that at each \ts (down)cast
  of the form @<T> e@, the expression @e@ is indeed
  an instance of \kw{T}. This requires tracking
  program-specific invariants, \eg bit-vector
  invariants that encode hierarchies~(\S\ref{sec:scale-interfaces}).

\end{itemize}

\subsection{Benchmarks}
We took a number of existing \js\ or \ts\ programs and
ported them to \toolname.
We selected benchmarks that make heavy use of
language constructs connected to the safety
properties described above.
These include parts of the Octane
test suite, developed by Google as a \jsc\ performance
benchmark~\cite{octane} and already ported to \ts by Rastogi
\etal~\cite{Rastogi2015},
the \ts\ compiler~\cite{TypeScript},
and the D3~\cite{d3} and Transducers libraries~\cite{transducers}:



\begin{itemize}

\item \kw{navier-stokes} which
simulates two-dimensional fluid motion over time;
{\kw{richards}},
which simulates a process scheduler
with several types of processes passing information packets;
{\kw{splay}},
which implements the \emph{splay tree} data structure;
and {\kw{raytrace}},
which implements a raytracer that
renders scenes involving multiple lights and objects;
all from the Octane suite,

\item \kw{transducers} a library that implements
composable data transformations, a \jsc\ port of Hickey's
Clojure library, which is extremely dynamic in that some
functions have 12 (value-based) overloads,

\item \kw{d3-arrays} the array manipulating routines
from the D3~\cite{d3} library, which makes heavy use
of higher order functions as well as value-based overloading,

\item \kw{tsc-checker} which includes parts of the \ts
compiler (v1.0.1.0), abbreviated as \tscc. We check
15 functions from \ttt{compiler/core.ts} and 14 functions
from \ttt{compiler/checker.ts} (for which we needed to
import 779 lines of type definitions from \ttt{compiler/types.ts}).
These code segments were selected among tens of thousands of
lines of code comprising the compiler codebase, as they
exemplified interesting properties, like the bit-vector
based type hierarchies explained in \S\ref{sec:scale-interfaces}.

\end{itemize}

\newcommand\TOTANNS{529\xspace}
\newcommand\TRIVANNS{334\xspace}
\newcommand\MUTANNS{104\xspace}
\newcommand\REFANNS{91\xspace}
\newcommand\TRIVPCT{63\%\xspace}
\newcommand\MUTPCT{20\%\xspace}
\newcommand\REFPCT{17\%\xspace}

\begin{figure}[t]
\begin{tabular}{l|r|rrr|r}
\textbf{Benchmark} & \textbf{LOC} & \textbf{T} & \textbf{M} & \textbf{R} & \textbf{Time (s)}    \\  
\hline
\kw{navier-stokes} & 366          & 3          & 18         & 39         & 473                \\  
\kw{splay}         & 206          & 18         & 2          & 0          &   6                \\  
\kw{richards}      & 304          & 61         & 5          & 17         &   7                \\  
\kw{raytrace}      & 576          & 68         & 14         & 2          &  15                \\  
\kw{transducers}   & 588          & 138        & 13         & 11         &  12                \\  
\kw{d3-arrays}     & 189          & 36         & 4          & 10         &  37                \\  
\kw{tsc-checker}   & 293          & 10         & 48         & 12         &  62                \\  
\hline
\textbf{TOTAL}     & 2522         & \TRIVANNS  & \MUTANNS   & \REFANNS   &                
\end{tabular}
\vspace{0.5em}
\caption{%
\textbf{LOC} is the number of non-comment lines of source (computed via \ttt{cloc} v1.62).
The number of \rsc specifications given as JML style comments is
partitioned into
\textbf{T} trivial annotations \ie\ \tsc type signatures,
\textbf{M} mutability annotations, and
\textbf{R} refinement annotations, \ie those which actually mention invariants.
\textbf{Time} is the number of seconds taken to analyze each file.
}
\label{fig:results}
\end{figure}



\mypara{Results} Figure~\ref{fig:results} quantitatively
summarizes the results of our evaluation.
Overall, we had to add about 1 line of annotation per 5
lines of code (\TOTANNS for 2522 LOC).
The vast majority (\TRIVANNS/\TOTANNS or \TRIVPCT) of the annotations
are \emph{trivial}, \ie are \ts-like types of the form
@(x:nat) => nat@;
\MUTPCT (\MUTANNS/\TOTANNS) are trivial but have \emph{mutability}
information, and only \REFPCT (\REFANNS/\TOTANNS) mention refinements,
\ie are definitions like @type nat = {v:number|0<=v}@
or dependent signatures like @(a:T[],n:idx<a>) => T@.
These numbers show \toolname has annotation overhead
comparable with \ts, as in 83\% cases the annotations
are either identical to \ts annotations or to \ts
annotations with some mutability modifiers.
Of course, in the remaining \REFPCT cases, the signatures
are more complex than the (non-refined) \ts version.

\mypara{Code Changes}
We had to modify the source in various small (but important)
ways in order to facilitate verification. The total number of
changes is summarized in Figure~\ref{fig:changes}.
The \emph{trivial} changes include the addition of type
annotations (accounted for above), and simple transforms
to work around current limitations of our front end, \eg
converting @x++@ to @x = x + 1@.
The \emph{important} classes of changes are the following:
%
\begin{itemize}

\item \emphbf{Control-Flow:}
  Some programs had to be restructured to work around \toolname's
  currently limited support for certain control flow structures
  (\eg \kw{break}). We also modified some loops to use explicit
  termination conditions.

\item \emphbf{Classes and Constructors:}
  As \toolname does not yet support
  \emph{default} constructor arguments, we modified relevant
  @new@ calls in Octane to supply those explicitly.
  We also refactored @navier-stokes@ to use traditional
  OO style classes and constructors instead of \js records
  with function-valued fields.

\item \emphbf{Non-null Checks:}
  In @splay@ we added 5 explicit non-null checks for mutable
  objects as proving those required precise heap analysis that
  is outside \toolname's  scope.

\item \emphbf{Ghost Functions:}
  @navier-stokes@ has more than a hundred (static) array
  access sites, most of which compute indices via non-linear
  arithmetic (\ie via computed indices of the form @arr[r*s + c]@);
  SMT support for non-linear integer arithmetic is brittle (and accounts
  for the anomalous time for @navier-stokes@). We factored axioms about
  non-linear arithmetic into \emph{ghost functions} whose types were
  proven once via non-linear SMT queries, and which were then
  explicitly called at use sites to instantiate the axioms
  (thereby bypassing non-linear analysis).
  An example of such a function is:
  \begin{code}
/*@ mulThm1 :: (a:nat, b:{number | b >= 2})
            => {boolean | a + a <= a * b} */
  \end{code}
  which, when \emph{instantiated} via a call @mulThm(x, y)@
  establishes the fact that (at the call-site), @x + x <=  x * y@.
  The reported performance assumes the use of ghost functions.
  In the cases where they were not used \rsc would time out.


\end{itemize}

\newcommand\TSCIMPDIFF{TODO\xspace}
\newcommand\TSCALLDIFF{TODO\xspace}
\newcommand\ALLDIFFS{TODO\xspace}
\newcommand\IMPDIFFS{TODO\xspace}

\begin{figure}[t]
  \begin{center}
\begin{tabular}{l|r|r|r}
\textbf{Benchmark} & \textbf{LOC} & \textbf{ImpDiff} & \textbf{AllDiff}  \\
\hline
\kw{navier-stokes} & 366          &  79              & 160               \\
\kw{splay}         & 206          &  58              &  64               \\
\kw{richards}      & 304          &  52              & 108               \\
\kw{raytrace}      & 576          &  93              & 145               \\
\kw{transducers}   & 588          & 170              & 418               \\
\kw{d3-arrays}     & 189          &   8              & 110               \\
\kw{tsc-checker}   & 293          &   9              & 47                \\
\hline
\textbf{TOTAL}     & 2522         & 469              & 1052
\end{tabular}
\end{center}
\vspace{0.5em}
\caption{%
\textbf{LOC} is the number of non-comment lines of
source (computed via \ttt{cloc} v1.62).
The \emph{number of lines} at which code was changed,
which is counted as either:
\textbf{ImpDiff}: the \emph{important} changes
that require restructuring the original JavaScript
code to account for limited support for control
flow constructs, to replace records with classes
and constructors, and to add ghost functions, or,
\textbf{AllDiff}: the above plus \emph{trivial}
changes due to the addition of plain or refined
type annotations (Figure~\ref{fig:results}), and
simple edits to work around current limitations
of our front end.
}
\label{fig:changes}
\end{figure}

\subsection{Transducers (A Case Study)}\label{sec:transducers}

We now delve deeper into one of our benchmarks: the Transducers
library. At its heart this library is about reducing collections, aka
performing folds. A Transformer is anything that implements three
functions: @init@ to begin computation, @step@ to consume one element
from an input collection, and @result@ to perform any post-processing.
One could imagine rewriting @reduce@ from Figure 1 by building a
Transformer where @init@ returns @x@, @step@ invokes @f@, and @result@
is the identity.  \footnote{For simplicity of discussion we will
henceforth ignore init and initialization in general, as well as some
other details.} The Transformers provided by the library are
composable - their constructors take, as a final argument, another
Transformer, and then all calls to the outer Transformer's functions
invoke the corresponding one of the inner Transformer. This gives rise
to the concept of a Transducer, a function of type
@Transformer=> Transformer@ and this library's namesake.

The main reason this library interests us is because some of its
functions are massively overloaded. Consider, for example, the
@reduce@ function it defines. As  discussed above, @reduce@ needs a
Transformer and a collection. There are two opportunities for
overloading here. First of all, the main ways that a Transformer is
more general than a simple step function is that it can be stateful
and that it defines the @result@ post-processing step. Most of the
time the user does not need these features, in which case their
Transformer is just a wrapper around a step function. Thus for
convenience, the user is allowed to pass in either a full-fledged
Transformer or a step function which will automatically get wrapped
into one. Secondly, the collection being reduced can be a stunning
array of options: an Array, a string (\ie a collection of characters,
which are themselves just strings), an arbitrary object (\ie, in
\js, a collection of key-value pairs), an iterator (an object
that defines a @next@ function that iterates through the collection),
or an iterable (an object that defines an @iterator@ function that
returns an iterator). Each of these collections needs to be dispatched
to a type-specific reduce function that knows how to iterate over that
kind of collection. In each overload, the type of the collection must
match the type of the Transformer or step function. Thus our @reduce@
begins as shown in Figure~\ref{fig:transducers}:

\begin{figure}
\begin{code}
  /*@ ((B, A) => B,          , A[]   ) => B
      (Transformer<A,B>      , A[]   ) => B
      ((B, string) => B)     , string) => B
      (Transformer<string, B>, string) => B
      ...
  */
  function reduce(xf, coll) {
    xf = typeof xf == "function" ? wrap(xf) : xf;
    if(isString(coll)) {
      return stringReduce(xf, coll);
    } else if(isArray(coll)) {
      return arrayReduce(xf, coll);
    } else
    [...]
  }
\end{code}
\caption{Adapted sample from Transducers benchmark}
\label{fig:transducers}
\end{figure}

If you count all 5 types of collection and the 2 options for step
function vs Transformer, this function has 10 distinct overloads!
Another similar function offers 5 choices of input collection and 3
choices of output collection for a total of 15 distinct overloads.

\subsection{Unhandled Cases}
This section outlines some cases that \rsc fails to handle and
explains the reasons behind them.

\mypara{Complex Constructor Patterns}
Due to our limited internal initialization scheme,
there are certain common constructor patterns
that are not supported by \rsc.
For example, the code below:
\begin{code}
  class A<M extends RO> {
    f: nat;
    constructor() { this.setF(1); }
    setF(x: number) { this.f = x; }
  }
\end{code}

Currently, \rsc does not allow method invocations on the
object under construction in the constructor, as it cannot track
the (value of the) updates happening in the method @setF@. Note that this
case is supported by IGJ.
The relevant section in the related work
(\S\ref{sec:related})
includes approaches
that could lift this restriction.

\mypara{Recovering Unique References}
\rsc cannot recover the @Unique@ state for objects
after they have been converted to @Mutable@ (or other state),
as it lacks a fine-grained alias tracking mechanism.
Assume, for example the function @distict@ below
from the \ts compiler v1.0.1.0:
\begin{codewithnumbers}
function distinct<T>(a: T[]): T[] {
  var result: T[] = [];                       `\label{distinct:1}`
  for (var i = 0, n = a.length; i < n; i++) { `\label{distinct:2}`
    var current = a[i];                       `\label{distinct:3}`
    for (var j = 0; j < result.length; j++) { `\label{distinct:4}`
      if (result[j] === current) {            `\label{distinct:5}`
        break;                                `\label{distinct:6}`
      }                                       `\label{distinct:7}`
    }                                         `\label{distinct:7}`
    if (j === result.length) {                `\label{distinct:8}`
      result.push(current);                   `\label{distinct:9}`
    }                                         `\label{distinct:10}`
  }                                           `\label{distinct:11}`
  return result;                              `\label{distinct:12}`
}
\end{codewithnumbers}

The @results@ array is defined at line~\ref{distinct:1} so it is
initially typed as @Array<UQ,T>@.
At lines~\ref{distinct:4}--\ref{distinct:7} it is iterated over,
so in order to prove the access at line~\ref{distinct:5} safe, we
need to treat @results@ as an immutable array.
However, later on at line~\ref{distinct:9} the code pushes an
element onto @results@, an operation that requires a mutable
receiver.
Our system cannot handle the interleaving of these two kinds of
operations that (in addition) appear in a tight loop
(lines~\ref{distinct:2}--\ref{distinct:11}).
The alias tracking section in the related work
(\S\ref{sec:related}) includes
approaches that could allow support for such cases.

\mypara{Annotations per Function Overload }
A weakness of \rsc,
that stems from the use of Two-Phased Typing~\cite{rsc-ecoop15}
in handling intersection types,
is cases where type checking requires annotations
under a specific signature overload.
Consider for example the following code,
which is a variation of the @reduce@ function presented
in \S\ref{sec:overview}:
\begin{codewithnumbers}
/*@ <A>  (a:A[]`{\color{mygreen}$^+$}`,f:(A,A,idx<a>)=>A) => A`\label{code:reduce:0}`
    <A,B>(a:A[]`{\phantom{\color{mygreen}$^+$}}`,f:(B,A,idx<a>)=>B,x:B) => B`\label{code:reduce:1}`
 */                                          `\label{code:reduce:2}`
function reduce(a, f, x) {                   `\label{code:reduce:3}`
    var res, s;                              `\label{code:reduce:4}`
    if (arguments.length === 3) {            `\label{code:reduce:6}`
      res = x;                               `\label{code:reduce:7}`
      s   = 0;                               `\label{code:reduce:8}`
    } else {                                 `\label{code:reduce:9}`
      res = a[0];                            `\label{code:reduce:10}`
      s   = 1;                               `\label{code:reduce:11}`
    }                                        `\label{code:reduce:12}`
    for (var i = s; i < a.length; i++)       `\label{code:reduce:14}`
      res = f(res , a[i], i);                `\label{code:reduce:15}`
    return res;                              `\label{code:reduce:17}`
}                                            `\label{code:reduce:18}`
\end{codewithnumbers}

Checking the function body for the second overload
(line~\ref{code:reduce:1}) is problematic: without a user
type annotation on @res@, the inferred type after joining
the environments of each conditional branch will be
@res: B + (A + @{\color{mygreen}\texttt{undefined}}@)@
(as @res@ is collecting values from @x@ and
@a[0]@, at lines \ref{code:reduce:7} and \ref{code:reduce:10},
respectively),
instead of the intended @res: B@.
This causes an error when @res@ is passed to function @f@ at
line~\ref{code:reduce:15}, expected to have type @B@, which 
cannot be overcome even with refinement checking, since this 
code is no longer executed under the check on the length of 
the @arguemnts@ variable (line~\ref{code:reduce:6}).
A solution to this issue
would be for the user to annotate the type of @res@ as @B@ at its
definition at line~\ref{code:reduce:4}, but only for the specific
(second) overload. The assignment at line~\ref{code:reduce:10}
will be invalid, but this is acceptable since that branch is
provably (by the refinement checking phase~\cite{rsc-ecoop15})
dead.
This option, however,  is currently not available.

\section{Related Work}\label{sec:related}

\rsc\ is related to several distinct lines of work.

\mypara{Types for Dynamic Languages}
Original approaches incorporate \emph{flow analysis} in the
type system, using mechanisms to track aliasing and
flow-sensitive updates~\cite{Thiemann05,Drossopoulou05}.
Typed Racket's \emph{occurrence} typing
narrows the type of unions based on
control dominating type tests, and its
\emph{latent predicates} lift the results
of tests across higher order
functions~\cite{typedracket}.
DRuby~\cite{Fur09a} uses intersection
types to \emph{represent} summaries for
overloaded functions.
TeJaS~\cite{Lerner13} combines occurrence
typing with flow analysis to analyze
\js~\cite{Lerner13}.
Unlike \rsc\ none of the above
reason about relationships \emph{between}
values of multiple program variables,
which is needed to account for
value-overloading and richer
program safety properties.

\mypara{Program Logics}
At the other extreme, one can encode types as
formulas in a logic, and use SMT solvers for
all the analysis (subtyping).
DMinor explores this idea in a first-order functional
language with type tests~\cite{dminor}.
The idea can be scaled to higher-order languages
by embedding the typing relation inside the
logic~\cite{NestedPOPL12}.
DJS combines nested refinements with alias types
\cite{AliasTypes}, a restricted separation logic,
to account for aliasing and flow-sensitive heap
updates to obtain a static type system for a large
portion of \js~\cite{Chugh2012}.
DJS proved to be extremely difficult
to use.
First, the programmer had to spend a lot of effort
on manual heap related annotations; a task that
became especially cumbersome in the presence
of higher order functions.
Second, nested refinements precluded the
possibility of refinement inference,
further increasing the burden on the user.
In contrast, mutability modifiers have proven
to be lightweight~\cite{Zibin2007} and two-phase
typing lets \toolname\ use liquid refinement
inference~\cite{LiquidPLDI08},
yielding a system that is more practical for
real world programs.
%
%
\emph{Extended Static Checking}~\cite{Flanagan2002}
uses Floyd-Hoare style first-order contracts (pre-, post-conditions
and loop invariants) to generate verification conditions discharged
by an SMT solver.
Refinement types can be viewed as a generalization of
Floyd-Hoare logics that uses types to compositionally
account for polymorphic higher-order functions and
containers that are ubiquitous in modern languages
like \ts.

{X10}~\cite{Nystrom2008} is a language that extends an
object-oriented type system with \emph{constraints} on
the immutable state of classes.
Compared to {X10}, in \rsc: 
(a)~we make mutability parametric~\cite{Zibin2007},
and extend the refinement system accordingly,
(b)~we crucially obtain flow-sensitivity
via SSA transformation, and path-sensitivity
by incorporating branch conditions,
(c)~we account for reflection by
encoding tags in refinements and two-phase
typing~\cite{rsc-ecoop15}, and 
(d)~our design ensures that we can use
liquid type inference~\cite{LiquidPLDI08}
to automatically synthesize refinements.
%

%

\mypara{Analyzing \tsc}
Feldthaus \etal present a hybrid analysis to find discrepancies between
\ts\ interfaces~\cite{definitelytyped} and their
\js\ implementations~\cite{MollerOOPSLA14}, and Rastogi \etal
extend \ts with an efficient gradual type system that mitigates the
unsoundness of \ts's type system~\cite{Rastogi2015}.


\mypara{Object and Reference Immutability}
\toolname\ builds on existing methods for statically enforcing
immutability.
In particular, we build on Immutability Generic Java
(IGJ) which encodes object  and reference immutability
using Java generics~\cite{Zibin2007}.
Subsequent work extends these ideas to allow
(1) richer \emph{ownership} patterns for creating immutable
    cyclic structures~\cite{Zibin2010},
(2) \emph{unique} references, and ways to recover
    immutability after violating uniqueness,
    without requiring an alias analysis~\cite{Gordon2012}.

Reference immutability has recently been combined
with rely-guarantee logics (originally used to reason about 
thread interference), to allow refinement type reasoning.
Gordon \etal~\cite{Gordon2013} treat references 
to shared objects like threads in
rely-guarantee logics, and so multiple aliases to an 
object are allowed only if the guarantee
condition of each alias implies the rely 
condition for all other aliases. Their approach allows 
refinement types over mutable data, but resolving 
their proof obligations depends on theorem-proving,
which hinders automation.
Milit{\~a}o \etal~\cite{Militao2014} 
present Rely-Guarantee Protocols that 
can model complex aliasing interactions, and, 
compared to Gordon's work, 
allow temporary inconsistencies, 
can recover from shared state via ownership tracking, and 
resort to more lightweight proving mechanisms.

The above extensions are orthogonal to \toolname;
in the future, it would be interesting to see if
they offer practical ways for accounting for
(im)mutability in \ts\ programs.

\mypara{Object Initialization}
A key challenge in ensuring immutability is accounting
for the construction phase where fields are \emph{initialized}.
%
We limit our attention to \emph{lightweight} approaches
\ie\ those that do not require tracking aliases, capabilities
or separation logic~\cite{AliasTypes,Gardner2012}.
Haack and Poll~\cite{Haack2009} describe a flexible
initialization schema that uses secret tokens, known
only to \emph{stack-local} regions, to initialize
all members of cyclic structures.
Once initialization is complete the tokens are converted to
global ones. Their analysis is able to infer the points where
new tokens need to be introduced and committed.
The \emph{Masked Types} approach tracks, within the
type system, the set of fields that remain to be
initialized~\cite{MaskedTypes}.
{X10}'s \emph{hardhat} flow-analysis based approach to
initialization~\cite{Zibin2012} and \emph{Freedom
Before Commitment}~\cite{Summers2011} are perhaps the most
permissive of the lightweight methods, allowing,
unlike \toolname, method dispatches or field accesses
in constructors.

\section{Conclusions and Future Work}\label{sec:conclusion}

We have presented \rsc\ which brings SMT-based
modular and extensible analysis to dynamic, imperative,
class-based languages by harmoniously integrating
several techniques.
First, we restrict refinements to immutable
variables and fields (cf. {X10} \cite{Tardieu2012}).
Second, we make mutability parametric (cf. IGJ~\cite{Zibin2007})
and recover path- and flow-sensitivity via SSA.
Third, we account for reflection and value overloading
via two-phase typing~\cite{rsc-ecoop15}.
Finally, our design ensures that we can use liquid
type inference~\cite{LiquidPLDI08} to automatically
synthesize refinements.
Consequently, we have shown how \toolname\ can
verify a variety of properties with a modest
annotation overhead similar to \ts.
Finally, our experience points to several avenues
for future work, including:
(1)~more permissive but lightweight techniques for
    object initialization~\cite{Zibin2012},
(2)~automatic inference of trivial types
    via flow analysis~\cite{FirefoxTI},
(3)~verification of security properties, \eg
    access-control policies in \js\ browser
    extensions~\cite{Guha2011Oakland}.

\bibliographystyle{abbrvnat}
\bibliography{main}

\appendix
\onecolumn

\section{Full System}

In this section we present the full type system for the core language of
\S3 of the main paper.

\subsection{Formal Languages}

\mypara{\srcLang}
Figure~\ref{fig:syntax-imp} shows the full syntax for the input language.
The type language is the same as described in the main paper.
The operational semantics, shown in Figure~\ref{fig:imp:opsem}, 
is borrowed from Safe TypeScript~\cite{Rastogi2015}, 
with certain simplifications since the
language we are dealing with is simpler 
than the one used there.
We use evaluation contexts $\srcEvalCtx$, with a left to right evaluation order.

\def\skipOne{\\\\}
\def\skipTwo{\\\\\\}

\begin{figure*}[ht]
\[
\begin{array}{lrcl}
\multicolumn{4}{c}{\text{Syntax}}
\skipOne
%
%
  \text{Expressions} &
  \srcExpr        & \prod  &            \srcEvar
                    \spmid              \srcVconst
                    \spmid              \srcThis
                    \spmid              \dotref{\srcExpr}{\srcFieldname}
                    \spmid              \srcMethcall{\srcExpr}{\srcMethodname}{\many{\srcExpr}}
                    \spmid              \srcNewexpr{\cname}{\many{\srcExpr}}
                    \spmid              \srcCast{\rtype}{\srcExpr}
\skipOne
%
%
  \text{Statements} &
  \srcStmt        & \prod   & 
                              \srcVarDecl{\srcEvar}{\srcExpr}                   \spmid
                              \srcDotassign{\srcExpr}{\srcFieldname}{\srcExpr}  \spmid
                              \srcAssign{\srcEvar}{\srcExpr}                    \spmid
                              \srcIte{\srcExpr}{\srcStmt}{\srcStmt}             \spmid
                              \srcSeq{\srcStmt}{\srcStmt}                       \spmid
                              \srcSkip
\skipOne
  \text{Field Decl.} &
  \srcFieldsSym  & \prod   & \cdot                                         \spmid 
                              \srcImmfieldbinding{\srcFieldname}{\rtype}       \spmid 
                              \srcMutfieldbinding{\srcFieldname}{\rtype}       \spmid
                              \concatenate{\srcFieldsSym_1}{\srcFieldsSym_2}
\skipOne
  \text{Method Body} & 
  \srcBody  & \prod   & \srcSeq{\srcStmt}{\srcReturn{\srcExpr}}
\skipOne
  \text{Expr. or Body} & 
  \srcExprOrBd & \prod & \srcExpr \spmid \srcBody
\skipOne
  \text{Method Decl.} &
  \srcMethodsSym & \prod   & \cdot                                         
                              \spmid
                              \srcMethsig{\srcMethodname}{\srcEvar}{\rtype}{\pred}{\rtype}
                              \spmid
                              \concatenate{\srcMethodsSym_1}{\srcMethodsSym_2}
\skipOne
  \text{Field Def.} &
  \srcFieldDefsSym  & \prod & \cdot                                         \spmid 
                              \srcFieldDef{\srcFieldname}{\srcVal}            \spmid 
                              \concatenate{\srcFieldDefsSym_1}{\srcFieldDefsSym_2}
\skipOne
  \text{Method Def.} &
  \srcMethodDefsSym & \prod & \cdot                                         
                              \spmid
                              \srcMethdef{\srcMethodname}{\srcEvar}{\rtype}{\pred}{\rtype}
                                         {\srcBody}
                              \spmid
                              \concatenate{\srcMethodDefsSym_1}{\srcMethodDefsSym_2}
\skipOne
  \text{Class Def.} &
  \srcCldeclname  & \prod   & \srcClassdecl{\cname}{\pred}{\rtypec}
                                           {\srcFieldsSym}{\srcMethodDefsSym}
\skipOne
  \text{Signature}  &
  \srcSignatures   & \prod   & \cdot \spmid \srcCldeclname \spmid
                                \concatenate{\srcSignatures_1}{\srcSignatures_2}
\skipOne
  \text{Program}    &
  \srcProg          & \prod   & \concatenate{\srcSignatures}{\srcBody}
\skipTwo
\multicolumn{4}{c}{\text{Runtime Configuration}}
\skipOne
  \text{Evaluation Context} & 
  \srcEvalCtx & \prod   & \srcEmpEvalctx                            \spmid          
                          \srcDotref{\srcEvalCtx}{\srcFieldname}    \spmid          
                          \srcMethcall{\srcEvalCtx}{\srcMethodname}{\many{\srcExpr}}        \spmid          
                          \srcMethcall{\srcVal}{\srcMethodname}{\many{\srcVal}, \srcEvalCtx, \many{\srcExpr}} \spmid          
                          \srcNewexpr{\cname}{\many{\srcVal}, \srcEvalCtx, \many{\srcExpr}} \spmid 
                          \srcCast{\rtype}{\srcEvalCtx}             \spmid          
                          \srcVarDecl{\srcEvar}{\srcEvalCtx}           \spmid          
\\
            & &         & 
                          \srcDotassign{\srcEvalCtx}{\srcFieldname}{\srcExpr}               \spmid          
                          \srcDotassign{\srcVal}{\srcFieldname}{\srcEvalCtx}                \spmid          
                          \srcAssign{\srcEvar}{\srcEvalCtx}         \spmid       
                          \srcIte{\srcEvalCtx}{\srcStmt}{\srcStmt}  \spmid 
                          \srcReturn{\srcEvalCtx}                   \spmid
                          \srcSeq{\srcEvalCtx}{\srcStmt}            \spmid
                          \srcSeq{\srcEvalCtx}{\srcReturn{\srcExpr}}
\skipOne
  \text{Runtime Conf.} & \srcRtConf & \prod & \pair{\srcRtState}{\srcStmt}
\skipOne
  \text{State} & \srcRtState & \prod & \quadruple{\srcSignatures}{\srcStore}{\srcStack}{\srcHeap}
\skipOne
  \text{Store}      & 
  \srcStore         & \prod   & \cdot \spmid \mapping{\srcEvar}{\srcVal} 
                                \spmid \concatenate{\srcStore_1}{\srcStore_2}
\skipOne
  \text{Value}      & 
  \srcVal           & \prod   & \srcLoc \spmid \srcVconst
\skipOne
  \text{Stack}      & 
  \srcStack         & \prod   & \cdot \spmid \stackCons{\srcStack}{\toStack{\srcStore}{\srcEvalCtx}}
\skipOne
  \text{Heap}       & 
  \srcHeap          & \prod   & \cdot \spmid \mapping{\srcLoc}{\srcObject}
                                \spmid \concatenate{\srcHeap_1}{\srcHeap_2}
\skipOne
  \text{Object}     & 
\srcObject        & \prod   & \heapObject{\srcLoc}{\srcFieldDefsSym}
                    \spmid    \heapClassObject{\cname}{\srcLoc}{\srcMethodDefsSym}
\end{array}
\]
\caption{\srcLang: syntax and runtime configuration}
\label{fig:syntax-imp}
\end{figure*}

\begin{figure*}[!t]
  \judgementHeadNameOnly{Operational Semantics for \srcLang}

  \judgementRelOnly{\stepsFour{\srcRtState}{\srcExprOrBd}{\srcRtState'}{\srcExprOrBd'}}
\begin{mathpar}
%
%
\inferrule[\srcRcEvalCtx]
{
  \stepsFour{\quadruple{\srcSignatures}{\srcStore }{\cdot}{\srcHeap }}{\srcExpr}
            {\quadruple{\srcSignatures}{\srcStore'}{\cdot}{\srcHeap'}}{\srcExpr'}
}
{
  \stepsFour{\quadruple{\srcSignatures}{\srcStore }{\srcStack}{\srcHeap }}{\idxEctx{\srcEvalCtx}{\srcExpr}}
            {\quadruple{\srcSignatures}{\srcStore'}{\srcStack}{\srcHeap'}}{\idxEctx{\srcEvalCtx}{\srcExpr'}}
}
\and
\inferrule[\srcRcVar]
{}
{
  \stepsFour{\srcRtState}{\srcEvar}
            {\srcRtState}{\idxMapping{\accessState{\srcRtState}{\srcStore}}{\srcEvar}}
}
\and
\inferrule[\srcRcDotRef]
{
  \idxMapping{\accessState{\srcRtState}{\srcHeap}}{\srcLoc} = \heapObject{\srcLoc'}{\srcFieldDefsSym}
  \\\\
  \srcFieldDef{\srcFieldname}{\srcVal} \in \srcFieldDefsSym 
}
{
  \stepsFour{\srcRtState}{\srcDotref{\srcLoc}{\srcFieldname}}
            {\srcRtState}{\srcVal}
}
\and
\inferrule[\srcRcNew]
{
  \idxMapping{\srcHeap}{\srcLoc_0} = \heapClassObject{\cname}{\srcLoc_0'}{\srcMethodDefsSym}
  \\\\
  \srcRtFields{\srcSignatures}{\cname} = \srcFieldbindings{\srcFieldname}{\rtype}
  \\\\
  \srcObject = \heapObject{\srcLoc_0}{\srcFieldDefs{\srcFieldname}{\srcVal}}
  \\\\
  \srcHeap' = \updMapping{\srcHeap}{\srcLoc}{\srcObject}
  \\
  \fresh{\srcLoc}
}
{
  \stepsFour{\quadruple{\srcSignatures}{\srcStore}{\srcStack}{\srcHeap}}
            {\srcNewexpr{\cname}{\many{\srcVal}}}
            {\quadruple{\srcSignatures}{\srcStore}{\srcStack}{\srcHeap'}}
            {\srcLoc}
}
\and
\inferrule[\srcRcCall]
{
  \srcResolveMeth{\srcHeap}{\srcLoc}{\srcMethodname}{
  \srcMethdefUnty{\srcMethodname}{\srcEvar}{\srcSeq{\srcStmt}{\srcReturn{\srcExpr}}}}
  \\\\
  \srcStore' = \concatenate{\mappings{\srcEvar}{\srcVal}}{\mapping{\srcThis}{\srcLoc}}
  \\
  \srcStack' = \concatenate{\srcStack}{\srcStore,\srcEvalCtx}
}
{
  \stepsFour{\quadruple{\srcSignatures}{\srcStore}{\srcStack}{\srcHeap}}
            {\idxEctx{\srcEvalCtx}{\srcMethcall{\srcLoc}{\srcMethodname}{\many{\srcVal}}}}
            {\quadruple{\srcSignatures}{\srcStore'}{\srcStack'}{\srcHeap}}
            {\srcSeq{\srcStmt}{\srcReturn{\srcExpr}}}
}
\and
\inferrule[\srcRcCast]
{}
{
  \stepsFour{\srcRtState}{\srcCast{\rtype}{\srcExpr}}
            {\srcRtState}{\srcExpr} 
}
\end{mathpar}
\judgementRelOnly{\stepsFour{\srcRtState}{\srcStmt}{\srcRtState'}{\srcStmt'}}
\begin{mathpar}
\inferrule[\srcRcSkip]
{}
{\stepsFour{\srcRtState}{\srcSeq{\srcSkip}{\srcStmt}}{\srcRtState}{\srcStmt}}
\and
%
\inferrule[\srcRcVarDecl]
{
  \srcStore' = \updMapping{\accessState{\srcRtState}{\srcStore}}{\srcEvar}{\srcVal}
}
{
  \stepsFour{\srcRtState}{\srcVarDecl{\srcEvar}{\srcVal}}
            {\updState{\srcRtState}{\srcStore'}}{\srcVal}
}
\and
\inferrule[\srcRcDotAsgn]
{
  \srcHeap' = \updMapping{\accessState{\srcRtState}{\srcHeap}}{\srcLoc}
                         {\updMapping{\idxMapping{\accessState{\srcRtState}{\srcHeap}}{\srcLoc}}
                                     {\srcFieldname}{\srcVal}}
}
{
  \stepsFour{\srcRtState}{\srcDotassign{\srcLoc}{\srcFieldname}{\srcVal}}
            {\updState{\srcRtState}{\srcHeap'}}
            {\srcVal}
}
\and
\inferrule[\srcRcAssign]
{
  \srcStore' = \updMapping{\accessState{\srcRtState}{\srcStore}}{\srcEvar}{\srcVal}
}
{
  \stepsFour{\srcRtState}{\srcAssign{\srcEvar}{\srcVal}}
            {\updState{\srcRtState}{\srcStore'}}
            {\srcVal}
}
\and
\inferrule[\srcRcIte]
{
  \srcVconst = \vtrue  \imp \index = 1  \\\\
  \srcVconst = \vfalse \imp \index = 2
}
{
  \stepsFour{\srcRtState}{\srcIte{\srcVconst}{\srcStmt_1}{\srcStmt_2}}
            {\srcRtState}{\srcStmt_{\index}}
}
\and
\inferrule[\srcRcRet]
{
  \accessState{\srcRtState}{\srcStack} = \concatenate{\srcStack'}{\srcStore,\srcEvalCtx}
}
{\stepsFour{\srcRtState}
           {\srcReturn{\srcVal}}
           {\updState{\srcRtState}{\srcStack', \srcStore}}
           {\idxEctx{\srcEvalCtx}{\srcVal}}
}
\end{mathpar}
\caption{Reduction Rules for \srcLang (adapted from Safe TypeScript~\cite{Rastogi2015})}
\label{fig:imp:opsem}
\end{figure*}


\mypara{\ssaLang}
Figure~\ref{fig:syntax-ssa} shows the full syntax for the SSA 
transformed language.
The reduction rules of the operational semantics 
for language \ssaLang are shown in Figure~\ref{fig:opsem}.
We use evaluation contexts $\ssaEvalCtx$, with a left to right evaluation order.

\begin{figure*}[ht]
\[
\begin{array}{lrcl}
\multicolumn{4}{c}{\text{Syntax}}
\skipOne
%
%
  \text{Expression} &
  \expr
              & \prod   &   \evar                                     \spmid
                            \vconst                                   \spmid
                            \ethis                                    \spmid
                            \dotref{\expr}{\fieldname}                \spmid
                            \methcall{\expr}{\methodname}{\many{\expr}} \spmid
                            \newexpr{\cname}{\many{\expr}}            \spmid
                            \cast{\rtype}{\expr}                      \spmid    
                            \dotassign{\expr_1}{\fieldname}{\expr_2}  \spmid
                            \ssactxidx{\ssactx}{\expr}
\skipOne
  \text{SSA context} & 
  \ssactx     & \prod   &   \hole \spmid 
                            \letin{\evar}{\expr}{\hole} \spmid
                            \letifshort
                              {\many{\phi}}
                              {\expr}
                              {\ssactx_1}{\ssactx_2}{\hole}
\skipOne
  \text{Term} & 
  \ssaterm    & \prod   &   \expr \spmid \ssactx
\skipOne
  \text{$\Phi$-Vars}    &
  \phi        & \prod   &   \ptriplet{\evar}{\evar_1}{\evar_2}

\skipOne
  \text{Field Decl.} &
  \fieldsSym  & \prod   &   \cdot                                         \spmid 
                            \immfieldbinding{\fieldname}{\rtype}       \spmid 
                            \mutfieldbinding{\fieldname}{\rtype}       \spmid
                            \concatenate{\fieldsSym_1}{\fieldsSym_2}
\skipOne
  \text{Method Decl.} &
  \methodsSym  & \prod &   \cdot                                         
                            \spmid
                            \methsig{\methodname}{\evar}{\rtype}{\pred}{\rtype}
                            \spmid
                            \concatenate{\methodsSym_1}{\methodsSym_2}
\skipOne
  \text{Field Def.} &
  \fieldDefsSym  & \prod  & \cdot                                     \spmid 
                            \fieldDef{\fieldname}{\val}             \spmid 
                            \concatenate{\fieldDefsSym_1}{\fieldDefsSym_2}
\skipOne
  \text{Method Def.} &
  \methodDefsSym  & \prod &   \cdot                                         
                            \spmid
                            \methdef{\methodname}{\evar}{\rtype}
                                    {\pred}{\rtype}{\expr}
                            \spmid
                            \concatenate{\methodDefsSym_1}{\methodDefsSym_2}
\skipOne

  \text{Class Def.} &
  \ssaCldeclSym & \prod   &   \classdecl{\cname}{\pred}{\rtypec}{\fieldsSym}{\methodDefsSym}
\skipOne
  \text{Signature}  &
  \ssaSignatures    & \prod   & \cdot \spmid \ssaCldeclSym \spmid
                                \concatenate{\ssaSignatures_1}{\ssaSignatures_2}
\skipOne
  \text{Program} &
  \ssaProg    & \prod   &   \concatenate{\ssaSignatures}{\expr}
\skipTwo
\multicolumn{4}{c}{\text{Runtime Configuration}}
\skipOne
  \text{Evaluation Context} &
  \ssaEvalCtx  & \prod &    \empevalctx             \spmid  
                            \dotref{\ssaEvalCtx}{\fieldname}  \spmid  
                            \methcall{\ssaEvalCtx}{\methodname}{\many{\expr}} \spmid  
                            \methcall{\val}{\methodname}{\many{\val}, \ssaEvalCtx,  \many{\expr}} \spmid  
                            \newexpr{\cname}{\many{\val}, \ssaEvalCtx, \many{\expr}} \spmid  
                            \cast{\rtype}{\ssaEvalCtx} \spmid  
\\
 &  & & 
                            \letin{\evar}{\ssaEvalCtx}{\expr} \spmid  
                            \dotassign{\ssaEvalCtx}{\fieldname}{\expr} \spmid  
                            \dotassign{\val}{\fieldname}{\ssaEvalCtx} \spmid  
                            \letifshort
                              {\many{\phi}}
                              {\ssaEvalCtx}{\expr}{\expr}{\expr}
\skipOne
  \text{SSA Eval. Context} &
  \ssaEvalCtxSsa  & \prod & \letin{\evar}{\ssaEvalCtx}{\hole} \spmid
                            \letifshort
                              {\many{\phi}}
                              {\ssaEvalCtx}
                              {\ssactx_1}{\ssactx_2}{\hole}
\skipOne
  \text{Term Eval. Context} & 
  \ssaEvalCtxTerm & \prod & \ssaEvalCtx \spmid \ssaEvalCtxSsa
\skipOne
  \text{Runtime Conf.} & \ssaRtConf & \prod & \pair{\ssaRtState}{\expr}
\skipOne
  \text{State} & \ssaRtState & \prod & \pair{\ssaSignatures}{\ssaHeap}
\skipOne
  \text{Heap}       & 
  \ssaHeap          & \prod   & \cdot \spmid \mapping{\loc}{\ssaObject}
                                \spmid \concatenate{\ssaHeap_1}{\ssaHeap_2}
\skipOne
  \text{Store}      & 
  \ssaStore         & \prod   & \cdot \spmid \mapping{\evar}{\val}
                                \spmid \concatenate{\ssaStore_1}{\ssaStore_2}
\skipOne
  \text{Value}      & 
  \val              & \prod   & \loc \spmid \vconst
\skipOne
  \text{Object}     & 
  \ssaObject        & \prod   & \heapObject{\loc}{\fieldDefsSym}
                      \spmid    \heapClassObject{\cname}{\loc}{\methodDefsSym}
\end{array}
\]
\caption{\ssaLang: syntax and runtime configuration}
\label{fig:syntax-ssa}
\end{figure*}

\begin{figure*}[ht]
  \judgementHead{Operational Semantics for \ssaLang}
                {\stepsFour{\ssaRtState}{\expr}{\newstuff{\ssaRtState'}}{\expr'}}
\begin{mathpar}
\inferrule[\rcevalctx]
{
  \stepsFour{\ssaRtState}{\expr}{\ssaRtState'}{\expr'}
}
{
  \stepsFour{\ssaRtState}{\idxEctx{\ssaEvalCtx}{\expr}}{\ssaRtState'}{\idxEctx{\ssaEvalCtx}{\expr'}}
}
\and
\inferrule[\rfield]
{
  \idxMapping{\accessState{\ssaRtState}{\ssaHeap}}{\loc} = 
    \heapObject{\loc'}{\fieldDefsSym}
  \\\\
  \fieldDef{\fieldname}{\val} \in \fieldDefsSym 
}
{
  \stepsFour{\ssaRtState}
        {\dotref{\loc}{\fieldname}}
        {\ssaRtState}
        {\val}
}
\and
\inferrule[\rinvoke]
{
  \resolveMeth{\ssaHeap}{\loc}{\methodname}
              {\parens{\methdef{\methodname}{\evar}{\rtypeb}{\pred}{\rtype}{\expr}}}
  \\\\
  \eval{\appsubst{\tsubsttwo{\many{\val}}{\many{\evar}}{\loc}{\ethis}}{\pred}} = \ssaTrue
%
}
{
  \stepsFour{\ssaRtState}
        {\methcall{\loc}{\methodname}{\many{\val}}}
        {\ssaRtState}
        {\appsubst{\tsubsttwo{\many{\val}}{\many{\evar}}{\loc}{\ethis}}{\expr}}
}
\and
\inferrule[\rcast]
{
  \wfconstr{\env}{\varbinding{\idxMapping{\ssaRtState}{\loc}}{\rtypeb}; \rtypeb\leq\rtype}
}
{
  \stepsFour{\ssaRtState}{\cast{\rtype}{\loc}}{\ssaRtState}{\loc}
}
\and
\inferrule[\rnew]
{
  \idxMapping{\srcHeap}{\loc_0} = \heapClassObject{\cname}{\loc_0'}{\methodDefsSym}
  \\\\
  \rtFields{\ssaSignatures}{\cname} = \fieldbindings{\fieldname}{\rtype}
  \\\\
  \ssaObject = \heapObject{\loc_0}{\fieldDefs{\fieldname}{\val}}
  \\\\
  \ssaHeap' = \updMapping{\ssaHeap}{\loc}{\ssaObject}
  \\
  \fresh{\loc}
}
{
  \stepsFour{\pair{\ssaSignatures}{\ssaHeap}}
            {\newexpr{\cname}{\many{\val}}}
            {\pair{\ssaSignatures}{\ssaHeap'}}
            {\loc}
}
\and
\inferrule[\rletin]
{}
{
  \stepsFour{\ssaRtState}{\letin{\evar}{\val}{\expr}}
            {\ssaRtState}
            {\appsubst{\esubst{\val}{\evar}}{\expr}}
}
\and
%
\inferrule[\rdotasgn]
{
  \ssaHeap' = \updMapping{\accessState{\ssaRtState}{\ssaHeap}}{\loc}
                         {\updMapping{\idxMapping{\accessState{\ssaRtState}{\ssaHeap}}{\loc}}
                                     {\fieldname}{\val}}
}
{
  \stepsFour{\ssaRtState}{\dotassign{\loc}{\fieldname}{\val}}
            {\updState{\ssaRtState}{\ssaHeap'}}{\val}
}
\and
%
\inferrule[\rletif] {
  \vconst = \ssaTrue  \imp \index = 1 
  \\
  \vconst = \ssaFalse \imp \index = 2
}
{
  \stepsFour{\ssaRtState}{\letif
      {\many{\evar}}{\many{\evar}_1}{\many{\evar}_2}
      {\vconst}{\ssactx_{1}}{\ssactx_{2}}{\expr}
    }{\ssaRtState}{\ssactxidx{\ssactx_{\index}}
     {\appsubst{\esubst{\many{\evar_{\index}}}{\many{\evar}}}{\expr}}}
}
\end{mathpar}
\caption{Reduction Rules for \ssaLang}
\label{fig:opsem}
\end{figure*}

\subsection{SSA Transformation}

Section~3 of the main paper describes the SSA transformation 
from \srcLang to \ssaLang.
This section 
provides more details and
extends the transformation to runtime 
configurations, to enable the statement and proof of 
our consistency theorem.

\subsubsection{Static Tranformation}

Figure~\ref{fig:ssa:extra} includes some additional
transformation rules that supplement the rules of Figure~3 
of the main paper.
The main program transformation judgment is:
$$
\tossaprog{\srcProg}{\ssaProg}{\ssaenvs}
$$
A global SSA enviornment $\ssaenvs$ is the result of the  
translation of the entire program $\srcProg$ to $\ssaProg$.
In particular, in a program translation tree:
\begin{itemize}
  \item 
each expression node introduces a single binding to the 
relevant SSA environment
$$
\tossaexpr{\ssaenv}{\srcExpr}{\ssaexpr}
\qquad\text{produces binding}\qquad
\mapping{\srcExpr}{\ssaenv}
$$
\item 
each statement introduces two bindings, one for the 
input environment and one for the output
(we use the notation $\stmtPre{\cdot}$ 
and $\stmtPost{\cdot}$, respectively):
$$
\tossastmt{\ssaenv_0}{\srcStmt}{\ssactx}{\ssaenv_1}
\qquad
\text{produces bindings}
\qquad
\mapping{\stmtPre{\srcStmt}}{\ssaenv_0}
\qquad
\mapping{\stmtPost{\srcStmt}}{\ssaenv_1}
$$
\end{itemize}
We assume all AST nodes are uniquely identified.

\begin{figure*}[t]
\relDescription{SSA Transformation}

\vspace{1em}
\judgementDescrForm{\tossaprog{\srcProg}{\ssaProg}{\ssaenvs}}{Program Translation}
\begin{mathpar}
\inferrule[]
{
  \tossasigs{\srcSignatures}{\ssaSignatures} 
  \quad \text{produces}\;\ssaenvs_1
  \\\\
  \tossabody{\cdot}{\srcBody}{\expr}
  \quad \text{produces}\;\ssaenvs_2
}
{
  \tossaprog{\concatenate{\srcSignatures}{\srcBody}}
            {\concatenate{\ssaSignatures}{\expr}}
            {\ssaenvs_1 \cup \ssaenvs_2}
}
\end{mathpar}
\judgementDescrForm{\tossasigs{\srcSignatures}{\ssaSignatures}}
                   {Signature Translation}
\begin{mathpar}
\inferrule[\ssasigsemp]{}{\tossasigs{\cdot}{\cdot}}
\and
\inferrule[\ssasigsbnd]
{
  \tossameth{\srcMethodDefsSym}{\methodDefsSym}
  \\
  \tossafields{\srcFieldDefsSym}{\fieldsSym}
}
{
  \tossasigs
    {\srcClassdecl{\cname}{\pred}{\rtypec}{\srcFieldsSym}{\srcMethodDefsSym}}
    {\classdecl{\cname}{\pred}{\rtypec}{\fieldsSym}{\methodDefsSym}}
}
\and
\inferrule[\ssasigscons]
{
  \tossasigs{\srcSignatures_1}{\ssaSignatures_1}  \\
  \tossasigs{\srcSignatures_2}{\ssaSignatures_2}
}
{ 
  \tossastore{\ssaenv}{\concatenate{\srcSignatures_1}{\srcSignatures_2}}
                      {\concatenate{\ssaSignatures_1}{\ssaSignatures_2}}
}
\end{mathpar}
\judgementDescrFormTwo
  {\tossaexpr{\ssaenv}{\srcExpr}{\ssaexpr}}
  {\tossastmt{\ssaenv}{\srcStmt}{\ssactx}{\ssaenv'}}
  {Expression and Statement Translations (selected)}
\begin{mathpar}
\inferrule[\ssaconst]{}{\tossaconst{\srcVconst}{\toValue{\srcVconst}}}
\and
\inferrule[\ssainvoke]
{
  \tossaexpr{\ssaenv}{\srcExpr}{\expr}
  \\
  \tossaexpr{\ssaenv}{\srcExpr_{\index}}{\many{\expr}_{\index}}
  \\\\
  \toString{\srcMethodname} = \toString{\methodname}
  \\
  \fresh{\methodname}
}
{
  \tossaexpr{\ssaenv}
            {\srcMethcall{\srcExpr}{\srcMethodname}{\many{\srcExpr}_{\index}}} 
            {\methcall{\expr}{\methodname}{\many{\expr}_{\index}}}
}
\end{mathpar}
\caption{Additional SSA Transformation Rules}
\label{fig:ssa:extra}
\end{figure*}

\subsubsection{Runtime Configuration Tranformation}

Figure~\ref{fig:ssa:rt} includes the rules for translating 
runtime configurations. The main judgment is of the form: 
$$
\tossartconf{\ssaenvs}{\srcRtState}{\srcExprOrBd}{\ssaRtState}{\expr}
$$
This assumes that the program containing expression (or body) $\srcExprOrBd$ 
was SSA-translated producing a global SSA environment $\ssaenvs$.
Rule \ssartconf translates a term $\srcExprOrBd$ 
under a state $\srcRtState$. 
This process gets factored into the translation of:
\begin{itemize} 
  \item the signatures $\accessState{\srcRtState}{\srcSignatures}$, 
    which is straight-forward (same as in static translation),
  \item the heap  $\accessState{\srcRtState}{\srcHeap}$, 
    which is described in Figure~\ref{fig:ssa:rt:two}, and 
  \item term $\srcExprOrBd$ under a local store
    $\accessState{\srcRtState}{\srcStore}$ and a stack 
    $\accessState{\srcRtState}{\srcStack}$.
\end{itemize}

The last part breaks down into rules that expose the structure of the stack. 
Rule \ssaexpstackemp translates configurations involving an empty stack, 
which are delegated to the judgment 
$\tossartterm{\srcHeap}{\ssaenvs}{\srcStore}{\srcExprOrBd}{\expr}$, 
and rule \ssastackcons separately translates the top of the stack 
and the rest of the stack frames, and then composes them into a single 
target expression.

Finally, judgments of the forms 
$\tossartstackterm{\srcHeap}{\ssaenvs}{\srcStore}{\srcStack}{\srcExprOrBd}{\expr}$ and 
$\tossartstackterm{\srcHeap}{\ssaenvs}{\srcStore}{\srcStack}{\srcEvalCtx}{\ssaEvalCtxTerm}$ 
translate expressions and statements under a local store $\srcStore$. 
The rules here are similar to their static counterparts. 
The \emph{key difference} stems from the fact that 
in \ssaLang variable are replaced with the respective values 
as soon as they come into scope. 
On the contrary, in \srcLang variables are only instantiated 
with the matching (in the store) value 
when they get into an evaluation position.
To wit, rule \ssartvarref performs the necessary substitution $\subst$
on the translated variable, which we calculate though 
the meta-function $\toSubstname$, defined as follows:
\[
  \begin{array}{rcl}
  \toSubst{\ssaenv}{\srcStore}{\srcHeap} & \defeq &
  \begin{cases}
    \braces{\esubst{\val}{\evar}\,|\,
      \mapping{\srcEvar}{\evar}   \in \ssaenv,\,
      \mapping{\srcEvar}{\srcVal} \in \srcStore,\,
      \tossaval{\srcHeap}{\srcVal}{\val}
    }
    & 
    \text{if } \envdom{\ssaenv} = \envdom{\srcStore}
  \\
  \impossible & \text{otherwise}
  \end{cases}
%
\end{array}
\]

\begin{figure*}[t]
\relDescription{SSA Transformation for Runtime Configurations}

\vspace{1em}
\judgementDescrFormTwo
  {\tossartconf{\ssaenvs}{\srcRtState}{\srcExprOrBd}{\ssaRtState}{\expr}}
  {\tossartconf{\ssaenvs}{\srcRtState}{\srcStmt}{\ssaRtState}{\ssactx}}
  {Runtime Configuration Translation}
\begin{mathpar}
\inferrule[\ssartconf]
{
  \tossartsigs{\ssaenvs}{\accessState{\srcRtState}{\srcSignatures}}{\ssaSignatures}
  \\
  \tossaheap{\srcRtState}{\accessState{\srcRtState}{\srcHeap}}{\ssaHeap}
  \\
  \tossartstackterm{\accessState{\srcRtState}{\srcHeap}}{\ssaenvs}
    {\accessState{\srcRtState}{\srcStore}}
    {\accessState{\srcRtState}{\srcStack}}
    {\srcExprOrBd}{\expr}
}
{
  \tossartconf{\ssaenvs}
              {\srcRtState}{\srcExprOrBd}
              {\pair{\ssaSignatures}{\ssaHeap}}{\expr}
}
\and
\inferrule[\ssastmtrtconf]
{
  \tossartsigs{\ssaenvs}{\accessState{\srcRtState}{\srcSignatures}}{\ssaSignatures}
  \\
  \tossaheap{\srcRtState}{\accessState{\srcRtState}{\srcHeap}}{\ssaHeap}
  \\
  \tossartstackterm{\accessState{\srcRtState}{\srcHeap}}{\ssaenvs}
    {\accessState{\srcRtState}{\srcStore}}
    {\accessState{\srcRtState}{\srcStack}}
    {\srcStmt}{\ssactx}
}
{
  \tossartconf{\ssaenvs}
              {\srcRtState}{\srcStmt}
              {\pair{\ssaSignatures}{\ssaHeap}}{\ssactx}
}
\end{mathpar}
\\

\judgementDescrFormTwo
  {\tossartstackterm{\srcHeap}{\ssaenvs}{\srcStore}{\srcStack}{\srcExprOrBd}{\expr}}
  {\tossartstackterm{\srcHeap}{\ssaenvs}{\srcStore}{\srcStack}{\srcEvalCtx}{\ssaEvalCtxTerm}}
  {Runtime Stack Translation}
\begin{mathpar}
%
\inferrule[\ssaexpstackemp]
{
  %
  \tossartterm{\srcHeap}{\ssaenvs}{\srcStore}{\srcExprOrBd}{\expr}
}
{
  \tossartstackterm{\srcHeap}{\ssaenvs}
                   {\srcStore}{\cdot}
                   {\srcExprOrBd}{\expr}
}
\and
%
\inferrule[\ssaectxstackemp]
{
  \tossaevalctx{\srcHeap}{\ssaenvs}{\srcStore}{\srcEvalCtx}{\ssaEvalCtxTerm}
}
{
  \tossartstackterm{\srcHeap}{\ssaenvs}
              {\srcStore}{\cdot}
              {\srcEvalCtx}{\ssaEvalCtxTerm}
}
\and
\inferrule[\ssastackcons]
{
  \tossartstackterm{\srcHeap}{\ssaenvs}{\srcStore_0}{\cdot}{\srcExprOrBd}{\expr_0}
  \\
  \tossartstackterm{\srcHeap}{\ssaenvs}
              {\srcStore}{\srcStack}{\srcEvalCtx}
              {\ssaEvalCtx}
}
{
  \tossartstackterm{\srcHeap}{\ssaenvs}
              {\srcStore_0} 
              {\parens{\stackCons{\srcStack}{\toStack{\srcStore}{\srcEvalCtx}}}}
              {\srcExprOrBd}
              {\idxEctx{\ssaEvalCtx}{\expr_0}}
}
\and
\inferrule[\ssaectxstackcons]
{
  \tossartstackterm{\srcHeap}{\ssaenvs}
              {\srcStore_0}{\cdot}{\srcEvalCtx_0}
              {\ssaEvalCtxTerm_0}
  \\
  \tossartstackterm{\srcHeap}{\ssaenvs}
              {\srcStore}{\srcStack}{\srcEvalCtx}
              {\ssaEvalCtx}
}
{
  \tossartstackterm
    {\srcHeap}{\ssaenvs}{\srcStore_0}
    {\parens{\stackCons{\srcStack}{\toStack{\srcStore}{\srcEvalCtx}}}}
    {\srcEvalCtx_0}
    {\idxEctx{\ssaEvalCtx}{\ssaEvalCtxTerm_0}}
}
\\

\end{mathpar}
\judgementDescrFormThree
  {\tossartsigs{\ssaenvs}{\srcMethodDefsSym}{\methodDefsSym}}
  {\tossartterm{\srcHeap}{\ssaenvs}{\srcStore}{\srcExprOrBd}{\expr}}
  {\tossartterm{\srcHeap}{\ssaenvs}{\srcStore}{\srcStmt}{\ssactx}}
  {Runtime Term Translation (selected rules)}
\begin{mathpar}
\inferrule[\ssartmethdecl]
{
  \tossartterm{\cdot}{\ssaenvs}{\cdot}{\srcBody}{\expr}
}
{
  \tossartsigs{\ssaenvs}
    {\srcMethdefUnty{\srcMethodname}{\srcEvar}{\srcBody}}
    {\methdefUnty{\methodname}{\evar}{\expr}}
}
\and
\inferrule[\ssartval]
{
  \tossaval{\srcHeap}{\srcVal}{\val}
}
{
  \tossartterm{\srcHeap}{\ssaenvs}{\srcStore}{\srcVal}{\val}
}
\and
\inferrule[\ssartvarref]
{
  \tossaexpr{\idxMapping{\ssaenvs}{\srcEvar}}{\srcEvar}{\evar}
  \\\\
  \subst = \toSubst{\idxMapping{\ssaenvs}{\srcEvar}}{\srcStore}{\srcHeap}
}
{
  \tossartterm{\srcHeap}{\ssaenvs}{\srcStore}{\srcEvar}{\appsubst{\subst}{\evar}}
}
\and
\inferrule[\ssartmethcall]
{
  \tossartterm{\srcHeap}{\ssaenvs}{\srcStore}{\srcExpr}{\expr}
  \\
  \tossartterm{\srcHeap}{\ssaenvs}{\srcStore}{\many{\srcExpr}}{\many{\expr}}
  \\\\
  \toString{\srcMethodname} = \toString{\methodname}
}
{
  \tossartterm{\srcHeap}{\ssaenvs}{\srcStore}
              {\srcMethcall{\srcExpr}{\srcMethodname}{\many{\srcExpr}}}
              {\methcall{\expr}{\methodname}{\many{\expr}}}
}
\and
\inferrule[\ssartmethbody]
{
  \tossartterm{\srcHeap}{\ssaenvs}{\srcStore}{\srcStmt}{\ssactx}
  \\
  \ssaenvs' = \updMapping{\ssaenvs}{\srcExpr}{\idxMappingPost{\ssaenvs}{\srcStmt}}
  \\
  \tossartterm{\srcHeap}{\ssaenvs'}{\srcStore}{\srcExpr}{\expr}
}
{
  \tossartterm{\srcHeap}{\ssaenvs}{\srcStore}
    {\srcSeq{\srcStmt}{\srcReturn{\srcExpr}}}
    {\ssactxidx{\ssactx}{\expr}}
}
\and
\inferrule[\ssartvardecl]
{
  \mapping{\srcEvar}{\evar} \in
    \idxMappingPost{\ssaenvs}{\srcVarDecl{\srcEvar}{\srcExpr}}
  \\
  \tossartterm{\srcHeap}{\ssaenvs}{\srcStore}{\srcExpr}{\expr}
}
{
  \tossartterm{\srcHeap}{\ssaenvs}{\srcStore}
              {\srcVarDecl{\srcEvar}{\srcExpr}}
              {\letin{\evar}{\ssaexpr}{\hole}}
}
\and
\inferrule[\ssartite]
{
  \tossartterm{\srcHeap}{\ssaenvs}{\srcStore}
    {\srcExpr}{\expr}
  \\
  \tossartterm{\srcHeap}{\ssaenvs}{\srcStore}
    {\srcStmt_1}{\ssactx_1}
  \\
  \tossartterm{\srcHeap}{\ssaenvs}{\srcStore}
    {\srcStmt_2}{\ssactx_2}
  \\\\
  \ptriplet{\many{\srcEvar}}{\many{\evar}_1}{\many{\evar}_2} 
    = \envDiff{\idxMappingPost{\ssaenvs}{\srcStmt_1}} 
              {\idxMappingPost{\ssaenvs}{\srcStmt_2}} 
  \\
  \many{\evar} = \idxMapping
    {\idxMappingPost{\ssaenvs}{\srcIte{\srcExpr}{\srcStmt_1}{\srcStmt_2}}}
    {\many{\srcEvar}}
}
{
  \tossartterm{\srcHeap}{\ssaenvs}{\srcStore}
    {\srcIte{\srcExpr}{\srcStmt_1}{\srcStmt_2}}
    {\letif{\many{\evar}}{\many{\evar}_1}{\many{\evar}_2}
           {\ssaexpr}{\ssactx_1}{\ssactx_2}{\hole}
    }
}
\and
\inferrule[\ssartassign]
{
  \mapping{\srcEvar}{\evar} \in
    \idxMappingPost{\ssaenvs}{\srcAssign{\srcEvar}{\srcExpr}}
  \\
  \tossartterm{\srcHeap}{\ssaenvs}{\srcStore}{\srcExpr}{\expr}
}
{
  \tossartterm{\srcHeap}{\ssaenvs}{\srcStore}
      {\srcAssign{\srcEvar}{\srcExpr}}
      {\letin{\evar}{\ssaexpr}{\hole}}
}
\end{mathpar}
\\

\judgementDescrForm
  {\tossaevalctx{\srcHeap}{\ssaenvs}{\srcStore}{\srcEvalCtx}{\ssaEvalCtxTerm}}
  {Evaluation Context Translation (selected rules)}
\begin{mathpar}
\inferrule[]{}
{\tossaevalctx{\srcHeap}{\ssaenvs}{\srcStore}{\srcEmpEvalctx}{\empevalctx}}  
\and
\inferrule[]
{
  \tossaevalctx{\srcHeap}{\ssaenvs}{\srcStore}{\srcEvalCtx}{\ssaEvalCtx}
  \\
  \fresh{\fieldname}
  \\\\
  \toString{\srcFieldname} = \toString{\fieldname}
}
{
  \tossaevalctx{\srcHeap}{\ssaenvs}{\srcStore}
               {\srcDotref{\srcEvalCtx}{\srcFieldname}}
               {\dotref{\ssaEvalCtx}{\fieldname}}
}  
\and
\inferrule[]
{
  \tossaevalctx{\srcHeap}{\ssaenvs}{\srcStore}{\srcEvalCtx}{\ssaEvalCtx}
  \\
  \fresh{\methodname}
  \\\\
  \toString{\srcMethodname} = \toString{\methodname}
  \\
  \tossartstackterm{\srcHeap}{\ssaenvs}{\srcStore}{\cdot}{\many{\srcExpr}}{\many{\expr}}
}
{
  \tossaevalctx{\srcHeap}{\ssaenvs}{\srcStore}
             {\srcMethcall{\srcEvalCtx}{\srcMethodname}{\many{\srcExpr}}}
             {\methcall{\ssaEvalCtx}{\methodname}{\many{\expr}}}
}  
\and
\inferrule[]
{
  \tossartstackterm{\srcHeap}{\ssaenvs}{\srcStore}{\cdot}{\srcEvar}{\evar}
  \\
  \tossaevalctx{\srcHeap}{\ssaenvs}{\srcStore}{\srcEvalCtx}{\ssaEvalCtx}
}
{
  \tossaevalctx{\srcHeap}{\ssaenvs}{\srcStore}
             {\srcVarDecl{\srcEvar}{\srcEvalCtx}}
             {\letin{\evar}{\ssaEvalCtx}{\hole}}
}  
\and
\inferrule[]
{  
  \tossaevalctx{\srcHeap}{\ssaenvs}{\srcStore}{\srcEvalCtx}{\ssaEvalCtxSsa}
  \\
  \tossaevalctx{\srcHeap}{\ssaenvs}{\srcStore}{\srcStmt}{\ssactx}
}
{
  \tossaevalctx{\srcHeap}{\ssaenvs}
             {\srcStore}
             {\srcSeq{\srcEvalCtx}{\srcStmt}}
             {\ssactxidx{\ssaEvalCtxSsa}{\ssactx}}
}
\end{mathpar}
%
\caption{SSA Transformation Rules for Runtime Configurations}
\label{fig:ssa:rt}
\end{figure*}

\begin{figure*}[t]
\judgementDescrFormTwo
  {\tossaheap{\srcRtState}{\srcHeap}{\ssaHeap}}
  {\tossaval{\srcHeap}{\srcVal}{\val}}
  {Heap Translation}
\begin{mathpar}
\inferrule[\ssaheapemp]{}{\tossaheap{\srcRtState}{\cdot}{\cdot}}
\and
\inferrule[\ssaheapbnd]
{
  \tossaobj{\accessState{\srcRtState}{\srcHeap}}{\srcObject}{\ssaObject}
  \\
  \fresh{\loc}
}
{
  \tossaheap{\srcRtState}
            {\parens{\mapping{\srcLoc}{\srcObject}}}
            {\parens{\mapping{\loc}{\ssaObject}}}
}
\and
\inferrule[\ssaheapcons]
{
  \tossaheap{\srcRtState}{\srcHeap_1}{\ssaHeap_1}  \\
  \tossaheap{\srcRtState}{\srcHeap_2}{\ssaHeap_2}
}
{
  \tossaheap{\srcRtState}
            {\parens{\concatenate{\srcHeap_1}{\srcHeap_2}}}
            {\concatenate{\ssaHeap_1}{\ssaHeap_2}}
}
\and
\inferrule[\ssaloc]
{
  \mapping{\srcLoc}{\srcObject} \in \srcHeap
  \\ 
  \tossaheap{\srcHeap}
    {\parens{\mapping{\srcLoc}{\srcObject}}}
    {\parens{\mapping{\loc}{\ssaObject}}}
}
{
  \tossaval{\srcHeap}{\srcLoc}{\loc}
}
\and
\inferrule[\ssaconst]
{
  \toValue{\srcVconst} = \toValue{\vconst}
  \\
  \fresh{\vconst}
}
{
  \tossaval{\srcHeap}{\srcVconst}{\vconst}
}
\end{mathpar}
\judgementDescrForm{\tossaobj{\srcHeap}{\srcObject}{\ssaObject}}{Heap Object Translation}
\begin{mathpar}
\inferrule[]
{
  \tossaval{\srcHeap}{\srcLoc}{\loc}
  \\
  \tossafield{\srcHeap}{\srcFieldsSym}{\fieldDefsSym}
}
{
  \tossaobj{\srcHeap}
    {\heapObject{\srcLoc}{\srcFieldDefsSym}}
    {\heapObject{\loc}{\fieldDefsSym}}
}
\and
\inferrule[]
{
  \tossaval{\srcHeap}{\srcLoc}{\loc}
  \\
  \tossagen{\srcMethodDefsSym}{\methodDefsSym}
}
{
  \tossaobj{\srcHeap}
  {\heapClassObject{\cname}{\srcLoc}{\srcMethodDefsSym}}
  {\heapClassObject{\cname}{\loc}{\methodDefsSym}}
}
\end{mathpar}
\caption{SSA Transformation Rules for Heaps and Objects}
\label{fig:ssa:rt:two}
\end{figure*}

\subsection{Object Constraint System}

Our system leverages the idea introduced in the formall core of
X10~\cite{Nystrom2008} to extend a base constraint system \constraintsystem
with a larger constraint system \objectsystemidx{\constraintsystem}, built on
top of \constraintsystem.
The original system \constraintsystem comprises formulas
taken from a decidable SMT logic~\cite{Nelson81}, including, for example, linear
arithmetic constraints and uninterpreted predicates.
The Object Constraint System \objectsystemidx{\constraintsystem} introduces the
constraints:
\begin{itemize}
  \item \isclass{\cname}, which it true for all classes
    \cname defined in the program;
  \item \hasimm{\evar}{\srcFieldname}, to denote that the
    \emph{immutable} field \srcFieldname is accessible from
    variable \evar;

  \item \hasmut{\evar}{\srcFieldname}, to denote that the
    \emph{mutable} field \srcFieldname is accessible from
    variable \evar; and
  \item \fieldsOfVar{\evar} = \srcFieldsSym,
    to expose all fields available to \evar.
\end{itemize}

Figure~\ref{fig:structural:constraints} shows the 
constraint system as ported from CFG~\cite{Nystrom2008}. 
We refer the reader to that work for details.
The main differences are syntactic changes to
account for our notion of \emph{strengthening}.
Also the \scfield rule accounts
now for both immutable and mutable fields.
The main judgment here is of the form: 
$$
\sconstraint{\ssaSignatures}{\env}{\pred}
$$
where \ssaSignatures is the set of classes defined 
in the program.
Substitutions and strengthening operations on field 
declarations are performed on the types of the declared
fields (\eg \scfieldi, \scfieldc).

\begin{figure*}[ht]
\judgementHead{Structural Constraints}{\sconstraint{\ssaSignatures}{\env}{\pred}}
\begin{mathpar}
%
\inferrule[\scclass]
{
  \classdecl{\cname}{\pred}{\rtypec}{\fieldsSym}{\methodDefsSym}
  \in \ssaSignatures
}
{
  \sconstraint{\ssaSignatures}{\env}{\isclass{\cname}}
}
\and
%
\inferrule[\scinv]
{
  \sconstraint{\ssaSignatures}{\env}{\mconcatenate
    {\varbinding{\evar}{\cname}}
    {\isclass{\cname}}
  }
}
{
  \sconstraint{\ssaSignatures}{\env}{\classinv{\cname}{\evar}}
}
\and
%
\inferrule[\scfield]
{
  \sconstraint{\ssaSignatures}{\env}
              {\fieldsOfVar{\evar} = \mconcatenate
                {\immfieldbinding{\many{\fieldname_{\index }}}{\many{\rtype_{\index} }}}
                {\mutfieldbinding{\many{\fieldnameb_{\index}}}{\many{\rtypeb_{\index}}}}
              }
}
{
  \sconstraint{\ssaSignatures}{\env}
    {\hasimm{\evar}{\varbinding{\fieldname_{\index}}{\rtype_{\index}}}}
  \\\\
  \sconstraint{\ssaSignatures}{\env}
    {\hasmut{\evar}{\varbinding{\fieldnameb_{\index}}{\rtypeb_{\index}}}}
}
\and
%
\inferrule[\scobject]
{}
{
  \sconstraint{\ssaSignatures}{\envbinding{\evar}{\texttt{Object}}}{\fieldsOfVar{\evar}{\emp}}
}
\and
%
\inferrule[\scfieldi]
{
  \sconstraint{\ssaSignatures}{\envext{\env}{\evar}{\cnameb}}
              {\fieldsOfVar{\evar} = \fieldsSym
              }
  \\\\
  \classdecl{\cname}{\pred}{\rtypec}{\fieldsSym'}{\methodDefsSym}
  \in \ssaSignatures
}
{
  \sconstraint{\ssaSignatures}{\envext{\env}{\evar}{\cnameb}}
    {\fieldsOfVar{\evar} = \mconcatenate
      {\fieldsSym}
      {\appsubst{\tsubst{\evar}{\this}}{\fieldsSym'}}
      }
}
\and
\inferrule[\scfieldc]
{
  \sconstraint{\ssaSignatures}{\envext{\env}{\evar}{\cname}}
              {\fieldsOfVar{\evar} = \fieldsSym}
}
{
  \sconstraint{\ssaSignatures}{\envext{\env}{\evar}{\reftp{\cname}{\pred}}}
              {\fieldsOfVar{\evar} = \strengthen{\fieldsSym}{\appsubst{\pred}{\tsubst{\evar}{\vv}}}}
}
\and
%
\inferrule[\scmethb]
{
  \sconstraint{\ssaSignatures}{\env}{\isclass{\cname}}
  \\
  \subst = \tsubst{\evar}{\this}
  \\\\
  \methdef{\methodname}{\evar}{\rtype}{\pred}{\rtype}{\expr}  \in \cname
}
{
  \sconstraint{\ssaSignatures}{\envext{\env}{\evar}{\cname}}{
    \has{\evar}{\parens{
      \methdef{\methodname}{\evar}
              {\appsubst{\subst}{\rtype}}
              {\appsubst{\subst}{\pred}}
              {\appsubst{\subst}{\rtype}}
              {\expr}
      }}
  }
}
\and
%
\inferrule[\scmethi]
{
  \sconstraint{\ssaSignatures}{\envext{\env}{\evar}{\cnameb}}{\has{\evar}{\parens{\methdef{\methodname}{\evar}{\rtype}{\pred}{\rtype}{\expr}}}}
  \\\\
  \classdecl{\cname}{\pred}{\cnameb}{\fieldsSym}{\methodDefsSym}
  \in \ssaSignatures
  \\
  \methodname \notin \methodDefsSym
}
{
  \sconstraint{\ssaSignatures}{\envext{\env}{\evar}{\cname}}{\has{\evar}{\parens{\methdef{\methodname}{\evar}{\rtype}{\pred}{\rtype}{\expr}}}}
}
\and
%
\inferrule[\scmethc]
{
  \sconstraint{\ssaSignatures}{\envext{\env}{\evar}{\cname}}{\has{\evar}{\parens{\methdef{\methodname}{\evar}{\rtype}{\pred_0}{\rtype}{\expr}}}}
}
{
  \sconstraint{\ssaSignatures}{\envext{\env}{\evar}{\reftp{\cname}{\pred}}}
              {\has{\evar}{\parens{\methdef{\methodname}{\evar}{\rtype}{\pred_0}
              {\strengthen{\rtype}{\appsubst{\tsubst{\evar}{\this}}{\pred}}}{\expr}}}}
}
\end{mathpar}
\caption{Structural Constraints (adapted from \cite{Nystrom2008})}
\label{fig:structural:constraints}
\end{figure*}

\subsection{Well-formedness Constraints}

The well-formedness rules for predicates, terms, types and heaps can
be found in Figure~\ref{fig:wf}. The majority of these rules are
routine.

The judgment for term well-formedness assigns a \emph{sort}
to each term \cterm, which can be thought of as a base type. The
judgment $\wfpredspecial{\env}{\qname}{\many{\cterm}}$
is used as a shortcut for any further constraints that the $\fname$
operator might impose on its arguments  $\many{\cterm}$.
For example if $\fname$ is the equality operator then the two
arguments are required to have types that are related via subtyping,
\ie if $\envbinding{\cterm_1}{\btype_1}$ and
$\envbinding{\cterm_2}{\btype_2}$, it needs to be the case that
$\btype_1 \subt \btype_2$ or
$\btype_2 \subt \btype_1$.

Type well-formedness is typical among similar refinement
types~\cite{Knowles2009}.

\begin{figure*}[ht]

\judgementHead{Well-Formed Predicates}{\wfpred{\env}{\pred}}
\begin{mathpar}
\inferrule[\wppand]
{
  \wfpred{\env}{\pred_1}
  \\
  \wfpred{\env}{\pred_2}
}
{
  \wfpred{\env}{\pand{\pred_1}{\pred_2}}
}
\and
\inferrule[\wppnot]
{
  \wfpred{\env}{\pred}
}
{
  \wfpred{\env}{\pnot{\pred}}
}
\and
\inferrule[\wpterm]
{
  \wfterm{\env}{\cterm}{\tbool}
}
{
  \wfpred{\env}{\cterm}
}
\end{mathpar}
\judgementHead{Well-Formed Terms}{\wfterm{\env}{\cterm}{\btype}}
\begin{mathpar}
\inferrule[\wfvar]
{
  \envbinding{\evar}{\rtype} \in \env
}
{
  \wfterm{\env}{\evar}{\basetype{\rtype}}
}
\and
\inferrule[\wfconst]
{}
{
  \wfterm{\env}{\vconst}{\basetype{\tconst{\vconst}}}
}
\and
\inferrule[\wffield]
{
  \wfterm{\env}{\cterm}{\btype}
  \\
  \typecheck{\envext{\env}{\evar}{\btype}}
            {\hasimm{\evar}{\fieldname_{\index}}}
            {\rtype_{\index}}
}
{
  \wfterm{\env}{\dotref{\cterm}{\fieldname_{\index}}}{\basetype{\rtype_{\index}}}
}
\and
\inferrule[\wffun]
{
  \wfterm{\env}{\fname}{\arrtype{\many{\btype}}{\btype'}}
  \\
  \wfpredspecial{\env}{\qname}{\many{\cterm}}
}
{
  \wfterm{\env}{\funcall{\fname}{\many{\cterm}}}{\btype'}
}
\end{mathpar}
\judgementHead{Well-Formed Types}{\wftype{\env}{\rtype}}
\begin{mathpar}
\inferrule[\wtbase]
{
  \wfpred{\envext{\env}{\vv}{\btype}}{\pred}
}
{
  \wftype{\env}{\reftp{\btype}{\pred}}
}
\and
\inferrule[\wtexist]
{
  \wftype{\env}{\rtype_1}
  \\
  \wftype{\envext{\env}{\evar}{\rtype_1}}{\rtype_2}
}
{
  \wftype{\env}{\texist{\evar}{\rtype_1}{\rtype_2}}
}
\end{mathpar}
%
%
%
%
\judgementHead{Well-Formed Heaps}{\newwfstore{\storety}{\ssaHeap}}
\begin{mathpar}
\inferrule[\hemp]{}
{
  \newwfstore{\storety}{\cdot}
}
\and
%
\inferrule[\hbndinst]
{
  \ssaObject \defeq \heapObject{\loc'}{\fieldDefsSym}
  \\
  \fieldDefsSym \defeq \mconcatenate
    {\immFieldDefs{\fieldname }{\val}_{\immindex}}
    {\mutFieldDefs{\fieldnameb}{\val}_{\mutindex}}
  \\
  \basetype{\idxMapping{\storety}{\loc}} = \cname
  \\\\
  \wfconstr{\envext{\env}{\evarb}{\cname}}
           {\fieldsOfVar{\evarb} = 
              \mconcatenate{\immfieldbindings{\fieldname}{\rtypec}}
                           {\mutfieldbindings{\fieldnameb}{\rtyped}}
           }
  \\
  \rttypecheckval{\storety}{\many{\val}_{\immindex}}{\many{\rtype}_{\immindex}}
  \\
  \rttypecheckval{\storety}{\many{\val}_{\mutindex}}{\many{\rtype}_{\mutindex}}
  \\\\
  \wfconstr{
      \envextex
        {\envext{\env}{\evarb}{\cname}}
        {\many{\evarb}_{\immindex}}
      {\singleton{\many{\rtype}_{\immindex}}{\tdotref{\evarb}{\many{\fieldname}}}}
    }
    {
      \mconcatenate{
        \mconcatenate{\many{\rtype}_{\immindex} \subt \many{\rtypec}}
        {\many{\rtype}_{\mutindex} \subt \many{\rtyped}}
      }
      {\classinv{\cname}{\evarb}}
    }
}
{
  \newwfstore{\storety}{\mapping{\loc}{\ssaObject}}
}
\and
%
%
\inferrule[\hcons]
{
  \newwfstore{\storety}{\ssaHeap_1}
  \\
  \newwfstore{\storety}{\ssaHeap_2}
}
{
  \newwfstore{\storety}{\concatenate{\ssaHeap_1}{\ssaHeap_2}}
}
\end{mathpar}
%
\caption{Well-Formedness Rules}
\label{fig:wf}
\end{figure*}

\subsection{Subtyping}

Figure~\ref{fig:subtyping} presents the full set of sybtyping rules,
which borrows ideas from similar systems~\cite{Knowles2009,LiquidPLDI08}.

\begin{figure*}[ht]

\judgementHead{Subtyping}{\issubtype{\env}{\rtype}{\rtype'}}
\begin{mathpar}
\inferrule[{\subrefl}]
{}
{
  \issubtype{\env}{\rtype}{\rtype}
}
\and
\inferrule[{\subtrans}]
{
  \issubtype{\env}{\rtype_1}{\rtype_2}
  \\
  \issubtype{\env}{\rtype_2}{\rtype_3}
}
{
  \issubtype{\env}{\rtype_1}{\rtype_3}
}
\and
\inferrule[{\subext}]
{
  \classdecl{\cname}{\pred}{\cnameb}{\fieldsSym}{\methodDefsSym}
}
{
  \issubtype{\env}{\cname}{\cnameb}
}
\and
\inferrule[{\subbase}]
{
  \issubtype{\env}{\btype}{\btype'}
  \\\\
  \validimp{\env}{\pred}{\pred'}
}
{
  \issubtype{\env}{\reftp{\btype}{\pred}}{\reftp{\btype'}{\pred'}}
}
\and
\inferrule[{\subwitness}]
{
  \typecheck{\env}{\expr}{\rtypeb}
  \\
  \issubtype{\env}{\rtype}{\appsubst{\tsubst{\expr}{\evar}}{\rtype'}}
}
{
  \issubtype{\env}{\rtype}{\texist{\evar}{\rtypeb}{\rtype'}}
}
\and
\inferrule[{\subbind}]
{
  \issubtype{\envextex{\env}{\evar}{\rtypeb}}{\rtype}{\rtype'}
  \\
  \evar \notin \freevars{\rtype'}
}
{
  \issubtype{\env}{\texist{\evar}{\rtypeb}{\rtype}}{\rtype'}
}
\end{mathpar}
\caption{Subtyping Rules}
\label{fig:subtyping}
\end{figure*}

\begin{figure*}[ht]
\judgementHeadTwo
  {Runtime Typing Rules}
  {\rttypecheckval{\storety}{\val}{\rtype}}
  {\rttypecheck{\storety}{\ssaHeap}{\ssaObject}{\rtype}}
\begin{mathpar}
\inferrule[\rtchkloc]
{
  \idxMapping{\storety}{\loc} = \rtype
}
{
  \rttypecheckval{\storety}{\loc}{\rtype}
}
\and
\inferrule[\rtchkconst]{}
{
  \rttypecheckval{\storety}{\val}{\tconst{\vconst}}
}
\and
\inferrule[\rtchkobj]
{
  \basetype{\idxMapping{\storety}{\loc}} = \cname
  \\
  \rtFieldDefs{\ssaHeap}{\loc} = \mconcatenate
{\immFieldDefs{\fieldname }{\val}_{\immindex}}
    {\mutFieldDefs{\fieldnameb}{\val}_{\mutindex}}
  \\
\rttypecheckval{\storety}{\many{\val}_{\immindex}}{\many{\rtype}_{\immindex}}
}
{
  \rttypecheck{\storety}{\ssaHeap}
    {
      \heapObject{\loc}{\fieldDefsSym}
    }
    {
      \texist
        {\many{\evarb}_{\immindex}}
        {\many{\rtype}_{\immindex}}
        {
          \reftp{\cname}
                {
                  \concatpreds
                    {\tdotref{\vv}{\many{\fieldname}} = \many{\evarb}_{\immindex}}
                    {\classinv{\cname}{\vv}}
                }
        }     
    }
}
\end{mathpar}
\caption{Typing Runtime Configurations for \ssaLang}
\label{fig:rt:typing}
\end{figure*}

\clearpage
\section{Proofs}

\counterwithin{equation}{enumi}
\counterwithin{enumii}{enumi}
\counterwithin{enumiii}{enumii}

\renewcommand{\labelenumi}{$\bullet$}
\renewcommand{\labelenumii}{$\triangleright$}
\renewcommand{\labelenumiii}{--}

\setlist[itemize]{itemsep=5mm}

\setcounter{theorem}{0}

The main results in this section are: 
\begin{itemize}
  \setlength\itemsep{1pt}

  \item Program Consistency Lemma
      (Lemma~\ref{lemma:ssa:expr:consistency:full},
      page~\pageref{lemma:ssa:expr:consistency:full})

  \item Forward Simulation Theorem 
      (Theorem~\ref{theorem:consistency:appendix}, page~\pageref{theorem:consistency:appendix})

  \item Subject Reduction Theorem
      (Theorem~\ref{theorem:subj:reduc}, page~\pageref{theorem:subj:reduc})

  \item Progress Theorem
      (Theorem~\ref{theorem:progress}, page~\pageref{theorem:progress})
\end{itemize}

\subsection{SSA Translation}


\begin{definition}[Environment Substitution]{\label{def:ssa:env:subst}}
%
$$
\esubst{\ssaenv_1}{\ssaenv_2}
\defeq 
\esubst{\many{\evar}_1}{\many{\evar}_2}
\quad\text{where}\quad
\ptriplet{\many{\srcEvar}}{\many{\evar}_1}{\many{\evar}_2} = 
  \envDiff{\ssaenv_1}{\ssaenv_2}
  $$
\end{definition}

\begin{definition}[Valid Configuration]{\label{def:valid:conf}}
$$
\validConf{\pair{\srcRtState}{\srcExprOrBd}} \defeq
\begin{cases}
  \mathit{true} &  \text{if}\;
          \parens{\accessState{\srcRtState}{\srcStack} = \cdot }
          \imp
          \exists\,\srcBody$ \st $\srcExprOrBd \equiv \srcBody
  \\
  \mathit{false} & \text{otherwise}
\end{cases}
$$
\end{definition}

\begin{assumption}[Stack Form]\label{assum:stack:form}
Let stack $\srcStack = \stackCons{\srcStack_0}{\toStack{\srcStore}{\srcEvalCtx}}$. 
Evaluation context $\srcEvalCtx$ is of one of the following forms:
  \begin{itemize}
    \setlength\itemsep{1pt}
    \item {\normalfont $\srcSeq{\srcEvalCtx_0}{\srcReturn{\srcExpr}}$}
    \item {\normalfont $\srcReturn{\srcEvalCtx_0}$}
  \end{itemize}
\end{assumption}


%
%
\begin{lemma}[Global Environment Substitution]\label{lemma:ssa:stmt:subst}
If 
{\normalfont$
    \tossartterm{\srcHeap}{\ssaenvs}{\srcStore}{\srcExpr}{\expr}
  $},
then 
  {\normalfont$
    \tossartterm{\srcHeap}{\ssaenvs'}{\srcStore}{\srcExpr}
        {\appsubst{\esubst{\idxMapping{\ssaenvs'}{\srcExpr}}
                          {\idxMapping{\ssaenvs}{\srcExpr}}}
                  {\expr}}
  $}
\end{lemma}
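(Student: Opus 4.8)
The plan is to proceed by induction on the structure of $\srcExpr$, equivalently on the derivation of $\tossartterm{\srcHeap}{\ssaenvs}{\srcStore}{\srcExpr}{\expr}$. Since compound expressions recurse through statements --- a method body $\srcSeq{\srcStmt}{\srcReturn{\srcExpr}}$ is translated through the SSA context produced for $\srcStmt$ --- I would first strengthen the statement to a claim that is mutually inductive over all term forms: for expressions and bodies the renaming is $\esubst{\idxMapping{\ssaenvs'}{\srcExpr}}{\idxMapping{\ssaenvs}{\srcExpr}}$ as stated, while for a statement $\srcStmt$, which translates to an SSA context, the analogous claim is phrased with respect to the entry environment recorded at $\stmtPre{\srcStmt}$, whose bindings fix the free SSA names of that context. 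Only with the statement case available does the induction hypothesis apply at every sub-node.

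\textbf{Base cases.} Values translate independently of $\ssaenvs$ (rule \ssartval) and are closed, so the renaming is vacuous and the two translations coincide. The crucial base case is the variable reference (rule \ssartvarref): under $\ssaenvs$ the output is $\appsubst{\subst}{\evar}$ with $\evar = \idxMapping{\idxMapping{\ssaenvs}{\srcEvar}}{\srcEvar}$ and $\subst = \toSubst{\idxMapping{\ssaenvs}{\srcEvar}}{\srcStore}{\srcHeap}$, and under $\ssaenvs'$ it is $\appsubst{\subst'}{\evar'}$ with the $\ssaenvs'$-analogues $\evar', \subst'$. Since $\srcStore$ and $\srcHeap$ are held fixed, the store substitution assigns the same heap value to corresponding SSA names under either environment. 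Hence either the binding for $\srcEvar$ is eliminated by the store substitution, in which case both translations reduce to the same closed value and the renaming acts vacuously on it, or it survives as an SSA name, in which case $\envDiff{\idxMapping{\ssaenvs'}{\srcEvar}}{\idxMapping{\ssaenvs}{\srcEvar}}$ records precisely the pair $\evar,\evar'$ and $\esubst{\idxMapping{\ssaenvs'}{\srcEvar}}{\idxMapping{\ssaenvs}{\srcEvar}}$ rewrites one to the other by Definition~\ref{def:ssa:env:subst}. The domain side-condition on the store substitution ensures it is defined wherever \ssartvarref fires, a well-formedness property maintained by the surrounding translation.

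\textbf{Inductive cases.} For method calls (rule \ssartmethcall) the environment substitution commutes with the constructor, so applying the induction hypothesis to each sub-expression and reassembling closes the case. The delicate inductive cases are those in which a sub-term is translated under a \emph{modified} global environment: rule \ssartmethbody translates the return expression under $\updMapping{\ssaenvs}{\srcExpr}{\idxMappingPost{\ssaenvs}{\srcStmt}}$, and rule \ssartite recomputes the $\Phi$-variables from the post-environments of the two branches. Here I would argue that the update applied to $\ssaenvs$ and the update applied to $\ssaenvs'$ are structurally identical --- each depends only on the AST node and reuses the already-fixed post-environment --- and that the let- and $\Phi$-bound SSA names introduced inside a sub-term are internal and may be kept invariant across the two translations. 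Consequently the only surviving discrepancy lives in the free-variable bindings, and the per-sub-term renamings delivered by the induction hypothesis compose into the single top-level substitution claimed.

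\textbf{Main obstacle.} I expect the conditional (rule \ssartite) to be the hardest step, since it is where the $\Phi$-variable bookkeeping genuinely changes between $\ssaenvs$ and $\ssaenvs'$: one must show that the two branch renamings given by the induction hypothesis, together with the change in the $\Phi$-annotations $\many{\evar}, \many{\evar}_1, \many{\evar}_2$, collapse to exactly $\esubst{\idxMapping{\ssaenvs'}{\srcExpr}}{\idxMapping{\ssaenvs}{\srcExpr}}$. Making this precise rests on a small algebraic fact about the difference and substitution operators $\envDiff{\cdot}{\cdot}$ and $\esubst{\cdot}{\cdot}$ --- that such renamings compose associatively and that internally bound names can be held fixed --- and establishing that supporting fact is the real crux; the remaining syntactic cases are then routine.
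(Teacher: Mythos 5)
The paper states Lemma~\ref{lemma:ssa:stmt:subst} \emph{without any proof}: unlike its neighbours (\eg Lemma~\ref{lemma:ssa:eval:ctx} and Lemma~\ref{lemma:ssa:store:translation}, which at least carry one-line ``by induction on the translation derivation'' proofs), it is asserted bare and used exactly once, in the \rletif\ case of the Consistency Lemma. So there is no official argument for you to diverge from, and your proposal supplies the induction the authors evidently intend. Your strengthening to a mutual statement over expressions, bodies and statements (keying the statement case to the entry environment $\stmtPre{\srcStmt}$) is the right move, since \ssartmethbody\ threads the translation of a body through the SSA context of its statement, and a naive induction on expressions alone would have no hypothesis available there. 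Your map of where the global environment actually matters is also accurate: \ssartvarref\ is the only rule that consults $\ssaenvs$ at a leaf, and \ssartmethbody, \ssartite\ (and the post-environment lookups in \ssartvardecl/\ssartassign) are the only places it is rebuilt, so the variable case and the $\Phi$-variable bookkeeping are indeed the entire content of the proof.

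One caveat you should make explicit rather than leave implicit. As literally stated the lemma quantifies over an \emph{arbitrary} $\ssaenvs'$, yet for a compound $\srcExpr$ the translation under $\ssaenvs'$ consults $\ssaenvs'$ at the sub-nodes, while the claimed substitution $\esubst{\idxMapping{\ssaenvs'}{\srcExpr}}{\idxMapping{\ssaenvs}{\srcExpr}}$ is computed at the root only; an $\ssaenvs'$ that is incoherent across nodes falsifies the conclusion. Your assumption that ``the update applied to $\ssaenvs$ and the update applied to $\ssaenvs'$ are structurally identical'' is precisely the coherence side condition needed, and it does hold at the lemma's single use site in the Consistency Lemma, where $\ssaenvs''$ updates $\ssaenvs'$ at exactly one node with the post-environment of the taken branch --- but the induction does not give it to you for free, so it should be stated as a hypothesis (or the lemma restricted to single-node updates of a translation-produced environment). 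Similarly, the composition fact about $\envDiff{\cdot}{\cdot}$ and $\esubst{\cdot}{\cdot}$ that you flag as the crux, together with freshness/invariance of the internally let- and $\Phi$-bound names and the observation that $\toSubstname$ depends only on the fixed $\srcStore$ and $\srcHeap$ (so eliminated variables yield identical values under both environments), is genuinely what remains to be discharged; recording those as explicit auxiliary lemmas would already put your write-up ahead of the original, which proves none of this.
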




\begin{lemma}[Evaluation Context]\label{lemma:ssa:eval:ctx}
If
\begin{enumerate}
  \item [] $\tossartterm{\srcHeap}{\ssaenvs}{\srcStore}
             {\srcExprOrBd} 
             {\idxEctx{\ssaEvalCtx}{\expr}}$
\end{enumerate}
then there exist \srcEvalCtx and \srcExpr \st: 
  \begin{itemize}
      \setlength\itemsep{1pt}
      \item $\srcExprOrBd \equiv \idxEctx{\srcEvalCtx}{\srcExpr}$
      \item $\tossaevalctx{\srcHeap}{\ssaenvs}{\srcStore}{\srcEvalCtx}{\ssaEvalCtx}$
      \item $\tossartterm{\srcHeap}{\ssaenvs}{\srcStore}{\srcExpr}{\expr}$
    \end{itemize}
\end{lemma}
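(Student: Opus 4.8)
The plan is to prove the lemma by induction on the structure of the target evaluation context $\ssaEvalCtx$, using at each step an inversion of the translation derivation $\tossartterm{\srcHeap}{\ssaenvs}{\srcStore}{\srcExprOrBd}{\idxEctx{\ssaEvalCtx}{\expr}}$. The key observation is that the outermost frame of $\ssaEvalCtx$ fixes the top-level constructor of the filled term $\idxEctx{\ssaEvalCtx}{\expr}$, and the runtime term translation of Figure~\ref{fig:ssa:rt} is syntax-directed on the source term, so at most one rule can have produced a target term with that head. Inverting that rule recovers a source term with a matching head together with a subderivation for the sub-term that contains the hole, and the induction hypothesis applies to that subderivation.

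In the base case $\ssaEvalCtx = \empevalctx$ we have $\idxEctx{\ssaEvalCtx}{\expr} = \expr$, so we take $\srcEvalCtx = \srcEmpEvalctx$ and $\srcExpr = \srcExprOrBd$; the evaluation-context translation holds by its axiom and the third obligation is the given derivation. The \emph{congruence} cases — where the outer frame is a field access $\dotref{\ssaEvalCtx'}{\fieldname}$, a cast $\cast{\rtype}{\ssaEvalCtx'}$, or an argument position of a method call or constructor — are all uniform. For instance, inverting \ssartmethcall exposes a source method call whose receiver sub-term translates to $\idxEctx{\ssaEvalCtx'}{\expr}$; the induction hypothesis yields $\srcEvalCtx'$ and $\srcExpr$ with $\srcExpr_0 \equiv \idxEctx{\srcEvalCtx'}{\srcExpr}$, and we rebuild $\srcEvalCtx$ by wrapping $\srcEvalCtx'$ in the matching source frame, the corresponding evaluation-context translation rule reassembling the derivation step for step. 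The side conditions the translation rules carry ($\fresh{\cdot}$ names, $\toString{\cdot}$ equalities) are reused verbatim.

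The interesting cases are those where the outer target frame is a let-binding $\letin{\evar}{\ssaEvalCtx'}{\expr}$ or a conditional $\Phi$-node with the hole in its scrutinee, since such binders in \ssaLang arise not from source \emph{expressions} but from the SSA contexts $\ssactx$ produced by statements (rules \ssartvardecl, \ssartassign, \ssartite) and spliced into a method body via \ssartmethbody, which sends $\srcSeq{\srcStmt}{\srcReturn{\srcExpr}}$ to $\ssactxidx{\ssactx}{\expr}$. Here $\srcExprOrBd$ is a body and the decomposition must descend through the statement-context nesting, distinguishing whether the hole sits inside the translated statement or inside the translated return expression, and reconstructing the appropriate source frame ($\srcVarDecl{\srcEvar}{\srcEvalCtx'}$, $\srcAssign{\srcEvar}{\srcEvalCtx'}$, $\srcIte{\srcEvalCtx'}{\srcStmt_1}{\srcStmt_2}$, or a sequence frame) as dictated by the evaluation-context translation judgment. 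I expect this to be the main obstacle, for two reasons. First, the global environment $\ssaenvs$ and the $\Phi$-variable triples must be threaded consistently, and realigning the environment under which the return expression is translated is exactly what Lemma~\ref{lemma:ssa:stmt:subst} provides. Second, the substitution $\toSubst{\ssaenv}{\srcStore}{\srcHeap}$ applied by \ssartvarref means a variable position may already have been instantiated by a store value, so I must argue that this instantiation commutes with plugging and never targets the hole.

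To keep the argument clean I would first establish, as a routine sub-claim, that $\toSubstname$ never substitutes at the hole — holes occur only where the source has a non-value, non-variable redex — so that decomposition and $\toSubstname$-instantiation are independent; the congruence and base cases then go through mechanically, and only the body/statement cases require the case split above. Uniqueness of the decomposition is not needed, since the lemma asserts only the \emph{existence} of a matching $\srcEvalCtx$ and $\srcExpr$ for the given $\ssaEvalCtx$.
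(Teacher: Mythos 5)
Your proposal is correct and takes essentially the same approach as the paper: the paper's entire proof is ``by induction on the derivation of the input transformation,'' and your frame-by-frame case analysis with inversion is exactly what that induction unfolds to. The only nominal difference --- inducting on the structure of $\ssaEvalCtx$ rather than directly on the translation derivation --- is immaterial here, since the translation is compositional and the two recursions peel off the same outermost frame in lockstep.
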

\begin{proof}
  \emph{By induction on the derivation of the input transformation.}
\end{proof}

\begin{lemma}[Translation under Store]\label{lemma:ssa:store:translation}
If
$\tossartterm{\cdot}{\ssaenvs}{\cdot}{\srcBody}{\expr}$,
then
$\tossartterm{\srcHeap}{\ssaenvs}{\srcStore}{\srcBody}{\appsubst{\subst}{\expr}}$,
where
\subst = \toSubst{\idxMapping{\ssaenvs}{\srcBody}}{\srcStore}{\srcHeap}.
\end{lemma}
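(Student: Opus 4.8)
The plan is to proceed by \emph{structural induction} on the derivation of the hypothesis $\tossartterm{\cdot}{\ssaenvs}{\cdot}{\srcBody}{\expr}$, generalizing the statement to arbitrary terms $\srcExprOrBd$ and to the auxiliary statement form $\tossartterm{\srcHeap}{\ssaenvs}{\srcStore}{\srcStmt}{\ssactx}$, so that the induction hypothesis is available on every subterm. The central observation driving the proof is that the runtime term-translation rules consult the store $\srcStore$ and heap $\srcHeap$ in \emph{exactly one} place, namely rule \ssartvarref: every other rule merely translates subterms and reassembles the results without inspecting the store. Consequently, the derivation built under $(\srcHeap,\srcStore)$ has the \emph{same shape} as the one built under $(\cdot,\cdot)$, and the two translated expressions can differ only at variable-reference leaves. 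The whole content of the lemma is therefore to show that those leaf-level differences are accounted for, uniformly, by the single substitution $\subst = \toSubst{\idxMapping{\ssaenvs}{\srcBody}}{\srcStore}{\srcHeap}$.

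First I would dispatch the two base cases. For \ssartval a source value translates via $\tossaval{\srcHeap}{\srcVal}{\val}$ to a value carrying no free variables, so $\appsubst{\subst}{\val} = \val$ and the claim is immediate. For \ssartvarref the statically translated variable is the SSA name $\evar = \idxMapping{\idxMapping{\ssaenvs}{\srcEvar}}{\srcEvar}$; by the very definition of $\toSubstname$, $\subst$ maps $\evar$ to the translated value of $\srcEvar$ in $\srcStore$ when $\srcEvar$ is store-bound and leaves $\evar$ untouched otherwise, which is exactly what the runtime rule produces. The compound cases (\ssartmethcall, \ssartvardecl, \ssartite, and the body rule \ssartmethbody) then follow by pushing $\subst$ through the reassembled term and invoking the induction hypothesis on each subterm; here I would use the freshness side-conditions of the SSA transformation to guarantee that the SSA variables introduced by inner binders are disjoint from the domain of $\subst$, so that distributing $\subst$ over $\ssactxidx{\cdot}{\cdot}$ and over $\letin{\cdot}{\cdot}{\cdot}$ is capture-free.

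The step I expect to be the main obstacle is reconciling the \emph{single} top-level substitution $\subst$, computed from the body-level environment $\idxMapping{\ssaenvs}{\srcBody}$, with the subterms that were translated under \emph{enlarged} global environments --- most visibly in \ssartmethbody, where the return expression is translated under $\ssaenvs' = \updMapping{\ssaenvs}{\srcExpr}{\idxMappingPost{\ssaenvs}{\srcStmt}}$. The difficulty is that descending past a declaration extends the SSA environment with bindings for locally declared variables, so a substitution computed naively at a deeper node would have a larger domain than $\subst$. The hard part will be showing this discrepancy is harmless: the extra bindings are for variables absent from $\srcStore$ and carry \emph{fresh} SSA names, hence they contribute nothing to $\toSubstname$ and cannot clash with $\subst$. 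The formal bridge between the differently-indexed environments is provided by Lemma~\ref{lemma:ssa:stmt:subst} (Global Environment Substitution), which lets me re-express a subterm's translation under $\ssaenvs'$ in terms of its translation under $\ssaenvs$ up to a renaming that commutes with $\subst$. Once this book-keeping is in place the inductive step closes routinely.
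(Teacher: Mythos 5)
Your proposal is correct and takes essentially the same route as the paper: the paper's entire proof of this lemma is the one-line remark ``by induction on the structure of the input translation,'' which is precisely the structural induction you carry out. Your elaboration --- the observation that only \ssartvarref\ consults the store and heap, the freshness-based capture-avoidance when pushing $\subst$ under binders, and the appeal to Lemma~\ref{lemma:ssa:stmt:subst} to reconcile the updated global environment in \ssartmethbody\ with the single top-level substitution --- fills in details the paper leaves implicit, and matches how the lemma is actually invoked in the consistency proof (rule \rinvoke), where $\subst$ instantiates exactly the method parameters and $\ethis$.
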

\begin{proof}
  \emph{By induction on the structure of the input translation.}
\end{proof}

\begin{lemma}[Canonical Forms]\label{lemma:canonical:forms}
  \
\begin{enumerate}[label=(\alph*), ref={\thelemma (\alph*)}]
  \item If \label{lemma:canon:forms:a}
    {\normalfont$
      \tossartterm{\srcHeap}{\ssaenvs}{\srcStore}
                  {\srcExprOrBd}{\vconst}$},
    then 
    {\normalfont$
      \srcExprOrBd \equiv \srcVconst
    $}

  \item If \label{lemma:canon:forms:b}
    {\normalfont$
      \tossartterm{\srcHeap}{\ssaenvs}{\srcStore}
                  {\srcExprOrBd}
                  {\methcall{\loc}{\methodname}{\many{\val}}}
    $},
    then 
    {\normalfont$
      \srcExprOrBd \equiv \srcMethcall{\srcLoc}{\srcMethodname}{\many{\srcVal}}
    $}

  \item If \label{lemma:canon:forms:c}
    {\normalfont$
      \tossartterm{\srcHeap}{\ssaenvs}{\srcStore}
                  {\srcExprOrBd}
                  {\letifshort{\many{\phi}}
                              {\expr}
                            {\ssactx_1}{\ssactx_2}{\expr'}}
    $},
    then
    {\normalfont$
      \srcExprOrBd \equiv \srcSeq{\srcIte{\srcExpr}{\srcStmt_1}{\srcStmt_2}}{\srcReturn{\srcExpr'}}
    $}

  \item If \label{lemma:canon:forms:d}
    {\normalfont $\tossameth{\srcMethodDefsSym}{\methdefUnty{\methodname}{\evar}{\expr_0}}$},
    then
    {\normalfont $\srcMethodDefsSym \equiv \srcMethdefUnty{\srcMethodname}{\srcEvar}{\srcBody}$}

\end{enumerate}
\end{lemma}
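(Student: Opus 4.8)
The plan is to prove all four parts by \emph{inversion} on the given SSA-translation derivation, case-analyzing which runtime-term translation rule (Figure~\ref{fig:ssa:rt}) could have produced the stated target form and then reading off the corresponding source shape. Because the head constructor of the target already pins down the applicable rule in most cases, each part reduces to one (occasionally two) levels of inversion, exactly as for the companion Evaluation Context lemma.

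For part (b) the target $\methcall{\loc}{\methodname}{\many{\val}}$ is headed by a method call, and among the runtime-term rules only \ssartmethcall can produce a method-call-headed expression: \ssartval and \ssartvarref yield values or a single substituted variable, while \ssartvardecl, \ssartite, and \ssartmethbody yield \texttt{let}-, \texttt{let-if}-, and filled-context-headed terms. Inverting \ssartmethcall immediately gives $\srcExprOrBd \equiv \srcMethcall{\srcLoc}{\srcMethodname}{\many{\srcVal}}$. Part (d) is the analogous single inversion on the method-translation judgment $\tossameth{\cdot}{\cdot}$ (rule \ssamethdecl), yielding $\srcMethodDefsSym \equiv \srcMethdefUnty{\srcMethodname}{\srcEvar}{\srcBody}$. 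Part (c) needs two levels: the target $\letifshort{\many{\phi}}{\expr}{\ssactx_1}{\ssactx_2}{\expr'}$ is a filled SSA context $\ssactxidx{\ssactx}{\expr'}$, and the only rule emitting a context filled with a returned expression is \ssartmethbody; a further inversion on the statement translation of $\ssactx$ shows that a \texttt{let-if}-headed context can come only from \ssartite (after collapsing any trailing \texttt{skip}s so that the last slot is the bare $\expr'$ rather than $\ssactxidx{\ssactx_b}{\expr'}$), giving $\srcExprOrBd \equiv \srcSeq{\srcIte{\srcExpr}{\srcStmt_1}{\srcStmt_2}}{\srcReturn{\srcExpr'}}$.

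The hard part is part (a). A bare constant target $\vconst$ can be produced in two ways: directly by \ssartval together with the value rule \ssaconst (whose premise forces the source to be a constant $\srcVconst$ and rules out the location rule \ssaloc), but \emph{also} by \ssartvarref, since that rule emits $\appsubst{\subst}{\evar}$ and the store substitution $\subst = \toSubst{\idxMapping{\ssaenvs}{\srcEvar}}{\srcStore}{\srcHeap}$ may map $\evar$ to a constant value drawn from the store. A purely syntactic inversion therefore yields the disjunction ``$\srcExprOrBd \equiv \srcVconst$, or $\srcExprOrBd \equiv \srcEvar$ with $\idxMapping{\srcStore}{\srcEvar}$ a constant.'' To recover the stated conclusion I would discharge the variable alternative using the invariant under which this lemma is consumed in the forward-simulation proof: the source is driven by \emph{one-or-more} steps ($\stepsmanyone{\cdot}{\cdot}$) and a variable sitting in evaluation position is always resolved eagerly by \srcRcVar before canonical forms is invoked. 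Concretely I would either (i) strengthen the hypothesis of part (a) to exclude a leading source variable, justified by that eager-resolution invariant, or (ii) restate part (a) with the explicit variable alternative and let the simulation proof perform the \srcRcVar reduction; in both cases the residual reasoning is the routine sub-inversion on \ssaconst versus \ssaloc.
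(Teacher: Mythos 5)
The paper never actually proves this lemma --- it is stated bare among the utility lemmas, with the inversion argument left implicit --- so the relevant comparison is against that intended routine inversion. Your reconstruction of parts (c) and (d) matches it, and your analysis of part (a) is a genuine catch rather than over-caution: by \ssartvarref\ together with the definition of $\toSubstname$, the runtime translation eagerly substitutes store values for SSA variables, so a source variable whose store entry is a constant also translates to a bare $\vconst$, and pure inversion yields only the disjunction you state; the categorical conclusion of (a) is not derivable as written. Of your two repairs, (ii) is the robust one: the consistency results conclude with one-or-more (resp.\ zero-or-more) source steps, so the extra \srcRcVar\ reduction (under \srcRcEvalCtx) is easily absorbed in the simulation. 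Repair (i) leans on an ``eager resolution'' invariant that is established nowhere --- the translation relation is applied to arbitrary intermediate configurations --- and would itself require proof.

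The gap in your proposal is that you exempt part (b) from exactly the failure you diagnosed in (a). Inverting \ssartmethcall\ does \emph{not} immediately give a receiver that is a literal location and arguments that are literal values: its premises only require the source receiver and arguments to \emph{translate to} $\loc$ and $\many{\val}$, and by the very \ssartvarref\ mechanism you identified, a source call $\srcMethcall{\srcEvar}{\srcMethodname}{\many{\srcEvar}}$ whose variables are bound in the store to a location and values also translates to $\methcall{\loc}{\methodname}{\many{\val}}$. Your head-constructor argument pins down the call \emph{shape}, but not the valuehood of its components --- and the latter is precisely what the stated conclusion and its consumption demand: in the \rinvoke\ case of the consistency lemma, the proof subsequently inverts \ssartval\ on the receiver and argument translations, a step that is unavailable when \ssartvarref\ was used instead. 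So (b) needs the same disjunctive weakening (or extra \srcRcVar\ source steps) as (a). A minor further point: your trailing-\srcSkip\ caveat in (c) is only live if the runtime translation includes a sequencing rule, which the figure (explicitly ``selected rules'') does not settle; flagging it is prudent but the collapse step would then also need to be spelled out.
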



\begin{lemma}[Translation Closed under Evaluation Context Composition]\label{lemma:ssa:append:ectx}
If 
\begin{enumerate}[label=(\alph*)]
  \item $\tossaevalctx{\srcHeap}{\ssaenvs}{\srcStore}{\srcEvalCtx_0}{\ssaEvalCtx_0}$
  \item $\tossartstackterm{\srcHeap}{\ssaenvs}{\srcStore'}
             {\parens{\concatenate{\srcStore}{\srcEvalCtx_1}}}
             {\srcBody}{\expr}$ 
\end{enumerate}
then
$\tossartstackterm{\srcHeap}{\ssaenvs}{\srcStore'}
                       {\parens{\concatenate{\srcStore}{\idxEctx{\srcEvalCtx_0}{\srcEvalCtx_1}}}}
                       {\srcBody}{\idxEctx{\ssaEvalCtx_0}{\expr}}$
\end{lemma}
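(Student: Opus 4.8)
The plan is to reduce the statement to a \emph{composition} property for the evaluation-context translation judgement and then reassemble the stack translation around it. Concretely, I would first establish the auxiliary fact that the translation of evaluation contexts is itself closed under plugging: if $\tossaevalctx{\srcHeap}{\ssaenvs}{\srcStore}{\srcEvalCtx_0}{\ssaEvalCtx_0}$ and $\tossaevalctx{\srcHeap}{\ssaenvs}{\srcStore}{\srcEvalCtx_1}{\ssaEvalCtx_1}$, then $\tossaevalctx{\srcHeap}{\ssaenvs}{\srcStore}{\idxEctx{\srcEvalCtx_0}{\srcEvalCtx_1}}{\idxEctx{\ssaEvalCtx_0}{\ssaEvalCtx_1}}$. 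Given this property, the lemma follows by a short inversion-and-reassembly argument on hypothesis (b).

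For the reassembly: since the stack $\concatenate{\srcStore}{\srcEvalCtx_1}$ in (b) is non-empty and the translated term is a body $\srcBody$ (not a bare evaluation context), the derivation of (b) must conclude with rule \ssastackcons. Inverting it yields (i) a translation of the body under the empty stack, $\tossartstackterm{\srcHeap}{\ssaenvs}{\srcStore'}{\cdot}{\srcBody}{\expr_0}$, and (ii) a translation of the frame's evaluation context, $\tossartstackterm{\srcHeap}{\ssaenvs}{\srcStore}{\cdot}{\srcEvalCtx_1}{\ssaEvalCtx_1}$, which — the residual stack being empty — arises through \ssaectxstackemp from $\tossaevalctx{\srcHeap}{\ssaenvs}{\srcStore}{\srcEvalCtx_1}{\ssaEvalCtx_1}$; moreover $\expr = \idxEctx{\ssaEvalCtx_1}{\expr_0}$. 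Combining hypothesis (a) with (ii) via the composition property gives $\tossaevalctx{\srcHeap}{\ssaenvs}{\srcStore}{\idxEctx{\srcEvalCtx_0}{\srcEvalCtx_1}}{\idxEctx{\ssaEvalCtx_0}{\ssaEvalCtx_1}}$; re-applying \ssaectxstackemp and then \ssastackcons to this together with (i) produces $\tossartstackterm{\srcHeap}{\ssaenvs}{\srcStore'}{\parens{\concatenate{\srcStore}{\idxEctx{\srcEvalCtx_0}{\srcEvalCtx_1}}}}{\srcBody}{\idxEctx{\idxEctx{\ssaEvalCtx_0}{\ssaEvalCtx_1}}{\expr_0}}$. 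Finally, associativity of hole-filling, $\idxEctx{\idxEctx{\ssaEvalCtx_0}{\ssaEvalCtx_1}}{\expr_0} = \idxEctx{\ssaEvalCtx_0}{\idxEctx{\ssaEvalCtx_1}{\expr_0}} = \idxEctx{\ssaEvalCtx_0}{\expr}$, rewrites the target into the required form, closing the goal.

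For the composition property I would induct on the derivation of $\tossaevalctx{\srcHeap}{\ssaenvs}{\srcStore}{\srcEvalCtx_0}{\ssaEvalCtx_0}$, equivalently on the structure of $\srcEvalCtx_0$. The base case $\srcEvalCtx_0 = \srcEmpEvalctx$ forces $\ssaEvalCtx_0 = \empevalctx$, and since plugging into an empty context is the identity, both $\idxEctx{\srcEvalCtx_0}{\srcEvalCtx_1}$ and $\idxEctx{\ssaEvalCtx_0}{\ssaEvalCtx_1}$ collapse to $\srcEvalCtx_1$ and $\ssaEvalCtx_1$, so the goal is exactly the second hypothesis. In each inductive case the outermost constructor of $\srcEvalCtx_0$ (field access, method call, variable declaration, cast, sequencing, and so on) is preserved by plugging, so I apply the induction hypothesis to the immediate sub-context and re-apply the matching evaluation-context translation rule, discharging its side conditions (the string-equality and freshness constraints such as $\fresh{\fieldname}$, $\fresh{\methodname}$) exactly as they were discharged in the original sub-derivations.

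The main obstacle I anticipate is the bookkeeping of the two syntactic classes of residual contexts on the target side — the expression-level contexts $\ssaEvalCtx$ and the statement-level SSA contexts $\ssaEvalCtxSsa$ (jointly $\ssaEvalCtxTerm$). Plugging is realised by two different hole-fillers, $\idxEctx{\cdot}{\cdot}$ for expression holes and $\ssactxidx{\cdot}{\cdot}$ for SSA-context holes (as in the sequencing rule, whose target is $\ssactxidx{\ssaEvalCtxSsa}{\ssactx}$), so in the inductive step I must check that the category of the hole exposed by $\srcEvalCtx_0$ matches the category of $\srcEvalCtx_1$ and that the relevant hole-filler associates across the two classes. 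Once this case analysis over the grammar of $\srcEvalCtx$ is set up, each case is routine, and the freshness conditions never clash because they are inherited verbatim from sub-derivations that already coexist under the single global environment $\ssaenvs$.
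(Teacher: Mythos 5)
Your proposal is correct, and since the paper states this lemma with no proof at all (unlike its neighbours, which at least carry one-line justifications), your argument supplies exactly the routine derivation the authors evidently intend: invert the single-frame hypothesis (b) through \ssastackcons and then \ssaectxstackemp, establish composition of the context-translation judgement by induction on the derivation of (a), and reassemble via \ssaectxstackemp and \ssastackcons, discharging the mismatch between $\idxEctx{\idxEctx{\ssaEvalCtx_0}{\ssaEvalCtx_1}}{\expr_0}$ and $\idxEctx{\ssaEvalCtx_0}{\expr}$ by associativity of hole-filling. The one genuinely delicate point is the one you flag yourself --- the composition lemma must be stated jointly over $\ssaEvalCtxTerm$, i.e., over both target classes $\ssaEvalCtx$ and $\ssaEvalCtxSsa$ with their distinct hole-fillers (as in the sequencing rule, whose target is $\ssactxidx{\ssaEvalCtxSsa}{\ssactx}$), checking that hole categories match in each inductive case --- and you handle it correctly.
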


\begin{lemma}[Heap and Store Weakening]\label{lemma:ssa:ectx:heap:weaken}
If 
\begin{enumerate}[label=(\alph*)]
  \item[] $\tossartstackterm{\srcHeap}{\ssaenvs}{\srcStore}{\srcStack}{\srcEvalCtx}{\ssaEvalCtxTerm}$
\end{enumerate}
then $\forall\;\srcHeap', \srcStore'$ \st 
  $\srcHeap' \supseteq \srcHeap$ and
  $\srcStore' \supseteq \srcStore$, it holds that 
  $\tossartstackterm{\srcHeap'}{\ssaenvs}{\srcStore'}{\srcStack}{\srcEvalCtx}{\ssaEvalCtxTerm}$
\end{lemma}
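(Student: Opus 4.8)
The plan is to proceed by a single simultaneous induction on the derivations of the mutually recursive runtime-translation judgments — the stack translation $\tossartstackterm{\srcHeap}{\ssaenvs}{\srcStore}{\srcStack}{\srcEvalCtx}{\ssaEvalCtxTerm}$, the term translation $\tossartterm{\srcHeap}{\ssaenvs}{\srcStore}{\srcExprOrBd}{\expr}$, and the evaluation-context translation $\tossaevalctx{\srcHeap}{\ssaenvs}{\srcStore}{\srcEvalCtx}{\ssaEvalCtxTerm}$ — strengthening the statement so that all three are claimed to be preserved under any $\srcHeap' \supseteq \srcHeap$ and $\srcStore' \supseteq \srcStore$ at once. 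The observation driving the whole argument is that $\srcHeap$ and $\srcStore$ are threaded \emph{unchanged} through every structural rule (the context rules for field access and method invocation, the method-body rule, the stack-cons rules), so in those cases the conclusion follows immediately by applying the induction hypothesis to each premise and reassembling with the same rule. Hence the real content lives entirely in the two kinds of leaves that actually read $\srcHeap$ or $\srcStore$: value translation (the judgment $\tossaval{\srcHeap}{\srcVal}{\val}$, and through it the object- and heap-translation judgments) and the variable-reference rule, which invokes the meta-function $\toSubstname$.

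For the value leaves I would first establish an auxiliary monotonicity statement: if $\tossaval{\srcHeap}{\srcVal}{\val}$ and $\srcHeap' \supseteq \srcHeap$ then $\tossaval{\srcHeap'}{\srcVal}{\val}$ (the store plays no role here). This is a small sub-induction: the constant case is immediate, and in the location case the derivation only inspects the single binding $\mapping{\srcLoc}{\srcObject} \in \srcHeap$, which survives in $\srcHeap'$; the recursive object translation descends only into field values and method definitions, whose own value translations again read only bindings already present in $\srcHeap$. So value translation never observes the \emph{new} part of $\srcHeap'$, and monotonicity holds. This discharges the value leaf directly and also supplies the heap half of the variable-reference leaf.

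The main obstacle is the store interaction in the variable-reference rule, where the produced substitution is $\subst = \toSubst{\idxMapping{\ssaenvs}{\srcEvar}}{\srcStore}{\srcHeap}$, and $\toSubstname$ carries the side condition $\envdom{\ssaenv} = \envdom{\srcStore}$, returning $\impossible$ otherwise. The point to nail down is that enlarging the store neither silently invalidates this equation nor changes the computed substitution. I would argue that $\subst$ is determined purely by the bindings of $\srcStore$ restricted to $\envdom{\idxMapping{\ssaenvs}{\srcEvar}}$; that those bindings are untouched by passing to $\srcStore' \supseteq \srcStore$; and that, by the heap-monotonicity sub-lemma, each per-variable value translation used inside $\toSubstname$ still holds over $\srcHeap'$. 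Hence $\toSubst{\idxMapping{\ssaenvs}{\srcEvar}}{\srcStore'}{\srcHeap'}$ agrees with $\subst$ on the relevant domain and the \emph{same} target term is produced. The delicate step is precisely the domain-equality precondition: since the original derivation forced $\envdom{\idxMapping{\ssaenvs}{\srcEvar}} = \envdom{\srcStore}$, a genuinely larger store domain would break it. I would therefore read ``$\srcStore' \supseteq \srcStore$'' as extension that preserves the frame-scope invariant tying each variable-reference node's SSA environment to the current store domain (equivalently, the weakening adds only bindings already reflected in $\ssaenvs$), and record this as the precise hypothesis under which the lemma is invoked in the consistency proof. Discharging this bookkeeping cleanly — rather than the routine structural recursion — is where essentially all the care goes.
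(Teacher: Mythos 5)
The paper states this lemma with no proof at all, treating it as routine, so there is no official argument to diverge from; your simultaneous structural induction over the mutually recursive runtime-translation judgments, with heap-monotonicity of $\tossaval{\srcHeap}{\srcVal}{\val}$ (and of the object/heap translations it invokes) factored out as a sub-lemma, is precisely the argument the authors leave implicit, and on the heap side it is unproblematic: the location rule only inspects the binding $\mapping{\srcLoc}{\srcObject} \in \srcHeap$, which survives any extension. Where you genuinely add value is the store side. You are right that the variable-reference case does not go through under a literal reading of $\srcStore' \supseteq \srcStore$: rule \ssartvarref computes $\subst = \toSubst{\idxMapping{\ssaenvs}{\srcEvar}}{\srcStore}{\srcHeap}$, and $\toSubstname$ carries the side condition $\envdom{\idxMapping{\ssaenvs}{\srcEvar}} = \envdom{\srcStore}$, yielding $\impossible$ otherwise, so extending the store by any variable not bound in that node's SSA environment destroys the derivation rather than weakening it. Hence the lemma as stated only holds under the proviso you record: either the extension must respect the invariant tying each node's SSA environment to the current store domain (equivalently, it may only rebind variables already reflected in $\ssaenvs$), or $\toSubstname$'s side condition should be relaxed to $\envdom{\idxMapping{\ssaenvs}{\srcEvar}} \subseteq \envdom{\srcStore}$, under which your restriction argument correctly shows that the same $\subst$, and hence the identical target term, is produced. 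This caveat is not idle bookkeeping: at the lemma's actual use site in the consistency proof (the \rcevalctx case of the empty-stack consistency lemma), the hypothesis $\srcStore' \supseteq \srcStore$ arises from executing declarations inside the hole and is genuinely a domain extension, so making the invariant an explicit hypothesis there, as you propose, is the correct repair of a looseness the paper glosses over.
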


\begin{lemma}[Translation Closed under Stack Extension]\label{lemma:ssa:ectx:app:gen}
If 
\begin{enumerate}[label=(\alph*)]
  \item
    $\tossartstackterm{\srcHeap}{\ssaenvs}{\srcStore_0}{\srcStack_0}{\srcEvalCtx_0}{\ssaEvalCtx_0}$
    \label{lemma:ssa:ectx:app:gen:a}
  \item 
    $\tossartstackterm{\srcHeap}{\ssaenvs}{\srcStore_1}{\srcStack_1}{\srcBody_1}{\expr_1}$
    \label{lemma:ssa:ectx:app:gen:b}
\end{enumerate}
then
$\tossartstackterm{\srcHeap}{\ssaenvs}
  {\srcStore_1}
  {\parens{\stackCons{\stackCons{\srcStack_0}{\toStack{\srcStore_0}{\srcEvalCtx_0}}}{\srcStack_1}}}
  {\srcBody_1}
  {\idxEctx{\ssaEvalCtx_0}{\expr_1}}$
\end{lemma}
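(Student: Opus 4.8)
The plan is to prove the statement by induction on the structure of the top stack $\srcStack_1$, simultaneously strengthening it so that the top position may hold either a body/expression $\srcExprOrBd$ (as stated) or an evaluation context. This generalization is forced by the shape of the stack-translation rules: each frame $\toStack{\srcStore}{\srcEvalCtx}$ stores an evaluation context, so rule \ssastackcons reduces a term translation on a non-empty stack to an \emph{evaluation-context} translation on the shorter tail. I therefore prove, by mutual induction, both the term version (hypothesis (b) as given) and its counterpart $\tossartstackterm{\srcHeap}{\ssaenvs}{\srcStore_1}{\srcStack_1}{\srcEvalCtx_1}{\ssaEvalCtxTerm_1}$, in each case inserting $\stackCons{\srcStack_0}{\toStack{\srcStore_0}{\srcEvalCtx_0}}$ beneath $\srcStack_1$ and wrapping the output in $\idxEctx{\ssaEvalCtx_0}{\cdot}$.

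For the base case $\srcStack_1 = \cdot$, note that $\stackCons{\stackCons{\srcStack_0}{\toStack{\srcStore_0}{\srcEvalCtx_0}}}{\cdot}$ is just $\stackCons{\srcStack_0}{\toStack{\srcStore_0}{\srcEvalCtx_0}}$, so the goal is a direct instance of rule \ssastackcons (respectively \ssaectxstackcons in the evaluation-context version): I take hypothesis (b), which is the empty-stack translation $\tossartstackterm{\srcHeap}{\ssaenvs}{\srcStore_1}{\cdot}{\srcBody_1}{\expr_1}$, as the first (top) premise, and hypothesis (a) as the second (caller) premise. The conclusion of the rule then reads off exactly as the desired $\tossartstackterm{\srcHeap}{\ssaenvs}{\srcStore_1}{\stackCons{\srcStack_0}{\toStack{\srcStore_0}{\srcEvalCtx_0}}}{\srcBody_1}{\idxEctx{\ssaEvalCtx_0}{\expr_1}}$.

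For the inductive case $\srcStack_1 = \stackCons{\srcStack_1'}{\toStack{\srcStore_1'}{\srcEvalCtx_1'}}$, I invert hypothesis (b): it can only have been derived by \ssastackcons, yielding a top translation $\tossartstackterm{\srcHeap}{\ssaenvs}{\srcStore_1}{\cdot}{\srcBody_1}{\expr_1^{\circ}}$ and a caller translation $\tossartstackterm{\srcHeap}{\ssaenvs}{\srcStore_1'}{\srcStack_1'}{\srcEvalCtx_1'}{\ssaEvalCtx_1'}$ with $\expr_1 = \idxEctx{\ssaEvalCtx_1'}{\expr_1^{\circ}}$. The caller translation ranges over the strictly smaller stack $\srcStack_1'$ and over an \emph{evaluation context}, so the evaluation-context induction hypothesis applies (again with hypothesis (a) unchanged at the bottom), producing a translation of $\srcEvalCtx_1'$ on the extended stack $\stackCons{\stackCons{\srcStack_0}{\toStack{\srcStore_0}{\srcEvalCtx_0}}}{\srcStack_1'}$. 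Re-applying \ssastackcons with this extended caller translation and the unchanged top translation $\expr_1^{\circ}$ rebuilds precisely the goal on the full stack $\stackCons{\stackCons{\srcStack_0}{\toStack{\srcStore_0}{\srcEvalCtx_0}}}{\srcStack_1}$, with the nesting $\idxEctx{\ssaEvalCtx_0}{\idxEctx{\ssaEvalCtx_1'}{\expr_1^{\circ}}}$ matching $\idxEctx{\ssaEvalCtx_0}{\expr_1}$.

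The only real obstacle is recognizing up front that the stated lemma cannot be proved in isolation and must be mutualized with its evaluation-context form; once that is in place, every step is a mechanical match against the two cons-rules. No heap or store weakening (Lemma~\ref{lemma:ssa:ectx:heap:weaken}) is needed here, since $\srcHeap$ and $\ssaenvs$ are held fixed throughout and the construction never alters the current store of any frame. Alternatively, the inductive step could be packaged through Lemma~\ref{lemma:ssa:append:ectx}, which already encapsulates pushing a translated evaluation context into a single frame; I would still organize the argument as the induction above to dispatch the full tail $\srcStack_1$.
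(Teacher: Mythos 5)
Your proof is correct and follows essentially the same route as the paper's: induction on the derivation of hypothesis (b), closing the base case by a direct application of \ssastackcons with (a) as the caller premise, and in the inductive case inverting \ssastackcons, invoking the induction hypothesis on the caller's evaluation-context translation, and reassembling the full stack. Your one genuine addition is making the mutual strengthening to evaluation contexts explicit from the start, which the paper handles only with the parenthetical remark that ``the lemma can easily be extended to evaluation contexts''---your formulation is the more rigorous rendering of the same idea.
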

\begin{proof}
  We proceed by induction on the structure of derivation 
    \ref{lemma:ssa:ectx:app:gen:b}:
  \begin{itemize}
  
    \item \brackets{\ssaexpstackemp}: Fact \ref{lemma:ssa:ectx:app:gen:b} has the form:
      \la{
        \tossartstackterm{\srcHeap}{\ssaenvs}{\srcStore_1}{\cdot}{\srcBody_1}{\expr_1}
        \label{lemma:ssa:ectx:app:gen:0}
      }

      By applying Rule~\ssastackcons on \ref{lemma:ssa:ectx:app:gen:0} and 
      \ref{lemma:ssa:ectx:app:gen:a}:
      \la{
        \tossartstackterm{\srcHeap}{\ssaenvs}
              {\srcStore_1} 
              {\parens{\stackCons{\srcStack_0}{\toStack{\srcStore_0}{\srcEvalCtx_0}}}}
              {\srcBody_1}
              {\idxEctx{\ssaEvalCtx_0}{\expr_1}}
      }

      Which proves the wanted result.

    \item \brackets{\ssastackcons}: 
      Fact \ref{lemma:ssa:ectx:app:gen:b} has the form:
      \la{
        \tossartstackterm{\srcHeap}{\ssaenvs}
              {\srcStore_1} 
              {\parens{\stackCons{\srcStack}{\toStack{\srcStore}{\srcEvalCtx}}}}
              {\srcBody_1}
              {\idxEctx{\ssaEvalCtx}{\expr_{1.1}}}
        \label{lemma:ssa:ectx:app:gen:1}
      }

      By inverting Rule \ssastackcons on \ref{lemma:ssa:ectx:app:gen:1}:
      \la{
        \tossartstackterm{\srcHeap}{\ssaenvs}{\srcStore_1}{\cdot}{\srcBody_1}{\expr_{1.1}}
        \label{lemma:ssa:ectx:app:gen:2}
        \\
        \tossartstackterm{\srcHeap}{\ssaenvs}
              {\srcStore}{\srcStack}{\srcEvalCtx}
              {\ssaEvalCtx}
        \label{lemma:ssa:ectx:app:gen:3}
      }
  
      By induction hypothesis on 
      \ref{lemma:ssa:ectx:app:gen:a} and \ref{lemma:ssa:ectx:app:gen:3}
      (the lemma can easily be extended to evaluation contexts):
      \la{
        \tossartstackterm{\srcHeap}{\ssaenvs}
        {\srcStore}{\parens{\stackCons{\srcStack_0}{\stackCons{\toStack{\srcStore_0}{\srcEvalCtx_0}}{\srcStack}}}}
          {\srcEvalCtx}{\idxEctx{\ssaEvalCtx_0}{\ssaEvalCtx}}
        \label{lemma:ssa:ectx:app:gen:4}
      }

      By applying Rule \ssaectxstackcons on 
      \ref{lemma:ssa:ectx:app:gen:2}
      and 
      \ref{lemma:ssa:ectx:app:gen:4}:
      \la{
        \tossartstackterm{\srcHeap}{\ssaenvs}{\srcStore_1}
        {\parens{
            \stackCons{\srcStack_0}{
              \stackCons{\toStack{\srcStore_0}{\srcEvalCtx_0}}
                        {\stackCons{\srcStack}{\toStack{\srcStore}{\srcEvalCtx}}}}}}
                        {\srcBody_1}{\idxEctx{\ssaEvalCtx_0}{\idxEctx{\ssaEvalCtx}{\expr_{1.1}}}}
      }

      Which proves the wanted result.

  \end{itemize}

\end{proof}

\begin{lemma}[Translation Closed under Evaluation Context Application]\label{lemma:ssa:ectx:app}
If 
\begin{enumerate}[label=(\alph*)]
  \item
    $\tossartstackterm{\srcHeap}{\ssaenvs}{\srcStore}{\srcStack}{\srcEvalCtx}{\ssaEvalCtxTerm}$
    \label{lemma:ssa:ectx:app:a}
  \item $\tossartterm{\srcHeap}{\ssaenvs}{\srcStore}{\srcExpr}{\expr}$
\end{enumerate}
then
$\tossartstackterm{\srcHeap}{\ssaenvs}{\srcStore}{\srcStack}
    {\idxEctx{\srcEvalCtx}{\srcExpr}}
    {\idxEctx{\ssaEvalCtxTerm}{\expr}}$
\end{lemma}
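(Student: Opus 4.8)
The plan is to proceed by case analysis on the last rule used in derivation (a), i.e.\ the translation $\tossartstackterm{\srcHeap}{\ssaenvs}{\srcStore}{\srcStack}{\srcEvalCtx}{\ssaEvalCtxTerm}$. Since the focal entity being translated is an evaluation context, only two rules can conclude this judgment: \brackets{\ssaectxstackemp} (empty stack) and \brackets{\ssaectxstackcons} (non-empty stack). In both cases the work reduces to a single key sub-lemma, which I would state and prove first: if $\tossaevalctx{\srcHeap}{\ssaenvs}{\srcStore}{\srcEvalCtx}{\ssaEvalCtxTerm}$ and $\tossartterm{\srcHeap}{\ssaenvs}{\srcStore}{\srcExpr}{\expr}$, then $\tossartterm{\srcHeap}{\ssaenvs}{\srcStore}{\idxEctx{\srcEvalCtx}{\srcExpr}}{\idxEctx{\ssaEvalCtxTerm}{\expr}}$. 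In other words, the \emph{static} evaluation-context translation $\tossaevalctx{\cdot}{\cdot}{\cdot}{\cdot}{\cdot}$ is closed under hole-filling. This sub-lemma carries the real inductive content.

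I would prove the sub-lemma by induction on the structure of $\srcEvalCtx$ (equivalently, on the derivation of $\tossaevalctx{\srcHeap}{\ssaenvs}{\srcStore}{\srcEvalCtx}{\ssaEvalCtxTerm}$). In the base case $\srcEvalCtx = \srcEmpEvalctx$ we have $\ssaEvalCtxTerm = \empevalctx$, and since $\idxEctx{\srcEmpEvalctx}{\srcExpr} \equiv \srcExpr$ and $\idxEctx{\empevalctx}{\expr} \equiv \expr$, the goal is exactly premise (b). Each inductive case matches one evaluation-context translation rule with its corresponding runtime term-translation rule: for instance, the rule sending $\srcDotref{\srcEvalCtx'}{\srcFieldname}$ to $\dotref{\ssaEvalCtx'}{\fieldname}$ is discharged by applying the induction hypothesis to $\srcEvalCtx'$ and then reapplying the field-access rule of $\tossartterm{\cdot}{\cdot}{\cdot}{\cdot}{\cdot}$; method-call, object-construction, cast, variable-declaration, field-assignment, conditional and sequencing contexts are handled identically, reusing the runtime stack translation (with empty stack) for any already-translated sub-expressions or sub-statements that appear alongside the hole.

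With the sub-lemma in hand, the main statement follows. If derivation (a) ends in \brackets{\ssaectxstackemp}, then $\srcStack = \cdot$ and $\tossaevalctx{\srcHeap}{\ssaenvs}{\srcStore}{\srcEvalCtx}{\ssaEvalCtxTerm}$; the sub-lemma together with premise (b) gives $\tossartterm{\srcHeap}{\ssaenvs}{\srcStore}{\idxEctx{\srcEvalCtx}{\srcExpr}}{\idxEctx{\ssaEvalCtxTerm}{\expr}}$, and re-applying \brackets{\ssaexpstackemp} lifts this back to the empty-stack stack-term judgment, which is the goal. If derivation (a) ends in \brackets{\ssaectxstackcons}, the focal context (playing the role of $\srcEvalCtx_0$ in that rule) is translated under the empty stack, while the outer stack frame is translated by the second premise and is left untouched by hole-filling; I would apply the sub-lemma to the focal-context premise and premise (b), lift it via \brackets{\ssaexpstackemp}, and then reassemble with \brackets{\ssastackcons} using the unchanged outer premise. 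The resulting target is $\idxEctx{\ssaEvalCtx}{\idxEctx{\ssaEvalCtxTerm_0}{\expr}}$, which coincides with the desired $\idxEctx{\ssaEvalCtxTerm}{\expr}$ by associativity of evaluation-context composition (here $\ssaEvalCtxTerm = \idxEctx{\ssaEvalCtx}{\ssaEvalCtxTerm_0}$).

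The main obstacle is the breadth of the inner induction in the sub-lemma rather than any single hard step: the source evaluation-context grammar is large, and the cases that produce an SSA evaluation context (e.g.\ $\srcSeq{\srcEvalCtx'}{\srcStmt}$ translating to a nested $\ssactx$ structure, or the return/sequence forms constrained by Assumption~\ref{assum:stack:form}) require care to align the hole position across the two-level target syntax and to thread the correct SSA environment from $\ssaenvs$ through each subterm. Once the hole-filling sub-lemma is established uniformly across these forms, the outer case analysis and the associativity bookkeeping are routine.
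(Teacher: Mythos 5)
Your proposal is correct and takes essentially the same route as the paper, whose entire proof is ``by induction on the derivation of \ref{lemma:ssa:ectx:app:a}''; your case split over \ssaectxstackemp\ and \ssaectxstackcons, with the hole-filling sub-lemma for the $\tossaevalctx{\srcHeap}{\ssaenvs}{\srcStore}{\srcEvalCtx}{\ssaEvalCtxTerm}$ judgment handled by an inner structural induction and reassembly via \ssaexpstackemp\ and \ssastackcons, is just the natural unfolding of that induction.
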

\begin{proof}
  \emph{By induction on the derivation of 
  \ref{lemma:ssa:ectx:app:a}.}
\end{proof}


\begin{lemma}[Method Resolution]\label{lemma:ssa:method:res}
If 
\begin{enumerate}[label=(\alph*)]
  \item \tossaheap{\srcRtState}{\srcHeap}{\ssaHeap}
  \item \tossaval{\srcHeap}{\srcLoc}{\loc}
  \item \toString{\srcMethodname} = \toString{\methodname}
  \item \resolveMeth{\ssaHeap}{\loc}{\methodname}{\methodDefsSym}
\end{enumerate}
then:
\begin{enumerate}[label=(\alph*), resume]
  \item {\normalfont \srcResolveMeth{\srcHeap}{\srcLoc}{\srcMethodname}{\srcMethodDefsSym}} 
  \item \tossameth{\srcMethodDefsSym}{\methodDefsSym}
\end{enumerate}
\end{lemma}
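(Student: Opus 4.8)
The plan is to prove the lemma by induction following the structure of the method-resolution walk, exploiting the fact that the heap translation $\tossaheap{\srcRtState}{\srcHeap}{\ssaHeap}$ is \emph{structure preserving}. First I would establish, as a short structural sub-lemma obtained by inverting the heap- and object-translation rules, that $\tossaheap$ carries each source cell to a target cell of the same shape: plain objects $\heapObject{\srcLoc'}{\srcFieldDefsSym}$ go to $\heapObject{\loc'}{\fieldDefsSym}$, and class objects $\heapClassObject{\cname}{\srcLoc'}{\srcMethodDefsSym}$ go to $\heapClassObject{\cname}{\loc'}{\methodDefsSym}$ with the \emph{same} class name $\cname$ and method tables related by the class-object judgment $\tossagen{\srcMethodDefsSym}{\methodDefsSym}$. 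Since $\tossagen$ is the pointwise lifting of $\tossameth$ and rule \ssamethdecl forces $\toString{\srcMethodname}=\toString{\methodname}$ on each method, this judgment pairs up same-named methods and relates each such pair by $\tossameth$.

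Next I would recall that $\resolveMeth{\ssaHeap}{\loc}{\methodname}{\cdot}$ proceeds by looking up the object at $\loc$, following its class pointer to the governing class object, and then locating the definition named $\methodname$ in that class object's table, possibly recursing into the superclass recorded in the class object. I would invert assumption (d) along this walk. At each indirection the structure-preservation sub-lemma together with the value translation (b) produces the matching source cell: the object at $\srcLoc$ in $\srcHeap$ has a class pointer whose target carries the same $\cname$ and a source table $\srcMethodDefsSym$ with $\tossagen{\srcMethodDefsSym}{\methodDefsSym}$. The name-equality hypothesis (c) then guarantees that the source selection picks exactly the translation-partner of the method picked in the target table, so this step simultaneously yields (e) $\srcResolveMeth{\srcHeap}{\srcLoc}{\srcMethodname}{\srcMethodDefsSym}$ and (f) $\tossameth{\srcMethodDefsSym}{\methodDefsSym}$ for the resolved definitions.

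The hard part will be the bookkeeping of locations. Because the heap translation allocates \emph{fresh} target locations ($\fresh{\loc}$), the relation $\tossaheap$ fixes $\srcHeap$ and $\ssaHeap$ only up to a location renaming, and $\tossaval{\srcHeap}{\srcLoc}{\loc}$ (rule \ssaloc) re-derives this correspondence per location. To chain the class-pointer and superclass indirections I must know that this renaming is a well-defined injection that \emph{commutes with heap lookup}, i.e. that following a pointer in $\ssaHeap$ lands on the translation of the cell reached by following the corresponding pointer in $\srcHeap$. I would factor this out as a consistency invariant on the location renaming induced by $\tossaheap$ --- proved by an auxiliary induction over the heap, or taken as part of the well-formedness of translated configurations --- after which every case (plain object, class object, the base case where the method is found locally, and the inductive superclass case) goes through mechanically.
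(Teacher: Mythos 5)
The paper does not actually prove this lemma: like its neighbors (Canonical Forms, Global Environment Substitution, Translation Closed under Evaluation Context Composition), Lemma~\ref{lemma:ssa:method:res} is stated in the appendix with no proof text, implicitly treated as a routine induction. Your plan is a correct elaboration of that implicit argument, and it uses exactly the ingredients the paper's rules supply: inverting \ssaheapbnd\ and the object-translation judgment shows that $\tossaheap{\srcRtState}{\srcHeap}{\ssaHeap}$ sends plain objects to plain objects and class objects $\heapClassObject{\cname}{\srcLoc'}{\srcMethodDefsSym}$ to $\heapClassObject{\cname}{\loc'}{\methodDefsSym}$ with the same $\cname$ and $\tossagen{\srcMethodDefsSym}{\methodDefsSym}$; the pointwise lifting of \tossameth\ plus the string-equality side condition of \ssamethdecl\ then pairs same-named definitions, and since the core language excludes overriding (method names are distinct from those in parent classes), the selection along the superclass walk is unambiguous, giving conclusions (e) and (f) simultaneously at the step where resolution terminates.

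Your third paragraph identifies the one point where the paper's relational formulation is genuinely loose, and your fix is the right one. Because \ssaheapbnd\ demands $\fresh{\loc}$, the relation $\tossaheap{\srcRtState}{\srcHeap}{\ssaHeap}$ is not functional, and hypothesis (b), derived via \ssaloc\ from a \emph{singleton}-binding translation, is a priori a separate derivation from (a); nothing in the statement forces the object that (b)'s derivation attaches to $\loc$ to coincide with $\idxMapping{\ssaHeap}{\loc}$. To chain the class-pointer and superclass indirections you therefore need precisely the coherence invariant you propose: that all location correspondences are read off one fixed derivation of (a), inducing an injective renaming that commutes with heap lookup. The paper silently assumes this (its consistency lemmas always invoke (a) and (b) from the same configuration-translation derivation, where the invariant holds by construction), so making it explicit as an auxiliary induction over the heap is a strengthening of, not a departure from, the intended argument. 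No gap beyond that; the proposal goes through.
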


\begin{lemma}[Value Monotonicity]\label{lemma:value:monotonicity}

If    
\begin{enumerate}[label=(\alph*)]
  \item $\validConf{\pair{\srcRtState}{\srcExprOrBd}}$ 
  \item $\tossartconf{\ssaenvs}{\srcRtState}{\srcExprOrBd}{\ssaRtState}{\val}$
        \label{lemma:value:monotonicity:b}
\end{enumerate}
then there exist $\srcStore'$ and $\srcExprOrBd'$ \st:
\begin{enumerate}[label=(\alph*), resume]
  \item {\normalfont $
      \stepsStarFour{\srcRtState}{\srcExprOrBd}{\srcRtState'}{\srcExprOrBd'}
        $}
  \item $\tossartconf{\ssaenvs}
             {\srcRtState'}{\srcExprOrBd'}{\ssaRtState}{\val}$
  \item {\normalfont $\srcExprOrBd' \equiv 
    \begin{cases}
      \srcReturn{\srcVal} & \emph{if}\quad\srcExprOrBd \equiv \srcBody  \\
      \srcVal             & \emph{otherwise}
  \end{cases}$}
  \item If $\accessState{\srcRtState}{\srcStack} = \cdot$ 
        then $\accessState{\srcRtState'}{\srcStore} = \accessState{\srcRtState}{\srcStore}$
\end{enumerate}
where
$\srcRtState' \equiv 
\quadruple{\accessState{\srcRtState}{\srcSignatures}}{\srcStore'}
          {\cdot}{\accessState{\srcRtState}{\srcHeap}}$
\end{lemma}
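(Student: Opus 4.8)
The plan is to proceed by induction, exploiting the observation that the hypothesis $\tossartconf{\ssaenvs}{\srcRtState}{\srcExprOrBd}{\ssaRtState}{\val}$ --- that the \emph{entire} source configuration translates to a bare value $\val$ --- is extremely restrictive. Intuitively, \ssaLang\ has run ahead of \srcLang: the target has already collapsed to a value, so the source merely has to perform the administrative reductions (variable dereferences, elimination of \srcSkip-prefixes, and returns through pending frames) that \ssaLang\ discharged eagerly via substitution. Since none of these reductions touch the heap, the resulting state $\srcRtState'$ keeps $\accessState{\srcRtState}{\srcHeap}$ unchanged, as required by the statement.

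First I would peel apart the configuration translation. Applying the stack-translation rules \ssaexpstackemp\ and \ssastackcons, the hypothesis factors into a translation of the active term under the innermost store and empty sub-stack producing some $\expr_0$, composed with the translations of the enclosing frames into a single evaluation context $\ssaEvalCtx$, so that $\idxEctx{\ssaEvalCtx}{\expr_0} \equiv \val$. Because $\val$ is a value and $\ssaEvalCtx$ is an evaluation context, this forces (i) $\expr_0 \equiv \val$, i.e.\ the active term already translates to a value, and (ii) every pending frame to be value-transparent. By Assumption~\ref{assum:stack:form} each frame context is a return, and only the trivial form $\srcReturn{\srcEmpEvalctx}$ collapses to an identity SSA context under translation; hence all pending frames are trivial returns.

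Next I would handle the active term using Canonical Forms (Lemma~\ref{lemma:canonical:forms}) together with the translation rules \ssartval\ and \ssartvarref. An active term translating to a value is necessarily a value $\srcVal$, a variable $\srcEvar$, or a \srcSkip-prefixed body. If it is already $\srcVal$, no reduction is needed; if it is a variable, a single \srcRcVar\ step retrieves its stored value, and unfolding the definition of $\toSubstname$ confirms the retrieved $\srcVal$ is exactly the one from which $\val$ was built; if it is \srcSkip-prefixed, one \srcRcSkip\ step strips the \srcSkip\ and the induction hypothesis applies to the structurally smaller residual. Throughout, Lemma~\ref{lemma:ssa:store:translation} and Heap and Store Weakening (Lemma~\ref{lemma:ssa:ectx:heap:weaken}) let me re-establish the translation of the new term against the same $\val$. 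Once the active term is a value, I discharge the frames: since each is a trivial return, repeated \srcRcRet\ steps cascade the value upward, popping one frame and restoring the caller store per step, again without modifying the heap, until the stack is empty. The final term is $\srcReturn{\srcVal}$ when $\srcExprOrBd$ was a body and $\srcVal$ otherwise, matching conclusion (e), and conclusion (f) holds because an initially empty stack triggers no return and leaves the store intact.

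The main obstacle will be the stack-discharge step. Two things need care: first, rigorously proving that translation-to-a-value forces every frame into the trivial-return shape --- this is where the interplay between Assumption~\ref{assum:stack:form} and the way $\srcReturn{\cdot}$ collapses under SSA translation does the real work, since a non-trivial frame would translate to a context headed by a let-binding or a $\Phi$-conditional that cannot fill to a value; and second, verifying that the cascade of \srcRcRet\ steps preserves the translation invariant as the store is repeatedly swapped for the caller's, for which I would lean on Lemma~\ref{lemma:ssa:store:translation} and closure under evaluation-context application (Lemma~\ref{lemma:ssa:ectx:app}) to re-derive the translation at each pop. A suitable termination measure is the lexicographic pair of stack depth and active-term size, which strictly decreases at every reduction.
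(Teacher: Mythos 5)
Your plan is correct and takes essentially the same approach as the paper, whose entire proof is the one-line claim ``by induction on the structure of the derivation of the translation judgment (b)'': your case analysis (active term already a value, variable dereference via \srcRcVar\ and the definition of $\toSubstname$, \srcSkip-elimination, and frames forced into trivial returns by Assumption~\ref{assum:stack:form} since any other frame translates to a let- or $\Phi$-headed context that cannot fill to a value) is precisely the content such an induction must supply. Your lexicographic measure on stack depth and active-term size is an interchangeable, purely cosmetic variant of structural induction on the derivation.
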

\begin{proof}
  \emph{By induction on the structure of the derivation \ref{lemma:value:monotonicity:b}.}
\end{proof}

\begin{lemma}[Top-Level Reduction]\label{lemma:top:level:reduction}
If 
\begin{itemize}
\item []
$\stepsFour{\quadruple{\srcSignatures}{\srcStore }{\srcStack}{\srcHeap}}{\srcExprOrBd}
          {\quadruple{\srcSignatures}{\srcStore'}{\srcStack'}{\srcHeap'}}{\srcExprOrBd'}$
\end{itemize}
then for a stack $\srcStack_0$ it holds that:
\begin{itemize}
\item []
  $\stepsFour{\quadruple{\srcSignatures}{\srcStore}
                        {\parens{\stackCons{\srcStack_0}{\srcStack}}}{\srcHeap}}
             {\srcExprOrBd}
             {\quadruple{\srcSignatures}{\srcStore'}
                        {\parens{\stackCons{\srcStack_0}{\srcStack'}}}{\srcHeap'}}
             {\srcExprOrBd'}$
\end{itemize}
\end{lemma}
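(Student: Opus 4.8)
The plan is to proceed by a straightforward case analysis on the last rule used to derive the step $\stepsFour{\quadruple{\srcSignatures}{\srcStore}{\srcStack}{\srcHeap}}{\srcExprOrBd}{\quadruple{\srcSignatures}{\srcStore'}{\srcStack'}{\srcHeap'}}{\srcExprOrBd'}$. The guiding observation is that every reduction rule of \srcLang inspects and alters only the \emph{top} frame of the call stack, never anything below it; consequently, prefixing an arbitrary stack $\srcStack_0$ at the bottom cannot interfere with the reduction, and the transformations of the store and heap components are entirely independent of $\srcStack_0$. I would organize the cases into three groups according to their effect on the stack.

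First come the \emph{stack-preserving} rules, namely \srcRcVal, \srcRcVar, \srcRcDotRef, \srcRcNew, \srcRcCast, \srcRcVarDecl, \srcRcDotAsgn, \srcRcAssign, \srcRcIte, \srcRcSkip, and the contextual rule \srcRcEvalCtx, all of which satisfy $\srcStack' = \srcStack$. For each, I simply re-instantiate the very same rule with conclusion stack $\parens{\stackCons{\srcStack_0}{\srcStack}}$ in place of $\srcStack$; since none of their side conditions on $\srcStore$, $\srcHeap$, or the expression mention the stack, the derivation is reproduced verbatim and yields the after-stack $\parens{\stackCons{\srcStack_0}{\srcStack}}$, as required. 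The case of \srcRcEvalCtx deserves a note: its premise steps the redex under the \emph{empty} stack $\cdot$ and is therefore completely independent of $\srcStack_0$, so I reuse that sub-derivation unchanged and merely re-apply \srcRcEvalCtx with the extended outer stack. In particular, no induction into the premise is needed, and plain case analysis suffices.

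The two remaining cases touch the top frame and are discharged by associativity of the stack-append operator. In \srcRcCall the stack grows as $\srcStack \mapsto \stackCons{\srcStack}{\toStack{\srcStore}{\srcEvalCtx}}$; under the prefixed stack I compute $\stackCons{\srcStack_0}{\parens{\stackCons{\srcStack}{\toStack{\srcStore}{\srcEvalCtx}}}} = \stackCons{\parens{\stackCons{\srcStack_0}{\srcStack}}}{\toStack{\srcStore}{\srcEvalCtx}}$, which is exactly the configuration produced by re-applying \srcRcCall with base stack $\parens{\stackCons{\srcStack_0}{\srcStack}}$. In \srcRcRet the premise matches the stack as $\srcStack = \stackCons{\srcStack_1}{\toStack{\srcStore_1}{\srcEvalCtx_1}}$ and pops it to $\srcStack_1$ while restoring store $\srcStore_1$; prefixing gives $\stackCons{\srcStack_0}{\srcStack} = \stackCons{\parens{\stackCons{\srcStack_0}{\srcStack_1}}}{\toStack{\srcStore_1}{\srcEvalCtx_1}}$, so the same top frame $\toStack{\srcStore_1}{\srcEvalCtx_1}$ is exposed, and applying \srcRcRet pops it, leaving $\parens{\stackCons{\srcStack_0}{\srcStack_1}}$ --- the desired after-stack --- while restoring the same store $\srcStore_1$ and the same heap.

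Since each case reconstructs the step under the prefixed stack, the lemma follows. I do not anticipate a substantive obstacle: this is an invariance (frame) property of the operational semantics, and the only real care required lies in re-associating the stack-append operator in the \srcRcCall and \srcRcRet cases and in noticing that \srcRcEvalCtx fixes the empty stack in its premise, so that sub-derivation never has to be replayed under the extended stack.
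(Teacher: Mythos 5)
Your proposal is correct and takes essentially the same route as the paper, whose proof is just the one-liner ``by induction on the structure of the input reduction'': since \srcRcEvalCtx{} is the only rule with a reduction premise and that premise is pinned to the empty stack, the induction degenerates to precisely the rule-by-rule case analysis you give. Your two points of substance --- re-associating the stack-append in the \srcRcCall{} and \srcRcRet{} cases, and observing that the \srcRcEvalCtx{} premise is reused verbatim rather than replayed under the extended stack --- are exactly what the paper's elided argument requires.
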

\begin{proof}
  \emph{By induction on the structure of the input reduction.}
\end{proof}

\begin{lemma}[Empty Stack Consistency]\label{lemma:ssa:empty:stack:cons:full}
If
\begin{enumerate}[label=(\alph*)]
  \item $\tossartconf{\ssaenvs}{\srcRtState}{\srcExprOrBd}{\ssaRtState}{\expr}$
    \label{ssa:es:a}
  \item $\accessState{\srcRtState}{\srcStack} = \cdot$
    \label{ssa:es:b}
  \item \stepsFour{\ssaRtState}{\ssaexpr}{\ssaRtState'}{\ssaexpr'} 
    \label{ssa:es:c}
\end{enumerate}
then there exist 
$\srcRtState'$
and
$\srcExprOrBd'$
\st:
\begin{enumerate}[label=(\alph*), resume]
  \item \stepsStarFour{\srcRtState}{\srcExprOrBd}{\srcRtState'}{\srcExprOrBd'}, 
      \label{ssa:es:d}
  \item $\tossartconf{\ssaenvs}{\srcRtState'}{\srcExprOrBd'}{\ssaRtState'}{\expr'}$
      \label{ssa:es:e}
  \item \label{ssa:es:f} 
    \begin{enumerate}
    \item If 
      {\normalfont $\srcExprOrBd \equiv 
        \idxEctx{\srcEvalCtx}{\srcMethcall{\srcLoc}{\srcMethodname}{\many{\srcVal}}}$}
        then: \begin{enumerate}
          \item $\accessState{\srcRtState'}{\srcStack} = 
                 \toStack{\accessState{\srcRtState}{\srcStore}}{\srcEvalCtx}$
          \item $\accessState{\srcRtState'}{\srcHeap} = \accessState{\srcRtState}{\srcHeap}$
          \item $\exists \srcBody'$ \st $\srcExprOrBd' \equiv \srcBody'$
          \item $\ssaRtState' = \ssaRtState$
        \end{enumerate}
    \item Otherwise: \begin{enumerate}
        \item $\accessState{\srcRtState'}{\srcStack} = \cdot$
        \item $\accessState{\srcRtState'}{\srcHeap} \supseteq \accessState{\srcRtState}{\srcHeap}$
        \item $\accessState{\srcRtState'}{\srcStore} \supseteq \accessState{\srcRtState}{\srcStore}$
        \item If   $\exists \srcExpr $ \st $\srcExprOrBd  \equiv \srcExpr $ 
              then $\exists \srcExpr'$ \st $\srcExprOrBd' \equiv \srcExpr'$
        \item If   $\exists \srcBody $ \st $\srcExprOrBd  \equiv \srcBody $ 
              then $\exists \srcBody'$ \st $\srcExprOrBd' \equiv \srcBody'$
      \end{enumerate}
  \end{enumerate}
\end{enumerate}
\end{lemma}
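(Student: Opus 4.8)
The plan is to prove the lemma by induction on the derivation of the \ssaLang\ reduction in hypothesis~(c), doing a case analysis on the last reduction rule applied. Since the source stack is empty by~(b), hypothesis~(a) must decompose---through rule \ssaexpstackemp---into a signature translation, a heap translation, and a \emph{term} translation $\tossartterm{\accessState{\srcRtState}{\srcHeap}}{\ssaenvs}{\accessState{\srcRtState}{\srcStore}}{\srcExprOrBd}{\expr}$. In every case the strategy is the same: from the shape of $\expr$ forced by the reduction rule, invert this term translation---appealing to the Canonical Forms lemma (Lemma~\ref{lemma:canonical:forms}) whenever the source shape is not syntactically immediate---to recover the matching source redex, fire the corresponding \srcLang\ reduction for conclusion~(d), and re-establish the translation relation~(e) with the heap-, store-, and substitution-level helper lemmas. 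The congruence rule \rcevalctx\ is dispatched by factoring $\expr \equiv \idxEctx{\ssaEvalCtx}{\cdot}$ and transferring the context to the source via Lemma~\ref{lemma:ssa:eval:ctx}, which yields a source evaluation context $\srcEvalCtx$ enclosing the recovered redex; the redex then falls into one of the cases below.

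First I would dispatch the purely local cases, all of which produce outcome~(f)(ii). For \rfield\ and \rcast\ the source steps by \srcRcDotRef\ and \srcRcCast\ without touching the state, and the resulting value (resp.\ location) matches by the value-translation rules, leaving store and heap unchanged. For \rnew\ and \rdotasgn\ the source fires \srcRcNew\ / \srcRcDotAsgn, and I would re-establish the heap translation after the allocation / field update, using Lemma~\ref{lemma:ssa:ectx:heap:weaken} to weaken the surrounding evaluation context over the grown heap and store. When these redexes occur under a nontrivial context the source step is wrapped by \srcRcEvalCtx\ (legitimate, since each of these steps preserves the empty-stack premise of that rule) and reassembled with Lemma~\ref{lemma:ssa:ectx:app}.

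The substitution-driven cases are the technically delicate ones, since \ssaLang\ substitutes values eagerly while \srcLang\ keeps them in the store and dereferences them only in evaluation position (rule \srcRcVar). For \rletin\ I would recover a variable declaration or assignment, fire \srcRcVarDecl\ / \srcRcAssign\ to extend the store, and match the \ssaLang\ substitution $\esubst{\val}{\evar}$ against the store update by re-running the translation through the $\toSubstname$ machinery and Lemma~\ref{lemma:ssa:store:translation}. For \rletif\ I would use Lemma~\ref{lemma:canon:forms:c} to recover the source conditional, fire \srcRcIte\ to select branch $\index$, and reconcile the $\Phi$-variable substitution $\esubst{\many{\evar_{\index}}}{\many{\evar}}$ with the translation of the selected branch using the Global Environment Substitution lemma (Lemma~\ref{lemma:ssa:stmt:subst}); these again land in~(f)(ii).

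The main obstacle is the method-invocation case \rinvoke, the unique source of outcome~(f)(i) and the only case exhibiting a genuine structural mismatch: \ssaLang\ performs a single substitution step into the method body, whereas \srcLang\ must \emph{push a frame} via \srcRcCall, leaving its stack non-empty while its heap---and the entire \ssaLang\ state, so that $\ssaRtState' = \ssaRtState$---is unchanged. Here I would recover the receiver, arguments, and enclosing evaluation context with Lemma~\ref{lemma:ssa:eval:ctx} and Lemma~\ref{lemma:canon:forms:b}, transfer method resolution across the two languages with Lemma~\ref{lemma:ssa:method:res}, and---after firing \srcRcCall, which captures $\toStack{\accessState{\srcRtState}{\srcStore}}{\srcEvalCtx}$ on the stack---show that the resulting stacked configuration (the method body as the current term, with the saved store and continuation translated into a nested evaluation context) translates back to the already-substituted \ssaLang\ body. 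That reconstruction is exactly what Lemma~\ref{lemma:ssa:ectx:app:gen} (translation closed under stack extension) is designed to supply, and verifying that the eager argument substitution on the \ssaLang\ side coincides with the $\toSubstname$-substitution induced by the freshly built callee store is the crux of the whole argument.
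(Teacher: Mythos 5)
Your proposal is correct and follows essentially the same route as the paper's proof: induction on the \ssaLang\ reduction with inversion through \ssaexpstackemp, the same deployment of the canonical-forms, method-resolution (Lemma~\ref{lemma:ssa:method:res}), store-translation (Lemma~\ref{lemma:ssa:store:translation}), and global-environment-substitution (Lemma~\ref{lemma:ssa:stmt:subst}) lemmas in the same roles, and the same identification of reconciling the eager \ssaLang\ substitution with the $\toSubstname$-induced substitution as the crux of \rinvoke. The only organizational difference is that you fold method calls under a nontrivial context into the \rinvoke\ case via Lemma~\ref{lemma:ssa:ectx:app:gen}, whereas the paper handles that situation inside the \rcevalctx\ case through the induction hypothesis plus Lemma~\ref{lemma:ssa:append:ectx} (the top-level \rinvoke\ case builds the stacked translation directly with rule \ssastackcons\ and an empty enclosing context) --- an equivalent packaging of the same argument.
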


\begin{proof}

  Fact~\ref{ssa:es:a} has the form:
  \la{
    \tossartconf
      {\ssaenvs}
      {\srcRtState}{\srcExprOrBd}
      {\pair{\ssaSignatures}{\ssaHeap}}{\expr}
      \label{ssa:es:29}
  }

  Because of fact~\ref{ssa:es:b}:
  \la{
    \srcRtState \equiv \quadruple{\srcSignatures}{\srcStore}{\cdot}{\srcHeap}
  }

  By inverting Rule~\ssartconf on \ref{ssa:es:29}:
  \la{
    \tossartsigs{\ssaenvs}{\srcSignatures}{\ssaSignatures}
    \label{ssa:es:38}
    \\
    \tossaheap{\srcRtState}{\srcHeap}{\ssaHeap}
    \label{ssa:es:39}
    \\
    \tossartstackterm{\srcHeap}{\ssaenvs}
                {\srcStore}{\cdot}{\srcExprOrBd}{\expr}
    \label{ssa:es:23}
  }

  By inverting \ssaexpstackemp on \ref{ssa:es:82}:
  \la{
    \tossartterm{\srcHeap}{\ssaenvs}
      {\srcStore}{\srcExprOrBd}
      {\expr}
    \label{ssa:es:110}
  }

  Suppose $\srcExprOrBd$ is a value. By Rules \ssaconst and \ssaloc, 
  $\expr$ is also a value: a contradiction because of \ref{ssa:es:c}. 
  Hence:
  \la{
    \notaval{\srcExprOrBd}
    \label{ssa:es:84}
  }
  



  We proceed by induction on the structure of reduction \ref{ssa:es:c}:
  \begin{itemize}
    
    %
    %
    \item \brackets{\rcevalctx}

      \la{
        \stepsFour{\ssaRtState}
                  {\idxEctx{\ssaEvalCtx_0}{\expr_0}}
                  {\ssaRtState'}
                  {\idxEctx{\ssaEvalCtx_0}{\expr_0'}}
        \label{ssa:es:28}
      }

      By inverting \rcevalctx on \ref{ssa:es:28}:
      \la{
        \stepsFour{\ssaRtState}{\expr_0}{\ssaRtState'}{\expr_0'}
        \label{ssa:es:24}
      }

      Fact~\ref{ssa:es:110} is of the form:
      \la{      
        \tossartterm{\srcHeap}{\ssaenvs}
          {\srcStore}{\srcExprOrBd}{\idxEctx{\ssaEvalCtx_0}{\expr_0}}
        \label{ssa:es:82}
      }

      By Lemma~\ref{lemma:ssa:eval:ctx} on \ref{ssa:es:82}:
      \la{
        \srcExprOrBd \equiv \idxEctx{\srcEvalCtx_0}{\srcExpr_0}
        \label{ssa:es:83}
        \\
        \tossaevalctx{\srcHeap}{\ssaenvs}{\srcStore}{\srcEvalCtx_0}{\ssaEvalCtx_0}
        \label{ssa:es:49}
        \\
        \tossartterm{\srcHeap}{\ssaenvs}{\srcStore}{\srcExpr_0}{\expr_0}
        \label{ssa:es:31}
      }

      By applying Rule~\ssaexpstackemp on \ref{ssa:es:31}:
      \la{
        \tossartstackterm{\srcHeap}{\ssaenvs}{\srcStore}{\cdot}{\srcExpr_0}{\expr_0}  
        \label{ssa:es:111}
      }

      By applying Rule \ssartconf on \ref{ssa:es:38}, \ref{ssa:es:39} and \ref{ssa:es:111}:
      \la{
        \tossartconf{\ssaenvs}{\srcRtState}{\srcExpr_0}{\ssaRtState}{\expr_0}
        \label{ssa:es:112}
      }

      By induction hypothesis using \ref{ssa:es:112}, \ref{ssa:es:b} and \ref{ssa:es:24}:
      \la{
        \stepsFour
          {\quadruple{\srcSignatures}{\srcStore}{\cdot}{\srcHeap}}
          {\srcExpr_0}
          {\quadruple{\srcSignatures}{\srcStore'}{\srcStack'}{\srcHeap'}}
          {\srcExprOrBd_0'}
          \label{ssa:es:41}
        \\
        \tossartconf
          {\ssaenvs}
          {\quadruple{\srcSignatures}{\srcStore'}{\srcStack'}{\srcHeap'}}
          {\srcExprOrBd_0'}
          {\ssaRtState'}
          {\expr_0'}
          \label{ssa:es:42}
      }

      We examine cases on the form of $\srcExpr_0$:
      \begin{itemize}

        \item Case {\normalfont 
            $\srcExpr_0 \equiv
              \idxEctx{\srcEvalCtx_1}{\srcMethcall{\srcLoc}{\srcMethodname}{\many{\srcVal}}}$
            }:
          \la{
            \srcStack'    & = \concatenate{\srcStore}{\srcEvalCtx_1}
            \label{ssa:es:43}
            \\
            \srcHeap'     & = \srcHeap
            \label{ssa:es:99}
            \\
            \srcExprOrBd_0' & = \srcBody'
            \label{ssa:es:44}
            \\
            \ssaRtState' & = \ssaRtState
            \label{ssa:es:86}
          }
          For some method body $\srcBody'$.
          So \ref{ssa:es:42} becomes:
          \la{
            \tossartconf
              {\ssaenvs}
              {\quadruple{\srcSignatures}{\srcStore'}
                         {\parens{\concatenate{\srcStore}{\srcEvalCtx_1}}}
                         {\srcHeap}}
              {\srcBody'}
              {\ssaRtState}{\expr_0'}
              \label{ssa:es:66}
          }

          By inverting rule \srcRcCall on \ref{ssa:es:41}:
          \la{
            \srcResolveMeth
              {\srcHeap}{\srcLoc}{\srcMethodname}
              {\srcMethdefUnty{\srcMethodname}{\srcEvar}{\srcBody'}}
            \label{ssa:es:87}
            \\
            \srcStore' = \concatenate{\mappings{\srcEvar}{\srcVal}}{\mapping{\srcThis}{\srcLoc}}
            \label{ssa:es:88}
            \\
            \srcStack_0' = \toStack{\srcStore}{\srcEvalCtx_1}
            \label{ssa:es:89}
          }

          By applying rule \srcRcCall using 
          \ref{ssa:es:87}, 
          \ref{ssa:es:88} and 
          $\srcStack' = \toStack{\srcStore}{\idxEctx{\srcEvalCtx_0}{\srcEvalCtx_1}}$
          on 
          $\pair{\srcRtState}{\srcExprOrBd} \equiv 
            \pair{\quadruple{\srcSignatures}{\srcStore}{\cdot}{\srcHeap}}
                  {\idxEctx{\parens{\idxEctx{\srcEvalCtx_0}{\srcEvalCtx_1}}}
                           {\srcMethcall{\srcLoc}{\srcMethodname}{\many{\srcVal}}}}$:
          \la{
            \stepsFour
              {\quadruple{\srcSignatures}{\srcStore}{\cdot}{\srcHeap}}
              {\idxEctx{\parens{\idxEctx{\srcEvalCtx_0}{\srcEvalCtx_1}}}
                       {\srcMethcall{\srcLoc}{\srcMethodname}{\many{\srcVal}}}}
              {\quadruple{\srcSignatures}
                         {\srcStore'}
                         {\parens{\concatenate{\srcStore}{\idxEctx{\srcEvalCtx_0}{\srcEvalCtx_1}}}}
                         {\srcHeap}
              }
              {\srcBody'}              
            \label{ssa:es:94}
          }
          Which proves \ref{ssa:es:d}.
          By inverting Rule \ssartconf on \ref{ssa:es:66}:
          \la{
            \tossaheap{\srcRtState'}{\srcHeap}{\ssaHeap}
            \label{ssa:es:93}
            \\
            \tossartstackterm{\srcHeap}{\ssaenvs}{\srcStore'}
                        {\parens{\concatenate{\srcStore}{\srcEvalCtx_1}}}
                        {\srcBody'}{\expr_0'}
            \label{ssa:es:90}
          }

          From Lemma~\ref{lemma:ssa:append:ectx} on \ref{ssa:es:49} and \ref{ssa:es:90}:
          \la{
            \tossartstackterm{\srcHeap}{\ssaenvs}{\srcStore'}
                        {\parens{\concatenate{\srcStore}{\idxEctx{\srcEvalCtx_0}{\srcEvalCtx_1}}}}
                        {\srcBody'}{\idxEctx{\ssaEvalCtx_0}{\expr_0'}}
            \label{ssa:es:91}
          }

          By applying rule \ssartconf using \ref{ssa:es:38}, \ref{ssa:es:93} and \ref{ssa:es:91}:
          \la{
            \tossartconf
              {\ssaenvs}
              {\quadruple{\srcSignatures}
                         {\srcStore'}
                         {\parens{\concatenate{\srcStore}{\idxEctx{\srcEvalCtx_0}{\srcEvalCtx_1}}}}
                         {\srcHeap}}
              {\srcBody'}
              {\ssaRtState}{\idxEctx{\ssaEvalCtx_0}{\expr_0'}}
              \label{ssa:es:45}
          }
          Which proves \ref{ssa:es:e}.
          By \ref{ssa:es:83} and the current case:
          \la{
            \srcExprOrBd \equiv 
            \idxEctx{\parens{\idxEctx{\srcEvalCtx_0}{\srcEvalCtx_1}}}
                    {\srcMethcall{\srcLoc}{\srcMethodname}{\many{\srcVal}}}
            \label{ssa:es:95}
          }

          By \ref{ssa:es:94} and \ref{ssa:es:45}:
          \la{
            \accessState{\srcRtState'}{\srcStack} = 
              \concatenate{\srcStore}{\idxEctx{\srcEvalCtx_0}{\srcEvalCtx_1}}
            \label{ssa:es:96}
            \\
            \srcExprOrBd' = \srcBody'
            \label{ssa:es:97}
            \\
            \ssaRtState' = \ssaRtState
            \label{ssa:es:98}
          }
          
          By \ref{ssa:es:96}, \ref{ssa:es:99}, 
          \ref{ssa:es:97} and 
          \ref{ssa:es:98} we prove \ref{ssa:es:f}.

        \item All remaining cases:
          \la{
            \srcStack'   \equiv \cdot
            \label{ssa:es:46}
            \\
            \srcHeap' \supseteq \srcHeap
            \label{ssa:es:101}
            \\
            \srcStore' \supseteq \srcStore
            \label{ssa:es:105}
            \\
            \srcExprOrBd_0' \equiv \srcExpr_0'
            \label{ssa:es:47}
          }

          So \ref{ssa:es:41} and \ref{ssa:es:42} become:
          \la{
            \stepsFour
              {\quadruple{\srcSignatures}{\srcStore}{\cdot}{\srcHeap}}
              {\srcExpr_0}
              {\quadruple{\srcSignatures}{\srcStore'}{\cdot}{\srcHeap'}}
              {\srcExpr_0'}
            \label{ssa:es:100}
            \\         
            \tossartconf
              {\ssaenvs}
              {\quadruple{\srcSignatures}{\srcStore'}{\cdot}{\srcHeap'}}
              {\srcExpr_0'}
              {\ssaRtState'}{\expr_0'}
              \label{ssa:es:103}
          }

          By applying Rule~\srcRcEvalCtx using \ref{ssa:es:100}:
          \la{
            \stepsFour
              {\quadruple{\srcSignatures}{\srcStore}{\cdot}{\srcHeap}}
              {\idxEctx{\srcEvalCtx_0}{\srcExpr_0}}
              {\quadruple{\srcSignatures}{\srcStore'}{\cdot}{\srcHeap'}}
              {\idxEctx{\srcEvalCtx_0}{\srcExpr_0'}}
          }
          Which proves \ref{ssa:es:d} and \ref{ssa:es:f}.
          By inverting Rules~\ssartconf and \ssaexpstackemp on \ref{ssa:es:103}:
          \la{
            \tossartterm{\srcHeap'}{\ssaenvs}{\srcStore'}{\srcExpr_0}{\expr_0}
            \label{ssa:es:104}
          }
          
          From Lemma~\ref{lemma:ssa:ectx:heap:weaken} using 
          \ref{ssa:es:49}, \ref{ssa:es:101} and \ref{ssa:es:105}:
          \la{
            \tossaevalctx{\srcHeap'}{\ssaenvs}{\srcStore'}{\srcEvalCtx_0}{\ssaEvalCtx_0}
            \label{ssa:es:106}
          }

          From Lemma~\ref{lemma:ssa:ectx:app} on \ref{ssa:es:104} and \ref{ssa:es:106}:
          \la{
            \tossartterm{\srcHeap'}{\ssaenvs}{\srcStore'}
              {\idxEctx{\srcEvalCtx_0}{\srcExpr_0}}
              {\idxEctx{\ssaEvalCtx_0}{\expr_0}}
            \label{ssa:es:107}
          }

          By inverting rule \ssartconf on \ref{ssa:es:103}:
          \la{
            \tossaheap{\srcRtState'}{\srcHeap'}{\ssaHeap'}
            \label{ssa:es:108}
          }

          By Rule \ssartconf using \ref{ssa:es:38}, \ref{ssa:es:107} and \ref{ssa:es:108}:
          \la{
            \tossartconf
              {\ssaenvs}
              {\quadruple{\srcSignatures}{\srcStore'}{\cdot}{\srcHeap'}}
              {\idxEctx{\srcEvalCtx_0}{\srcExpr_0'}}
              {\pair{\ssaSignatures}{\ssaHeap'}}
              {\idxEctx{\ssaEvalCtx_0}{\expr_0'}}
          }
          Which proves \ref{ssa:es:e}.

      \end{itemize}

    %
    %
    
    \item \brackets{\rinvoke}:
      \la{
        \stepsFour
          {\ssaRtState}
          {\methcall{\loc}{\methodname}{\many{\val}}}
          {\ssaRtState}
          {\appsubst{\tsubsttwo{\many{\val}}{\many{\evar}}{\loc}{\ethis}}{\expr_0}}
          \label{ssa:es:11}
      }

      Where by inverting \rinvoke on \ref{ssa:es:11}:
      \la{
        \resolveMeth{\ssaHeap}{\loc}{\methodname}
                    {\parens{\methdefUnty{\methodname}{\evar}{\expr_0}}}  
        \label{ssa:es:12}
      }

      Fact~\ref{ssa:es:23} is of the form:
      \la{
        \tossartstackterm{\srcHeap}{\ssaenvs}
          {\srcStore}{\cdot}{\srcExprOrBd}
          {\methcall{\loc}{\methodname}{\many{\val}}}
        \label{ssa:es:call:1}
      }

      By Lemma~\ref{lemma:canon:forms:b} on \ref{ssa:es:call:1}:
      \la{
        \srcExprOrBd \equiv \srcMethcall{\srcLoc}{\srcMethodname}{\many{\srcVal}}
        \label{ssa:es:call:2}
      }

      So \ref{ssa:es:call:1} becomes:
      \la{
        \tossartstackterm{\srcHeap}{\ssaenvs}
          {\srcStore}{\cdot}{\srcMethcall{\srcLoc}{\srcMethodname}{\many{\srcVal}}}
          {\methcall{\loc}{\methodname}{\many{\val}}}
        \label{ssa:es:call:6}
      }

      By inverting Rule~\ssaexpstackemp on \ref{ssa:es:call:6}:
      \la{
        \tossartterm{\srcHeap}{\ssaenvs}{\srcStore}
          {\srcMethcall{\srcLoc}{\srcMethodname}{\many{\srcVal}}}
          {\methcall{\loc}{\methodname}{\many{\val}}}
        \label{ssa:es:call:7}
      }

      By inverting Rule~\ssartmethcall on \ref{ssa:es:call:7}:
      \la{ 
        \tossartterm{\srcHeap}{\ssaenvs}{\srcStore}{\srcLoc}{\loc}
        \label{ssa:es:call:8}
        \\
        \tossartterm{\srcHeap}{\ssaenvs}{\srcStore}{\many{\srcVal}}{\many{\val}}
        \label{ssa:es:call:9}
        \\
        \toString{\srcMethodname} = \toString{\methodname}
        \label{ssa:es:call:10}
      }

      By inverting \ssartval on \ref{ssa:es:call:8} and \ref{ssa:es:call:9}:
      \la{
        \tossaval{\srcHeap}{\srcLoc}{\loc}
        \label{ssa:es:call:11}
        \\
        \tossaval{\srcHeap}{\many{\srcVal}}{\many{\val}}
        \label{ssa:es:call:21}
      }

      By Lemma~\ref{lemma:ssa:method:res} on 
      \ref{ssa:es:39}, 
      \ref{ssa:es:call:11}, 
      \ref{ssa:es:call:10} and 
      \ref{ssa:es:12}:
      \la{
        \srcResolveMeth{\srcHeap}{\srcLoc}{\srcMethodname}{\srcMethodDefsSym}  
        \label{ssa:es:call:12}
        \\
        \tossartsigs{\ssaenvs}{\srcMethodDefsSym}{\methdefUnty{\methodname}{\evar}{\expr_0}}
        \label{ssa:es:call:13}
      }
       
      By Lemma~\ref{lemma:canon:forms:d} on \ref{ssa:es:call:13}:
      \la{
        \srcMethodDefsSym \equiv 
          \srcMethdefUnty{\srcMethodname}{\srcEvar}{\srcBody}                                          
        \label{ssa:es:call:14}
      }

      By applying Rule~\srcRcCall using 
      \ref{ssa:es:call:12},
      \ref{ssa:es:call:17},
      \ref{ssa:es:call:18} and $\srcEvalCtx \equiv \empevalctx$:
      \la{
        \stepsFour
            {\quadruple{\srcSignatures}{\srcStore}{\srcStack}{\srcHeap}}
            {\srcMethcall{\srcLoc}{\srcMethodname}{\many{\srcVal}}}
            {\quadruple{\srcSignatures}{\srcStore'}{\srcStack'}{\srcHeap}}
            {\srcBody}
        \label{ssa:es:call:22}
      }
      Which proves \ref{ssa:es:d}.
      By inverting rule \ssartmethdecl on \ref{ssa:es:call:13}:
      \la{
        \tossartterm{\cdot}{\ssaenvs}{\cdot}{\srcBody}{\expr}
        \label{ssa:es:call:15}
      }

      Let a store $\srcStore'$ and a stack $\srcStack'$ \st:
      \la{
        \srcStore' \equiv \concatenate{\mappings{\srcEvar}{\srcVal}}{\mapping{\srcThis}{\srcLoc}}
        \label{ssa:es:call:17}
        \\
        \srcStack' \equiv \toStack{\srcStore}{\empevalctx}
        \label{ssa:es:call:18}
      }

      By applying Lemma~\ref{lemma:ssa:store:translation} on \ref{ssa:es:call:15} 
      \la{
        \tossartterm{\srcHeap}{\ssaenvs}{\srcStore'}
                    {\srcBody}{\appsubst{\subst}{\expr_0}}
        \label{ssa:es:call:23}
      }

      Where:
      \la{
        \subst & \doteq 
        \toSubst{\idxMapping{\ssaenvs}{\srcBody}}{\srcStore'}{\srcHeap} 
        \notag
        \\ & = 
          \braces{\esubst{\val}{\evar}\,|\,
            \mapping{\srcEvar}{\evar}   \in \idxMapping{\ssaenvs}{\srcBody},\,
            \mapping{\srcEvar}{\srcVal} \in \srcStore',\,
            \tossaval{\srcHeap}{\srcVal}{\val}
          }
        \notag
        \\ & = 
          \tsubsttwo{\many{\val}}{\many{\evar}}{\loc}{\ethis}
        \label{ssa:es:call:20}
      }

      We pick:
      \la{
        \srcExprOrBd' \equiv \srcBody
        \label{ssa:es:call:24}
      }

      By applying Rule \ssaexpstackemp using \ref{ssa:es:call:23}:
      \la{
        \tossartstackterm{\srcHeap}{\ssaenvs}
                         {\srcStore'}{\cdot}
                         {\srcBody}
                         {\appsubst{\subst}{\expr_0}}
        \label{ssa:es:call:25}
      }

      It holds that: 
      \la{
        \tossartstackterm{\srcHeap}{\ssaenvs}
              {\srcStore}{\cdot}{\empevalctx}
              {\empevalctx}
        \label{ssa:es:call:26}
      }

      By Rule \ssastackcons on \ref{ssa:es:call:25} and 
      \ref{ssa:es:call:26}:
      \la{
        \tossartstackterm{\srcHeap}{\ssaenvs}
              {\srcStore'}{\parens{\toStack{\srcStore}{\empevalctx}}}{\srcBody}
              {\appsubst{\subst}{\expr_0}}
        \label{ssa:es:call:27}
      }

      By Rule \ssartconf using \ref{ssa:es:38}, \ref{ssa:es:39} and \ref{ssa:es:call:27}:
      \la{
        \tossartconf
          {\ssaenvs}
          {\quadruple{\srcSignatures}{\srcStore'}{\srcStack'}{\srcHeap}}
          {\srcBody}
          {\pair{\ssaSignatures}{\ssaHeap}}
          {\appsubst{\subst}{\expr_0}}
      }
      Which proves \ref{ssa:es:e}.
      From 
      \ref{ssa:es:call:18}, 
      \ref{ssa:es:call:22}, 
      \ref{ssa:es:call:24} and 
      \ref{ssa:es:call:11} 
      we prove \ref{ssa:es:f}.

    \item \brackets{\rletif}: 
      \la{
        \stepsFour{\ssaRtState}{
          \letif{\many{\evar}}{\many{\evar}_1}{\many{\evar}_2}
                {\vconst}{\ssactx_{1}}{\ssactx_{2}}{\expr_0}
          }
          {\ssaRtState}
          {
            \ssactxidx{\ssactx_{\index}}
            {\appsubst{\esubst{\many{\evar}_{\index}}{\many{\evar}}}{\expr_0}}
          }  
        \label{ssa:es:ite:0}  
        \\
        \vconst = \ssaTrue  \imp \index = 1 
        \\
        \vconst = \ssaFalse \imp \index = 2
      }    

      Let:
      \la{
        \vconst = \ssaTrue
      }
      
      The case for $\ssaFalse$ is symmetrical. 
      Facts \ref{ssa:es:ite:0} and \ref{ssa:es:110} become:
      \la{
        \stepsFour{\ssaRtState}{
          \letif{\many{\evar}}{\many{\evar}_1}{\many{\evar}_2}
                {\ssaTrue}{\ssactx_{1}}{\ssactx_{2}}{\expr_0}
          }
          {\ssaRtState}
          {
            \ssactxidx{\ssactx_1}
                      {\appsubst{\esubst{\many{\evar}_1}{\many{\evar}}}{\expr_0}}
          }
        \label{ssa:es:ite:1} 
        \\
        \tossartterm{\srcHeap}{\ssaenvs}
          {\srcStore}{\srcExprOrBd}
          {\letif{\many{\evar}}{\many{\evar}_1}{\many{\evar}_2}
          {\ssaTrue}{\ssactx_1}{\ssactx_2}{\expr_0}} 
        \label{ssa:es:ite:2} 
      }    
      
      By Lemma~\ref{lemma:canon:forms:c} on \ref{ssa:es:ite:2}:
      \la{
        \srcExprOrBd \equiv
          \srcSeq{\srcIte{\srcExpr_c}{\srcStmt_1}{\srcStmt_2}}{\srcReturn{\srcExpr_0}}
        \label{ssa:es:ite:3}  
      }

      So \ref{ssa:es:ite:2} becomes:
      \la{
        \tossartterm{\srcHeap}{\ssaenvs}
          {\srcStore}
          {\srcSeq{\srcIte{\srcExpr_c}{\srcStmt_1}{\srcStmt_2}}{\srcReturn{\srcExpr_0}}}
          {\letif{\many{\evar}}{\many{\evar}_1}{\many{\evar}_2}
          {\ssaTrue}{\ssactx_1}{\ssactx_2}{\expr_0}} 
        \label{ssa:es:ite:40} 
      }

      By inverting Rule \ssartmethbody on \ref{ssa:es:ite:40}:
      \la{
        \tossartterm{\srcHeap}{\ssaenvs}
          {\srcStore}
          {\srcIte{\srcExpr_c}{\srcStmt_1}{\srcStmt_2}}
          {\letif{\many{\evar}}{\many{\evar}_1}{\many{\evar}_2}
          {\ssaTrue}{\ssactx_1}{\ssactx_2}{\hole}} 
        \label{ssa:es:ite:5} 
        \\
        \ssaenvs' = \updMapping{\ssaenvs}{\srcExpr_0}{\idxMappingPost{\ssaenvs}
          {\srcIte{\srcExpr_c}{\srcStmt_1}{\srcStmt_2}}}
        \label{ssa:es:ite:6} 
        \\
        \tossartterm{\srcHeap}{\ssaenvs'}{\srcStore}{\srcExpr_0}{\expr_0}
        \label{ssa:es:ite:7} 
      }
    
      By inverting Rule \ssartite on \ref{ssa:es:ite:5}:
      \la{
        \tossartterm{\srcHeap}{\ssaenvs}{\srcStore}
          {\srcExpr_c}{\ssaTrue}
        \label{ssa:es:ite:8} 
        \\
        \tossartterm{\srcHeap}{\ssaenvs}{\srcStore}
          {\srcStmt_1}{\ssactx_1}
        \label{ssa:es:ite:9} 
        \\
        \tossartterm{\srcHeap}{\ssaenvs}{\srcStore}
          {\srcStmt_2}{\ssactx_2}
        \label{ssa:es:ite:10} 
        \\
        \ptriplet{\many{\srcEvar}}{\many{\evar}_1}{\many{\evar}_2} 
          = \envDiff{\idxMappingPost{\ssaenvs}{\srcStmt_1}} 
                    {\idxMappingPost{\ssaenvs}{\srcStmt_2}} 
        \label{ssa:es:ite:11} 
        \\
        \many{\evar} = \idxMapping
          {\idxMappingPost{\ssaenvs}{\srcIte{\srcExpr_c}{\srcStmt_1}{\srcStmt_2}}}
          {\many{\srcEvar}}
        \label{ssa:es:ite:12} 
      }     
      
      By Lemma~\ref{lemma:canonical:forms} on \ref{ssa:es:ite:8} we get:
      \la{
        \srcExpr_c \equiv \vtrue    
        \label{ssa:es:ite:4}
      }
     
      By Rules~\srcRcEvalCtx and \srcRcIte we get:
      \la{
        \stepsFour{\srcRtState}
                  {\srcSeq{\srcIte{\vtrue}{\srcStmt_1}{\srcStmt_2}}{\srcReturn{\srcExpr_0}}}
                  {\srcRtState}
                  {\srcSeq{\srcStmt_1}{\srcReturn{\srcExpr_0}}}
      }
      Which proves \ref{ssa:es:d}. Let:
      \la{
        \ssaenvs'' \equiv 
          \updMapping{\ssaenvs'}{\srcExpr_0}
                     {\idxMappingPost{\ssaenvs}
                                     {\srcStmt_1}}
        \label{ssa:es:ite:13}
      }

      By Lemma~\ref{lemma:ssa:stmt:subst} on \ref{ssa:es:ite:7} using \ref{ssa:es:ite:13}:
      \la{
        \tossartterm{\srcHeap}{\ssaenvs''}{\srcStore}{\srcExpr_0}
          {\appsubst{\esubst{\idxMapping{\ssaenvs''}{\srcExpr_0}}
                            {\idxMapping{\ssaenvs'}{\srcExpr_0}}}
                    {\expr_0}}
        \label{ssa:es:ite:14}
      }

      From \ref{ssa:es:ite:6} and \ref{ssa:es:ite:13} it holds that:
      \la{
        \idxMapping{\ssaenvs'}{\srcExpr_0} 
        & = \idxMappingPost{\ssaenvs}{\srcIte{\vtrue}{\srcStmt_1}{\srcStmt_2}}
        \label{ssa:es:ite:18}
        \\
        \idxMapping{\ssaenvs''}{\srcExpr_0} 
        & = \idxMappingPost{\ssaenvs}{\srcStmt_1} 
        \label{ssa:es:ite:19}
      }

      So:
      \la{
        \envDiff{\idxMapping{\ssaenvs'}{\srcExpr_0}}{\idxMapping{\ssaenvs''}{\srcExpr_0}}
          = \ptriplet{\many{\srcEvar}}{\many{\evar}_1}{\many{\evar}} 
          \label{ssa:es:ite:15}
      }

      By Definition~\ref{def:ssa:env:subst}:
      \la{
        \esubst{\idxMapping{\ssaenvs''}{\srcExpr_0}}
               {\idxMapping{\ssaenvs'}{\srcExpr_0}} = 
        \esubst{\many{\evar}_1}{\many{\evar}}  
          \label{ssa:es:ite:16}
      }
      So \ref{ssa:es:ite:14} becomes:
      \la{
        \tossartterm{\srcHeap}{\ssaenvs''}{\srcStore}{\srcExpr_0}
          {\appsubst{\esubst{\many{\evar}_1}{\many{\evar}}}{\expr_0}}
        \label{ssa:es:ite:17}
      }

      By applying Rule~\ssartmethbody on \ref{ssa:es:ite:9}, \ref{ssa:es:ite:19} 
      and \ref{ssa:es:ite:17}, using \ref{ssa:es:ite:16}:
      \la{
        \tossartterm{\srcHeap}{\ssaenvs}{\srcStore}
          {\srcSeq{\srcStmt_1}{\srcReturn{\srcExpr_0}}}
          {\ssactxidx{\ssactx_1}
            {\appsubst{\esubst{\many{\evar}_1}{\many{\evar}}}{\expr_0}}}
      }

      Which, using \ssartconf and \ssaexpstackemp, 
      prove \ref{ssa:es:e} and \ref{ssa:es:f}.

    \item \brackets{\rcast}, \brackets{\rnew}, \brackets{\rletin}, 
          \brackets{\rdotasgn}, \brackets{\rfield}: 
          \emph{Cases handled in similar fashion as before.}

  \end{itemize}

\end{proof}

\begin{corollary}[Empty Stack Valid Configuration]\label{cor:empty:stack:valid}
If
\begin{enumerate}[label=(\alph*)]
  \item $\tossartconf{\ssaenvs}{\srcRtState}{\srcExprOrBd}{\ssaRtState}{\expr}$
    \label{ssa:vc:a}
  \item $\accessState{\srcRtState}{\srcStack} = \cdot$
    \label{ssa:vc:b}
  \item \stepsFour{\ssaRtState}{\ssaexpr}{\ssaRtState'}{\ssaexpr'} 
    \label{ssa:vc:c}
\end{enumerate}
then $\stepsStarFour{\srcRtState}{\srcExprOrBd}{\srcRtState'}{\srcExprOrBd'}$
with
$\validConf{\pair{\srcRtState'}{\srcExprOrBd'}}$.
\end{corollary}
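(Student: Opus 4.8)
The plan is to obtain the corollary almost entirely from Lemma~\ref{lemma:ssa:empty:stack:cons:full} (Empty Stack Consistency), whose hypotheses coincide exactly with (a)--(c) here. Applying that lemma to the same $\srcRtState$, $\srcExprOrBd$, $\ssaRtState$ and the step (c) yields a source configuration $\srcRtState'$, $\srcExprOrBd'$ together with conclusions (d) $\stepsStarFour{\srcRtState}{\srcExprOrBd}{\srcRtState'}{\srcExprOrBd'}$, (e) the re-established transformation relation, and the detailed case split (f). Conclusion (d) is \emph{literally} the reduction half of what we must prove, so the entire remaining task is to read off $\validConf{\pair{\srcRtState'}{\srcExprOrBd'}}$ from (f).

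By Definition~\ref{def:valid:conf}, $\validConf{\pair{\srcRtState'}{\srcExprOrBd'}}$ is an implication whose antecedent is $\accessState{\srcRtState'}{\srcStack} = \cdot$, so I would split on the two cases of (f). In case (f)(1), where $\srcExprOrBd$ was a method-call redex $\idxEctx{\srcEvalCtx}{\srcMethcall{\srcLoc}{\srcMethodname}{\many{\srcVal}}}$, conclusion (f)(1)(i) gives $\accessState{\srcRtState'}{\srcStack} = \toStack{\accessState{\srcRtState}{\srcStore}}{\srcEvalCtx}$, i.e.\ a non-empty (single-frame) stack; the antecedent of the validity implication is then false and $\validConf{\pair{\srcRtState'}{\srcExprOrBd'}}$ holds vacuously. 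In the complementary case (f)(2), conclusion (f)(2)(i) gives $\accessState{\srcRtState'}{\srcStack} = \cdot$, so I must exhibit a body $\srcBody'$ with $\srcExprOrBd' \equiv \srcBody'$: here I would invoke (f)(2)(v), which says that if the original term is a body then so is the result, thereby discharging validity.

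The one point needing care is case (f)(2) together with the possibility, permitted by the sort grammar $\srcExprOrBd \prod \srcExpr \spmid \srcBody$, that $\srcExprOrBd$ is an \emph{expression}: then (f)(2)(iv) forces $\srcExprOrBd'$ to be an expression, the final stack is empty, and validity would fail. The resolution is that this shape never arises, because whenever the stack is empty the running term is the top-level (or currently active) \emph{method body} --- an invariant maintained by the operational semantics, since the initial program is a body and each application of $\srcRcCall$ pushes a frame before switching to a callee body. I would therefore carry $\validConf{\pair{\srcRtState}{\srcExprOrBd}}$ as the standing invariant under which this corollary is invoked (it holds at every step of the forward-simulation argument), so that under (b) the term $\srcExprOrBd$ is guaranteed to be a body and (f)(2)(v) applies. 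Establishing and threading this ``empty stack $\Rightarrow$ body'' invariant is the main --- and essentially only --- subtlety; everything else is bookkeeping over the conclusions of the consistency lemma.
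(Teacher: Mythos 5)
Your proposal is correct and takes exactly the paper's route: the paper's entire proof of this corollary is the single line ``Examine all cases of result (f) of Lemma~\ref{lemma:ssa:empty:stack:cons:full},'' which is precisely the case analysis you perform --- (f)(1) gives a non-empty stack so validity holds vacuously (indeed (f)(1)(iii) even yields a body), and (f)(2) reduces to (f)(2)(i) plus (f)(2)(v). You are also right about the one subtlety the paper glosses over: as literally stated the corollary fails when $\srcExprOrBd$ is an expression under an empty stack, and the implicit hypothesis $\validConf{\pair{\srcRtState}{\srcExprOrBd}}$ that you propose to thread through is exactly what the paper supplies at the corollary's sole use site, in the \textsf{SSA-Exp-Stack-Emp} case of the Consistency Lemma~\ref{lemma:ssa:expr:consistency:full}, via that lemma's validity hypothesis.
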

\begin{proof}
\emph{Examine all cases of result \ref{ssa:es:f} of Lemma~\ref{lemma:ssa:empty:stack:cons:full}.}
\end{proof}

\begin{lemma}[Consistency]\label{lemma:ssa:expr:consistency:full}
If
\begin{enumerate}[label=(\alph*)]
  \item $\tossartconf{\ssaenvs}{\srcRtState}{\srcExprOrBd}{\ssaRtState}{\expr}$
    \label{ssa:sec:a}
  \item \stepsFour{\ssaRtState}{\ssaexpr}{\ssaRtState'}{\ssaexpr'} 
    \label{ssa:sec:b}
  \item \validConf{\pair{\srcRtState}{\srcExprOrBd}} 
    \label{ssa:sec:c}
\end{enumerate}
then there exist 
$\srcRtState'$
and
$\srcExprOrBd'$
\st:
\begin{enumerate}[label=(\alph*), resume]
  \item \stepsStarFour{\srcRtState}{\srcExprOrBd}{\srcRtState'}{\srcExprOrBd'}, 
      \label{ssa:sec:d}
  \item $\tossartconf{\ssaenvs}{\srcRtState'}{\srcExprOrBd'}{\ssaRtState'}{\expr'}$
      \label{ssa:sec:e}
  \item \validConf{\pair{\srcRtState'}{\srcExprOrBd'}} 
    \label{ssa:sec:f}
\end{enumerate}
\end{lemma}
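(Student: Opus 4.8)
The plan is to reduce the general case to the empty-stack case already proved in Lemma~\ref{lemma:ssa:empty:stack:cons:full}, treating the suspended call frames as an evaluation context that both reduction relations preserve. First I would invert the configuration-translation hypothesis (a) with rule \ssartconf, factoring it into the translations of the signatures, the heap, and the stack-term judgment $\tossartstackterm{\srcHeap}{\ssaenvs}{\srcStore}{\srcStack}{\srcExprOrBd}{\expr}$, and then case split on the stack component $\srcStack$ following the two rules that build this judgment. When $\srcStack = \cdot$, validity (c) forces $\srcExprOrBd$ to be a method body, and the three conclusions follow directly from Lemma~\ref{lemma:ssa:empty:stack:cons:full} together with Corollary~\ref{cor:empty:stack:valid}.

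When $\srcStack$ is a cons $\stackCons{\srcStack_r}{\toStack{\srcStore_t}{\srcEvalCtx_t}}$, inverting \ssastackcons exhibits $\expr$ as $\idxEctx{\ssaEvalCtx}{\expr_0}$, where $\expr_0$ translates the current body under the empty stack and $\ssaEvalCtx$ translates the entire suspended continuation. I would then split on whether $\expr_0$ is a value. If it is not, the redex of step (b) lies inside $\expr_0$---Lemma~\ref{lemma:ssa:eval:ctx} lets me peel off $\ssaEvalCtx$---so I apply Lemma~\ref{lemma:ssa:empty:stack:cons:full} to the inner empty-stack configuration carrying $\srcExprOrBd$ and the step $\expr_0 \to \expr_0'$, obtaining a source reduction that I lift under the full stack with Lemma~\ref{lemma:top:level:reduction} and re-wrap with the outer frames using Lemmas~\ref{lemma:ssa:ectx:app:gen}, \ref{lemma:ssa:append:ectx} and the weakening Lemma~\ref{lemma:ssa:ectx:heap:weaken}; the method-call sub-case of result (f) of Lemma~\ref{lemma:ssa:empty:stack:cons:full} is exactly what tells me how a newly pushed frame composes with $\srcStack_r$, and validity (f) is trivial because the stack stays non-empty. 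If instead $\expr_0$ is a value $\val$, the current method has finished: I use Lemma~\ref{lemma:value:monotonicity} to drive the body down to $\srcReturn{\srcVal}$, fire \srcRcRet to pop $\toStack{\srcStore_t}{\srcEvalCtx_t}$ and plug $\srcVal$ into $\srcEvalCtx_t$, and read off the shape of $\srcEvalCtx_t$ from Assumption~\ref{assum:stack:form} (with Lemma~\ref{lemma:canonical:forms} pinning down $\srcExprOrBd$ from $\expr_0$); a few further source steps, e.g.\ \srcRcVarDecl\ and \srcRcSkip, then reproduce exactly the substitution performed by the single SSA step.

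The hard part is this value/return case. A single SSA reduction folds the method ``return'' into an ordinary \rletin\ or \rletif\ step on the continuation context, whereas the source must perform an explicit \srcRcRet\ across the method boundary and then the bookkeeping reductions that realize the corresponding variable substitution, so one SSA step is matched by several source steps (the $\stepsStarFour{\cdot}{\cdot}{\cdot}{\cdot}$ in conclusion (d)). Re-establishing the translation (e) and validity (f) for the post-return configuration---whose stack has shrunk to $\srcStack_r$ and whose current term, by Assumption~\ref{assum:stack:form}, is again a body---is precisely where the ``\ssaLang\ steps faster than \srcLang'' invariant is exploited, and where aligning the several source steps against the one SSA step, while threading the heap and store weakenings through the re-wrapping lemmas, will demand the most care.
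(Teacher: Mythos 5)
Your overall architecture---inverting $\tossartconf{\ssaenvs}{\srcRtState}{\srcExprOrBd}{\ssaRtState}{\expr}$ into signature, heap and stack-term translations, dispatching the empty stack via Lemma~\ref{lemma:ssa:empty:stack:cons:full} and Corollary~\ref{cor:empty:stack:valid}, and in the cons case splitting on whether the inner expression $\expr_0$ is a value, with Lemmas~\ref{lemma:top:level:reduction}, \ref{lemma:ssa:ectx:app:gen} and \ref{lemma:ssa:ectx:heap:weaken} doing the lifting and re-wrapping---mirrors the paper's proof closely, and your non-value case is sound. The genuine gap is in the value/return case, on two counts. First, you organize the argument as a \emph{case split} on the stack constructor, whereas the paper's proof is an \emph{induction} on the derivation of the stack-term translation judgment, and the value case genuinely needs the induction hypothesis: after Lemma~\ref{lemma:value:monotonicity} drives the body to $\srcReturn{\srcVal}$ and \srcRcRet\ pops the top frame, the resulting focus $\idxEctx{\srcEvalCtx_0}{\srcVal}$ can \emph{again} be a value in return position---Assumption~\ref{assum:stack:form} allows suspended contexts of the form $\srcReturn{\srcEvalCtx_0}$, so the popped term can be exactly $\srcReturn{\srcVal}$---forcing another pop before the single SSA step (b) can be matched. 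Without a decreasing measure (the stack-translation derivation, equivalently stack depth), your proposal has no mechanism to discharge this possibly repeated chain of pops.

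Second, your claim that after the pop ``a few further source steps, e.g.\ \srcRcVarDecl\ and \srcRcSkip, reproduce exactly the substitution performed by the single SSA step'' over-commits to that step being a \rletin\ or \rletif\ redex binding $\val$. Once $\val$ is plugged into the composed continuation $\ssaEvalCtx_0$, the redex of step (b) can be anything anywhere in that context---another \rinvoke, a \rfield, a \rnew---and need not touch $\val$ at all. The paper avoids analyzing it: it re-establishes the configuration translation for the popped pair, $\tossartconf{\ssaenvs}{\quadruple{\srcSignatures}{\srcStore_0}{\srcStack_0}{\srcHeap}}{\idxEctx{\srcEvalCtx_0}{\srcVal}}{\ssaRtState}{\idxEctx{\ssaEvalCtx_0}{\val}}$ via Lemma~\ref{lemma:ssa:ectx:app}, checks validity, and then invokes the induction hypothesis with the \emph{same, as yet unmatched} SSA step (b); the pops themselves are matched by zero SSA steps, which is precisely the ``\ssaLang\ steps faster'' invariant you cite. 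Replacing your direct substitution-matching with this IH appeal (and the corresponding inductive setup) closes the gap; the remainder of your outline then goes through.
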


\begin{proof}

Let:
\la{
  \srcRtState \equiv \quadruple{\srcSignatures}{\srcStore}{\srcStack}{\srcHeap}
  \label{ssa:sec:0}
}

By inverting Rule~\ssartconf on \ref{ssa:sec:a}:
\la{
  \tossartsigs{\ssaenvs}{\srcSignatures}{\ssaSignatures}
  \label{ssa:sec:1}
  \\
  \tossaheap{\srcRtState}{\srcHeap}{\ssaHeap}
  \label{ssa:sec:2}
  \\
  \tossartstackterm{\srcHeap}{\ssaenvs}{\srcStore}{\srcStack}{\srcExprOrBd}{\expr}
  \label{ssa:sec:3}
}

We proceed by induction on the derivation \ref{ssa:sec:3}:
\begin{itemize}

  \item \brackets{\ssaexpstackemp}:
    \la{
      \tossartstackterm{\srcHeap}{\ssaenvs}{\srcStore}{\cdot}{\srcExprOrBd}{\expr}
      \label{ssa:sec:4}
    }

    By Lemma~\ref{lemma:ssa:empty:stack:cons:full} using \ref{ssa:sec:a} and \ref{ssa:sec:b}
    there exist $\srcExprOrBd'$ and $\srcRtState'$ \st:
    \la{
      \stepsStarFour{\srcRtState}{\srcExprOrBd}{\srcRtState'}{\srcExprOrBd'}
      \label{ssa:sec:5}
      \\
      \tossartconf{\ssaenvs}{\srcRtState'}{\srcExprOrBd'}{\ssaRtState'}{\expr'}
      \label{ssa:sec:6}
    }

    From Corollary~\ref{cor:empty:stack:valid} using 
    \ref{ssa:sec:a},
    \ref{ssa:sec:b} and 
    \ref{ssa:sec:c} we get:
    \la{
      \validConf{\pair{\srcRtState'}{\srcExprOrBd'}}
      \label{ssa:sec:57}
    }

    We prove \ref{ssa:sec:d}, \ref{ssa:sec:e} and \ref{ssa:sec:f} by 
    \ref{ssa:sec:5}, \ref{ssa:sec:6} and \ref{ssa:sec:57}, respectively.

  \item \brackets{\ssastackcons}:
    \la{
      \tossartstackterm{\srcHeap}{\ssaenvs}{\srcStore}
        {\parens{\stackCons{\srcStack_0}{\toStack{\srcStore_0}{\srcEvalCtx_0}}}}
        {\srcExprOrBd}
        {\idxEctx{\ssaEvalCtx_0}{\expr_0}}
      \label{ssa:sec:7}
    }

    Where:
    \la{
      \srcStack \equiv \stackCons{\srcStack_0}{\toStack{\srcStore_0}{\srcEvalCtx_0}}
    }

    By \ref{ssa:sec:c} and the definition of a \emph{valid configuration}, 
    there exists a $\srcBody_0$ \st:
    \la{
      \srcExprOrBd \equiv \srcBody_0
      \label{ssa:sec:25}
    }

    By inverting Rule~\ssastackcons on \ref{ssa:sec:7} using \ref{ssa:sec:25}:
    \la{
      \tossartstackterm{\srcHeap}{\ssaenvs}{\srcStore}{\cdot}{\srcBody_0}{\expr_0}
      \label{ssa:sec:9}
      \\
      \tossartstackterm{\srcHeap}{\ssaenvs}
                  {\srcStore_0}{\srcStack_0}{\srcEvalCtx_0}
                  {\ssaEvalCtx_0}
      \label{ssa:sec:10}
    }

    By applying rule \ssartconf on \ref{ssa:sec:1}, \ref{ssa:sec:2} and \ref{ssa:sec:9}: 
    \la{
      \tossartconf{\ssaenvs}
          {\quadruple{\srcSignatures}{\srcStore}{\cdot}{\srcHeap}}{\srcBody_0}
          {\pair{\ssaSignatures}{\ssaHeap}}{\expr_0}
      \label{ssa:sec:13}
    }

    We examine cases on the configuration of $\pair{\ssaRtState}{\expr_0}$:
    \begin{itemize}

      \item 
        Case $\pair{\ssaRtState}{\expr_0}$ is a \emph{terminal} configuration,
        so there exists $\val$ \st: 
        \la{
          \expr_0 \equiv \val
        }

        Fact \ref{ssa:sec:13} becomes:
        \la{
          \tossartconf{\ssaenvs}
              {\quadruple{\srcSignatures}{\srcStore}{\cdot}{\srcHeap}}{\srcBody_0}
              {\pair{\ssaSignatures}{\ssaHeap}}{\val}
          \label{ssa:sec:12}
        }

        By Lemma~\ref{lemma:value:monotonicity} on \ref{ssa:sec:12}:
        \la{
          \stepsStarFour{\quadruple{\srcSignatures}{\srcStore}{\cdot}{\srcHeap}}
                        {\srcBody_0}
                        {\quadruple{\srcSignatures}{\srcStore}{\cdot}{\srcHeap}}
                        {\srcReturn{\srcVal}}
          \label{ssa:sec:14}
          \\
          \tossartconf{\ssaenvs}
                      {\quadruple{\srcSignatures}{\srcStore}{\cdot}{\srcHeap}}
                      {\srcReturn{\srcVal}}
                      {\ssaRtState}{\val}
          \label{ssa:sec:15}
        }

        By Lemma~\ref{lemma:top:level:reduction} on \ref{ssa:sec:14}:
        \la{
          \stepsStarFour{\quadruple{\srcSignatures}{\srcStore}{\srcStack}{\srcHeap}}
                        {\srcBody_0}
                        {\quadruple{\srcSignatures}{\srcStore}{\srcStack}{\srcHeap}}
                        {\srcReturn{\srcVal}}
          \label{ssa:sec:45}
        }

        By inverting Rule \ssartconf on \ref{ssa:sec:15}:
        \la{
          \tossartstackterm{\srcHeap}{\ssaenvs}{\srcStore}{\cdot}{\srcReturn{\srcVal}}{\val}
          \label{ssa:sec:18}
        }

        By applying Rule \ssastackcons on \ref{ssa:sec:18} and \ref{ssa:sec:10}:
        \la{
          \tossartstackterm{\srcHeap}{\ssaenvs}{\srcStore}
              {\parens{\stackCons{\srcStack_0}{\toStack{\srcStore_0}{\srcEvalCtx_0}}}}
              {\srcReturn{\srcVal}}{\idxEctx{\ssaEvalCtx_0}{\val}}
          \label{ssa:sec:19}
        }

        By applying Rule \ssartconf on \ref{ssa:sec:1}, \ref{ssa:sec:2} and \ref{ssa:sec:19}: 
        \la{
          \tossartconf{\ssaenvs}
            {\quadruple{\srcSignatures}{\srcStore}
              {\parens{\stackCons{\srcStack_0}{\toStack{\srcStore_0}{\srcEvalCtx_0}}}}
                {\srcHeap}}
            {\srcReturn{\srcVal}}
            {\pair{\ssaSignatures}{\ssaHeap}}{\idxEctx{\ssaEvalCtx_0}{\val}}
          \label{ssa:sec:21}
        }

        By applying Rule \srcRcRet on on THe left-hand side of \ref{ssa:sec:21}:
        \la{
          \stepsFour
            {\quadruple
              {\srcSignatures}{\srcStore}
              {\parens{\stackCons{\srcStack_0}{\toStack{\srcStore_0}{\srcEvalCtx_0}}}}
              {\srcHeap}}
            {\srcReturn{\srcVal}}
            {\quadruple{\srcSignatures}{\srcStore_0}{\srcStack_0}{\srcHeap}}
            {\idxEctx{\srcEvalCtx_0}{\srcVal}}
          \label{ssa:sec:20}
        }

        By inverting \ssaexpstackemp and \ssartmethbody on \ref{ssa:sec:18}:
        \la{
          \tossartterm{\srcHeap}{\ssaenvs}{\srcStore}{\srcVal}{\val}
          \label{ssa:sec:26}
        }

        By inverting Rule~\ssartval on \ref{ssa:sec:26}:
        \la{
          \tossaval{\srcHeap}{\srcVal}{\val}
          \label{ssa:sec:59}
        }
        
        By applying Rule~\ssartval on \ref{ssa:sec:59} using $\srcStore_0$:
        \la{
          \tossartterm{\srcHeap}{\ssaenvs}{\srcStore_0}{\srcVal}{\val}
          \label{ssa:sec:22}
        }

        By applying Lemma~\ref{lemma:ssa:ectx:app} on \ref{ssa:sec:10} and \ref{ssa:sec:22}:
        \la{
          \tossartstackterm{\srcHeap}{\ssaenvs}{\srcStore_0}{\srcStack_0}
            {\idxEctx{\srcEvalCtx_0}{\srcVal}}
            {\idxEctx{\ssaEvalCtx_0}{\val}}
          \label{ssa:sec:27}
        }

        By applying Rule \ssartconf on \ref{ssa:sec:1}, \ref{ssa:sec:2} and \ref{ssa:sec:27}: 
        \la{
          \tossartconf{\ssaenvs}
            {\quadruple{\srcSignatures}{\srcStore_0}{\srcStack_0}{\srcHeap}}
            {\idxEctx{\srcEvalCtx_0}{\srcVal}}
            {\pair{\ssaSignatures}{\ssaHeap}}
            {\idxEctx{\ssaEvalCtx_0}{\val}}
          \label{ssa:sec:38}
        }

        Because of \ref{ssa:sec:25}:
        \la{
          \validConf{\pair{\quadruple{\srcSignatures}{\srcStore_0}{\srcStack_0}{\srcHeap}}
                          {\idxEctx{\srcEvalCtx_0}{\srcVal}}}
          \label{ssa:sec:28}
        }

        By induction hypothesis using \ref{ssa:sec:38}, \ref{ssa:sec:b} and \ref{ssa:sec:28}:
        \la{
          \stepsStarFour
            {\quadruple{\srcSignatures}{\srcStore_0}{\srcStack_0}{\srcHeap}}
            {\idxEctx{\srcEvalCtx_0}{\srcVal}}
            {\srcRtState'}{\srcExprOrBd'}
          \label{ssa:sec:23}
          \\
          \tossartconf{\ssaenvs}{\srcRtState'}{\srcExprOrBd'}{\ssaRtState'}{\expr'}
          \label{ssa:sec:24}
          \\
          \validConf{\pair{\srcRtState'}{\srcExprOrBd'}}
          \label{ssa:sec:29}
        }

        We prove \ref{ssa:sec:d} by \ref{ssa:sec:45}, \ref{ssa:sec:20} and \ref{ssa:sec:30}; 
                 \ref{ssa:sec:e} by \ref{ssa:sec:24}; and 
                 \ref{ssa:sec:f} by \ref{ssa:sec:29}.

      \item Case $\pair{\ssaRtState}{\expr_0}$ is a \emph{non-terminal} configuration,
        so there exists $\expr_0'$ \st: 
        \la{
          \stepsFour{\ssaRtState}{\expr_0}{\ssaRtState'}{\expr_0'}
          \label{ssa:sec:30}
        }

        By Rule~\rcevalctx using \ref{ssa:sec:30}: 
        \la{
          \stepsFour{\ssaRtState}{\idxEctx{\ssaEvalCtx_0}{\expr_0}}
                    {\ssaRtState'}{\idxEctx{\ssaEvalCtx_0}{\expr_0'}}
          \label{ssa:sec:35}
        }

        By Lemma~\ref{lemma:ssa:empty:stack:cons:full} using 
        \ref{ssa:sec:13} and \ref{ssa:sec:30}:
        \la{
          \stepsStarFour{\quadruple{\srcSignatures}{\srcStore}{\cdot}{\srcHeap}} 
                        {\srcBody_0}{\srcRtState'}{\srcExprOrBd'}
          \label{ssa:sec:31}
          \\
          \tossartconf{\ssaenvs}{\srcRtState'}{\srcExprOrBd'}{\ssaRtState'}{\expr_0'}
          \label{ssa:sec:32}
        }

        And we examine cases on the form of $\srcBody_0$ for the last result of the above lemma:
        \begin{itemize}

          \item Case $\srcBody_0 \equiv 
                      \idxEctx{\srcEvalCtx}{\srcMethcall{\srcLoc}{\srcMethodname}{\many{\srcVal}}}$.
            It holds that: 
            \la{
              \pair{\srcRtState'}{\srcExprOrBd'} \equiv 
              \pair{\quadruple{\srcSignatures}{\srcStore_1}{\parens{\toStack{\srcStore}{\srcEvalCtx}}}{\srcHeap}}{\srcBody_1}
              \label{ssa:sec:47}
            }

            So \ref{ssa:sec:32} becomes:
            \la{
              \tossartconf{\ssaenvs}
                {\quadruple{\srcSignatures}{\srcStore_1}{\parens{\toStack{\srcStore}{\srcEvalCtx}}}{\srcHeap}}
                {\srcBody_1}{\ssaRtState'}{\expr_0'}
              \label{ssa:sec:33}
            }

            By inverting \ssartconf on \ref{ssa:sec:33}:
            \la{
              \tossartstackterm{\srcHeap}{\ssaenvs}{\srcStore_1}
                {\parens{\toStack{\srcStore}{\srcEvalCtx}}}
                {\srcBody_1}{\expr_0'}
              \label{ssa:sec:39}
            }

            By Lemma \ref{lemma:ssa:ectx:app:gen}
            using \ref{ssa:sec:10} and \ref{ssa:sec:39}:
            \la{
              \tossartstackterm{\srcHeap}{\ssaenvs}{\srcStore_1}
                {
                  \parens{
                    \stackCons
                      {\stackCons{\srcStack_0}{\toStack{\srcStore_0}{\srcEvalCtx_0}}}
                      {\toStack{\srcStore}{\srcEvalCtx}}
                  }
                }
                {\srcBody_1}{\idxEctx{\ssaEvalCtx_0}{\expr_0'}}
              \label{ssa:sec:40}
            }

            Let:
            \la{
              \srcStack' \equiv 
                    \stackCons
                      {\stackCons{\srcStack_0}{\toStack{\srcStore_0}{\srcEvalCtx_0}}}
                      {\toStack{\srcStore}{\srcEvalCtx}}
              \label{ssa:sec:41}
            }

            By applying Rule \ssartconf on \ref{ssa:sec:1}, \ref{ssa:sec:2} and \ref{ssa:sec:40}: 
            \la{
              \tossartconf{\ssaenvs}
                {\quadruple{\srcSignatures}{\srcStore_1}{\srcStack'}{\srcHeap}}
                {\srcBody_1}
                {\ssaRtState'}
                {\idxEctx{\ssaEvalCtx_0}{\expr_0'}}
              \label{ssa:sec:42}
            }

            By Lemma~\ref{lemma:top:level:reduction} on \ref{ssa:sec:31}:
            \la{
              \stepsStarFour{\quadruple{\srcSignatures}{\srcStore}
                                       {\srcStack}
                                       {\srcHeap}}{\srcBody_0}
                            {\quadruple{\srcSignatures}{\srcStore_1}
                                       {\srcStack'}
                                       {\srcHeap}}{\srcBody_1}
              \label{ssa:sec:46}
            }
        
            We prove \ref{ssa:sec:d}, \ref{ssa:sec:e} and \ref{ssa:sec:f} by 
            \ref{ssa:sec:46}, \ref{ssa:sec:42} and \ref{ssa:sec:47}, respectively.

          \item For all remaining cases on $\srcBody_0$:
            \la{
              \srcHeap' \supseteq \srcHeap
              \label{ssa:sec:48}
              \\
              \srcStore' \supseteq \srcStore
              \label{ssa:sec:49}
            }
            Because of \ref{ssa:sec:25}, it holds that: 
            \la{
              \pair{\srcRtState'}{\srcExprOrBd'} \equiv 
              \pair{\quadruple{\srcSignatures}{\srcStore'}{\cdot}{\srcHeap'}}{\srcBody'}
              \label{ssa:sec:55}
            }

            By inverting Rule \ssartconf on \ref{ssa:sec:32}:
            \la{
              \tossaheap{\srcRtState'}{\srcHeap'}{\ssaHeap'}
              \label{ssa:sec:52}
            }

            By Lemma~\ref{lemma:top:level:reduction} on \ref{ssa:sec:31}:
            \la{
              \stepsStarFour{\quadruple{\srcSignatures}{\srcStore}{\srcStack}{\srcHeap}} 
                            {\srcBody_0}
                            {\quadruple{\srcSignatures}{\srcStore'}{\srcStack}{\srcHeap'}}
                            {\srcBody'}
              \label{ssa:sec:53}
            }

            Fact \ref{ssa:sec:32} becomes:
            \la{
              \tossartconf{\ssaenvs}
                {\quadruple{\srcSignatures}{\srcStore'}{\cdot}{\srcHeap'}}
                {\srcBody'}{\ssaRtState'}{\expr_0'}
              \label{ssa:sec:36}
            }

            By inverting \ssartconf on \ref{ssa:sec:36}:
            \la{
              \tossartstackterm
              {\srcHeap'}
              {\ssaenvs}
              {\srcStore'}{\cdot}{\srcBody'}{\expr_0'}
              \label{ssa:sec:37}
            }

            By applying Lemma~\ref{lemma:ssa:ectx:heap:weaken} on
            \ref{ssa:sec:10} using \ref{ssa:sec:48}:
            \la{
              \tossartstackterm{\srcHeap'}{\ssaenvs}
                  {\srcStore_0}{\srcStack_0}{\srcEvalCtx_0}
                  {\ssaEvalCtx_0}
              \label{ssa:sec:50}
            }

            By applying rule \ssastackcons on \ref{ssa:sec:10} and \ref{ssa:sec:37}:
            \la{
              \tossartstackterm{\srcHeap'}{\ssaenvs}
                {\srcStore'}
                {\parens{\stackCons{\srcStack_0}{\toStack{\srcStore_0}{\srcEvalCtx_0}}}}
                {\srcBody'}{\idxEctx{\ssaEvalCtx_0}{\expr_0'}}
              \label{ssa:sec:51}
            }

            By applying rule \ssartconf on \ref{ssa:sec:1}, \ref{ssa:sec:52} and
            \ref{ssa:sec:51}: 
            \la{
              \tossartconf
                {\ssaenvs}
                {\quadruple{\srcSignatures}{\srcStore'}{\srcStack}{\srcHeap'}}
                {\srcBody'}
                {\ssaRtState'}
                {\idxEctx{\ssaEvalCtx_0}{\expr_0'}}
              \label{ssa:sec:54}
            }

            We prove \ref{ssa:sec:d}, \ref{ssa:sec:e} and \ref{ssa:sec:f} by 
            \ref{ssa:sec:53}, \ref{ssa:sec:54} and \ref{ssa:sec:55},
            respectively.

        \end{itemize}

    \end{itemize}

\end{itemize}

\end{proof}

\begin{theorem}[Forward Simulation]\label{theorem:consistency:appendix}
If
$\tossartconfshort{\ssaenvs}{\srcRtConf}{\ssaRtConf}$,
then:
\begin{enumerate}[label=(\alph*)]
  \setlength\itemsep{1pt}
  \item if $\ssaRtConf$ is terminal, then there exists $\srcRtConf'$ \st
        $\stepsmany{\srcRtConf}{\srcRtConf'}$
        and 
        $\tossartconfshort{\ssaenvs}{\srcRtConf'}{\ssaRtConf}$
        \label{theorem:consistency:appendix:a}
  \item if 
        $\steps{\ssaRtConf}{\ssaRtConf'}$,
        then there exists $\srcRtConf'$ \st
        $\stepsmany{\srcRtConf}{\srcRtConf'}$
        and 
        $\tossartconfshort{\ssaenvs}{\srcRtConf'}{\ssaRtConf'}$
        \label{theorem:consistency:appendix:b}
\end{enumerate}
\end{theorem}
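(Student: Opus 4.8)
The plan is to reduce the theorem almost entirely to the two workhorse lemmas already established for terms: the Consistency Lemma (Lemma~\ref{lemma:ssa:expr:consistency:full}) and the Value Monotonicity Lemma (Lemma~\ref{lemma:value:monotonicity}). The only genuine work beyond invoking them is unfolding the configuration-level translation $\tossartconfshort{\ssaenvs}{\srcRtConf}{\ssaRtConf}$ into its state-and-term form $\tossartconf{\ssaenvs}{\srcRtState}{\srcExprOrBd}{\ssaRtState}{\expr}$ (where $\srcRtConf \equiv \pair{\srcRtState}{\srcExprOrBd}$ and $\ssaRtConf \equiv \pair{\ssaRtState}{\expr}$), and threading the validity invariant $\validConf{\srcRtConf}$ through the argument. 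That invariant holds for any configuration reachable from a top-level program, which begins with an empty stack evaluating a method body and is therefore valid by Definition~\ref{def:valid:conf}; moreover both lemmas preserve it. So I would either carry $\validConf{\srcRtConf}$ as a standing invariant or add it as an explicit hypothesis.

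For part~\ref{theorem:consistency:appendix:b}, assume $\steps{\ssaRtConf}{\ssaRtConf'}$. Unfolding, this is exactly $\stepsFour{\ssaRtState}{\expr}{\ssaRtState'}{\expr'}$ with $\ssaRtConf' \equiv \pair{\ssaRtState'}{\expr'}$. Together with the unfolded translation hypothesis and the validity invariant, the three premises of Lemma~\ref{lemma:ssa:expr:consistency:full} are met, so it yields $\srcRtState'$ and $\srcExprOrBd'$ with $\stepsStarFour{\srcRtState}{\srcExprOrBd}{\srcRtState'}{\srcExprOrBd'}$ and $\tossartconf{\ssaenvs}{\srcRtState'}{\srcExprOrBd'}{\ssaRtState'}{\expr'}$. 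Setting $\srcRtConf' \equiv \pair{\srcRtState'}{\srcExprOrBd'}$ and refolding the translation gives precisely $\stepsmany{\srcRtConf}{\srcRtConf'}$ and $\tossartconfshort{\ssaenvs}{\srcRtConf'}{\ssaRtConf'}$, as required; the lemma's final conclusion also re-establishes $\validConf{\srcRtConf'}$ for the invariant.

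For part~\ref{theorem:consistency:appendix:a}, assume $\ssaRtConf$ is terminal, that is, $\expr \equiv \val$ is a value. This is the step that makes the simulation \emph{weak}: the source side may still need to reduce before reaching a terminal form, so I would invoke Lemma~\ref{lemma:value:monotonicity} with premises $\validConf{\srcRtConf}$ and $\tossartconf{\ssaenvs}{\srcRtState}{\srcExprOrBd}{\ssaRtState}{\val}$. It returns $\srcExprOrBd'$ (a $\srcReturn{\srcVal}$ when we began from a body, a bare value otherwise) reached by $\stepsStarFour{\srcRtState}{\srcExprOrBd}{\srcRtState'}{\srcExprOrBd'}$, still translating to $\val$. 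Since $\srcExprOrBd'$ is then terminal while the heap and signatures are unchanged, $\srcRtConf' \equiv \pair{\srcRtState'}{\srcExprOrBd'}$ is the required configuration with $\stepsmany{\srcRtConf}{\srcRtConf'}$ and $\tossartconfshort{\ssaenvs}{\srcRtConf'}{\ssaRtConf}$; the zero-step case in which the source is already terminal is covered by reflexivity of $\stepsStarFour$.

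The main obstacle is not the step-matching itself, which is discharged by the lemmas, but making the validity bookkeeping airtight: being precise that $\validConf{\cdot}$ holds on the relevant configurations and is preserved, and correctly aligning the two languages' notions of ``terminal''. All the substantive reasoning, namely the case analysis on the SSA reduction, the stack manipulation via the evaluation-context lemmas, and the fact that a single SSA $\letif$ step corresponds to several source steps, has already been absorbed into Lemma~\ref{lemma:ssa:expr:consistency:full}, so the theorem proper is essentially an unfold, apply, and refold argument.
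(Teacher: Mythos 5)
Your proposal is correct and takes essentially the same route as the paper, whose proof discharges part (a) via Lemma~\ref{lemma:value:monotonicity} and part (b) via Lemma~\ref{lemma:ssa:expr:consistency:full}, exactly as you do. Your additional care in threading the $\validConf{\cdot}$ hypothesis---which both lemmas require but the theorem statement leaves implicit---is a fair refinement of the paper's one-line argument, not a departure from it.
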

\begin{proof}

Part \ref{theorem:consistency:appendix:a} is proven by use of
by Lemma~\ref{lemma:value:monotonicity}, and 
part \ref{theorem:consistency:appendix:b} 
by Lemma~\ref{lemma:ssa:expr:consistency:full}.

\end{proof}

\clearpage

\subsection{Type Safety}

\begin{lemma}[Substitution Lemma]\label{lemma:substitution}
If
\begin{enumerate}[label=(\alph*)]
  \item $\typecheck{\env}{\many{\exprb}}{\many{\rtypeb}}$
  \item $\issubtype{\envextex{\env}{\many{\evar}}{\many{\rtypeb}}}{\many{\rtypeb}}{\many{\rtypeb}'}$
  \item $\typecheck{\envextex{\env}{\many{\evar}}{\many{\rtypeb}'}}{\expr}{\rtype}$
\end{enumerate}
then
\begin{enumerate}
  \item[]
$\wfconstr{\env}
  {\mconcatenate{
    \varbinding{\appsubst{\tsubst{\many{\exprb}}{\many{\evar}}}{\expr}}{\rtypec}
  }{
    \rtypec \subt \rtype
  }}$
  \end{enumerate}
\end{lemma}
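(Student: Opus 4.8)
The plan is to proceed by structural induction on the derivation of hypothesis~(c), namely $\typecheck{\envextex{\env}{\many{\evar}}{\many{\rtypeb}'}}{\expr}{\rtype}$, since the substitution $\subst \defeq \tsubst{\many{\exprb}}{\many{\evar}}$ acts on the expression $\expr$ and we must re-derive a type for the substituted term. At each node I would establish that $\appsubst{\subst}{\expr}$ admits some type $\rtypec$ under $\env$ with $\issubtype{\env}{\rtypec}{\appsubst{\subst}{\rtype}}$; the $\rtype$ appearing in the stated conclusion is read with $\subst$ applied, which is forced anyway since $\rtype$ may mention the $\many{\evar}$ (e.g.\ via the singleton produced by \lqchkvar) and must become closed over $\env$ once those binders are eliminated. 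A supporting ingredient I would isolate first is that subtyping is stable under $\subst$: if $\issubtype{\envextex{\env}{\many{\evar}}{\many{\rtypeb}'}}{\rtype_1}{\rtype_2}$ then $\issubtype{\env}{\appsubst{\subst}{\rtype_1}}{\appsubst{\subst}{\rtype_2}}$, which reduces (via rule \subbase) to soundness of the logical substitution against the environment embedding $\embed{\cdot}$.

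The base cases are \lqchkconst and \lqchkvar. For a constant, or a variable not among $\many{\evar}$, the map $\subst$ is the identity on the term and reflexivity/transitivity of subtyping (\subrefl, \subtrans) closes the case. The delicate base case is $\expr \equiv \evar_j$ with $\evar_j \in \many{\evar}$: here $\subst$ rewrites the term to $\exprb_j$ while \lqchkvar assigns $\rtype = \singleton{\rtypeb_j'}{\evar_j}$, so $\appsubst{\subst}{\rtype} = \singleton{\appsubst{\subst}{\rtypeb_j'}}{\exprb_j}$. I would take for $\rtypec$ the self-strengthened type $\singleton{\rtypeb_j}{\exprb_j}$ of $\exprb_j$ (obtainable from hypothesis~(a) when $\exprb_j$ is a variable or a value, which is exactly the instance used by subject reduction), and then chain: the base part $\rtypeb_j \subt \appsubst{\subst}{\rtypeb_j'}$ comes from pushing hypothesis~(b) through $\subst$, and the added conjunct $\vv = \exprb_j$ matches on both sides, giving the required subtyping by \subbase.

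The inductive cases (\lqchkfieldimm, \lqchkfieldmut, \lqchkssactx, \lqchkinv, \lqchkasgn, \lqchknew, \lqchkcast) all follow one pattern: invoke the induction hypothesis on each immediate subexpression, then reassemble the conclusion type with the same rule. The resulting subtyping obligation is discharged by congruence of subtyping through the type constructors and through existential quantification (rules \subbase, \subwitness, \subbind), using that $\subst$ commutes with the existential binder and with the strengthening operator $\strengthenname$. For the rules that carry well-formedness and object-constraint premises (\lqchkfieldimm, \lqchkinv, \lqchknew), I additionally need that these premises survive $\subst$; I would state small auxiliary substitution lemmas saying that $\subst$ preserves type well-formedness (Figure~\ref{fig:wf}) and preserves the \textsf{has} and field-membership judgments of the constraint system (Figure~\ref{fig:structural:constraints}), each proved by a routine induction.

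The main obstacle is the variable case together with self-strengthening: reconciling the singleton $\singleton{\rtypeb_j'}{\evar_j}$ emitted by \lqchkvar with the type actually derivable for the substituted term $\exprb_j$ is precisely where hypothesis~(b) is indispensable, since it bridges the gap between the $\rtypeb_j$ produced by~(a) and the expected $\rtypeb_j'$. The secondary difficulty, underpinning both this case and the stability-of-subtyping ingredient, is showing that the logical substitution realised by $\subst$ is sound at the level of $\embed{\cdot}$, i.e.\ that replacing the $\many{\exprb}$ for $\many{\evar}$ inside refinements sends valid verification conditions to valid ones; this is what lets subtyping and well-formedness be carried from $\envextex{\env}{\many{\evar}}{\many{\rtypeb}'}$ down to $\env$.
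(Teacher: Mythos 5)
Your proposal is correct and takes essentially the same route as the paper: the paper's entire proof of this lemma is the single line ``by induction on the derivation of the statement $\typecheck{\envextex{\env}{\many{\evar}}{\many{\rtypeb}}}{\expr}{\rtype}$,'' which is precisely the induction you perform (on hypothesis~(c); the paper's $\many{\rtypeb}$ in place of $\many{\rtypeb}'$ there appears to be a typo). The detail you supply beyond the paper's sketch --- the variable case bridged by hypothesis~(b) against the self-strengthened singleton, stability of subtyping under the substitution via the embedding $\embed{\cdot}$, the auxiliary preservation lemmas for well-formedness and the constraint system, and the observation that the conclusion's $\rtype$ must implicitly be read with the substitution applied --- is consistent with, and fills in what is left implicit by, the paper's one-line argument.
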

\begin{proof}
By induction on the derivation of the statement 
$\typecheck{\envextex{\env}{\many{\evar}}{\many{\rtypeb}}}{\expr}{\rtype}$.
\end{proof}

\begin{lemma}[Environment Substitution]\label{lemma:subst:env}
If 
$\typecheck
  {\mconcatenate{\envext{\env_1}{\evar}{\rtype}}{\env_2}}{\exprb}{\rtypeb}$,
then
$\typecheck
  {\mconcatenate{\envext{\env_1}{\evar}{\rtype}}{\appsubst{\esubst{\evarb}{\evar}}{\env_2}}}
  {\appsubst{\esubst{\evarb}{\evar}}{\exprb}}
  {\appsubst{\esubst{\evarb}{\evar}}{\rtypeb}}$.
\end{lemma}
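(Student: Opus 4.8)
The plan is to prove this by a \emph{simultaneous} structural induction over all the judgment forms that the type system references mutually, namely expression typing $\typecheck{\cdot}{\cdot}{\cdot}$, context typing $\typecheckctx{\cdot}{\cdot}{\cdot}$, subtyping $\issubtype{\cdot}{\cdot}{\cdot}$, the well-formedness judgments, and the structural (Object Constraint System) judgments $\sconstraint{\ssaSignatures}{\cdot}{\cdot}$. For each of these I would state the analogous renaming claim: replacing $\evar$ by $\evarb$ throughout the tail $\env_2$ of the environment and throughout the subject and result preserves derivability. An induction on typing alone does not close, because $\lqchkcast$ appeals to (compatibility) subtyping, $\lqchkfieldimm$ and $\lqchknew$ appeal to well-formedness and to the $\has{}{}$/$\fieldsOfVar{}$ constraints, while conversely the $\subwitness$ subtyping rule and the $\wffield$/$\wffun$ well-formedness rules appeal back to typing; hence the induction must range over the whole bundle at once. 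This lemma is genuinely separate from the general Substitution Lemma (Lemma~\ref{lemma:substitution}), which only substitutes into a single trailing block $\many{\evar}$ and discards it, whereas here the binding $\evar:\rtype$ persists and $\env_2$ may itself mention $\evar$.

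Before the induction I would record the purely syntactic \emph{commutation} facts for $\esubst{\evarb}{\evar}$. Since $\evar\ne\evarb$ are distinct variables and the substitution is capture-avoiding (every rule side-condition $\fresh{\cdot}$ is discharged by $\alpha$-renaming the freshly chosen binders away from $\evar$ and $\evarb$), the substitution distributes over $\wedge$ and $\neg$ in predicates; it commutes with strengthening, $\appsubst{\esubst{\evarb}{\evar}}{\strengthen{\rtype}{\pred}}=\strengthen{\appsubst{\esubst{\evarb}{\evar}}{\rtype}}{\appsubst{\esubst{\evarb}{\evar}}{\pred}}$, and hence with self-strengthening $\singleton{\rtype}{\cterm}$; and it commutes with the field-declaration substitutions used inside the structural rules $\scfieldi$, $\scfieldc$. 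Each of these is immediate from the defining equations.

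With these in hand the induction is mostly mechanical: in every rule I would apply the induction hypotheses to the premises, rewrite using the commutation facts, and reassemble the conclusion with the same rule, noting that the environment extensions appearing in premises are tail extensions and so lie inside the substituted $\env_2$. Two cases need attention. In $\lqchkvar$ with subject $\evar$, the substituted subject is $\evarb$ and the substituted result type is $\singleton{\rtype}{\evarb}$, so deriving $\typecheck{\mconcatenate{\envext{\env_1}{\evar}{\rtype}}{\appsubst{\esubst{\evarb}{\evar}}{\env_2}}}{\evarb}{\singleton{\rtype}{\evarb}}$ requires $\idxMapping{\cdot}{\evarb}=\rtype$, i.e.\ that $\evarb$ is already bound at type $\rtype$ in $\env_1$; this is exactly the side-condition under which the lemma is invoked (e.g.\ when the actual $\Phi$-variables of \rletif are substituted for the formal ones), and I would state it explicitly. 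In $\subbase$ the premise is the validity obligation $\validimp{\env}{\pred}{\pred'}$; here I would show that the embedding commutes with the renaming, that $\embed{\mconcatenate{\envext{\env_1}{\evar}{\rtype}}{\appsubst{\esubst{\evarb}{\evar}}{\env_2}}}$ entails $\appsubst{\esubst{\evarb}{\evar}}{\embed{\mconcatenate{\envext{\env_1}{\evar}{\rtype}}{\env_2}}}$ (using $\evarb:\rtype$ to supply the renamed conjunct), and then invoke closure of validity in the decidable background theory under substituting a variable for a variable in both hypotheses and goal.

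The main obstacle is twofold: first, setting up the mutual induction so that the typing/subtyping/well-formedness/constraint recursion is manifestly well-founded (I would induct on the sum of derivation heights across the bundle); and second, the $\subbase$ step, which is the only place that leaves the purely syntactic level and relies on $\embed{\cdot}$ commuting with substitution together with closure of SMT entailment under renaming. Every remaining case reduces to the commutation facts and a rule-by-rule reassembly, with the $\lqchkvar$ base case pinned down by the binding condition on $\evarb$.
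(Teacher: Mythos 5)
The paper's entire proof of this lemma is the single word ``Straightforward,'' and your induction-over-derivations argument with substitution-commutation facts is precisely the standard proof that word gestures at, so you are taking the same (indeed the only) route, merely spelled out in full, including the necessary mutual induction with subtyping, well-formedness, and the structural constraints. Your variable-case observation --- that the conclusion is only derivable when $\evarb$ is already bound at (a subtype of) $\rtype$, exactly as at the lemma's invocation site in the \rletif{} case of subject reduction --- is a genuine sharpening of the lemma as literally stated, which the paper leaves implicit.
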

\begin{proof} Straightforward. \end{proof}

\begin{lemma}[Weakening Subtyping]\label{lemma:subtype:weaken}
  If
  $\issubtype{\env}{\rtypeb}{\rtype}$,
  then
  $\issubtype{\envextex{\env}{\evar}{\rtypec}}{\rtypeb}{\rtype}$.
\end{lemma}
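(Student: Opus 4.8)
The plan is to proceed by induction on the derivation of $\issubtype{\env}{\rtypeb}{\rtype}$, showing that each subtyping rule remains applicable after the environment is extended with the fresh binding $\varbinding{\evar}{\rtypec}$. First I would dispatch the purely structural cases. Reflexivity (\subrefl) holds under any environment, and class extension (\subext) depends only on the program's class declarations rather than on $\env$, so the extra binding is irrelevant; transitivity (\subtrans) follows immediately by applying the induction hypothesis to its two premises and re-assembling the rule.

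The crux of the base case (\subbase) is the monotonicity of the logical embedding $\embed{\cdot}$. Since $\embed{\envextex{\env}{\evar}{\rtypec}}$ differs from $\embed{\env}$ only by the additional conjunct contributed by the new binding, the antecedent of the verification condition $\validimp{\env}{\pred}{\pred'}$ can only be strengthened when we pass to the larger environment. Strengthening the hypothesis of a valid implication preserves validity, so the VC for the extended environment still holds and \subbase re-applies with the same $\pred, \pred'$. This is the one place where the concrete definition of $\embed{\cdot}$ enters, and it is exactly the property that makes weakening sound.

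The existential rules require quantifier bookkeeping. In \subbind the bound variable can be assumed fresh, and in particular distinct from $\evar$, by alpha-renaming; the induction hypothesis then applies to the premise under the doubly-extended environment, and the side condition $\evar \notin \freevars{\rtype'}$ is undisturbed. The rule \subwitness is the main obstacle: its premise contains the typing judgment $\typecheck{\env}{\expr}{\rtypeb}$, so pushing weakening through it demands a companion weakening result for typing. Because typing and subtyping are mutually recursive — typing invokes subtyping, while subtyping embeds typing both through \subwitness and (via the embedding) through \subbase — I would not prove this lemma in isolation but rather establish weakening for $\typecheck{\cdot}{\cdot}{\cdot}$ and for $\issubtype{\cdot}{\cdot}{\cdot}$ simultaneously, by a single mutual induction on the two classes of derivations.

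The remaining obligations are routine and I would relegate them to inline remarks: checking that the well-formedness side conditions of the existential rules survive the extra binding, and that the substitution $\appsubst{\tsubst{\expr}{\evar}}{\rtype'}$ in \subwitness commutes with the weakening (again using freshness to avoid capture). The only genuinely delicate point is the mutual-induction setup needed for \subwitness; every other case reduces to the single observation that enlarging the environment merely adds hypotheses to verification conditions and to type-checking contexts, which is a monotone operation.
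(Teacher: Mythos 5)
Your proposal is correct, and it is the standard argument; the paper itself offers no detail at all here --- its proof of this lemma is literally the single word ``Straightforward,'' so your write-up is best read as an expansion of what the authors deemed routine rather than a divergence from them. The one point where you add genuine content is the \subwitness{} case: you are right that its typing premise $\typecheck{\env}{\expr}{\rtypeb}$ blocks a naive standalone induction, and that weakening for typing and for subtyping must be established together. The paper tacitly acknowledges this by stating Weakening Typing (Lemma~\ref{lemma:typing:weaken}) as a separate, adjacent lemma --- also proved ``Straightforward'' --- but it never spells out that the two are intertwined through \subwitness{} (and through subtyping premises inside typing rules such as \lqchkctxletif), so your explicit mutual-induction setup (equivalently, induction on derivation height across both judgment forms) is exactly the right way to make the claim rigorous. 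Your remaining case analysis is sound: \subrefl, \subext, and \subtrans{} are immediate; \subbase{} reduces to monotonicity of the embedding $\embed{\cdot}$, since the extra binding only adds a conjunct to the antecedent of the verification condition $\validimp{\env}{\pred}{\pred'}$; and the alpha-renaming discipline you invoke for \subbind{} and for the substitution in \subwitness{} correctly discharges the freshness and capture concerns. No gaps.
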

\begin{proof} Straightforward. \end{proof}

\begin{lemma}[Weakening Typing]\label{lemma:typing:weaken}
If
$\typecheck{\env}{\expr}{\rtype}$,
then for  $\env' \supseteq \env$,
it holds that 
$\typecheck{\env'}{\expr}{\rtype}$.
\end{lemma}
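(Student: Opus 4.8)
The plan is to proceed by structural induction on the derivation of $\typecheck{\env}{\expr}{\rtype}$, showing for each typing rule of Figure~\ref{fig:typing} that all of its premises continue to hold when $\env$ is replaced by any $\env' \supseteq \env$, so that the very same rule rederives the conclusion $\typecheck{\env'}{\expr}{\rtype}$. The base cases are immediate: for \lqchkvar the premise $\idxMapping{\env}{\evar} = \rtype$ is preserved under any extension, since $\env'$ retains the binding for $\evar$; and \lqchkconst has no environment dependence at all.

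For the inductive cases (\lqchkfieldimm, \lqchkfieldmut, \lqchkssactx, \lqchkinv, \lqchkasgn, \lqchknew, and the context rules \lqchkctxletin and \lqchkctxletif), I would apply the induction hypothesis to each typing sub-derivation. The one point of care is that several of these rules extend the environment locally with a freshly chosen variable $\evarb$, yielding a premise environment of the form $\envextex{\env}{\evarb}{\rtype}$. Invoking the standard convention that such fresh variables may be chosen to avoid every variable occurring in $\env'$, we have $\envextex{\env}{\evarb}{\rtype} \subseteq \envextex{\env'}{\evarb}{\rtype}$, so the induction hypothesis applies to the sub-derivation directly and the freshness side-condition $\fresh{\evarb}$ is preserved.

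The remaining premises are the semantic side-conditions, and each reduces to monotonicity under environment extension. Subtyping premises $\issubtype{\cdot}{\cdot}{\cdot}$ are handled verbatim by the already-established Weakening Subtyping lemma (Lemma~\ref{lemma:subtype:weaken}). Well-formedness premises $\wftype{\env}{\rtype}$ and the $\wfconstr{\cdot}{\cdot}$ obligations are closed under environment extension by the corresponding (routine) weakening of the well-formedness judgment. The interesting case is the compatibility-subtyping premise $\semsubtype{\env}{\rtypeb}{\rtype}$ of \lqchkcast: unfolding Definition~\ref{def:main:compatibility:subtype}, its success hinges on an $\anycast$ computation whose guard is an entailment of the shape $\embed{\env} \imp \classinv{\cnameb}{\vv}$, followed by an ordinary subtyping check. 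Since $\env' \supseteq \env$ only adds conjuncts to the embedding, we have $\embed{\env'} \imp \embed{\env}$, so the guard entailment is preserved, the $\anycast$ still succeeds with the same result, and the residual subtyping is again weakened by Lemma~\ref{lemma:subtype:weaken}. This monotonicity of $\embed{\cdot}$ is exactly what makes all the logic-level conditions survive extension.

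I expect the main obstacle to be bookkeeping rather than mathematics: namely, managing the fresh-variable side-conditions uniformly across the rules so that the premise environments of $\env$ and of $\env'$ remain compatible, and verifying that compatibility subtyping (which is defined by a separate $\anycast$ meta-function, not by an inference rule) is genuinely monotone in the environment. Given Lemma~\ref{lemma:subtype:weaken} and the observation that enlarging $\env$ only strengthens $\embed{\env}$, both concerns are discharged mechanically, which is why the result is essentially straightforward.
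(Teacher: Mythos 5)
Your proposal is correct and is exactly the argument the paper elides: the paper's entire proof of this lemma is the single word ``Straightforward,'' and the expected elaboration is precisely your structural induction on the typing derivation, with fresh variables chosen away from $\env'$, subtyping premises discharged by Lemma~\ref{lemma:subtype:weaken}, and the logical side-conditions (including the compatibility-subtyping guard in the cast case) preserved because $\env' \supseteq \env$ only adds conjuncts to $\embed{\env}$, so $\embed{\env'} \Rightarrow \embed{\env}$. The one bookkeeping caveat is that Lemma~\ref{lemma:subtype:weaken} is stated for extension by a single type binding, so it must be iterated and complemented by the observation that added guard predicates likewise only strengthen the embedding---which your monotonicity remark already covers.
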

\begin{proof} Straightforward. \end{proof}


\begin{lemma}[Store Type]\label{lemma:store:type}
If 
$\newwfstore{\storety}{\ssaHeap}$, 
$\idxMapping{\ssaHeap}{\loc} = \ssaObject$
and
$\idxMapping{\storety}{\loc} = \rtype$,
then
$\rttypecheck{\storety}{\ssaHeap}{\ssaObject}{\rtypeb, \rtype \leq \rtypeb}$.
\end{lemma}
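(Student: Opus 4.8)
The plan is to proceed by induction on the derivation of the heap well-formedness judgment $\newwfstore{\storety}{\ssaHeap}$, in order to isolate the single binding that types $\loc$. Since $\idxMapping{\ssaHeap}{\loc} = \ssaObject$, the location $\loc$ is bound somewhere in $\ssaHeap$: rule \hemp is impossible, and at each \hcons node I follow the sub-heap whose domain contains $\loc$, eventually reaching a leaf established by \hbndinst, namely $\newwfstore{\storety}{\mapping{\loc}{\ssaObject}}$.

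First I would invert \hbndinst on that binding. This recovers: the decomposition $\ssaObject = \heapObject{\loc'}{\fieldDefsSym}$ with $\fieldDefsSym$ split into immutable field definitions $\immFieldDefs{\fieldname}{\val}_{\immindex}$ and mutable ones; the base-type fact $\basetype{\idxMapping{\storety}{\loc}} = \cname$, i.e. $\basetype{\rtype} = \cname$; the runtime value typings $\rttypecheckval{\storety}{\many{\val}_{\immindex}}{\many{\rtype}_{\immindex}}$; and the structural and subtyping constraints, in particular that the class invariant $\classinv{\cname}{\evarb}$ is derivable and that each recovered immutable field type is a subtype of its declared type.

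Next I would apply rule \rtchkobj, whose premises form a subset of what the inversion already supplies: the base type $\basetype{\idxMapping{\storety}{\loc}} = \cname$, the field split obtained from $\rtFieldDefs{\ssaHeap}{\loc} = \fieldDefsSym$ (consistent because $\idxMapping{\ssaHeap}{\loc} = \ssaObject$), and the immutable value typings $\rttypecheckval{\storety}{\many{\val}_{\immindex}}{\many{\rtype}_{\immindex}}$, which depend only on $\storety$ and so transfer unchanged from the isolated binding to the full heap $\ssaHeap$. This produces $\rttypecheck{\storety}{\ssaHeap}{\ssaObject}{\rtypeb}$ with $\rtypeb$ the canonical existential object type $\texist{\many{\evarb}_{\immindex}}{\many{\rtype}_{\immindex}}{\reftp{\cname}{\concatpreds{\tdotref{\vv}{\many{\fieldname}} = \many{\evarb}_{\immindex}}{\classinv{\cname}{\vv}}}}$.

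The remaining and principal obligation is the subtyping $\rtype \leq \rtypeb$. The base-type side is immediate since $\basetype{\rtype} = \cname = \basetype{\rtypeb}$, so the work lies on the refinement side, discharged by \subwitness together with \subbase: I would eliminate the existential in $\rtypeb$ via \subwitness, instantiating each witness $\evarb_{\immindex}$ with the corresponding immutable field $\tdotref{\vv}{\fieldname_{\immindex}}$, after which the body reduces to $\concatpreds{\tdotref{\vv}{\many{\fieldname}} = \tdotref{\vv}{\many{\fieldname}}}{\classinv{\cname}{\vv}}$, whose first conjunct is trivial and whose second is exactly the class-invariant constraint recovered from \hbndinst. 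I expect this step to be the main obstacle: it requires aligning the type $\rtype$ recorded in the store typing with the inferred canonical type $\rtypeb$, i.e. showing that $\rtype$'s refinement entails the field-equality and class-invariant predicate of $\rtypeb$. This rests on the invariant, maintained by the heap-typing rules and the allocation rule \lqchknew (which assigns precisely this canonical shape), that $\storety$ records a strengthening of the canonical object type at each location; under that invariant the implication, and hence $\rtype \leq \rtypeb$, reduces to a valid verification condition furnished by the well-formedness premises, collapsing to \subrefl in the case $\rtype = \rtypeb$.
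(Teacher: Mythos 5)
Your skeleton---isolating the binding for $\loc$ by induction over \hemp/\hbndinst/\hcons, inverting \hbndinst, and reassembling its premises into \rtchkobj---is exactly the elaboration the paper leaves implicit (its own proof of this lemma is the single word ``Straightforward''), and up to that point the plan works: the premises of \rtchkobj are a subset of what inversion supplies, and the value typings $\rttypecheckval{\storety}{\many{\val}_{\immindex}}{\many{\rtype}_{\immindex}}$ mention only $\storety$, so they transfer unchanged from the isolated binding to the full heap.

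The genuine gap is in your final subtyping step. First, \subwitness does not license your instantiation: its premise is $\typecheck{\env}{\expr}{\rtypeb}$, so witnesses must be typable \emph{expressions}, whereas $\tdotref{\vv}{\fieldname_{\index}}$ is a term of the refinement logic and $\vv$ is not a program variable in scope; and even read as a field-access expression, nothing types it at $\rtype_{\immindex}$, because $\many{\rtype}_{\immindex}$ are the types of the \emph{particular stored values}, not of the fields of an arbitrary value of type $\rtype$. The natural witnesses are the stored values $\many{\val}_{\immindex}$ themselves, whose typings are precisely the \hbndinst premises---but then the residual obligation is that the refinement of $\rtype$ entails $\tdotref{\vv}{\many{\fieldname}} = \many{\val}_{\immindex} \wedge \classinv{\cname}{\vv}$, and the hypotheses give you only $\basetype{\idxMapping{\storety}{\loc}} = \cname$: nothing in \hbndinst constrains the refinement of $\idxMapping{\storety}{\loc}$, so for instance $\rtype = \reftp{\cname}{\mathit{true}}$ satisfies every premise of heap well-formedness yet is not a subtype of the canonical existential type whenever $\classinv{\cname}{\vv}$ is nontrivial. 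You sense this yourself and appeal to the invariant that $\storety$ records a strengthening of the canonical type at each location, as \lqchknew and the \rnew case of subject reduction arrange---but that invariant is not a hypothesis of the lemma and is not derivable from $\newwfstore{\storety}{\ssaHeap}$ as defined, so what you have is a proof of a strengthened lemma whose heap-typing judgment bakes the invariant in. To close the argument you must either strengthen \hbndinst (e.g., require $\idxMapping{\storety}{\loc}$ to be, or be a subtype of, the canonical existential type for its class) or thread canonicity of $\storety$ through as an explicit side condition; once that is available the subtyping does collapse to \subrefl (or a trivial strengthening), as you note.
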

\begin{proof} Straightforward. \end{proof}

\begin{lemma}[Method Body Type -- Lemma~A.3 from \cite{Nystrom2008}]\label{lemma:method:body:type}
If
  \begin{enumerate}[label=(\alph*)]
    \item \emph{$\wfconstr{\envextex{\env}{\evarb}{\rtype}}
      {\has{\evarb}{\parens{\methdef{\methodname}{\evarb}{\rtypec}{\pred}{\rtypeb}{\expr}}}}$}

  \item $\issubtype{\envextex{\envextex{\env}{\evarb}{\rtype}}{\many{\evarb}}{\many{\rtype}}}
                   {\many{\rtype}}{\many{\rtypec}}$
\end{enumerate}
Then for some type $\rtypeb'$ it is the case that:
\begin{enumerate}
  \item[]
$
\wfconstr{\envextex{\envextex{\env}{\evarb}{\rtype}}{\many{\evarb}}{\many{\rtype}}}
         {
           \mconcatenate{\varbinding{\expr}{\rtypeb'}}
                        {\rtypeb' \subt \rtypeb}
         }
$
\end{enumerate}

\end{lemma}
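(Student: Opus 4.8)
The plan is to proceed by induction on the derivation of the structural constraint in hypothesis~(a), that is, on how the membership $\has{\evarb}{\parens{\methdef{\methodname}{\evarb}{\rtypec}{\pred}{\rtypeb}{\expr}}}$ is established. Only three rules can conclude such a judgment: \brackets{\scmethb} (the method is declared directly in the class of $\evarb$), \brackets{\scmethi} (it is inherited from a superclass), and \brackets{\scmethc} (the receiver carries a refinement). The base case rests on the well-formedness of the program, which guarantees that every declared method body type-checks against its declared signature; the inductive cases transport this fact up the class hierarchy and through receiver refinement. The two workhorses are the Substitution Lemma (Lemma~\ref{lemma:substitution}) and the Environment Substitution Lemma (Lemma~\ref{lemma:subst:env}), together with the weakening lemmas (Lemmas~\ref{lemma:subtype:weaken} and~\ref{lemma:typing:weaken}).

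For the base case \brackets{\scmethb}, inverting the rule exposes the class declaration of $\evarb$ containing a method $\methdef{\methodname}{\many{\evarb}}{\rtypec_0}{\pred_0}{\rtypeb_0}{\expr}$, where the signature reported by \textsf{has} is obtained by the substitution $\appsubst{\tsubst{\evarb}{\this}}{\cdot}$, so that $\rtypec = \appsubst{\tsubst{\evarb}{\this}}{\rtypec_0}$ and $\rtypeb = \appsubst{\tsubst{\evarb}{\this}}{\rtypeb_0}$. From program well-formedness I obtain a typing of the body against the declared signature, with $\this$ bound to the class, the formals bound to $\many{\rtypec_0}$, and $\pred_0$ recorded as a guard. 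First I would apply the Environment Substitution Lemma to rename $\this$ to the receiver variable $\evarb$, producing a typing in the $\evarb$-indexed environment with formals still at $\many{\rtypec}$. Then I would invoke the Substitution Lemma with the formals substituted \emph{for themselves} ($\many{\exprb} \defeq \many{\evarb}$): its premise~(a) is discharged by the variable rule \brackets{\lqchkvar}, its premise~(b) is exactly hypothesis~(b) combined with self-strengthening, and its premise~(c) is the body typing just obtained. Since $\appsubst{\esubst{\many{\evarb}}{\many{\evarb}}}{\expr} = \expr$, the conclusion of the Substitution Lemma is precisely the required $\typecheck{\envextex{\envextex{\env}{\evarb}{\rtype}}{\many{\evarb}}{\many{\rtype}}}{\expr}{\rtypeb'}$ with $\rtypeb' \subt \rtypeb$; the step effectively narrows the formal bindings from $\many{\rtypec}$ to the subtypes $\many{\rtype}$.

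The inductive cases build on this. In \brackets{\scmethi} the subderivation yields membership under the superclass binding $\envext{\env}{\evarb}{\cnameb}$ with an unchanged signature, so the induction hypothesis supplies the body typing there; narrowing the receiver binding from $\cnameb$ to $\cname$ (licensed by $\issubtype{\env}{\cname}{\cnameb}$ via rule \brackets{\subext}, and carried out with the same Substitution-Lemma device applied to $\evarb$) transports it to the subclass binding. In \brackets{\scmethc} the receiver has a refined type $\reftp{\cname}{\pred}$ and \textsf{has} strengthens the reported return type to $\strengthen{\rtypeb_0}{\appsubst{\tsubst{\evarb}{\this}}{\pred}}$; the induction hypothesis gives a body type $\rtypeb'$ with $\rtypeb' \subt \rtypeb_0$, and since the refinement $\appsubst{\tsubst{\evarb}{\this}}{\pred}$ sits in the environment embedding of $\evarb : \reftp{\cname}{\pred}$, subtyping yields $\rtypeb' \subt \strengthen{\rtypeb_0}{\appsubst{\tsubst{\evarb}{\this}}{\pred}}$, matching the strengthened signature.

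I expect the main obstacle to be the bookkeeping around the substitutions and strengthening that the \textsf{has} operator performs---in particular, checking that the $\this \mapsto \evarb$ renaming commutes with body typing (handled by Lemma~\ref{lemma:subst:env}) and that the \brackets{\scmethc} strengthening of the return type is met by a type the body already inhabits under the receiver's refinement. A second subtlety, which I would dispatch by the same trick used above, is the narrowing step: rather than proving a standalone narrowing lemma, reusing the Substitution Lemma with the identity substitution on $\many{\evarb}$ turns hypothesis~(b), namely $\many{\rtype} \subt \many{\rtypec}$, directly into the change of formal bindings while only sharpening the result type up to subtyping. Finally, one must ensure the precondition $\pred$ on which the body may depend is available; this is legitimate because $\pred$ is established as a premise at every site where this lemma is consumed (the third premise of \brackets{\lqchkinv} statically, and the operational premise of \brackets{\rinvoke} dynamically).
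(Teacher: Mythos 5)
Your proposal is correct and takes essentially the route the paper intends: the paper's own proof of this lemma is the single word ``Straightforward'' with the result credited to CFJ (Lemma~A.3 of Nystrom et al.), and your induction on the derivation of the \textsf{has} constraint --- base case \brackets{\scmethb} from method-declaration well-formedness, inheritance and refinement cases \brackets{\scmethi} and \brackets{\scmethc}, with narrowing of the formals obtained by instantiating the Substitution Lemma (Lemma~\ref{lemma:substitution}) at the variables themselves after the $\this\mapsto\evarb$ renaming via Lemma~\ref{lemma:subst:env} --- is precisely the standard CFJ-style expansion of that remark. Your closing observation that the precondition $\pred$ is not a hypothesis of the lemma but is entailed by the environment at every site where the lemma is consumed (the corresponding premise of \brackets{\lqchkinv} statically and of \brackets{\rinvoke} dynamically, so that adding the guard to the environment changes nothing semantically) is also the right way to reconcile the statement with a body that may depend on its guard.
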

\begin{proof} Straightforward. \end{proof}

\begin{lemma}[Cast]{\label{lemma:cast}}
If
    $\newwfstore{\storety}{\ssaHeap}$ and 
    $\dwfconstr{\env}{\storety}{
      \envbinding{\loc}{\rtypeb},
      \rtypeb \semsubt \rtype
    }$,
then
$\dwfconstr{\env}{\storety}{
      \envbinding{\idxMapping{\ssaHeap}{\loc}}{\rtypec},
      \rtypec \subt \rtype
    }$.
\end{lemma}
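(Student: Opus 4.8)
The plan is to reduce the claim to the definition of compatibility subtyping (Definition~\ref{def:main:compatibility:subtype}) and the Store Type Lemma (Lemma~\ref{lemma:store:type}), threading the statically verified cast condition through to the runtime object at $\loc$. First I would invert the location typing inside the hypothesis $\dwfconstr{\env}{\storety}{\envbinding{\loc}{\rtypeb}, \rtypeb \semsubt \rtype}$: by the typing rule for locations this is exactly $\idxMapping{\storety}{\loc} = \rtypeb$, together with $\semsubtype{\env}{\rtypeb}{\rtype}$. Writing $\ssaObject = \idxMapping{\ssaHeap}{\loc}$, the Store Type Lemma applied to $\newwfstore{\storety}{\ssaHeap}$ and $\idxMapping{\storety}{\loc} = \rtypeb$ yields a structural typing $\rttypecheck{\storety}{\ssaHeap}{\ssaObject}{\rtype_0}$ for the heap object and fixes its relationship to the store type $\rtypeb$. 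This is the judgment I must eventually massage into $\rttypecheck{\storety}{\ssaHeap}{\ssaObject}{\rtypec}$ with $\issubtype{\env}{\rtypec}{\rtype}$, which is precisely the content of the goal $\dwfconstr{\env}{\storety}{\envbinding{\ssaObject}{\rtypec}, \rtypec \subt \rtype}$.

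Next I would unfold $\semsubtype{\env}{\rtypeb}{\rtype}$: by definition $\anycast{\env}{\rtypeb}{\basetype{\rtype}} = \rtype_1 \neq \fail$ with $\issubtype{\env}{\rtype_1}{\rtype}$. Since $\anycast$ is defined by recursion on the existential structure of its type argument, I would proceed by induction on the shape of $\rtypeb$, mirroring that recursion. In the existential case $\rtypeb = \texist{\evar}{\rtype_1}{\rtype_0}$, $\anycast$ peels the binder and recurses under an extended environment; the existential binders introduced by the object-typing rule $\rtchkobj$ line up with these, so the induction hypothesis applies after introducing the witness. The base case $\rtypeb = \reftp{\cname}{\pred}$ carries the real content: here $\anycast$ succeeds exactly because $\embed{\env} \imp \embed{\pred} \imp \classinv{\basetype{\rtype}}{\vv}$, and it returns $\strengthen{\basetype{\rtype}}{\pred}$.

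The key step is then to transfer this statically verified implication onto the object $\ssaObject$. From the structural typing of $\ssaObject$ and the heap well-formedness hypothesis, the object's refinement entails its own class invariant $\classinv{\cname}{\vv}$ and pins its immutable fields to their values; combined with the store-type refinement $\pred$ recorded at $\loc$, this lets me retype $\ssaObject$ at base type $\basetype{\rtype}$, i.e.\ produce $\rttypecheck{\storety}{\ssaHeap}{\ssaObject}{\strengthen{\basetype{\rtype}}{\pred}}$, using the base-refinement subtyping rule (and rule \subext\ when $\basetype{\rtype}$ sits above $\cname$ in the class hierarchy). Taking $\rtypec$ to be this retyped type and chaining with $\issubtype{\env}{\rtype_1}{\rtype}$ by transitivity of subtyping discharges the goal.

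I expect the transfer step to be the main obstacle: the success of $\anycast$ is a purely logical entailment about the source refinement $\pred$, whereas the goal is about the runtime object recovered structurally from the heap, whose type is related to $\rtypeb$ only through the Store Type Lemma. Reconciling the direction of that subtyping relation with the entailment used by $\anycast$, and in particular checking that the object genuinely inhabits $\basetype{\rtype}$ even when $\basetype{\rtype}$ is not a syntactic superclass of $\cname$ (the interface/bit-vector downcasts), is where the argument is delicate, and it is exactly where the heap well-formedness invariant (that the object's fields satisfy their declared types and the class invariant holds) must be used.
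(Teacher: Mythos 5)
The paper does not actually contain a proof of this lemma to compare against: its entire argument is the single word ``Straightforward.'' Your reconstruction is therefore more detailed than the paper itself, and it assembles exactly the ingredients the formal development makes available --- inversion of \rtchkloc{} to get $\idxMapping{\storety}{\loc} = \rtypeb$, unfolding Definition~\ref{def:main:compatibility:subtype} into the $\anycast{\env}{\rtypeb}{\basetype{\rtype}}$ recursion, induction mirroring the existential structure, Lemma~\ref{lemma:store:type} and \hbndinst{} to relate the store type to the runtime object, and \subtrans{} to finish. That is the only sensible route, and to the extent the paper has an intended proof, yours is surely its skeleton.

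That said, you flag the ``transfer step'' as the main obstacle but do not discharge it, and it is worth being precise about why it does not follow from the stated hypotheses alone. First, $\newwfstore{\storety}{\ssaHeap}$ (rule \hbndinst) consults only $\basetype{\idxMapping{\storety}{\loc}}$; it never asserts that the object satisfies the store type's \emph{refinement} $\pred$, which your retyping at $\strengthen{\basetype{\rtype}}{\pred}$ requires. Second, Lemma~\ref{lemma:store:type} gives subtyping in the direction $\idxMapping{\storety}{\loc} \subt \rtypeb'$ (store type below the \rtchkobj{} package $\rtypeb'$), which is the unusable direction for pushing $\pred$ onto the object. To close the gap you need the invariant, implicit in the \rnew{} case of subject reduction, that $\idxMapping{\storety}{\loc}$ is exactly the allocation-time existential package (immutable fields pinned, conjoined with $\classinv{\cname}{\vv}$) whose base $\cname$ is the object's dynamic class --- under which the two types coincide and the object does satisfy $\pred$. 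Even then, concluding $\rtypec \subt \rtype$ via \subbase{} needs base subtyping $\cname \subt \basetype{\rtype}$, which for a genuine downcast is not syntactically given; it must be extracted from the $\instanceof{\vv}{\basetype{\rtype}}$ conjunct of $\classinv{\basetype{\rtype}}{\vv}$ via a coherence assumption tying that predicate to the class table (\subext). Your last paragraph correctly locates this delicacy, but as written the proposal names the obstacle rather than resolving it --- which is also, in fairness, more than the paper does.
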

\begin{proof}
  Straightforward.
\end{proof}

\begin{lemma}[Evaluation Context Typing]{\label{lemma:eval:ctx:typing}}
If
$\typecheck{\env}{\idxEctx{\ssaEvalCtx}{\expr}}{\rtype}$,
then for some type $\rtypeb$ it holds that
$\typecheck{\env}{\expr}{\rtypeb}$.
\end{lemma}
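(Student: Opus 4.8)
The plan is to proceed by structural induction on the evaluation context $\ssaEvalCtx$. The base case is the empty context $\empevalctx$, where $\idxEctx{\empevalctx}{\expr} \equiv \expr$, so the conclusion holds immediately by taking $\rtypeb = \rtype$ and reusing the hypothesis.

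For the inductive cases I would invert the typing derivation of $\idxEctx{\ssaEvalCtx}{\expr}$ according to the top-level syntactic form of the filled context and read off the premise that types the immediate hole-carrying subterm $\idxEctx{\ssaEvalCtx'}{\expr}$, to which the induction hypothesis then applies. Concretely: a field-access context $\dotref{\ssaEvalCtx'}{\fieldname}$ is typed by \lqchkfieldimm or \lqchkfieldmut, whose first premise types the receiver; a call context (either $\methcall{\ssaEvalCtx'}{\methodname}{\many{\expr}}$ or an argument hole) is typed by \lqchkinv, whose premises type the receiver and every argument; a construction context $\newexpr{\cname}{\many{\val}, \ssaEvalCtx', \many{\expr}}$ is typed by \lqchknew, whose premise types the arguments; a cast context $\cast{\rtype}{\ssaEvalCtx'}$ is typed by \lqchkcast, whose premise types the scrutinee; and the two assignment contexts $\dotassign{\ssaEvalCtx'}{\fieldname}{\expr'}$ and $\dotassign{\val}{\fieldname}{\ssaEvalCtx'}$ are typed by \lqchkasgn, whose premises type both operands. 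In each of these the relevant premise is a judgment $\typecheck{\env}{\idxEctx{\ssaEvalCtx'}{\expr}}{\rtype'}$ under the very same environment $\env$, so the induction hypothesis yields the desired $\typecheck{\env}{\expr}{\rtypeb}$.

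The only cases requiring an extra inversion step are the let- and conditional-contexts $\letin{\evar}{\ssaEvalCtx'}{\expr'}$ and $\letifshort{\many{\phi}}{\ssaEvalCtx'}{\expr'}{\expr''}{\expr'''}$ (hole in the bound expression, resp.\ the guard), since these are typed through the SSA-context machinery. Here I would first invert \lqchkssactx to split off the context-typing judgment $\typecheckctx{\env}{\ssactx}{\env'}$, then invert \lqchkctxletin (resp.\ \lqchkctxletif) to obtain a typing of the bound expression (resp.\ the guard) of the form $\typecheck{\env}{\idxEctx{\ssaEvalCtx'}{\expr}}{\rtype'}$, again under the unextended $\env$, whence the induction hypothesis applies. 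The main point to verify --- and the only mild obstacle --- is precisely that the hole-carrying subterm is always typed \emph{before} any new binder is introduced into the environment, so that the environment appearing in the induction hypothesis coincides with the ambient $\env$; this holds because in \lqchkctxletin and \lqchkctxletif the bound expression and the guard are checked under $\env$ itself, the fresh binder $\evar$ (resp.\ the branch environments $\env_1,\env_2$) entering only the continuation. The existential quantifiers produced by \lqchkssactx and by the field- and method-rules are immaterial, as the statement asserts only the existence of some type $\rtypeb$ for $\expr$.
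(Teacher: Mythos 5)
Your proposal is correct and follows the same route as the paper, whose proof is exactly a structural induction on the evaluation context $\ssaEvalCtx$ (the paper states only that one line and leaves the case analysis implicit). Your filled-in details — inverting the relevant typing rule in each context case, and observing for the \lqchkssactx/\lqchkctxletin/\lqchkctxletif cases that the hole-carrying subterm is checked under the unextended environment $\env$ — are precisely the checks that make the paper's sketch go through.
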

\begin{proof}
By induction on the structure of the evaluation context $\ssaEvalCtx$.
\end{proof}

\begin{lemma}[Evaluation Context Step Typing]{\label{lemma:eval:ctx:step:typing}}
If
\begin{itemize}
  \item[] 
    $\dtypechecktwo{\env}{\storety}{\idxEctx{\ssaEvalCtx}{\expr}}{\rtype}{\expr}{\rtypeb}$
\end{itemize}
and for some expression $\expr'$ and heap typing $\storety' \supseteq \storety$ it holds that
\begin{itemize}
  \item[] 
    $\dwfconstr{\env}{\storety'}{\mconcatenate{\varbinding{\expr'}{\rtypeb'}}{\rtypeb' \semsubt \rtypeb}}$
\end{itemize}
then
\begin{itemize}
  \item[] 
    $\dwfconstr{\env}{\storety'}
        {\mconcatenate
          {\varbinding{\idxEctx{\ssaEvalCtx}{\expr'}}{\rtype'}}
          {\rtype' \semsubt \rtype}
        }$
\end{itemize}
\end{lemma}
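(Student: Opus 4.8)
The plan is to prove this by induction on the structure of the evaluation context $\ssaEvalCtx$, mirroring the proof of the Evaluation Context Typing lemma. The statement is a contextual replacement principle: it lifts the one-step compatibility-subtyping guarantee on the plugged redex (the second hypothesis) to the surrounding context, which is exactly what is needed to discharge the \brackets{\rcevalctx} case of Subject Reduction.

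First I would treat the base case $\ssaEvalCtx = \empevalctx$. Here $\idxEctx{\ssaEvalCtx}{\expr} = \expr$, so the two-part hypothesis $\dtypechecktwo{\env}{\storety}{\expr}{\rtype}{\expr}{\rtypeb}$ assigns two types to the \emph{same} expression; by determinacy of type synthesis these coincide, $\rtype = \rtypeb$. Since $\idxEctx{\ssaEvalCtx}{\expr'} = \expr'$ and the second hypothesis already gives $\expr' : \rtypeb'$ with $\rtypeb' \semsubt \rtypeb = \rtype$, we take $\rtype' = \rtypeb'$ and are done.

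For the inductive step, write $\ssaEvalCtx$ as a one-level frame wrapped around a smaller context $\ssaEvalCtx_0$. In each case I would invert the typing derivation of the whole term $\idxEctx{\ssaEvalCtx}{\expr}$ according to the rule matching the frame's head: one premise types the hole-carrying subterm $\idxEctx{\ssaEvalCtx_0}{\expr}$ with some $\sigma$, while the plug $\expr$ keeps its type $\rtypeb$ from the input. These two facts form the premise $\dtypechecktwo{\env}{\storety}{\idxEctx{\ssaEvalCtx_0}{\expr}}{\sigma}{\expr}{\rtypeb}$ of the induction hypothesis for $\ssaEvalCtx_0$, which yields a $\sigma'$ with $\sigma' \semsubt \sigma$ and $\idxEctx{\ssaEvalCtx_0}{\expr'} : \sigma'$ under $\storety'$. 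I would then reassemble the original typing rule with $\sigma'$ and $\idxEctx{\ssaEvalCtx_0}{\expr'}$ in place of $\sigma$ and $\idxEctx{\ssaEvalCtx_0}{\expr}$ (weakening the non-hole premises from $\storety$ to $\storety'$, which is routine), obtaining a type $\rtype'$ for $\idxEctx{\ssaEvalCtx}{\expr'}$, and finally verify $\rtype' \semsubt \rtype$. The frames that merely forward the hole type (\brackets{\lqchkctxletin}, the assignment-RHS frame, and \brackets{\lqchkcast}) are immediate: for \brackets{\lqchkcast} the result type is the fixed annotation $\rho$, so I only need to re-establish the cast side-condition $\semsubtype{\env}{\sigma'}{\rho}$ from $\sigma' \semsubt \sigma$ and $\sigma \semsubt \rho$, i.e.\ a composition property of compatibility subtyping (which follows from the $\anycast$ definition together with transitivity of $\issubtype$ via \brackets{\subtrans}).

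I expect the real work --- and the main obstacle --- to be the frames where $\sigma$ flows into a structural constraint: field access (\brackets{\lqchkfieldimm}, \brackets{\lqchkfieldmut}), method invocation (\brackets{\lqchkinv}), object construction (\brackets{\lqchknew}), and the receiver position of field assignment (\brackets{\lqchkasgn}). There I must show that the \textsf{has} and \textsf{fields} facts extracted for $\sigma$ remain derivable for $\sigma' \semsubt \sigma$, and that the resulting existentially quantified types stay ordered by $\semsubt$. The argument rests on the definition of $\anycast$: it descends through existentials and, on a class refinement $\reftp{\cname}{\pred}$, produces $\strengthen{\cname}{\pred}$ whenever the invariant is provable, so it preserves the underlying class and hence all of its immutable/mutable field and method bindings. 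Consequently $\sigma'$ exposes the same members as $\sigma$, the non-hole premises of each rule re-derive, and the refinement obligations (argument subtyping in \brackets{\lqchkinv} and \brackets{\lqchknew}) discharge because $\semsubt$ subsumes ordinary $\subt$. Monotonicity of the result type then reduces to $\texist{\evarb}{\sigma'}{\tau} \semsubt \texist{\evarb}{\sigma}{\tau}$, which again follows by pushing $\anycast$ under the existential binder; I would isolate this as a small auxiliary monotonicity observation and appeal to it uniformly across these cases.
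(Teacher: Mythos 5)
Your proposal follows exactly the paper's proof, which consists in its entirety of the single line ``by induction on the structure of the evaluation context $\ssaEvalCtx$'' --- the same structural induction you propose. Your elaboration (the base-case identification of the two types, the frame-by-frame inversion and reassembly, and the careful treatment of frames where the hole's type feeds structural premises, arguing that compatibility subtyping $\semsubt$ preserves the underlying class and its members) supplies detail the paper leaves implicit but is consistent with its stated method.
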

\begin{proof}
By induction on the structure of the evaluation context $\ssaEvalCtx$.
\end{proof}

\begin{lemma}[Selfification]{\label{lemma:self}}
If
$\issubtype{\envextex{\env}{\evar}{\rtypeb}}{\rtypeb}{\rtype}$
then
$\issubtype{\envextex{\env}{\evar}{\rtypeb}}{\rtypeb}{\singleton{\rtype}{\evar}}$.
\end{lemma}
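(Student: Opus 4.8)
The plan is to prove the statement by induction on the structure of $\rtype$, following the three defining clauses of the strengthening operator $\strengthenname$, since $\singleton{\rtype}{\evar}$ is by definition $\strengthen{\rtype}{(\vv = \evar)}$. In each case I would invert the given derivation of $\issubtype{\envextex{\env}{\evar}{\rtypeb}}{\rtypeb}{\rtype}$, strengthen it with the self-equality $\vv = \evar$, and reassemble a derivation of the strengthened supertype. Throughout I write $\env' = \envextex{\env}{\evar}{\rtypeb}$.

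First, the base case $\rtype = \reftp{\btype}{\pred}$. Here $\singleton{\rtype}{\evar} = \reftp{\btype}{\pred \wedge (\vv = \evar)}$, so by rule \subbase it suffices to discharge the verification condition $\validimp{\env'}{\pred_0}{\pred \wedge (\vv = \evar)}$, where $\pred_0$ is the refinement of the source type. The conjunct $\pred$ is exactly the VC obtained by inverting the hypothesis, so the only new obligation is $\validimp{\env'}{\pred_0}{\vv = \evar}$. This is where the self-equality must come from: since $\idxMapping{\env'}{\evar} = \rtypeb$, rule \lqchkvar assigns $\evar$ the selfified type $\singleton{\rtypeb}{\evar}$, so the source refinement carries $\vv = \evar$; threading this conjunct through $\embed{\env'}$ makes $\vv = \evar$ available, and \subbase then yields the result.

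Second, the existential case $\rtype = \texist{\evarb}{\rtypec}{\rtype'}$, where I may assume $\evarb \neq \evar$. Since strengthening commutes with the existential binder, $\singleton{\rtype}{\evar} = \texist{\evarb}{\rtypec}{\singleton{\rtype'}{\evar}}$. I would invert the hypothesis with \subwitness to obtain a witness $\expr$ with $\typecheck{\env'}{\expr}{\rtypec}$ and $\issubtype{\env'}{\rtypeb}{\appsubst{\esubst{\expr}{\evarb}}{\rtype'}}$; because $\evarb \neq \evar$, strengthening commutes with the substitution $\esubst{\expr}{\evarb}$, so applying the induction hypothesis to $\appsubst{\esubst{\expr}{\evarb}}{\rtype'}$ and re-applying \subwitness rebuilds the strengthened existential. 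A symmetric use of \subbind handles an existential appearing on the source side.

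The main obstacle is the base case: pinning down precisely why the added conjunct $\vv = \evar$ is derivable. The entire content lies in the observation that the subtyping is sound only once the source is read as the selfified type of $\evar$ produced by \lqchkvar, so that its embedding already supplies $\vv = \evar$; the delicate bookkeeping is keeping the value variable $\vv$ distinct from the program variable $\evar$ under the substitutions in the existential case, and checking that the well-formedness side conditions on $\singleton{\rtype}{\evar}$ are preserved. The remaining manipulations of the subtyping rules are routine.
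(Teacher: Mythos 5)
Your inductive skeleton --- induction over the clauses of $\strengthenname$, \subbase{} in the base case, \subwitness/\subbind{} plus commutation of strengthening with substitution for the existential cases --- is plausible, and there is nothing in the paper to compare it against: the paper's entire proof of this lemma is the word ``Straightforward.'' But the step you yourself single out as the main obstacle is exactly where your argument breaks, and it cannot be repaired as written. Rule \lqchkvar{} belongs to the expression-typing judgment: it selfifies the type assigned to an occurrence of $\evar$ \emph{as an expression}, and has no bearing on the subtyping hypothesis, whose left-hand side is the plain type $\rtypeb$. Nor does the environment supply the equation you need: writing $\env' = \envextex{\env}{\evar}{\rtypeb}$ and $\rtypeb = \reftp{\btype_0}{\pred_0}$, the embedding $\embed{\env'}$ contributes the conjunct $\appsubst{\tsubst{\evar}{\vv}}{\pred_0}$, i.e.\ the refinement of $\rtypeb$ with its value variable \emph{substituted away}; inside the verification condition, $\vv$ is the shared value variable of the two types being compared and is completely unconstrained by the binding of $\evar$. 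Concretely, take $\rtypeb = \rtype = \reftp{\tnumber}{\vtrue}$: the hypothesis holds by \subrefl, but your conclusion requires the VC $\validimp{\env'}{\vtrue}{\vv = \evar}$, which is invalid; and since any derivation between refined base types must ultimately discharge such VCs (possibly chained through \subtrans, which composes valid implications and cannot manufacture $\vv = \evar$ from $\vtrue$), the strengthened judgment is underivable by your route.

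What is sound, immediate to prove, and sufficient for the paper's one use of the lemma (the \rfield{} case of the proof of Theorem~\ref{theorem:subj:reduc}, where the selfification variable is bound in the environment to an already-selfified type) is the variant that strengthens \emph{both} sides: from $\issubtype{\env'}{\rtypeb}{\rtype}$ conclude $\issubtype{\env'}{\singleton{\rtypeb}{\evar}}{\singleton{\rtype}{\evar}}$. At the VC level this adds the same conjunct $\vv = \evar$ to antecedent and consequent, so the base case needs no new fact at all, and your existential bookkeeping carries over unchanged. Alternatively, state selfification at the typing level in the style of self-strengthening~\cite{Knowles2009, Nystrom2008} --- from $\typecheck{\env'}{\evar}{\rtypeb}$ conclude $\typecheck{\env'}{\evar}{\singleton{\rtypeb}{\evar}}$ --- where soundness comes from the fact that the value being described \emph{is} that of $\evar$, which is precisely the intensional reading the stated lemma silently relies on. One further repair even for your version: the subtyping relation here includes \subrefl{} and \subtrans, so it is not syntax-directed, and ``inverting the hypothesis with \subwitness'' in the existential case is not licensed until you first establish an inversion (transitivity-elimination) lemma.
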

\begin{proof} Straightforward. \end{proof}

\begin{lemma}[Existential Weakening]{\label{lemma:exist:weaken}}
If
$\issubtype{\env}{\rtypec}{\rtypec'}$
then
$\issubtype{\env}
{\texist{\evar}{\rtypec}{\rtype}}
{\texist{\evar}{\rtypec'}{\rtype}}
$.
\end{lemma}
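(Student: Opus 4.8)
The plan is to peel off the existential on the left with rule \subbind\ and then rebuild the existential on the right with rule \subwitness, using the bound variable itself as the witness. Concretely, to establish $\issubtype{\env}{\texist{\evar}{\rtypec}{\rtype}}{\texist{\evar}{\rtypec'}{\rtype}}$ I would first apply \subbind, which moves the left-hand binder into the context and reduces the goal to
\[
  \issubtype{\envextex{\env}{\evar}{\rtypec}}{\rtype}{\texist{\evar}{\rtypec'}{\rtype}},
\]
whose side condition $\evar\notin\freevars{\texist{\evar}{\rtypec'}{\rtype}}$ is immediate, since $\evar$ is bound in the right-hand type (after, if desired, an $\alpha$-renaming of the right binder to a fresh $\evarb$ so that the enlarged context does not clash with it). Note also that $\rtypec$ and $\rtypec'$, being well-formed in $\env$, contain no free occurrence of $\evar$.

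Next I would discharge this subgoal with \subwitness, choosing the witness expression to be $\evar$ itself. Matching the rule against the target $\texist{\evar}{\rtypec'}{\rtype}$ leaves exactly two premises: (i)~$\typecheck{\envextex{\env}{\evar}{\rtypec}}{\evar}{\rtypec'}$, and (ii)~$\issubtype{\envextex{\env}{\evar}{\rtypec}}{\rtype}{\appsubst{\tsubst{\evar}{\evar}}{\rtype}}$. Premise~(ii) is just $\issubtype{\envextex{\env}{\evar}{\rtypec}}{\rtype}{\rtype}$, discharged by \subrefl. For premise~(i), rule \lqchkvar\ gives $\typecheck{\envextex{\env}{\evar}{\rtypec}}{\evar}{\singleton{\rtypec}{\evar}}$; since $\singleton{\rtypec}{\evar}=\strengthen{\rtypec}{(\vv=\evar)}$ is a subtype of $\rtypec$ by \subbase, and $\issubtype{\env}{\rtypec}{\rtypec'}$ holds by hypothesis---weakened to the extended context by Lemma~\ref{lemma:subtype:weaken}---transitivity (\subtrans) yields $\issubtype{\envextex{\env}{\evar}{\rtypec}}{\singleton{\rtypec}{\evar}}{\rtypec'}$, and a subsumption step (standard in this Knowles-style refinement judgment) then retypes $\evar$ at $\rtypec'$.

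The one delicate point is this last step: re-typing the witness $\evar$ at the stronger bound $\rtypec'$ is precisely where the hypothesis $\issubtype{\env}{\rtypec}{\rtypec'}$ is consumed, and where the binding discipline must be respected, because the variable is introduced into the context at the weaker type $\rtypec$ yet must be packaged under $\rtypec'$. I would therefore keep the right-hand binder fresh for the context so that the substitution $\appsubst{\tsubst{\evar}{\evar}}{\rtype}$ appearing in \subwitness\ cannot accidentally capture, and rely on the freshness of $\evar$ for $\rtypec,\rtypec'$ to secure both the \subbind\ side condition and the applicability of Lemma~\ref{lemma:subtype:weaken}. Everything else is routine reflexivity and transitivity of subtyping.
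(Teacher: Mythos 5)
Your proof is correct and is precisely the derivation the paper alludes to when it dismisses this lemma with ``Straightforward'': peel the left existential with \subbind\ (the side condition $\evar\notin\freevars{\texist{\evar}{\rtypec'}{\rtype}}$ holding trivially since $\evar$ is bound on the right), then repack with \subwitness\ using $\evar$ itself as the witness, discharging premise~(ii) by \subrefl. You also correctly isolate the only delicate point---retyping the witness at $\rtypec'$ via $\singleton{\rtypec}{\evar}\preceq\rtypec\preceq\rtypec'$ (rule \lqchkvar, weakening by Lemma~\ref{lemma:subtype:weaken}, \subtrans, and a subsumption step in the Knowles-style typing judgment)---which is exactly where the hypothesis $\issubtype{\env}{\rtypec}{\rtypec'}$ is consumed.
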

\begin{proof} Straightforward. \end{proof}


\begin{lemma}[Boolean Facts]{\label{lemma:bool:fact}}
If
\begin{enumerate}[label=(\alph*)]
  \item 
    $\typechecksubt{\env}{\evar}{\rtype}{\rtype}{\reftp{\tbool}{\vv = \ssaTrue}}$
  \item 
    $\typechecksubt{\envgrdext{\env}{\evar}}{\expr}{\rtypeb}{\rtypeb}{\rtype}$
\end{enumerate}
then
\begin{itemize}
  \item[] 
    $\typechecksubt{\env}{\expr}{\rtypeb}{\rtypeb}{\rtype}$
\end{itemize}
\end{lemma}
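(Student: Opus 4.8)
The plan is to show that premise~(a) makes the guard $\evar$ logically \emph{redundant}, so the whole statement is an instance of discharging a provable guard: every judgment derivable under $\envgrdext{\env}{\evar}$ is already derivable under $\env$. The argument factors through the logical embedding $\embed{\cdot}$, since a guard predicate influences typing and subtyping only through the verification conditions it generates, never through the binding structure.

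First I would extract the SMT fact $\embed{\env} \imp \evar$ from premise~(a). Inverting rule~\lqchkvar on $\typecheck{\env}{\evar}{\rtype}$ shows that $\env(\evar) = \reftp{\tbool}{\pred_0}$ for some $\pred_0$ and that the assigned type is $\rtype = \singleton{\reftp{\tbool}{\pred_0}}{\evar} = \strengthen{\reftp{\tbool}{\pred_0}}{\vv = \evar}$. Applying the subtyping rule~\subbase to $\issubtype{\env}{\rtype}{\reftp{\tbool}{\vv = \ssaTrue}}$ yields the verification condition $\validimp{\env}{\pred_0 \wedge \vv = \evar}{\vv = \ssaTrue}$, i.e.\ $\embed{\env} \wedge \pred_0 \wedge \vv = \evar \imp \vv = \ssaTrue$. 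Instantiating the value variable with $\vv := \evar$ and noting that the binding $\envbinding{\evar}{\reftp{\tbool}{\pred_0}}$ already contributes $\pred_0[\evar/\vv]$ to $\embed{\env}$, this collapses to $\embed{\env} \imp \evar$ (reading the boolean term $\evar$ as the atom $\evar = \ssaTrue$). By the definition of the embedding, guard predicates are conjoined directly, so $\embed{\envgrdext{\env}{\evar}} = \embed{\env} \wedge \evar$; combined with the previous fact this gives the key equivalence $\embed{\envgrdext{\env}{\evar}} \equiv \embed{\env}$.

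Second I would transfer both halves of premise~(b), namely $\typecheck{\envgrdext{\env}{\evar}}{\expr}{\rtypeb}$ and $\issubtype{\envgrdext{\env}{\evar}}{\rtypeb}{\rtype}$, from $\envgrdext{\env}{\evar}$ to $\env$. Every subtyping obligation reduces (via~\subbase) to a verification condition of the shape $\embed{\Gamma} \wedge \ldots \imp \ldots$, and the only other place the ambient predicates are consulted is the compatibility side-condition of rule~\lqchkcast; since the two environments carry identical type bindings and, by the key equivalence, logically equivalent embeddings, each such obligation holds under $\env$ exactly when it holds under $\envgrdext{\env}{\evar}$. A routine induction on the derivation of premise~(b) therefore rebuilds the same derivation with $\env$ in place of $\envgrdext{\env}{\evar}$, assigning $\expr$ the very same type $\rtypeb$; this delivers $\typecheck{\env}{\expr}{\rtypeb}$ together with $\issubtype{\env}{\rtypeb}{\rtype}$, which is the conclusion.

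The main obstacle is the bookkeeping inside this last induction: I must check that a guard predicate affects a derivation \emph{only} through $\embed{\cdot}$, i.e.\ that the well-formedness rules and the structural Object Constraint System judgments (\textsf{has}, $\mathit{fields}$, class membership) are insensitive to guard predicates, depending on the environment solely through its binding domain, which is unchanged by adding or removing $\evar$. A secondary subtlety is that fresh bindings and even nested guards are introduced while checking $\expr$ (e.g.\ inside an inner conditional or a \textsf{let}-context); the induction must therefore be phrased for arbitrary extensions $\envgrdext{\env}{\evar},\Delta$ versus $\env,\Delta$ and rely on the fact that embedding equivalence is preserved when both sides are extended by the same $\Delta$. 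Once this ``guard irrelevance'' sublemma is in hand, the result follows immediately.
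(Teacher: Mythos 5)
Your proposal is correct, and it matches the paper's intent: the paper itself discharges this lemma with the single word ``Straightforward,'' and your argument --- extracting $\embed{\env} \imp \evar$ from premise~(a) by inverting \lqchkvar\ and instantiating the value variable, concluding $\embed{\envgrdext{\env}{\evar}} \equiv \embed{\env}$, and then an induction showing that guard predicates influence derivations only through the embedding (with the generalization to common extensions $\Delta$ to handle binders and nested guards introduced while checking $\expr$) --- is exactly the natural elaboration of that claim. Your explicit check that the Object Constraint System and well-formedness judgments depend only on the binding domain, and that the \lqchkcast\ compatibility side-condition also factors through $\embed{\cdot}$, covers the only places the induction could otherwise leak.
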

\begin{proof} Straightforward. \end{proof}

\begin{theorem}[Subject Reduction]\label{theorem:subj:reduc}
If
\begin{enumerate}[label=(\alph*)]
  \item $\dtypecheck{\env}{\storety}{\expr}{\rtype}$                    \label{sr:a}
  \item $\stepsFour{\ssaRtState}{\expr}{\ssaRtState'}{\expr'}$          \label{sr:b}
  \item $\newwfstore{\storety}{\accessState{\ssaRtState}{\ssaHeap}}$ \label{sr:c}
\end{enumerate}
then for some $\rtype'$ and $\storety' \supseteq \storety$:
\begin{enumerate}[label=(\alph*)]
\setcounter{enumi}{3}
  \item $\dtypecheck{\env}{\storety'}{\expr'}{\rtype'}$   \label{sr:d}
  \item $\semsubtype{\env}{\rtype'}{\rtype}$              \label{sr:e}
  \item $\newwfstore{\storety'}{\ssaHeap'}$.           \label{sr:f}
\end{enumerate}
\end{theorem}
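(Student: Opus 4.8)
The plan is to prove Subject Reduction by induction on the derivation of the reduction step \ref{sr:b}, proceeding by case analysis on the last operational rule applied. For each reduction rule of \ssaLang\ (Figure~\ref{fig:opsem}), I would invert the typing derivation \ref{sr:a} to expose the premises that typed $\expr$, exhibit a witness type $\rtype'$ for $\expr'$, construct the extended heap typing $\storety'$, and then discharge the three conclusions \ref{sr:d}, \ref{sr:e}, \ref{sr:f}. In the majority of cases $\storety' = \storety$ suffices; the only rule that genuinely grows the heap is \rnew, where a fresh location is allocated and $\storety'$ must be extended with a binding for it.

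\noindent\textbf{Key steps.} First I would handle the congruence rule \rcevalctx, which reduces inside an evaluation context $\idxEctx{\ssaEvalCtx}{\expr}$. Here I would invoke Lemma~\ref{lemma:eval:ctx:typing} to recover a type $\rtypeb$ for the redex $\expr$, apply the induction hypothesis to the inner step to obtain a subtype $\rtypeb'$ for $\expr'$ with a possibly larger $\storety'$, and then reassemble the result via Lemma~\ref{lemma:eval:ctx:step:typing} to conclude a compatibility-subtype for $\idxEctx{\ssaEvalCtx}{\expr'}$. Next I would dispatch the ``leaf'' reduction rules. For \rfield\ I invert \lqchkfieldimm\ or \lqchkfieldmut, read the field's type out of the heap via Lemma~\ref{lemma:store:type} using well-formedness \ref{sr:c}, and (for immutable fields) match the self-strengthened singleton. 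For \rinvoke\ I invert \lqchkinv\ to get the method signature via the \textsf{has} operator together with the argument subtyping obligations, and then apply the Method Body Type lemma (Lemma~\ref{lemma:method:body:type}) and the Substitution Lemma (Lemma~\ref{lemma:substitution}) to type the substituted body $\appsubst{\tsubsttwo{\many{\val}}{\many{\evar}}{\loc}{\ethis}}{\expr}$. For \rletin\ I invert \lqchkssactx/\lqchkctxletin\ and again appeal to Lemma~\ref{lemma:substitution} to type the result of substituting the value for the let-bound variable. For \rcast\ I use the operational premise relating $\rtypeb$ and $\rtype$ together with Lemma~\ref{lemma:cast} to transfer the compatibility-subtyping to the stored object. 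For \rdotasgn\ I invert \lqchkasgn, which already guarantees the assigned value's type is a subtype of the declared mutable field type, so the heap remains well-formed after the update. Finally, \rletif\ is treated by inverting \lqchkctxletif\ and using the Boolean Facts lemma (Lemma~\ref{lemma:bool:fact}) to exploit the path-sensitive guard: when the condition evaluates to $\ssaTrue$ (resp.\ $\ssaFalse$) I select branch $\ssactx_1$ (resp.\ $\ssactx_2$), substitute the corresponding $\Phi$-variables $\many{\evar}_{\index}$ for $\many{\evar}$, and use the fresh upper-bound types $\many{\rtype}$ produced by the rule to bound the result.

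\noindent\textbf{Main obstacle.} The delicate case will be \rnew\ combined with preserving heap well-formedness \ref{sr:f} under an extended $\storety'$. Allocating a fresh object requires me to (i)~invert \lqchknew\ to recover the field types and the class-invariant obligation, (ii)~choose $\storety' = \storety, \mapping{\loc}{\cname}$ and verify via rule \hbndinst\ that the new object satisfies the invariant $\classinv{\cname}{\evarb}$, and (iii)~re-establish well-formedness of every \emph{pre-existing} location under the enlarged $\storety'$, which needs a monotonicity/weakening argument (Lemmas~\ref{lemma:subtype:weaken} and~\ref{lemma:typing:weaken}) showing that $\storety' \supseteq \storety$ does not invalidate earlier bindings. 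A secondary subtlety, shared by \rinvoke\ and \rletif, is that the returned type is existentially quantified over fresh witnesses, so I must repeatedly apply existential weakening (Lemma~\ref{lemma:exist:weaken}) and selfification (Lemma~\ref{lemma:self}) to massage the inferred type into the required compatibility-subtype form. These bookkeeping steps around existentials and heap growth are where the argument is most error-prone, though each is supported by a dedicated lemma already established in the excerpt.
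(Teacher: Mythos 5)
Your proposal mirrors the paper's proof essentially step for step: induction on the reduction derivation with the same case split and the same supporting lemmas (Lemmas~\ref{lemma:eval:ctx:typing} and~\ref{lemma:eval:ctx:step:typing} for \rcevalctx; Lemma~\ref{lemma:store:type}, selfification, and \subwitness for \rfield; Lemma~\ref{lemma:method:body:type} plus Lemma~\ref{lemma:substitution} for \rinvoke\ and \rletin; Lemmas~\ref{lemma:bool:fact}, \ref{lemma:subst:env}, and~\ref{lemma:exist:weaken} for \rletif; re-application of \hbndinst\ for \rdotasgn; and extension of $\storety$ with a binding for the fresh location in \rnew). Two small corrections against the paper's text: in \rnew\ the heap typing is extended with the \emph{full inferred type} $\rtypec_0$ of \lqchknew\ (the existential package recording the field-equality refinements and $\classinv{\cname}{\vv}$), not the bare class name $\cname$ --- with just $\cname$, conclusion (e) would require deriving the class invariant from an unrefined type and would fail, while with $\rtypec_0$ both (d) via \rtchkloc\ and (e) via reflexivity are immediate; and the subject-reduction case for \rcast\ needs only the premises obtained by inverting \lqchkcast\ (the compatibility subtyping is literally one of them), the paper reserving Lemma~\ref{lemma:cast} for the Progress theorem.
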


\begin{proof}
We proceed by induction on the structure of fact \ref{sr:b}:
$$
\stepsFour{\ssaRtState}{\expr}{\ssaRtState'}{\expr'}
$$
We have the following cases:
\begin{itemize}


  \item \brackets{\rcevalctx}: Fact \ref{sr:b} has the form:
    \la{
      \stepsFour{\ssaRtState }{\idxEctx{\ssaEvalCtx}{\expr_0}}
                {\ssaRtState'}{\idxEctx{\ssaEvalCtx}{\expr_0'}}
      \label{sr:ectx:30}
    }
    From \ref{sr:a}:
    \la{
      \dtypecheck{\env}{\storety}{\idxEctx{\ssaEvalCtx}{\expr_0}}{\rtype}  \label{sr:ectx:1}
    }
    By Lemma~\ref{lemma:eval:ctx:typing} on \ref{sr:ectx:1}:
    \la{
      \dtypecheck{\env}{\storety}{\expr_0}{\rtype_0}  
      \label{sr:ectx:2}
    }

    By inverting Rule \rcevalctx on \ref{sr:ectx:30}:
    \la{
      \stepsFour{\ssaRtState }{\expr_0}
                {\ssaRtState'}{\expr_0'}
      \label{sr:ectx:31}
    }

    By induction hypothesis, using \ref{sr:ectx:2}, \ref{sr:ectx:31} and \ref{sr:c} we get:
    \la{
      \dtypecheck{\env}{\storety'}{\expr_0'}{\rtype_0'}      \label{sr:ectx:d}
      \\
      \dsemsubtype{\env}{\storety'}{\rtype_0'}{\rtype_0}                 \label{sr:ectx:e}
      \\
      \newwfstore{\storety'}{\accessState{\ssaRtState'}{\ssaHeap}}                 \label{sr:ectx:f}
      \\
      \storety' \supseteq \storety      \label{sr:ectx:10}
    }

    For some type $\rtype_0'$ and heap $\accessState{\ssaRtState'}{\ssaHeap}$.
    \\\\
    From \ref{sr:ectx:f} we prove \ref{sr:f}.
    \\\\
    By Lemma~\ref{lemma:eval:ctx:step:typing} using \ref{sr:ectx:1}, \ref{sr:ectx:2},
    \ref{sr:ectx:d}, \ref{sr:ectx:e} and \ref{sr:ectx:10}:
    \la{
      \dwfconstr{\env}{\storety'}
        {\mconcatenate{\varbinding{\idxEctx{\ssaEvalCtx}{\expr_0'}}{\rtype'}}{\rtype' \semsubt \rtype}}
        \label{sr:ectx:7}
    }
    From \ref{sr:ectx:7} we prove \ref{sr:d} and \ref{sr:e}.


  \item \brackets{\rfield}: Fact \ref{sr:b} has the form:
    \la{
      \stepsFour{\ssaRtState}{\dotref{\loc}{\fieldnamec}}{\ssaRtState}{\val}      
      \label{sr:rfield:3}
    }
    By Fact~\ref{sr:a} for $\expr \equiv \dotref{\loc}{\fieldnamec}$ we have:
    \la{
      \dtypecheck{\env}{\storety}{\dotref{\loc}{\fieldnamec}}{\rtype}     
      \label{sr:rfield:1}
    }

    By inverting \rfield on \ref{sr:rfield:3}:
    \la{
      \idxMapping{\accessState{\ssaRtState}{\ssaHeap}}{\loc} \equiv \ssaObject =
        \heapObject{\loc'}{\fieldDefsSym}
      \label{sr:rfield:4}
      \\
      \fieldDef{\fieldname}{\val} \in \fieldDefsSym 
    }


    By inverting \hbndinst on \ref{sr:c} for location $\loc$:
    \la{
      \fieldDefsSym \defeq \mconcatenate
        {\immFieldDefs{\fieldname }{\val}_{\immindex}}
        {\mutFieldDefs{\fieldnameb}{\val}_{\mutindex}}
      \label{sr:rfield:56}
      \\
      \basetype{\idxMapping{\storety}{\loc}} = \cname
      \label{sr:rfield:40}
      \\
      \wfconstr{\envext{\env}{\evarb}{\cname}}
               {\fieldsOfVar{\evarb} = 
                  \mconcatenate{\immfieldbindings{\fieldname}{\rtypec}}
                               {\mutfieldbindings{\fieldnameb}{\rtyped}}
               }
      \label{sr:rfield:41}
      \\
      \rttypecheckval{\storety}{\many{\val}_{\immindex}}{\many{\rtype}_{\immindex}}
      \label{sr:rfield:7}
      \\
      \rttypecheckval{\storety}{\many{\val}_{\mutindex}}{\many{\rtype}_{\mutindex}}
      \\
      \wfconstr{
          \envextex
            {\envext{\env}{\evarb}{\cname}}
            {\many{\evarb}_{\immindex}}
          {\singleton{\many{\rtype}_{\immindex}}{\tdotref{\evarb}{\many{\fieldname}}}}
        }
        {
          \mconcatenate{
            \mconcatenate{\many{\rtype}_{\immindex} \subt \many{\rtypec}}
            {\many{\rtype}_{\mutindex} \subt \many{\rtyped}}
          }
          {\classinv{\cname}{\evarb}}
        }
      \label{sr:rfield:10}
    }

    By applying \rtchkobj on \ref{sr:rfield:40}, \ref{sr:rfield:56} and
    \ref{sr:rfield:7}:
    \la{
      \dtypecheck{\env}{\storety}{\ssaObject}{\rtypeb'}
    }

    Where:
    \la{
      \rtypeb' \equiv \texist
            {\many{\evarb}_{\immindex}}
            {\many{\rtype}_{\immindex}}
            {
              \reftp{\cname}
                    {
                      \concatpreds
                        {\tdotref{\vv}{\many{\fieldname}} = \many{\evarb}_{\immindex}}
                        {\classinv{\cname}{\vv}}
                    }
            }     
      \label{sr:rfield:42}
    }

    By Lemma~\ref{lemma:store:type} using \ref{sr:c}, \ref{sr:rfield:4} and
    \ref{sr:rfield:40}:
    \la{
      \issubtype{\env}{\rtypeb}{\rtypeb'}
        \label{sr:rfield:43}
    }

    Where:
    \la{
      \idxMapping{\storety}{\loc} = \rtypeb
    }

    We examine cases on the typing statement \ref{sr:rfield:1}:

    \begin{itemize}

      \item \brackets{\lqchkfieldimm}:
        Field $\fieldnamec$ is an immutable field $\fieldname_{\index}$, 
        so fact \ref{sr:rfield:1} becomes:
        \la{
          \dtypecheck{\env}{\storety}
                    {\dotref{\loc}{\fieldname_{\index}}}
                    {\texist{\evarb}{\rtypeb}{\singleton{\rtypec_{\index}}{\tdotref{\evarb}{\fieldname_{\index}}}}}   \label{sr:rfield:2}
        }

        By inverting \lqchkfieldimm on \ref{sr:rfield:2}:
        \la{
          \rttypecheckval{\storety}{\loc}{\rtypeb}
          \label{sr:rfield:57}
          \\
          \dwfconstr{\envextex{\env}{\evarb}{\rtypeb}}{\storety}
                    {\varbinding
                      {\hasimm{\evarb}{\fieldname_{\index}}}
                    {\rtypec_{\index}}}
        }
        For a fresh $\evarb$.
        \\\\
        Keeping only the relevant part of \ref{sr:rfield:7} and \ref{sr:rfield:10}:
        \la{
          \dtypecheck{\env}{\storety}{\val_{\index}}{\rtype_{\index}}     \label{sr:rfield:20}
          \\
          \dwfconstr{
            \envextex{\envext{\env}{\evarb}{\cname}}
                     {\many{\evarb}_{\immindex}}
                     {\singleton{\many{\rtype}_{\immindex}}{\tdotref{\evarb}{\many{\fieldname}}}}
            }
            {\storety}
            {\rtype_{\index} \subt \rtypec_{\index}}     
            \label{sr:rfield:11}
        }
        By \ref{sr:rfield:20} we prove \ref{sr:d}.
        \\\\
        By Lemma~\ref{lemma:self} using \ref{sr:rfield:11} and picking
        $\evarb_{\index}$ as the selfification variable:
        \la{
           \dwfconstr{
            \envextex{\envext{\env}{\evarb}{\cname}}
                     {\many{\evarb}_{\immindex}}
                     {\singleton{\many{\rtype}_{\immindex}}{\tdotref{\evarb}{\many{\fieldname}}}}
                   }
            {\storety}
            {\rtype_{\index} \subt \singleton{\rtypec_{\index}}{\evarb_{\index}}}     \label{sr:rfield:12}
        }
        For the above environment it holds that:
        \la{
          \embed{
              \envextex{\envext{\env}{\evarb}{\cname}}
                       {\many{\evarb}_{\immindex}}
                       {\singleton{\many{\rtype}_{\immindex}}{\tdotref{\evarb}{\many{\fieldname}}}};
                       \storety
          }
          \imp
          \evarb_{\index} = \tdotref{\evarb}{\fieldname_{\index}}
          \label{sr:rfield:13}
        }

        By \subrefl and By Lemma~\ref{lemma:self} using \ref{sr:rfield:13}:
        \la{
           \dwfconstr{
            \envextex{\envext{\env}{\evarb}{\cname}}
                     {\many{\evarb}_{\immindex}}
                     {\singleton{\many{\rtype}_{\immindex}}{\tdotref{\evarb}{\many{\fieldname}}}}
            }
            {\storety}
            {
              \singleton{\rtypec_{\index}}{\evarb_{\index}}
              \subt 
              \singleton{\singleton{\rtypec_{\index}}{\evarb_{\index}}}
                        {\tdotref{\evarb}{\fieldname_{\index}}}
            }     \label{sr:rfield:14}
        }

        By simplifying \ref{sr:rfield:14} using \subtrans on \ref{sr:rfield:12} and \ref{sr:rfield:14} we get:
        \la{
           \dwfconstr{
            \envextex{\envext{\env}{\evarb}{\cname}}
                     {\many{\evarb}_{\immindex}}
                     {\singleton{\many{\rtype}_{\immindex}}{\tdotref{\evarb}{\many{\fieldname}}}}
            }
            {\storety}
            {\rtype_{\index} \subt \singleton{\rtypec_{\index}}{\tdotref{\evarb}{\fieldname_{\index}}}}
            \label{sr:rfield:35}
        }

        By \ref{sr:rfield:35} it also holds that:
        \la{
          \dissubtype{\envextex{\env}{\evarb}
            {\texist{\many{\evarb}_{\immindex}}
                    {\singleton{\many{\rtype}_{\immindex}}{\tdotref{\evarb}{\many{\fieldname}}}}
                    {\cname}
            }}
            {\storety}
            {\rtype_{\index}}
            {\singleton{\rtypec_{\index}}{\tdotref{\evarb}{\fieldname_{\index}}}}
            \label{sr:rfield:36}
        }

        By \ref{sr:rfield:36} it also holds that:
        \la{
          \dissubtype{\envextex{\env}{\evarb}
            {\texist{\many{\evarb}_{\immindex}}{\many{\rtype}_{\immindex}}
                    {\singleton{\cname}{\many{\evarb}_{\immindex}}}
            }}
            {\storety}
            {\rtype_{\index}}
            {\singleton{\rtypec_{\index}}{\tdotref{\evarb}{\fieldname_{\index}}}}
            \label{sr:rfield:50}
        }

        By expanding \ref{sr:rfield:50} and \ref{sr:rfield:10}:
        \la{
          \dissubtype{\envextex{\env}{\evarb}
            {\texist
              {\many{\evarb}_{\immindex}}
              {\many{\rtype}_{\immindex}}
              {\reftp{\cname}
                    {\concatpreds
                        {\tdotref{\vv}{\many{\fieldname}} = \many{\evarb}_{\immindex}}
                        {\classinv{\cname}{\vv}}}
              }}}     
            {\storety}
            {\rtype_{\index}}
            {\singleton{\rtypec_{\index}}{\tdotref{\evarb}{\fieldname_{\index}}}}
            \label{sr:rfield:51}
        }

        By using \ref{sr:rfield:42} on \ref{sr:rfield:51}:
        \la{
          \dissubtype{\envextex{\env}{\evarb}{\rtypeb'}}{\storety}
                    {\rtype_{\index}}
                    {\singleton{\rtypec_{\index}}{\tdotref{\evarb}{\fieldname_{\index}}}}
            \label{sr:rfield:52}
        }

        By Lemma~\ref{lemma:subtype:weaken} using \ref{sr:rfield:52} and
        \ref{sr:rfield:43}:
        \la{
          \dissubtype{\envextex{\env}{\evarb}{\rtypeb}}{\storety}
                    {\rtype_{\index}}
                    {\singleton{\rtypec_{\index}}{\tdotref{\evarb}{\fieldname_{\index}}}}
            \label{sr:rfield:44}
        }

        From Rule~\subwitness using \ref{sr:rfield:44}:
        \la{
          \dissubtype{\env}{\storety}
                    {\rtype_{\index}}
                    {\texist{\evarb}{\rtypeb}
                    {\singleton{\rtypec_{\index}}{\tdotref{\evarb}{\fieldname_{\index}}}}}     
          \label{sr:rfield:15}
        }

        Using \ref{sr:rfield:2}, \ref{sr:rfield:7} and \ref{sr:rfield:15} we prove \ref{sr:e}.
        \\\\
        Heap $\accessState{\ssaRtState}{\ssaHeap}$ does not evolve so \ref{sr:f} holds trivially.

      \item \brackets{\lqchkfieldmut}:
        Field $\fieldnamec$ is a mutable field $\fieldnameb_{\index}$, so fact \ref{sr:a} becomes:
        \la{
          \dtypecheck{\env}{\storety}
                     {\dotref{\loc}{\fieldnameb_{\index}}}
                     {\texist{\evarb}{\rtypeb}{\rtypee_{\index}}}                     \label{sr:rfield:22}
        }
        By inverting \lqchkfieldmut on \ref{sr:rfield:22}:
        \la{
          \typecheck{\env}{\loc}{\rtypeb}
          \\
          \typecheck{\envextex{\env}{\loc}{\rtypeb}}
                    {\hasmut{\evarb}{\fieldnameb_{\index}}}
                    {\rtyped_{\index}}
        }
        For a fresh $\evarb$.
        \\\\
        Keeping only the relevant parts of \ref{sr:rfield:7} and \ref{sr:rfield:10}:
        \la{
          \typecheck{\env}{\val_{\index}}{\rtype_{\index}}     
          \label{sr:rfield:23}
          \\
          \wfconstr{
            \envextex{\envext{\env}{\evarb}{\cname}}
                     {\many{\evarb}_{\immindex}}
                     {\singleton{\many{\rtype}_{\immindex}}{\tdotref{\evarb}{\many{\fieldname}}}}
            }
            {\rtype_{\index} \subt \rtyped_{\index}} 
          \label{sr:rfield:24}
        }

        By \ref{sr:rfield:23} we prove \ref{sr:d}.
        \\\\
        By similar reasoning as before and using \ref{sr:rfield:24} we get:
        \la{
          \dissubtype{\envextex{\env}{\evarb}{\rtypeb'}}
                     {\storety}
                     {\rtype_{\index}}{\rtyped_{\index}} 
          \label{sr:rfield:45}
        }

        By Lemma~\ref{lemma:subtype:weaken} using \ref{sr:rfield:45} and
        \ref{sr:rfield:43}:
        \la{
          \dissubtype{\envextex{\env}{\evarb}{\rtypeb}}{\storety}
                     {\rtype_{\index}}{\rtyped_{\index}}
            \label{sr:rfield:44}
        }

        By Rule~\subwitness using \ref{sr:rfield:44}:
        \la{
          \dissubtype{\env}{\storety}
            {\rtype_{\index}}
            {\texist{\evarb}{\rtypeb}{\rtyped_{\index}}} 
            \label{sr:rfield:25}
        }

        Using \ref{sr:rfield:22}, \ref{sr:rfield:7} and \ref{sr:rfield:25} we prove \ref{sr:e}.
        \\\\
        Heap $\accessState{\ssaRtState}{\ssaHeap}$ does not evolve so \ref{sr:f} holds trivially.

    \end{itemize}


  \item \brackets{\rinvoke}:
    Fact \ref{sr:b} has the form:
    \la{
      \stepsFour
        {\ssaRtState}
        {\methcall{\loc}{\methodname}{\many{\val}}}
        {\ssaRtState}
        {\appsubst{\tsubsttwo{\many{\val}}{\many{\evarb}}{\loc}{\ethis}}{\expr'}} \label{sr:inv:0}
    }

    By \ref{sr:a} for $\expr \equiv \methcall{\loc}{\methodname}{\many{\val}}$ we have:
    \la{
      \dtypecheck{\env}{\storety}{\methcall{\loc}{\methodname}{\many{\val}}}
      {\texist{\evarb}{\rtype}{\texist{\many{\evarb}}{\many{\rtype}}{\rtypeb}}}    \label{sr:inv:1}
    }

    By inverting \lqchkinv on \ref{sr:inv:1}:
    \la{
      \dtypechecktwo{\env}{\storety}{\loc}{\rtype}{\many{\val}}{\many{\rtype}}     \label{sr:inv:6}
      \\
      \wfconstr{\envextex{\envextex{\env}{\evarb}{\rtype}}{\many{\evarb}}{\many{\rtype}}}
      {\has{\evarb}{\parens{\methdef{\methodname}{\evarb}{\rtypec}{\pred}{\rtypeb}{\expr'}}}}  \label{sr:inv:2}
      \\
      \wfconstr{\envextex{\envextex{\env}{\evarb}{\rtype}}{\many{\evarb}}{\many{\rtype}}}
      {\many{\rtype} \subt \many{\rtypec}}   \label{sr:inv:3}
      \\
      \wfconstr{\envextex{\envextex{\env}{\evarb}{\rtype}}{\many{\evarb}}{\many{\rtype}}}{\pred}
    }

    With fresh $\evarb$ and $\many{\evarb}$.
    \\\\
    By inverting \rinvoke on \ref{sr:inv:0}:
    \la{
      \resolveMeth{\ssaHeap}{\loc}{\methodname}
                  {\parens{\methdef{\methodname}{\evarb}{\rtypec}{\pred}{\rtypeb}{\expr}}}
      \label{sr:inv:8}
      \\
      \eval{\pred} = \ssaTrue
    }
    Note that $\this$ has already been substituted by $\loc$ in $\rtypeb$ and $\pred$.


    By Lemma~\ref{lemma:method:body:type} using \ref{sr:inv:2} and \ref{sr:inv:3}:
    \la{
      \wfconstr{\envextex{\envextex{\env}{\evarb}{\rtype}}{\many{\evarb}}{\many{\rtype}}}
      {\mconcatenate{\varbinding{\expr'}{\rtypeb'}}{\rtypeb' \subt \rtypeb}} \label{sr:inv:4}
    }

    By \ref{sr:inv:4} we prove \ref{sr:d}.
    \\

    By Rule~\subwitness using \ref{sr:inv:4}:
    \la{
      \issubtype{\env}{\rtypeb'}{\texist{\evarb}{\rtype}{\texist{\many{\evarb}}{\many{\rtype}}{\rtypeb}}}
    }

    By Lemma~\ref{lemma:substitution} using \ref{sr:inv:6}, \ref{sr:inv:3} and \ref{sr:inv:4}:
    \la{
      \wfconstr{\env}{
        \mconcatenate
          {\varbinding{\appsubst{\tsubsttwo{\many{\val}}{\many{\evarb}}{\loc}{\ethis}}{\expr'}}{\rtyped}}
          {\rtyped \subt \rtypeb'}
        }
        \label{sr:inv:10}
    }

    By Rule~\subtrans on \ref{sr:inv:4} and \ref{sr:inv:10}:
    \la{
      \issubtype{\env}{\rtyped}{\texist{\evarb}{\rtype}{\texist{\many{\evarb}}{\many{\rtype}}{\rtypeb}}}
      \label{sr:inv:110}
    }

    By \ref{sr:inv:110} we prove \ref{sr:e}.
    \\

    Heap $\accessState{\ssaRtState}{\ssaHeap}$ does not evolve so \ref{sr:f} holds trivially.


  \item \brackets{\rcast}:
    Fact \ref{sr:b} has the form:
    $$
    \stepsFour{\ssaRtState}{\cast{\rtype}{\loc}}{\ssaRtState}{\loc}
    $$
    By \ref{sr:a} for $\expr \equiv \cast{\rtype}{\loc}$ we have:
    \la{
      \dtypecheck{\env}{\storety}{\cast{\rtype}{\loc}}{\rtype}    \label{sr:cast:30}
    }

    By inverting \lqchkcast on \ref{sr:cast:30}:
    \la{
      \dtypecheck{\env}{\storety}{\loc}{\rtypeb}  \label{sr:cast:2} \\
      \wftype{\env}{\rtype}                                         \\
      \semsubtype{\env}{\rtypeb}{\rtype}          \label{sr:cast:1}
    }

    By \ref{sr:cast:2} and \ref{sr:cast:1} we get \ref{sr:d} and \ref{sr:e}, respectively.
    \\

    \accessState{\ssaRtState}{\ssaHeap} does not evolve, which proves
    \ref{sr:f}, given \ref{sr:b}.


  \item \brackets{\rnew}:
    Fact \ref{sr:c} has the form:
    \la{
      \stepsFour{\ssaRtState}{\newexpr{\cname}{\many{\val}}}{\ssaRtState'}{\loc}
      \label{sr:new:16}
    }

    By inverting \rnew on \ref{sr:new:16}:
    \la{
      \idxMapping{\srcHeap}{\loc_0} = \heapClassObject{\cname}{\loc_0'}{\methodDefsSym}
      \label{sr:new:17}
      \\
      \rtFields{\ssaSignatures}{\cname} = \fieldbindings{\fieldname}{\rtype}
      \label{sr:new:18}
      \\
      \ssaObject = \heapObject{\loc_0}{\fieldDefs{\fieldname}{\val}}
      \label{sr:new:19}
      \\
      \ssaHeap' = \updMapping{\ssaHeap}{\loc}{\ssaObject}
      \label{sr:new:20}
    }
    
    By \ref{sr:a} for $\expr \equiv \newexpr{\cname}{\many{\val}}$ we have:
    \la{
      \dtypecheck{\env}{\storety}{\newexpr{\cname}{\many{\val}}}{\rtypec_0}  \label{sr:new:0}
    }
    Where:
    \la{
    \rtypec_0 \equiv
      \texist{\many{\evarb}_{\immindex}}{\many{\rtype}_{\immindex}}
              {\reftp
                {\cname}
                {
                  \concatpreds
                    {\tdotref{\vv}{\many{\fieldname}} = \many{\evarb}_{\immindex}}
                    {\classinv{\cname}{\vv}}
                }
              }
    }

    By inverting \lqchknew on \ref{sr:new:0}:
    \la{
      \typecheck{\env}{\many{\val}}{\ppair{\many{\rtype}_{\immindex}}{\many{\rtype}_{\mutindex}}}
      \label{sr:new:15}
      \\
      \wfconstrnopremise{\isclass{\cname}}
      \\
      \wfconstr{\envext{\env}{\evarb}{\cname}}
           {\fieldsOfVar{\evarb} = \mconcatenate
             {\immfieldbindings{\fieldname}{\rtypec}}
             {\mutfieldbindings{\fieldnameb}{\rtyped}}
           }
      \label{sr:new:21}
      \\
      \wfconstr{
          \envstr{\envext{\envext{\env}{\evarb}{\cname}}{\many{\evarb}}{\many{\rtype}}}
                 {\tdotref{\evarb}{\many{\fieldname}} = \many{\evarb}_{\immindex}}
        }
        {
          \mconcatenate{
            \mconcatenate{\many{\rtype}_{\immindex} \subt \many{\rtypec}}
                         {\many{\rtype}_{\mutindex} \subt \many{\rtyped}}
          }
          {\classinv{\cname}{\evarb}}
        }
        \label{sr:new:22}
    }
    For fresh $\evarb$ and $\many{\evarb}$.
    \\

    We choose a heap typing $\storety'$, such that:
    $$
    \storety' = \updMapping{\storety}{\loc}{\rtypec_0}
    $$
    Hence:
    \la{
      \idxMapping{\storety'}{\loc} = \rtypec_0  
      \label{sr:new:7}
    }

    By applying Rule \rtchkloc using \ref{sr:new:7}:
    $$
    \dtypecheck{\env}{\storety'}{\loc}{\rtypec_0}
    $$
    
    Which proves \ref{sr:d}.
    \\

    By applying Rule \rtchkobj using \ref{sr:new:7}, \ref{sr:new:19} and \ref{sr:new:15}:
    \la{
      \rttypecheck{\ssaRtState}{\storety}{\ssaObject}{\rtypec_0}
      \label{sr:new:24}
    }

    By \subid we trivially get:
    \la{
      \dissubtype{\env}{\storety'}{\rtypec_0}{\rtypec_0} 
      \label{sr:new:4}
    }
    Which proves \ref{sr:e}.
    \\

    By applying Rule~\hbndinst on
    \ref{sr:new:19}, 
    \ref{sr:new:17}, 
    \ref{sr:new:7}, 
    \ref{sr:new:21},
    \ref{sr:new:15} and 
    \ref{sr:new:22}:
    $$
    \newwfstore{\storety'}{\accessState{\ssaRtState'}{\ssaHeap}}
    $$
    Which proves \ref{sr:f}.

  \item \brackets{\rletin} 
    \emph{Similar approach to case \rinvoke}.

  \item \brackets{\rdotasgn}: Fact \ref{sr:b} has the form:
    \la{
      \stepsFour{\ssaRtState }{\dotassign{\loc}{\fieldnameb_{\index}}{\val'}}
                {\ssaRtState'}{\val'} 
      \label{sr:asgn:30}
    }

    By inverting Rule~\rdotasgn on \ref{sr:asgn:30}:
    \la{
      \ssaHeap' = \updMapping{\accessState{\ssaRtState}{\ssaHeap}}{\loc}
                             {\updMapping{\idxMapping{\accessState{\ssaRtState}{\ssaHeap}}{\loc}}
                             {\fieldnameb_{\index}}{\val'}}
      \label{sr:asgn:80}
    }

    From \ref{sr:a} for $\expr \equiv \dotassign{\loc}{\fieldnameb_{\index}}{\val'}$:
    \la{
      \dtypecheck{\env}{\storety}{\dotassign{\loc}{\fieldnameb_{\index}}{\val'}}{\rtype'}     
      \label{sr:asgn:31}
    }

    By inverting Rule~\lqchkasgn on \ref{sr:asgn:31}:
    \la{
      \dtypechecktwo{\env}{\storety}{\loc}{\rtype_{\loc}}{\val'}{\rtype'}        
      \label{sr:asgn:32}
      \\
      \dwfconstr{\envext{\env}{\evarb}{\basetype{\rtype_{\loc}}}}{\storety}
                {\mconcatenate
                  {\varbinding{\hasmut{\evarb}{\fieldnameb_{\index}}}{\rtyped_{\index}}}
                  {\rtype' \subt \rtyped_{\index}}
                }     
      \label{sr:asgn:33}
    }

    For a fresh $\evarb$. 
    \\\\
    By \ref{sr:asgn:32} and \subrefl we prove \ref{sr:d} and \ref{sr:e}.
    \\\\
    By inverting \rtchkloc on \ref{sr:asgn:32}:
    \la{
      \idxMapping{\storety}{\loc} = \rtype_{\loc}      
      \label{sr:asgn:5}
    }

    By inverting \hbndinst on \ref{sr:c} for location $\loc$
    and using \ref{sr:asgn:5}:
    \la{
      \ssaObject \defeq \heapObject{\loc'}{\fieldDefsSym}
      \label{sr:asgn:81}
      \\
      \fieldDefsSym \defeq \mconcatenate
        {\immFieldDefs{\fieldname }{\val}_{\immindex}}
        {\mutFieldDefs{\fieldnameb}{\val}_{\mutindex}}
      \label{sr:asgn:82}
      \\
      \basetype{\idxMapping{\storety}{\loc}} = \cname
      \label{sr:asgn:83}
      \\
      \wfconstr{\envext{\env}{\evarb}{\cname}}
               {\fieldsOfVar{\evarb} = 
                  \mconcatenate{\immfieldbindings{\fieldname}{\rtypec}}
                               {\mutfieldbindings{\fieldnameb}{\rtyped}}
               }
      \label{sr:asgn:84}
      \\
      \rttypecheckval{\storety}{\many{\val}_{\immindex}}{\many{\rtype}_{\immindex}}
      \label{sr:asgn:85}
      \\
      \rttypecheckval{\storety}{\many{\val}_{\mutindex}}{\many{\rtype}_{\mutindex}}
      \label{sr:asgn:86}
      \\
      \wfconstr{
          \envextex
            {\envext{\env}{\evarb}{\cname}}
            {\many{\evarb}_{\immindex}}
          {\singleton{\many{\rtype}_{\immindex}}{\tdotref{\evarb}{\many{\fieldname}}}}
        }
        {
          \mconcatenate{
            \mconcatenate{\many{\rtype}_{\immindex} \subt \many{\rtypec}}
            {\many{\rtype}_{\mutindex} \subt \many{\rtyped}}
          }
          {\classinv{\cname}{\evarb}}
        }
      \label{sr:asgn:87}
    }

    Fact \ref{sr:asgn:80} becomes:
    \la{
      \ssaHeap' & = \updMapping{\accessState{\ssaRtState}{\ssaHeap}}{\loc}{\ssaObject'}
      \label{sr:asgn:88}
      \\
      \ssaObject' & = \heapObject{\loc'}{\fieldDefsSym'}
      \label{sr:asgn:89}
      \\
      \fieldDefsSym' & = \mconcatenate
        {\immFieldDefs{\fieldname }{\val}_{\immindex}}
        {\mutFieldDefs{\fieldnameb}{\val}'_{\mutindex}}
      \label{sr:asgn:90}
      \\
      \many{\val}'_{\mutindex} & = \many{\val}_{\mutindex, .. \index -1}, 
                                       \val'_{\mutindex, \index}, 
                                 \many{\val}_{\mutindex, \index + 1.. } 
      \label{sr:asgn:91}
    }

    Also by \ref{sr:asgn:32} and \ref{sr:asgn:86} it holds that:
    \la{
      \rttypecheckval{\storety}{\many{\val}'_{\mutindex}}
        {\parens{
          {\many{\rtype}_{\mutindex, .. \index -1}, 
                 \rtype', 
           \many{\rtype}_{\mutindex, \index + 1.. }} 
          }}
      \label{sr:asgn:92}
    }

    By Lemma~\ref{lemma:subtype:weaken} on \ref{sr:asgn:33}:
    \la{
      \dwfconstr{
          \envextex
            {\envext{\env}{\evarb}{\cname}}
            {\many{\evarb}_{\immindex}}
          {\singleton{\many{\rtype}_{\immindex}}{\tdotref{\evarb}{\many{\fieldname}}}}
        }{\storety}{\rtype' \subt \rtyped_{\index}}
      \label{sr:asgn:93}
    }

    By applying Rule~\hbndinst on 
    \ref{sr:asgn:89},
    \ref{sr:asgn:90},
    \ref{sr:asgn:83},
    \ref{sr:asgn:84},
    \ref{sr:asgn:85},
    \ref{sr:asgn:92},
    \ref{sr:asgn:87} and
    \ref{sr:asgn:93}:
    $$
    \newwfstore{\storety}{\ssaHeap'}
    $$

    Which proves \ref{sr:f}.

  \item \brackets{\rletif}:
    Assume $\vconst \equiv \ssaTrue$ (the case for $\ssaFalse$ is symmetric).
    \\
    
    Fact \ref{sr:b} has the form:
    \la{
      \stepsFour
        {\ssaRtState}
        {\letif{\many{\evar}}{\many{\evar}_1}{\many{\evar}_2}
               {\ssaTrue}{\ssactx_{1}}{\ssactx_{2}}{\expr}
        }
        {\ssaRtState}
        {\ssactxidx{\ssactx_1}
                   {\appsubst{\esubst{\many{\evar}_1}{\many{\evar}}}{\expr}}
        }
      \label{sr:letif:1}
    }

    By Rule \lqchkssactx fact \ref{sr:a} has the form:
    \la{
      \typecheck{\env}
                {\letif{\many{\evar}}{\many{\evar}_1}{\many{\evar}_2}
                       {\ssaTrue}{\ssactx_{1}}{\ssactx_{2}}{\expr}}
                {\texist{\many{\evar}}{\many{\rtypeb}}{\rtypec}}
      \label{sr:letif:10}
    }

    So type $\rtype$ has the form: 
    \la{
      \rtype \equiv \texist{\many{\evar}}{\many{\rtypeb}}{\rtypec}
    }

    By inverting Rule \lqchkssactx on \ref{sr:a}:
    \la{
      \typecheckctx{\env}{
        \letif{\many{\evar}}{\many{\evar}_1}{\many{\evar}_2}
              {\ssaTrue}{\ssactx_{1}}{\ssactx_{2}}{\hole}
        }{\envbinding{\many{\evar}}{\many{\rtypeb}}}
      \label{sr:letif:2}
      \\
      \typecheck{\envextex{\env}{\many{\evar}}{\many{\rtypeb}}}{\expr}{\rtypec}
      \label{sr:letif:3}
    }

    By inverting Ryle \lqchkctxletif on \ref{sr:letif:2}:
    \la{
      \typechecksubt{\env}{\ssaTrue}{\rtypeb}{\rtypeb}{\tbool}
      \label{sr:letif:4}
      \\
      \typecheckctx{\envgrdext{\envext{\env}{\evarb}{\rtypeb}}{\evarb}}    {\ssactx_1}{\env_1}
      \label{sr:letif:5}
      \\
      \typecheckctx{\envgrdext{\envext{\env}{\evarb}{\rtypeb}}{\neg\evarb}}{\ssactx_2}{\env_2}
      \label{sr:letif:6}
      \\
      \issubtype{\mconcatenate{\env}{\env_1}}{\idxMapping{\env_1}{\many{\evar}_1}}{\many{\rtypeb}}
      \label{sr:letif:7}
      \\
      \issubtype{\mconcatenate{\env}{\env_2}}{\idxMapping{\env_2}{\many{\evar}_2}}{\many{\rtypeb}}
      \label{sr:letif:8}
      \\
      \wftype{\env}{\many{\rtypeb}}
      \label{sr:letif:9}
    }

    By Rule \lqchkconst on $\ssaTrue$:
    \la{
      \typecheck{\env}{\ssaTrue}{\reftp{\tbool}{\vv = \ssaTrue}}
    }

    By Lemma~\ref{lemma:bool:fact} on \ref{sr:letif:4} and \ref{sr:letif:5}:
    \la{
      \typecheckctx{\env}{\ssactx_1}{\env_1}
      \label{sr:letif:15}
    }

    Environment $\env_1$ has the form:
    \la{
      \env_1 \equiv 
        \envbinding{\many{\evar}_1 }{\idxMapping{\env_1}{\many{\evar}_1 }},
        \envbinding{\many{\evar}_1'}{\idxMapping{\env_1}{\many{\evar}_1'}}
      \label{sr:letif:11}
    }
    For some $\many{\evar}_1'$.
    \\\\
    By Lemma~\ref{lemma:subst:env} using \ref{sr:letif:3}:
    \la{
      \typecheck{\envextex{\env}{\many{\evar}_1}{\many{\rtypeb}}}
                {\appsubst{\esubst{\many{\evar}_1}{\many{\evar}}}{\expr}}
                {\appsubst{\tsubst{\many{\evar}_1}{\many{\evar}}}{\rtypec}}
      \label{sr:letif:13}
    }

    By Lemma~\ref{lemma:typing:weaken} using \ref{sr:letif:13}:
    \la{
      \typecheck{\env, 
              \envbinding{\many{\evar}_1 }{\many{\rtypeb}},
              \envbinding{\many{\evar}_1'}{\idxMapping{\env_1}{\many{\evar}_1'}}
            }
            {\appsubst{\esubst{\many{\evar}_1}{\many{\evar}}}{\expr}}
            {\appsubst{\tsubst{\many{\evar}_1}{\many{\evar}}}{\rtypec}}
      \label{sr:letif:14}
    }


    By applying rule \lqchkssactx on \ref{sr:letif:15} and \ref{sr:letif:14}:
    \la{
      \typecheck{\env}
                {\ssactxidx{\ssactx}{\appsubst{\esubst{\many{\evar}_1}{\many{\evar}}}{\expr}}}
                {
                  \texist{\many{\evar}_1 }{\idxMapping{\env_1}{\many{\evar}_1 }}
                 {\texist{\many{\evar}_1'}{\idxMapping{\env_1}{\many{\evar}_1'}}
                  {\appsubst{\tsubst{\many{\evar}_1}{\many{\evar}}}{\rtypec}}
                 }
                }
      \label{sr:letif:16}
    }

    Which proves \ref{sr:d}.
    \\\\
    Fact \ref{sr:letif:16} can be rewritten as:
    \la{
       \typecheck{\env}
                {\ssactxidx{\ssactx}{\appsubst{\esubst{\many{\evar}_1}{\many{\evar}}}{\expr}}}
                {
                  \texist{\many{\evar} }{\idxMapping{\env_1}{\many{\evar} }}
                 {\texist{\many{\evar}_1'}{\idxMapping{\env_1}{\many{\evar}_1'}}
                  {\rtypec}
                 }
                }
      \label{sr:letif:17}
    }
    
    Applying Rule \subbind using \ref{sr:letif:17}:
    \la{
       \issubtype{\env}
                {
                  \texist{\many{\evar} }{\idxMapping{\env_1}{\many{\evar} }}
                 {\texist{\many{\evar}_1'}{\idxMapping{\env_1}{\many{\evar}_1'}}
                  {\rtypec}
                 }
                }
                {
                  \texist{\many{\evar}}{\idxMapping{\env_1}{\many{\evar} }}
                         {\rtypec}
                }
      \label{sr:letif:18}
    }

    By Lemma~\ref{lemma:exist:weaken} on the right-hand side of \ref{sr:letif:18}:
    \la{
       \issubtype
        {\env}
        {\texist{\many{\evar}}{\idxMapping{\env_1}{\many{\evar}}}{\rtypec}}
        {\texist{\many{\evar}}{\rtypeb}{\rtypec}}
      \label{sr:letif:19}
    }

    By 
    \ref{sr:letif:17},
    \ref{sr:letif:18} and 
    \ref{sr:letif:19}, and using Rule \subtrans we prove \ref{sr:e}.
    \\\\
    Heap $\accessState{\ssaRtState}{\ssaHeap}$ does not evolve so \ref{sr:f} holds trivially.

\end{itemize}

\end{proof}

\begin{theorem}[Progress]\label{theorem:progress}
If
\begin{enumerate}[label=(\alph*)]
  \item $\dtypecheck{\env}{\storety}{\expr}{\rtype}$,       \label{p:a}
  \item $\newwfstore{\storety}{\ssaHeap}$                  \label{p:b}
\end{enumerate}
then one of the following holds:
\begin{enumerate}
  \item $\expr$ is a value,
  \item there exist $\expr'$, $\ssaHeap'$ and $\storety' \supseteq \storety$ \st
    $\newwfstore{\storety'}{\ssaHeap'}$ and
    $\stepsFour{\ssaHeap}{\expr}{\ssaHeap'}{\expr'}$.
\end{enumerate}

\end{theorem}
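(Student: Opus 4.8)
The plan is to proceed by induction on the derivation of the runtime typing judgement $\dtypecheck{\env}{\storety}{\expr}{\rtype}$, with a case analysis on the last rule applied. The value cases---locations typed by \rtchkloc and constants typed by \rtchkconst---immediately discharge the first disjunct, since such $\expr$ are already values; the variable rule \lqchkvar does not arise, as progress is applied to the closed terms produced during reduction, free variables having been eliminated by the substituting reductions \brackets{\rletin}, \brackets{\rinvoke} and \brackets{\rletif}. For every remaining (compound) rule I would first apply the induction hypothesis to the immediate subexpressions: if one of them can take a step, I lift that step to the whole expression using the evaluation-context rule \brackets{\rcevalctx} with the appropriate context $\ssaEvalCtx$, discharging the second disjunct. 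Thus the real work is confined to the situation in which all subexpressions are already values.

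In that situation I would use a runtime canonical-forms argument (established from the runtime typing rules \rtchkloc and \rtchkobj) to pin down the shape of each value from its type: a value whose base type is a class $\cname$ must be a location $\loc$ whose heap object is an instance of $\cname$, and a value whose base type is $\tbool$ must be a boolean constant. With this in hand, each head reduction is supplied by the matching operational rule: \brackets{\rfield} for field access (the field exists because inverting heap well-formedness \hbndinst at $\loc$ exposes exactly the declared fields), \brackets{\rinvoke} for method calls, \brackets{\rnew} for object construction, \brackets{\rdotasgn} for field update, and \brackets{\rletin}/\brackets{\rletif} for the SSA-context forms typed by \lqchkssactx, \lqchkctxletin and \lqchkctxletif. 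Once a head step is exhibited, the heap-well-formedness half of the conclusion, $\newwfstore{\storety'}{\ssaHeap'}$ for some $\storety' \supseteq \storety$, follows by feeding that very step (together with the present hypotheses) into the already-established Subject Reduction theorem (Theorem~\ref{theorem:subj:reduc}), which is proved independently of progress; so I need only produce the step itself.

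Two cases carry the genuine difficulty, because they require bridging the static, logical side premises of the typing rules with the runtime guards of the reduction rules. The cast case $\cast{\rtype}{\loc}$ (rule \lqchkcast) yields only the \emph{compatibility} subtyping $\semsubtype{\env}{\rtypeb}{\rtype}$ (Definition~\ref{def:main:compatibility:subtype}), whereas \brackets{\rcast} fires only when the stored type of $\loc$ is an ordinary subtype of $\rtype$. I would close this gap with the Cast Lemma (Lemma~\ref{lemma:cast}), which, using $\newwfstore{\storety}{\ssaHeap}$, upgrades compatibility subtyping of $\loc$'s static type into actual subtyping of the type of $\idxMapping{\ssaHeap}{\loc}$, precisely the guard \brackets{\rcast} demands (this is also what makes casts statically redundant, Corollary~\ref{theorem:cast:redundancy}). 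The method-call case is the analogous obstacle: \brackets{\rinvoke} fires only when the instantiated precondition \emph{evaluates} to $\ssaTrue$, while \lqchkinv provides only that $\pred$ is valid in the embedded environment $\embed{\env}$. Here I would argue that heap well-formedness makes the embedding sound with respect to the concrete receiver and arguments, so that substituting those values, via $\tsubsttwo{\many{\val}}{\many{\evar}}{\loc}{\ethis}$, into the valid $\pred$ indeed reduces to $\ssaTrue$. I expect this last step---transporting logical validity of the precondition to runtime evaluation---to be the main obstacle, since it is the one place where the correspondence between the refinement logic and the operational semantics must be made fully precise.
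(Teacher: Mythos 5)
Your proposal is correct and follows essentially the same route as the paper's proof: induction on the typing derivation, with the induction hypothesis lifting subexpression steps through \brackets{\rcevalctx}, heap well-formedness together with the store-type lemma pinning down canonical forms at locations (so that \brackets{\rfield}, \brackets{\rnew}, \brackets{\rdotasgn}, \brackets{\rletin} and \brackets{\rletif} can fire), and Lemma~\ref{lemma:cast} bridging compatibility subtyping to the runtime guard of \brackets{\rcast}. Your two deviations are benign: deriving $\newwfstore{\storety'}{\ssaHeap'}$ from Subject Reduction is sound because the paper proves that theorem independently of progress (the paper instead discharges it directly, trivially in the heap-preserving cases), and the precondition-evaluation gap you correctly flag at \brackets{\rinvoke} is precisely what the paper's proof silently defers to the corresponding case of CFJ~\cite{Nystrom2008}.
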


\begin{proof}

We proceed by induction on the structure of derivation~\ref{p:a}:

\begin{itemize}

  \item \brackets{\lqchkfieldimm}
    \la{
      \dtypecheck{\env}{\storety}{\dotref{\expr_0}{\fieldname_{\index}}}
                 {\texist{\evarb}{\rtype_0}{\singleton{\rtype}{\tdotref{\evarb}{\fieldname_{\index}}}}}
                 \label{p:fieldi:0}
    }
    By inverting \lqchkfieldimm on \ref{p:fieldi:0}:
    \la{
      \dtypecheck{\env}{\storety}{\expr_0}{\rtype_0}        \label{p:fieldi:1}
      \\
      \dwfconstr{\envextex{\env}{\evarb}{\rtype_0}}{\storety}
                {\varbinding{\hasimm{\evarb}{\fieldname_{\index}}}{\rtype}}
                                                          \label{p:fieldi:2}
    }

    By i.h. using \ref{p:fieldi:1} and \ref{p:b} there are two possible cases on $\expr_0$:
    \begin{itemize}

      \item \brackets{$\expr_0 \equiv \loc_0$}
        Statement \ref{p:fieldi:1} becomes:
        \la{
          \dtypecheck{\env}{\storety}{\loc_0}{\rtype_0}   
          \label{p:fieldi:11}
        }

        By \ref{p:b} for location $\loc_0$:
        \la{
          \newwfstore{\storety}{\updMapping{\ssaHeap}{\loc_0}{\ssaObject}}
          \label{p:fieldi:12}
        }

        Where:
        \la{
          \ssaObject \equiv \heapObject{\loc_0'}{\fieldDefsSym}
          \label{p:fieldi:13}
        }

        By Lemma~\ref{lemma:store:type} using \ref{p:b} and \ref{p:fieldi:12}:
        \la{
          \idxMapping{\storety}{\loc_0} = \rtype_0          
          \label{p:fieldi:14}
          \\
          \dwfconstr{\env}{\storety}{\mconcatenate{\varbinding{\ssaObject}{\rtypeb_0}}{\rtypeb_0 \subt \rtype_0}}
          \label{p:fieldi:15}
        }

        By Lemma~A.6 in~\cite{Nystrom2008} using \ref{p:fieldi:2} and
        \ref{p:fieldi:15}:
        \la{
          \dwfconstr{\envextex{\env}{\evarb}{\rtypeb_0}}{\storety}
                    {\varbinding{\hasimm{\evarb}{\fieldname_{\index}}}{\rtype}}
                                                          \label{p:fieldi:18}
        }

        By applying Rule~\rfield using \ref{p:fieldi:12}, \ref{p:fieldi:13} and \ref{p:fieldi:18}:
        $$
        \stepsFour{\ssaHeap}{\dotref{\loc_0}{\fieldname_{\index}}}{\ssaHeap}{\val_{\index}}
        $$

      \item \brackets{$\exists \expr_0'$ \st $\stepsFour{\ssaHeap}{\expr_0}{\ssaHeap'}{\expr_0'}$}
        By applying Rule~\rcevalctx:
        $$
        \stepsFour{\ssaHeap}{\dotref{\expr_0}{\fieldname_{\index}}}
              {\ssaHeap'}{\dotref{\expr_0'}{\fieldname_{\index}}}
        $$

    \end{itemize}

  \item \brackets{\lqchkfieldmut}
    \emph{Similar to previous case.}

  \item \brackets{\lqchkinv}, \brackets{\lqchknew}
    \emph{Similar to the respective case of CFJ~\cite{Nystrom2008}.}

  \item \brackets{\lqchkcast}:
    \la{
      \dtypecheck{\env}{\storety}{\cast{\rtype}{\expr_0}}{\rtype}
                                                            \label{p:cast:0}
    }

    By inverting \lqchkcast on \ref{p:cast:0}:
    \la{
      \typecheck{\env}{\expr_0}{\rtypeb_0}                    \label{p:cast:1}
      \\
      \dwftype{\env}{\storety}{\rtype}
      \\
      \dsemsubtype{\env}{\storety}{\rtypeb_0}{\rtype}         \label{p:cast:14}
    }

    By i.h. using \ref{p:cast:1} and \ref{p:b} there are two possible cases on $\expr_0$:
    \begin{itemize}

      \item \brackets{$\expr_0 \equiv \loc_0$}
        Statement \ref{p:cast:1} becomes:
        \la{
          \dtypecheck{\env}{\storety}{\loc_0}{\rtypeb_0}    \label{p:cast:11}
        }
        
        By Lemma~\ref{lemma:cast} using \ref{p:b} and \ref{p:cast:14}:
        \la{
          \dwfconstr{\env}{\storety}
            {\varbinding{\idxMapping{\ssaHeap}{\loc_0}}{\rtypec_0}, \rtypec_0 \subt \rtype}
          \label{p:cast:13}
        }

        From \rulename{\rcast} using \ref{p:cast:13}:
        $$
        \stepsFour{\ssaHeap}{\cast{\rtype}{\loc_0}}{\ssaHeap}{\loc_0}
        $$

      \item \brackets{$\exists \expr_0'$ \st $\stepsFour{\ssaHeap}{\expr_0}{\ssaHeap'}{\expr_0'}$}
        By rule \rcevalctx:
        $$
        \stepsFour{\ssaHeap}{\cast{\rtype}{\expr_0}}{\ssaHeap'}{\cast{\rtype}{\expr'_0}}
        $$

    \end{itemize}

  \item \brackets{\lqchkletin}, \brackets{\lqchkasgn}, \brackets{\lqchkite}
    \emph{These cases are handled in a similar manner.}

\end{itemize}

\end{proof}

\end{document}